\DeclareFontFamily{OMX}{MnSymbolE}{}
\DeclareSymbolFont{MnLargeSymbols}{OMX}{MnSymbolE}{m}{n}
\DeclareFontShape{OMX}{MnSymbolE}{m}{n}{
    <-6>  MnSymbolE5
   <6-7>  MnSymbolE6
   <7-8>  MnSymbolE7
   <8-9>  MnSymbolE8
   <9-10> MnSymbolE9
  <10-12> MnSymbolE10
  <12->   MnSymbolE12
}{}
\DeclareFontShape{OMX}{MnSymbolE}{b}{n}{
    <-6>  MnSymbolE-Bold5
   <6-7>  MnSymbolE-Bold6
   <7-8>  MnSymbolE-Bold7
   <8-9>  MnSymbolE-Bold8
   <9-10> MnSymbolE-Bold9
  <10-12> MnSymbolE-Bold10
  <12->   MnSymbolE-Bold12
}{}
\let\llangle\@undefined
\let\rrangle\@undefined
\DeclareMathDelimiter{\llangle}{\mathopen}%
                     {MnLargeSymbols}{'164}{MnLargeSymbols}{'164}
\DeclareMathDelimiter{\rrangle}{\mathclose}%
                     {MnLargeSymbols}{'171}{MnLargeSymbols}{'171}
\def\l@subsubsection#1#2{}
\newcommand{\be}{\begin{equation}}
\newcommand{\ee}{\end{equation}}
\newcommand{\bea}{\begin{eqnarray}}
\newcommand{\eea}{\end{eqnarray}}
\newcommand{\bes}{\begin{subequations}\bea}
\newcommand{\ees}{\eea\end{subequations}}
\newcommand{\ba}{\begin{array}}
\newcommand{\ea}{\end{array}}
\newcommand{\tw}[1]{{\color{White} #1}}
\newcommand{\bs}[1] {\boldsymbol{#1}}
\newcommand{\C}{\mathscr{C}}
\newcommand{\E}{\mathscr{E}}
\newcommand{\G}{\mathscr{G}}
\newcommand{\I}{\mathscr{I}}
\newcommand{\T}{\mathscr{T}}
\newcommand{\ST}{\tau}
\renewcommand{\ij}{{i\!j}}
\newcommand{\ji}{{\!ji}}
\newcommand{\cyc}{c}
\newcommand{\cocyc}{c}
\renewcommand{\matrix}[1]{#1}
\newcommand{\Pp}{\matrix{P}}
\newcommand{\Past}{{\matrix{P}_{\!\st}}}
\newcommand{\U}{\matrix{U}}
\newcommand{\W}{\matrix{W}}
\newcommand{\M}{\matrix{M}}
\newcommand{\R}{\matrix{R}}
\newcommand{\Q}{\mathcal{Q}}
\newcommand{\A}{\mathcal{A}}
\newcommand{\F}{\mathcal{F}}
\newcommand{\tflu}{\Psi}
\newcommand{\flu}{\psi}
\newcommand{\tcur}{\Phi}
\newcommand{\cur}{\phi}
\newcommand{\ccur}{\phi}
\newcommand{\lcur}{\phi}
\newcommand{\phys}{\varphi}
\newcommand{\traj}{{\omega^t}}
\newcommand{\invtraj}{\overline{\omega}^t}
\newcommand{\minvtraj}{\widetilde{\omega}^t}
\newcommand{\prob}{P}
\newtheorem{theorem}{Proposition}
\newcommand{\Th}[1]{{Proposition \ref{#1}}}
\renewcommand{\th}{proposition}
\newcommand{\ot}{1\mbox{--}2}
\newcommand{\niton}{\not\owns}
\newcommand*{\Scale}[2][4]{\scalebox{#1}{$#2$}}%
\newcommand{\tr}{\mathrm{tr}\,}
\newcommand{\MJPG}{$\rm{MJPG}{}$}
\newcommand{\eq}{\mathrm{eq}}
\newcommand{\st}{\mathrm{st}}
\newcommand{\ssum}{ \Scale[1.2]{\sum}\,}
\newcommand{\x}{\bs{x}}
\newcommand{\rest}{\bs{z}} 
\newcommand{\RN}[1]{\textup{\footnotesize\uppercase\expandafter{\romannumeral#1}}}
\begin{document}

\title{Effective fluctuation and response theory}

	\author{Matteo Polettini} 
	\email{matteo.polettini@uni.lu}
	\affiliation{Physics and Materials Science Research Unit, University of Luxembourg,
	162a avenue de la Fa\"iencerie, L-1511 Luxembourg (Luxembourg)} 
	
	\author{Massimiliano Esposito}
	\affiliation{Physics and Materials Science Research Unit, University of Luxembourg,
	162a avenue de la Fa\"iencerie, L-1511 Luxembourg (Luxembourg)} 

	\date{\today}

\begin{abstract}

The response of thermodynamic systems slightly perturbed out of an equilibrium steady-state is described by two milestones of early nonequilibrium statistical mechanics: the reciprocal and the fluctuation-dissipation relations. At the turn of this century, the so-called fluctuation theorems extended the study of fluctuations far beyond equilibrium. All these results rely on the crucial assumption that the observer has complete information about the system: there is no hidden leakage to the environment, and every process is assigned its due thermodynamic cost. Such a precise control is difficult to attain, hence the following questions are compelling: Will an observer who has {\it marginal} information be able to perform an {\it effective}  thermodynamic analysis? Given that such observer will only be able to establish local equilibrium amidst the whirling of hidden degrees of freedom, by perturbing the {\it stalling} currents will he/she observe equilibrium-like fluctuations nevertheless? We address these two fundamental problems, providing a broad theory of the statistical behavior of some out of many currents that flow across a thermodynamic system. 

We model the dynamics of open systems as Markov jump processes on finite networks. Configuration-space currents count the net number of transitions between pairs of configurations; conjugate forces quantify their thermodynamic cost. Phenomenological currents are linear combinations of configuration currents, and only ensue when affinities enjoy appropriate symmetries, granting {\it thermodynamic consistency}. A complete thermodynamic description is achieved when the set of currents under consideration covers all cycles in the network, otherwise the set is marginal.

Within  this formalism, we establish that: 1) While marginal currents do not obey a full-fledged fluctuation relation, there exist effective affinities for which an integral fluctuation relation holds; 2) Under reasonable assumptions on the parametrization of the rates, effective and ``real'' affinities only differ by a constant; 3) At stalling, i.e. where the marginal currents vanish, a symmetrized fluctuation-dissipation relation holds while reciprocity does not; 4) There exists a notion of {\it marginal time-reversal} that plays a role akin to that played by time-reversal for complete systems, which restores the fluctuation relation and reciprocity; 5) The effective affinity is the putative affinity of an observer who only has marginal information about a system and formulates a minimal model accounting for his/her steady-state observations; 6) There exist fluctuation relations across different levels in the hierarchy of more and more ``complete'' theories. The above results hold for configuration-space currents, and for phenomenological currents provided that certain symmetries of the effective affinities are respected --- a condition that we call {\it marginal thermodynamic consistency}, which is stricter thermodynamic consistency and whose range of validity we deem the most interesting question left open to future inquiry.  Our results are constructive and operational: we provide an explicit expression for the effective affinities in terms of the steady-state reached by the system when all transitions supporting the marginal currents are turned off and propose a procedure to measure them in laboratory. 




\end{abstract} 

\pacs{05.70.Ln,02.50.Ga}

\maketitle

~

\newpage

~

\newpage

\tableofcontents

~

\newpage

\newpage

\section*{Notation and abbreviations}
\label{notation}

\noindent {\it \; \hspace{1em}  Acronyms}
\vspace{.2cm}
\begin{addmargin}[3em]{0em}
\begin{itemize}[noitemsep,nolistsep]
\item[FR] Fluctuation Relation
\item[IFR] Integral Fluctuation Relation
\item[MJPG] Markov Jump-Process Generator
\item[NESM] Non-Equilibrium Statistical Mechanics
\item[RR] Reciprocal Relations
\item[SCGF]  Scaled-Cumulant Generating Function
\item[SFDR] Symmetrized Fluctuation-Dissipation Relation
\item[TR] Time Reversed, Time Reversal
\item[p.d.f.] probability density function 
\end{itemize}
\vspace{.2cm}
\end{addmargin}
{\it \; \hspace{1em} Graphs}
\begin{addmargin}[3em]{0em}
\vspace{.2cm}
\begin{itemize}[noitemsep,nolistsep]
\item[$|\,\cdot\,|$]  Cardinality of a set or range of an index
\item[$\G$] Graph
\item[$\E$] Edge set of a graph
\item[$\I$] Site (vertex) set of a graph
\item[$\C$] Simple oriented cycle
\item[$\T$] Spanning tree
\item[$i,j,\ldots$] Sites
\item[$\ij,\ji,\ldots$] Oriented edges
\item[$\partial$] Incidence matrix 
\end{itemize}
\vspace{.2cm}
\end{addmargin}
{\it \; \hspace{1em} Linear algebra}
\begin{addmargin}[3em]{0em}
\begin{itemize}[noitemsep,nolistsep]
\item[]
\item[$\vec{v}$] Vector in $\R^{|\I|}$
\item[$\vec{1}$] Vector with all unit entries
\item[$\matrix{A}$] Matrix $\mathbb{R}^{|\I|} \to \mathbb{R}^{|\I|}$
\item[$\bs{v}$] All other vectors
\end{itemize}
\vspace{.2cm}
\end{addmargin}
{\it \; \hspace{1em} Observables}
\vspace{.2cm}
\begin{addmargin}[3em]{0em}
\begin{itemize}[noitemsep,nolistsep]
\item[$\;\cdot\;_{\ij}$] Edge observable
\item[$\;\cdot\;_{\alpha}$] Phenomenological observable 
\item[$\flu$,$\cur$] Mean flux, mean current
\item[$\tflu^t$, $\tcur^t$] Time-integrated stochastic flux, and current
\item[$\F_{\ij}$] Thermodynamic force of a transition
\item[$\A$] ``Real'' affinity
\item[$\Q$] Effective affinity
\item[$\sigma$] Mean entropy production rate
\item[$\Sigma^t$] Stochastic entropy production
\item[$\vec{p}$] Steady-state of master equation
\item[$\;\cdot\;^{\,\st}$,$\;\cdot\;^{\,\mathrm{eq}}$] At stalling, at equilibrium
\item[$w_{\ij}(\x)$] Parametrized rates of the master equation
\end{itemize}
\vspace{.2cm}
\end{addmargin}
{\it \; \hspace{1em} Stochastic tools}
\begin{addmargin}[3em]{0em}
\vspace{.2cm}
\begin{itemize}[noitemsep,nolistsep]
\item[$\traj$] Stochastic trajectory
\item[$\mathcal{D} \traj$] Path measure
\item[$\prob$] Path p.d.f. and its marginals
\item[$\langle\,\cdot\,\rangle$] Expected value w.r.t. $P(\traj) \mathcal{D}\traj$
\item[$\widetilde{\;\cdot\;}$] Hidden time-reversal
\item[$\lambda$/$\Lambda$] SCGF of edge/phenomenological currents
\item[$\zeta^t$] Cumulant generating function at time $t$
\item[$\M(\{q_\alpha\})$] Tilted operator
\end{itemize}
\end{addmargin}

~

\newpage

\section{Prologue}

Consider an experiment where the flows of certain quantities are measured, due to certain applied forces. All it takes to properly address the thermodynamics of such setup is to be able to draw a net demarcation between the open changing system\footnote{In textbooks of thermodynamics and systems theory a distinction is made between open, closed and isolated systems. In our perspective there is no substantial difference between flows of matter, energy, or information, for that matters. Hence closed systems are open; isolated systems are idealizations of open systems where external influences are extremely feeble.}, wherethrough physical quantities flow, and the decorrelated frozen environment wherefrom they come and go. Ideally for a proper thermodynamic analysis it is crucial to account for all of the currents flowing through the system, and to assign them their due thermodynamic cost.

However, things do not  quite work that way, neither in practice nor in theory. The system's boundary, drawn for example on criteria of time-scale separation, of spatial localization, and of coarse-graining of irrelevant degrees of freedom, might not be crystal-clear. As a consequence, the measurement apparatus might not resolve important sources of dissipation due to parasitic currents. Furthermore, unless a microscopic theory is available that explains what exact causes produce which precise consequences, currents might not be assigned their proper thermodynamic cost. In other words, the observer might only have {\it marginal} information about the setup, due to technological and theoretical limitations. Nevertheless, the vast majority of physicists develop a sort of thermodynamic craftsmanship\footnote{The role of craftsmanship in the preparation and interpretation of a scientific experiment has been discussed by sociologist of science H. Collins \cite{collins}. In particular, his analysis of the trimmings with Bayesian inference ({\it ivi}, Chapter 5) struck us as particularly relevant for the foundations of statistical mechanics.}, a learned sense for what is relevant in the controlled laboratory of their experiment (be it factual or thought). This leads them to identify {\it effective} thermodynamic forces that do not dispense with the laws of thermodynamics. In this process the logic of thermodynamic reasoning is preserved, but the nature of most of its measurable quantities needs to be renegotiated.

While according to Einstein thermodynamics is ``the only physical theory of universal content [\ldots] that will never be overthrown'' \cite{einsteinonth}, unlike other theories like Quantum Mechanics or General Relativity thermodynamics has for long been an intrinsically phenomenological science, a patchwork of profound laws and contingent principles that do not fit into a coherent mathematical framework. On these premises, it is common practice to invoke textbook thermodynamics in a literal way, deploying its jargon and formulas with little reference to the mental and physical processes through which such concepts were formulated. This is the fertile soil that foments the never-ending stream of pseudoscientific claims of violations the second law of thermodynamics, whose common root is a fundamental misunderstanding of which marginal currents and effective forces are actually at play.

\begin{figure}
\centering
\includegraphics[width=\columnwidth]{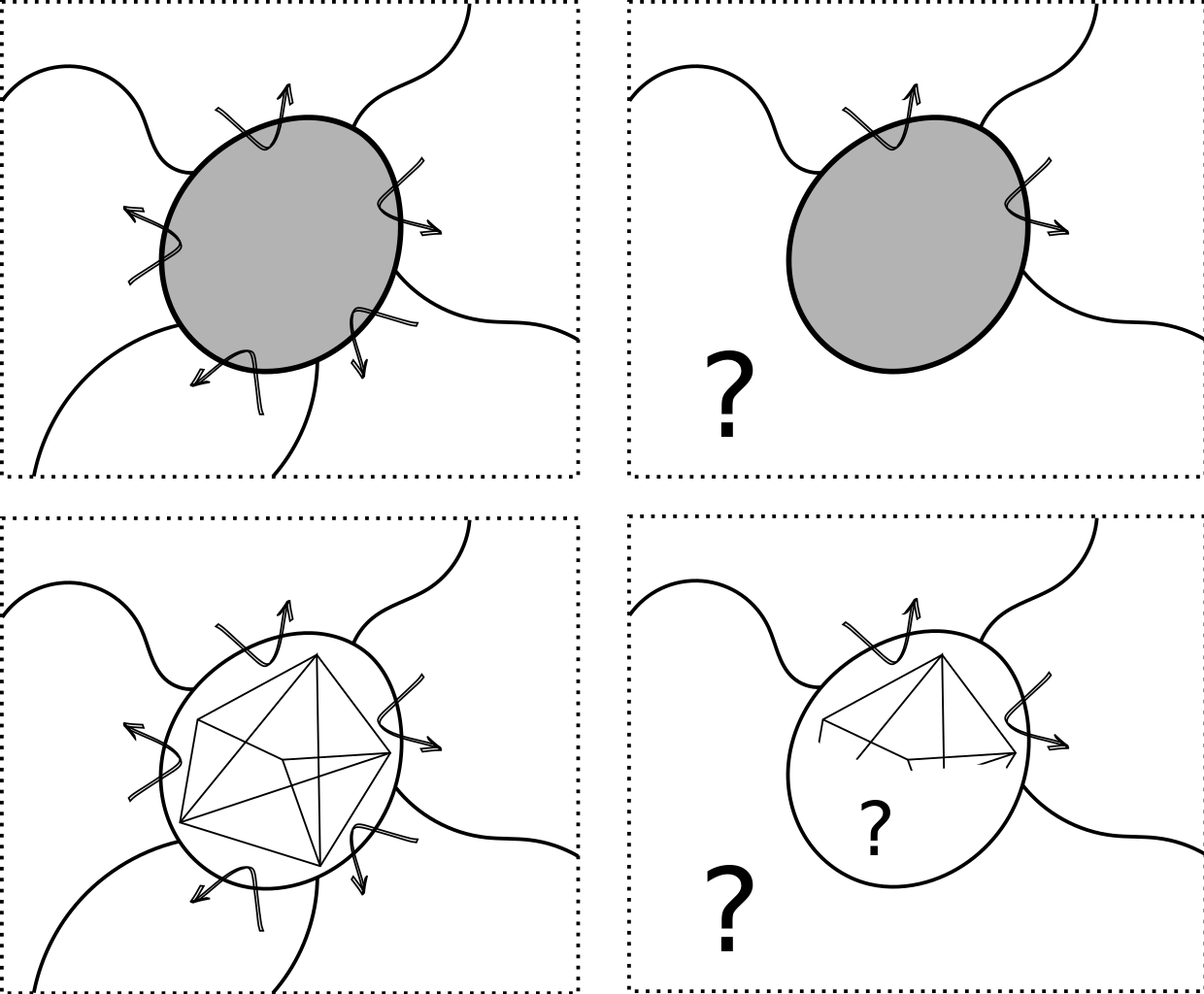}
\caption{First row: on the left, pictorial representation of a system through which currents flow; on the right, the standpoint of a marginal observer that only measures certain currents. In the second, the system's state space that mediates the passage of currents is resolved: It is a discrete network.}
\label{fig:illustration}
\end{figure}

Today a rigorous framework for the logical deduction of thermodynamic instances far from equilibrium is available \cite{ldbseifert} and is in the course of experimental validation \cite{experiments}, and it claims to become {\it the} way thermodynamics is thought of and taught. Modern nonequilibrium statistical mechanics is based on the assumption that the system's configurations are explored by a Markovian dynamics, with transition probabilities biased according to thermodynamic incentives coming from the environmental reservoirs. This framework allows to characterize fluctuations of observables, and therefore it applies to small systems, not necessarily in the so-called ``thermodynamic limit,'' and in principle it applies arbitrarily far from equilibrium as long as the Markov assumption remains valid.

At the heart of equilibrium statistical mechanics lies the identification of {\it static} physical properties of a system (e.g. temperature, pressure etc.) with the average behavior of microscopic degrees of freedom that fluctuate (e.g. average kinetic energy, velocity etc.). The first step out of equilibrium consists of slight perturbations of such observables, whose response can be characterized in terms of their spontaneous fluctuations at equilibrium, according to the so-called fluctuation-dissipation relation (FDR) and of the reciprocal relations (RR) that take the names names of some of the heroes of 20th century physics \cite{nyquist,green,onsager,kubo,miller}. For nonequilibrium systems, the picture is varied in a {\it dynamical} way: here the observables of interest quantify motility and directionality within a system. The connection between physical and statistical laws is encoded in the fluctuation relation (FR) -- whose precise formulation is embodied in a plethora of so-called Fluctuation Theorems \cite{bochkov,kurchan,maes,lebowitz}. The FR states that the rate at which a system delivers entropy to the environment is a measure of the arrow of time, viz. of the asymmetry between the probability of microscopic paths and their time-reversed. To use a metaphor, the probability of ``getting the toothpaste back into the tube'' is exponentially suppressed with respect to that of ``getting the toothpaste out of the tube,'' using Woody Allen's characterization of irreversibly in {\it Whatever works} \cite{allen}.

Most often the paste spreads out according to the second law of thermodynamics (2nd), an inequality that in this setup easily follows from a more general identity, the integral fluctuation relation (IFR). All such relations can be resumed in the following implication diagram:
\begin{align}
\begin{array}{c}\xymatrix{
\text{{\it nonequilibrium}} & \mathrm{FR}  \ar@{=>}[r]   \ar@{=>}[d] & \mathrm{IFR} \ar@{=>}[d]  \ar@{=>}[r]  & \mathrm{2nd} \\
\text{{\it near equilibrium}} & \mathrm{RR} & \mathrm{S-FDR}
} 
\end{array},
\nonumber \end{align}
where by S-FDR we intend a symmetrized version of the usual Green-Kubo relation. The role of the First Law and other conservation laws is more subtle: it can be seen as a requirement on the form of the rates, which allows to identify the abstract Markovian jumps with physical currents. We will not consider the other laws of thermodynamics.

Crucially, establishing the above scheme requires that the observer has complete information about the system's currents and forces. The question is then open as about how many of these results still apply to marginal observables of experimental interest, and what effective adjustments need  eventually to be made. In particular, if the Markov process ventures into some sector of the configuration space that is hidden to the observer, how should we quantify the thermodynamic incentives over there?

The purpose of this paper is to present a general theory of the thermodynamics of a marginal set of currents and of the effective forces that drive them, under the assumption that somewhere in the belly of these coarser observables there lurk fundamental currents and forces that abide by the principles of Markovian stochastic thermodynamics. We first show that only the right-hand side of the above implication diagram stands:
\begin{align}
\begin{array}{c}\xymatrix{
\text{{\it nonequilibrium}} & \tw{\mathrm{FR}} \ar@[White]@{=>}[r]   \ar@[White]@{=>}[d] & \mathrm{IFR} \ar@{=>}[d]  \ar@{=>}[r]  & \mathrm{2nd} \\
\text{{\it near stalling}} & \tw{\mathrm{RR}} & \mathrm{S-FDR}
}.
\end{array}
\nonumber \end{align}
Notice that in the marginal theory the analog of an equilibrium state is a {\it stalling} state in which the marginal currents vanish, while the hidden currents might still be flowing, as if the observer was in the eye of a hurricane. Importantly, the effective forces can be determined operationally by a simple tuning procedure,  thus opening the way to experimental implementations of our theory.

Furthermore, from a more mathematical perspective, the left part of this diagram can be reinstated upon an appropriate redefinition of the underlying dynamics of the Markovian walker. Under this new ``hidden time-reversed'' dynamics, denoted by a wiggle, we obtain the inference scheme
\begin{align}
\begin{array}{c}\xymatrix{
\text{{\it nonequilibrium}} & \widetilde{\mathrm{FR}} \ar@{=>}[r] \ar@{=>}[d] & \mathrm{IFR} \ar@{=>}[d]  \ar@{=>}[r]  & \mathrm{2nd} \\
\text{{\it near stalling}} & \widetilde{\mathrm{RR}} & \mathrm{S-FDR}}
\end{array}.
\nonumber \end{align}
Finally, as the observer adds more and more currents and forces to his accounting, a ``hierarchy'' of marginal theories is explored: a cross-hierarchical FR holds, that is the ultimate core result of the whole construction.

\begin{center}
\includegraphics[width=.2\columnwidth]{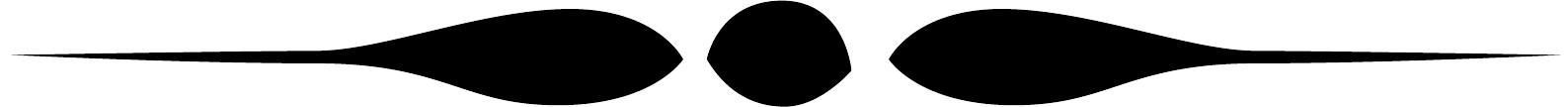}
\end{center}

The validity of FRs under coarse-graining and of FDRs far from equilibrium, in particular at stalling, are two questions that have been frequently addressed. Let us attempt a overview --- itself marginal.

Gallavotti \cite{gallavotti} produced a convincing argument for why it is necessary to address FRs of local observables: As most systems of thermodynamic interest are large in volume, global observables are subjected to two extensive limits --- one with respect to size and one with respect to time --- hence rare events are even rarer. A special license is then needed to focus on localized non-extensive observables. In the formalism of chaotic dynamical systems, Gallavotti heuristically defined a local entropy production rate associated to a microscopic region of space that satisfies a FR.

The validity of FRs for coarse-grained observables has been considered in Refs.\,\cite{garcia} and \cite{wachtler}, where the IFR is studied when the observer has incomplete information. In particular, in the latter work measurement errors are introduced via a kernel that smoothens the sharp values of the ``real''  degrees of freedom into a distribution of coarser observed values. In one specific model it is found that the IFR can be preserved given a notion of effective work. However, differing from our setup, this quantity is not stochastic. The coarse-graining of the statistics of the currents for biochemical systems has been considered in Ref.\,\cite{altanerPRL}. The partial fluctuation theorem in systems weakly coupled to the environment has been studied in Ref.\,\cite{gupta}, where it is argued that a violation of the FR can persist even in the limit of vanishing interaction. Uhl et al. \cite{uhl} have considered the fluctuations of an apparent entropy production in bipartite systems, finding many cases where an effective affinity restores the FR. For chemical networks where only some molecular species can be monitored experimentally, Bravi and Sollich \cite{bravi} derived systematic models for subsystem dynamics that can help with the inference problem of estimating properties of the environment from observed sub-network dynamics. Another situation where the observer does not have access to all of the thermodynamic currents are stochastic models of so-called ``Maxwell demons'', systems composed of an engine and a memory that operates a feedback control on the engine. To the total dissipation rate contribute fluxes of energy and of information, and an observer that does not duly keep into account the demon observes controversial behavior \cite{strasberg,mandal,parrondo}. The FR in such models was investigated in Refs.\,\cite{frenzel}, where the problem was solved by defining suitable observables that reinstate the FR, but which differ in nature from currents. Similar in spirit are the FRs for conditional and marginal probabilities discussed in Ref.\,\cite{crooks}, where appropriate terms are added to the entropy production rate in bipartite systems where one degree of freedom is neglected, and the hidden Markov models considered in Ref.\,\cite{bechhoefer}. All these approaches differ from ours as we assess properties of marginal observables without resorting to {\it ad hoc} redefinitions of the stochastic observable under consideration (the currents).

Because of hidden heat flows, system-bath correlations in either classical or quantum systems, if not taken into proper account, might lead to violations of the laws of thermodynamics \cite{partovi}. The authors of Ref.\,\cite{bera} comment that ``in order to re-establish the laws of thermodynamics, one not only has to look at the local marginal systems, but also [at] the correlations between them'', and this can be achieved by some effective description. Effective thermodynamic potentials also play a role in systems strongly coupled to their surroundings \cite{jarzynski}.

Another procedure that naturally leads to questions about marginal currents is the separation of fast vs. slow degrees of freedom. An effective affinity has been proposed to analyze experiments where a slow degree of freedom has been observed while the fast ones were integrated away \cite{mehl}. The effect of time-scale separation in thermodynamics has been studied in Ref.\,\cite{espositocg} and recently in Refs.\,\cite{bo,wang}. The former highlights that effective dynamics only preserve certain thermodynamic properties if internal detailed balance is obeyed, a feat that will play some role in our analysis of phenomenological currents. The latter show that in general the blanket is too narrow, and either dynamics or thermodynamics need to be sacrificed: while in their case it is thermodynamics, in ours it will be dynamics --- viz. we are {\it not} presenting a theory of an effective dynamics in the observable configuration space.  

Stalling steady states have been considered before by Qian \cite{qian2}, who dubbed the effective affinity that we will later introduce ``isometric force'', and they play a role in the analysis of molecular motors \cite{kolomeisky}. Stalling currents also appear to play an important role in efficiency optimization, as e.g. in so-called B\"uttiker probes \cite{buttiker,dubi,brandner}. An effective two-terminal thermoelectric nanomachine, obtained starting from a more complete three-terminal machine with one stalling current, has been considered in Ref.\, \cite{yamamoto} to study the effect of asymmetric Onsager coefficients on efficiency.

Response far from equilibrium is a broadly studied subject. In general, it is well-understood that the FDR has to be modified by including the correlation of the current with a quantity that is symmetric under time-reversal, alongside with the current's self-correlation. This can give rise to interesting behavior such as negative differential mobility, i.e. the fact that one can ``get less by pushing more,'' as is well illustrated in the driven lattice Lorentz gas described in Ref.\,\cite{leitmann}. Experimental verifications of modified FDRs are also available \cite{solano}. We will only briefly address the response of systems arbitrarily far from equilibrium, and mostly focus on response at stalling. Far from equilibrium, the notion of an effective temperature has been investigated in weakly ergodic ageing systems \cite{cugliandolo,crisanti}.

Part of the material covered in this manuscript has been anticipated by the Authors in Ref.\,\cite{polettiniobs} for the case where the currents count a single transition in configuration space. Response out of stalling was presented to some extent in Ref.\,\cite{altaner16}, which was stimulated by the specific analysis found in Refs.\,\cite{lau,lacoste}. One of the Authors considered FRs for a marginal current in the case of electron transport in a double quantum dot in Ref.\,\cite{bulnes}. A construction analogous to ours that allows to prove IFRs for appropriate functionals was advanced by Shiraishi and other authors \cite{shiraishi,rosinberg,hartich}. A comparison of our proposal and Shiraishi's was provided in Ref.\,\cite{gili}.

\begin{center}
\includegraphics[width=.2\columnwidth]{ornament2.pdf}
\end{center}

Outlying the theory in full requires to deploy a broad spectrum of techniques, ranging from Markov processes, to algebraic graph theory \cite{biggs}, to the theory of large deviations of stochastic processes \cite{touchetterep}, etc. We can only introduce them in a very compact form in Sec.\,\ref{setup} and give numerous references. We take the chance to shortly review in some detail some of the mathematical techniques that support the logical development of the theory, casting them in our own language. For some of these results, we provide novel derivations. A disclaimer about the mathematics:  All our new results will be framed as ``Propositions'' in order pinpoint the logical structure of the discourse. Propositions are statements that, to the best of our understanding, are outlined and argued in a sufficiently self-consistent way, but which might fall short in meeting the standards that mathematicians intend. In particular, we make no distinction between Theorems, Lemmas, Corollaries, Remarks etc. We give no complete statement of the assumptions for each proposition because of an objective lack of expertize in the more advanced issues of probability theory and Markov processes. Nevertheless we trust the overall coherency of our argumentation, and we encourage improvement on rigor. The ornament on p.\,\pageref{victorian} marks the point where most of the results are either new, or they are reinterpreted in a novel way.  

Inspired by the pedagogical principle by Albert V. Baez in his unconventional physics textbook Ref.\,\cite{spiralling}, we use a spiralling approach to the presentation of the material. The title, the abstract and this prologue represent the first three spirals of five more and more in-depth variations on the theme. The paper is structured as a Greek tragedy, with the main material exposed in a technical way in the {\it episodes}, preceded by the {\it prologue} that the reader is just reading, and most importantly by the {\it parode}, the first song sung by the chorus, which anticipates the main themes in a self-contained manner. Throughout the play the chorus stays on stage as a constant interlocutor, so the reader should always keep in mind the voice of the parode, which is structured into a {\it stroph\^e} and a {\it antistroph\^e} with the same meter, where we present the older material and the newer one, in parallel ways. The parode ends with an {\it epode} on future perspectives related to our results, while more technical conclusions are drawn in the closing {\it exode}. The preceding {\it stasimon}, the final song sung by the choir, is in a diminished locrian mode. The reason why this story should a tragedy, rather than a comedy, is not clear to the authors.

\section{Parode: Enunciation of the main results}

Before dwelling into our theory in full detail, in this Parode we present a less technical, yet self-contained discussion of the main results, which can be considered as an independent letter on its own. This ``entrance ode'' is meant to provide the reader knowledgeable in the field with enough details to reproduce most of the results on his own, and the neophyte with an overview on the main lines of reasoning. Quoting (with minor adjustments.) A. V. Baez \cite{spiralling}, \guillemotleft the reader may go through this section rapidly, as it takes him through a round of the spiral; the treatment may strike him as light, even inadequate. But he/she should rest assured, however, that we are laying a good foundation for a more concise and mathematical treatment in the following chapters \guillemotright.

We will first introduce the known facts regarding a ``complete'' set of currents to make contact with established knowledge and lingo in the field. We then introduce our new results about marginal sets of currents, paralleling them to the older results. Finally we explain some of the main technical ingredients underlying our results, and draw conclusions.

\subsection{Stroph\^e: ``Complete'' fluctuations and response}
\label{strophe}

Macroscopic thermodynamics describes systems through which a certain number\footnote{\label{cardinality}Symbol $|\,\cdot\,|$ denotes both the cardinality of sets and the range of indexes. Wherever possible, we will omit to specify the range of indexes.} $|\alpha|$ of (steady) {\it currents}  $\phi_\alpha$ flow, powered by conjugate thermodynamic forces or {\it affinities} $\A_\alpha$. The system can be seen as an interface between several reservoirs, with the currents flowing through the system, to and from reservoirs. We take currents and affinities to be a ``complete'' set of core, irreducible observables, assuming that all conservation laws, e.g. of energy (First Law of thermodynamics), number of particles, etc. have already been taken care by gauging out certain reference reservoirs. We shall explain what it exactly means to be ``complete'' later in this Parode. Then the affinities usually are {\it differences} of inverse temperatures, chemical potentials, etc., and  currents are ``conserved on their own''. For this reason we drew them as in-out ``reservoir arrows'' in the illustration Fig.\,\ref{fig:illustration}. The macroscopic entropy production rate (EPR) is defined as the bilinear form
\begin{align}
\sigma := \sum_\alpha \phi_\alpha \A_\alpha.
\end{align}

When we go microscopic, because of thermal noise we need to allow for fluctuations. We thus make currents into  random variables. We consider a single realization, or {\it path} or {\it trajectory} $\traj$ of a hypothetical experiment in a time window $[0,t]$. The stochastic time-integrated currents $\tcur^t_\alpha := \tcur_\alpha[\traj]$ are functionals of such trajectory, typically time-extensive. Therefore
\begin{align}
\cur_\alpha := \lim_{t \to \infty}\frac{\left\langle \tcur^t_\alpha \right\rangle}{t}
\end{align}
converges and yields the mean steady currents, where the average is taken with respect to a probability measure over trajectories $\prob(\traj) \mathcal{D}\traj$, that we will describe in detail in  \S\,\ref{subsec:measure}.

Along a single realization of the process, we define the {\it entropy production} as 
\begin{align}
\ssum  \tcur^t_\alpha  \A_\alpha + O(1). \label{eq:ep}
\end{align}
Here $O(1)$ stands for contributions that do not add-up in time, which are due to the transient adjustment of the system's internal entropy. With a slight stretch of imagination, the entropy production can be considered as the amount of entropy delivered to the environment during the process; however, to be slightly pedantic, we must emphasize that in our approach there is no such thing like a state function ``entropy of the environment''.

The entropy production is the major actor in the so-called fluctuation relation (FR)  \cite{maes,kurchan,lebowitz,andrieux,faggionato,polettiniFT} 
\begin{align}
\frac{\prob (\{ \tcur_\alpha\})}{\prob (\{ - \tcur_\alpha\})} =  \exp \ssum \tcur_\alpha \A_\alpha  \label{eq:fr}
\end{align}
where $\prob(\{ \tcur_\alpha\})$ is the probability density function (p.d.f.) that the time-integrated currents $\tcur^t_\alpha$ take values in a neighborhood of $\tcur_\alpha$. In the rest of the paper we will adopt a strategy discussed in Ref.\,\cite{polettiniFT} by which we can deal with finite-time FRs on the same footing as with asymptotic ones, whereby the above relation is exact equality at all times, provided the initial configuration from which the trajectory departs is selected with an appropriate probability distribution. In all those cases where results only hold in the long-time limit, we will use the asymptotic equality $\asymp$. The reader not interested in these subtleties might just view all the FRs as asymptotic at $t\to+\infty$.

An immediate corollary of the FR is the integral fluctuation relation (IFR) \cite{jarzy} 
\begin{align}
\left\langle  \exp {- \ssum \tcur_\alpha \A_\alpha} \right\rangle = 1. \label{eq:ifr}
\end{align}
The IFR embodies and refines the Second Law of thermodynamics, which states that on average the EPR is non-negative,
\begin{align}
\sigma \geq 0,
\end{align}
an immediate consequence of Eq.\,(\ref{eq:ifr}), via Jensen's inequality for convex functions.

A system is said to be {\it detailed balanced} when all of the affinities vanish; in this case the steady state is an equilibrium, that is, all mean steady currents vanish:
\begin{align}
\mathrm{equilibrium:}\quad \A_\alpha = 0 , \forall \alpha \iff \phi_\alpha = 0 , \forall \alpha.
\end{align}

It is well known that the FR actually gives rise to a cornucopia of IFRs \cite{maesseminaire}. We notice here in passing, as a new result, that in the case where only two processes contribute to the total entropy production, $|\alpha| =2$, from the FR also follows
\begin{align}
\left\langle \exp {- \, \tcur_1 \A_1} \right\rangle = \left\langle \exp {- \, \tcur_2 \A_2} \right\rangle.  \label{eq:recIFR}
\end{align}
We dub this the {\it reciprocal IFR}. The interesting feature of this relation is that it resolves and relates the statistics of two currents that can be quite different in physical nature.

The FR allows to derive all known results of response theory close to equilibrium. To attack this problem, we introduce an explicit dependency of the structural properties of the system (viz. the transition rates of the underlying stochastic dynamics) on certain parameters $\bs{x} = \{x_\kappa\}_{\kappa = 1}^{|\kappa|}$, $|\kappa| \geq |\alpha|$, with the following requirement:

\label{A0}
\begin{itemize}
\item[A0]  The first $|\alpha|$ of these parameters are {\it thermodynamic}, meaning that there exist constants $x_\alpha^{\mathrm{eq}}$ such that
\begin{align}
\A_\alpha (\bs{x})  = x_\alpha - x_\alpha^{\mathrm{eq}}.
\end{align}
All other parameters $\{x_\kappa\}_{\kappa > |\alpha|}$, on which the affinities do not depend, are {\it kinetic}. 
\end{itemize}
Notice that while this is just a contrived way to say that we either perturb the affinities, or some property that does not alter the affinity, this subtlety will play an important role below. At $\bs{x} = \bs{x}^{\eq}$ the system satisfies detailed balance whereby all of the forces $\A_\alpha(\bs{x}^\eq) = 0$ vanish and so do the currents, $\cur_\alpha^\eq := \cur_\alpha(\bs{x}^\eq) = 0$, where the superscript ``$\,{}^{\eq}\,$'' means ``evaluated at $\bs{x} = \bs{x}^{\eq}$''.

For systems that are slightly perturbed out of equilibrium, two major results hold: the (symmetrized) fluctuation-dissipation relation (S-FDR) and the reciprocal relations (RR) \cite{onsager,miller}. Defining the response coefficients as
\begin{align}
\lcur_{\alpha; \kappa} := \frac{\partial  \cur_{\alpha}}{\partial x_{\kappa}},
\end{align}
these two near-equilibrium relations state respectively that the response to a variation of a thermodynamic parameter at $\bs{x}=\bs{x}^\eq$ satisfies
\begin{subequations}
\begin{align}
\lcur^{\mathrm{eq}}_{\alpha; \alpha'} + \lcur^{\mathrm{eq}}_{\alpha'; \alpha} & = \ccur^{\mathrm{eq}}_{\alpha\alpha'}  \label{eq:greenkubo} \\
\lcur^{\mathrm{eq}}_{\alpha; \alpha'} - \lcur^{\mathrm{eq}}_{\alpha'; \alpha} & = 0 \label{eq:onsager}
\end{align}
\end{subequations}
where
\begin{align}
\ccur_{\alpha\alpha'} & := \lim_{t\to \infty} \frac{1}{t} \left\langle \left(\tcur^t_\alpha - \big\langle \tcur^t_\alpha \big\rangle \right)  \left( \tcur^t_{\alpha'} - \big\langle \tcur^t_{\alpha'} \big\rangle \right) \right\rangle
\end{align}
is the steady-state variance of the currents, properly scaled with time. The S-FDR and the RR can be proven quite straightforwardly as corollaries of the FR. More importantly for this paper, the first result follows as a corollary of the IFR Eq.\,(\ref{eq:ifr}), and the second as a corollary of the reciprocal IFR Eq.\,(\ref{eq:recIFR}). At equilibrium, currents do not respond to a variation of the kinetic parameters:
\begin{align}
\lcur^{\mathrm{eq}}_{\alpha; \kappa} = 0 , \qquad \kappa > |\alpha|. \label{eq:activeres}
\end{align}
Furthermore, the FR can be employed to produce higher-order response relations \cite{andrieux2,andrieux3}, which constitute the most promising testing ground for our theory. For example, at third order, focusing on one single current, near equilibrium one obtains 
\begin{subequations}\label{eq:thirdordereq}
\begin{align}
\ccur^{\mathrm{eq}}_{\alpha\alpha\alpha} & = 0 \\ \ccur_{\alpha\alpha; \alpha}^{\mathrm{eq}} - \lcur_{\alpha; \alpha\alpha}^{\mathrm{eq}} & = 0, 
\end{align}
\end{subequations}
where $\ccur_{\alpha\alpha\alpha}$ is the scaled third-order cumulant. The first relation is due to the fact that at equilibrium the current p.d.f. is symmetric, $P^{\eq}(\{\tcur_\alpha\}) = P^{\eq}(\{- \tcur_\alpha\})$, and the second expresses the second-order response of the average current in terms of the first-order response of its variance.

\begin{center}
\includegraphics[width=.2\columnwidth]{ornament2.pdf}
\end{center}

Let us now sketch the mathematical framework and assumptions based on which the above results can be derived (full details in Sec.\,\ref{setup}). We consider a continuous-time, discrete configuration-space Markov ``jump'' process occurring on a finite network with configurations ({\it sites} in graph-theoretic language) $i,j,\ldots$ connected by transitions ({\it oriented edges}) $i\!j,\ldots$, where the $i$ is the final site and $j$ is the starting site, following the right-to-left physicists' convention. The dynamics can be described by an evolution equation for the probability $p_i(t)$ of being in configuration $i$ at time $t$, governed by the master equation
\begin{align}
\frac{d}{dt}\vec{p}(t)  = \W \vec{p}(t). \label{eq:ME} 
\end{align}
Here, $\vec{p} = (p_1(t))_i$ and $\W$ is a Markov-jump process generator (MJPG) with entries
\begin{align}
\W_{\ij} = \left\{\ba{ll} w_{\ij}, & i \neq j \\ - w_i & i = j. \ea\right. ,
\end{align}
where $w_i = \sum_l w_{ji}$ is the exit rate out of a configuration. We call $\W$ the {\it forward} generator. The dependence on the external parameters is encoded in the rates $w_{\ij}= w_{\ij}(\bs{x})$. Given the steady-state of the dynamics $\vec{p}$, satisfying $\W \vec{p} = 0$, one can construct the time-reversed generator $\overline{\W} = \Pp \W^T \Pp^{-1}$, or simply hidden time reversal (TR), where $\Pp = \mathrm{diag}\, (p_i)_i$. In a sense, time reversal ``runs steady-states back in time,'' with detailed-balanced (equilibrium) systems obeying time-reversal symmetry
\begin{align}
\W(\bs{x}^{\eq}) = \overline{\W}(\bs{x}^{\eq}). \label{eq:dbop}
\end{align}

Time-integrated {\it edge currents} $\tcur^t_{i\!j}$ count the net number of times a certain transition is performed along a single realization of the process;  all possible current-like observables $\tcur^t_\alpha$ are linear combinations of edge currents. It is usually more practical to study the currents' statistics via their cumulants, properly scaled in time. An important result in the theory of Markov processes allows to obtain the currents' scaled cumulant generating function (SCGF)\footnote{We actually adopt the sign convention of Touchette \cite{touchetterep} rather than that of Lebowitz and Spohn \cite{lebowitz} on the definition of the tilted operator, and thus on the SCGF -- which in our case is actually a {\it signed} SCGF.}  $\lambda(\{q_\alpha\})$ as the dominant eigenvalue of a suitably defined ``tilted'' operator, see Sec.\,\ref{subsec:scgf}. Then the FR Eq.\,(\ref{eq:fr}) translates into the following fluctuation symmetry \cite{lebowitz} 
\begin{align}
\lambda(\{q_\alpha\}) = \lambda(\{\A_\alpha - q_\alpha\}), \label{eq:lebspo}
\end{align}
while the the IFR Eq.\,(\ref{eq:ifr}) reads
\begin{align}
\lambda(\{\A_\alpha\}) =0
\end{align}
and the reciprocal IFR Eq.\,(\ref{eq:recIFR}) reads
\begin{align}
\lambda(\{\A_1,0\}) = \lambda(\{0,\A_2\}).
\end{align}

All of the above results hold in the following ``complete'' setups:
\begin{itemize}
\item[A1] Index $\alpha$ spans through independent {\it cycles} in the network. The affinities are computed as the sum of the log-ratio of the rates $\log w_{\ij}/w_{\ji}$ along such cycles, and their conjugate currents $\tcur^t_\alpha$ are edge currents associated to certain preferred transitions in the network, in the light of Schnakenberg's cycle analysis of steady states, which is the analog of Kirchhoff's mesh analysis of electrical circuits applied to Markov processes \cite{schnak,andrieux,polettini2}.
\item[A2] Index $\alpha$ ranges through a smaller number of {\it phenomenological} currents, which in realistic physical models are associated to several transitions in the configuration network of a system. Provided such transitions cover at least a basis of cycles, correspondingly, for the FR to hold, the cycle affinities must enjoy certain symmetries, a condition called {\it local detailed balance} or {\it thermodynamic consistency} -- or, simply consistency, systematically analyzed in Refs.\,\cite{bridging,rao}.
\end{itemize}
For example, if measured independently, the set of currents denoted by double arrows in the following network form a minimal ``complete'' set according to setup A1,
\begin{align}
\mathrm{A1)} \quad \Scale[2]{\ba{c}\xymatrix{
\bullet \ar@{<<-}[r]^{\Scale[.5]{1}} \ar@{-}[d] & \ar@{-}[dl]\ar@{-}[d]  \bullet  & \bullet  \ar@{->>}[l]_{\Scale[.5]{2}}   \ar@{-}[d]   \\
\bullet \ar@{<<-}[r]_{\Scale[.5]{3}} &  \bullet  &  \bullet \ar@{-}[ul]  \ar@{->>}[l]^{\Scale[.5]{4}}  
}\ea}. \nonumber
\end{align}
As detailed in Sec.\,\ref{sec:cycle}, the corresponding cycle affinities are calculated as the log-ratio of the product of the rates on a basis of fundamental cycles. We can represent them diagrammatically as
\newcommand{\scale}{.7}
\begin{equation}
\label{eq:affgraph}
\begin{aligned}
\A_1 & = \log \frac{\Scale[\scale]{\xymatrix{
\bullet \ar@{<-}[r] \ar@{->}[d] & \ar@{<-}[dl]  \bullet   & \ar@[Gray]@{-}[l] \ar@[Gray]@{-}[d]   \\
\bullet &  \ar@[Gray]@{-}[u] \ar@[Gray]@{-}[l] & \ar@[Gray]@{-}[ul] \ar@[Gray]@{-}[l]
}}}
{\Scale[\scale]{\xymatrix{
\bullet \ar@{->}[r] \ar@{<-}[d] & \ar@{->}[dl]  \bullet   & \ar@[Gray]@{-}[l] \ar@[Gray]@{-}[d]   \\
\bullet &  \ar@[Gray]@{-}[u] \ar@[Gray]@{-}[l] & \ar@[Gray]@{-}[ul] \ar@[Gray]@{-}[l]
}}},
& \A_2 & = \log \frac{\Scale[\scale]{\xymatrix{  \ar@[Gray]@{-}[r] & \bullet  & \bullet  \ar@{->}[l]   \ar@{<-}[d]  \\  \ar@[Gray]@{-}[ur] \ar@[Gray]@{-}[u] \ar@[Gray]@{-}[r] &  \ar@[Gray]@{-}[u]  \ar@[Gray]@{-}[r] &  \bullet \ar@{<-}[ul] 
}}}{\Scale[\scale]{\xymatrix{  \ar@[Gray]@{-}[r] & \bullet  & \bullet  \ar@{<-}[l]   \ar@{->}[d]  \\  \ar@[Gray]@{-}[ur] \ar@[Gray]@{-}[u] \ar@[Gray]@{-}[r] &  \ar@[Gray]@{-}[u]  \ar@[Gray]@{-}[r] &  \bullet \ar@{->}[ul] 
}}}, \\
\A_3 & = \log \frac{\Scale[\scale]{\xymatrix{\ar@[Gray]@{-}[d] 
& \ar@{<-}[dl]\ar@{->}[d]  \bullet \ar@[Gray]@{-}[l] & \ar@[Gray]@{-}[l] \ar@[Gray]@{-}[d]   \\
\bullet \ar@{<-}[r] &  \bullet  & \ar@[Gray]@{-}[ul] \ar@[Gray]@{-}[l]
}}}
{\Scale[\scale]{\xymatrix{\ar@[Gray]@{-}[d] 
& \ar@{->}[dl]\ar@{<-}[d]  \bullet \ar@[Gray]@{-}[l] & \ar@[Gray]@{-}[l] \ar@[Gray]@{-}[d]   \\
\bullet \ar@{->}[r] &  \bullet  & \ar@[Gray]@{-}[ul] \ar@[Gray]@{-}[l]
}}}, & 
\A_4 & = \log \frac{\Scale[\scale]{\xymatrix{ \ar@[Gray]@{-}[r]  & \ar@[Gray]@{-}[r]  \ar@{<-}[d]  \bullet  & \ar@[Gray]@{-}[d]  \\
  \ar@[Gray]@{-}[ur] \ar@[Gray]@{-}[u] \ar@[Gray]@{-}[r]  & \bullet  &  \bullet \ar@{<-}[ul]  \ar@{->}[l]  
}}}{\Scale[\scale]{\xymatrix{ \ar@[Gray]@{-}[r]  & \ar@[Gray]@{-}[r]  \ar@{->}[d]  \bullet  & \ar@[Gray]@{-}[d]  \\
  \ar@[Gray]@{-}[ur] \ar@[Gray]@{-}[u] \ar@[Gray]@{-}[r]  & \bullet  &  \bullet \ar@{->}[ul]  \ar@{<-}[l]  
}}},
\end{aligned}
\end{equation}
where the arrows in the diagrams imply multiplication of the corresponding rates.

Instead, if the observer is only capable of measuring a linear combination of the above currents, then we fall within setup A2. This might be the case when several transitions in configuration space correspond to the exchange of the same physical quantity with one particular reservoir, a situation that we illustrate by a curly ``reservoir arrow,'' borrowing from the chemistry literature. For example, the system
\begin{align}
\mathrm{A2)} \quad  \Scale[2]{\ba{c}\xymatrix{
\bullet \ar@{<<-}^{\hspace{.2cm} \substack{\rotatebox[origin=c]{180}{$\curvearrowright$} \\  \vspace{-.4cm} }}[r] \ar@{-}[d] & \ar@{-}[dl]\ar@{-}[d]  \bullet  \ar@{<<-}^{\hspace{.2cm} \substack{\rotatebox[origin=c]{180}{$\curvearrowright$} \\  \vspace{-.4cm} }}[r]    & \bullet \ar@{-}[d]   \\
\bullet \ar@{<<-}^{\hspace{.2cm} \substack{\rotatebox[origin=c]{180}{$\curvearrowright$} \\  \vspace{-.4cm} }}[r] &  \bullet \ar@{<<-}^{\hspace{.2cm} \substack{\rotatebox[origin=c]{180}{$\curvearrowright$} \\  \vspace{-.4cm} }}[r]   &  \bullet \ar@{-}[ul]  
}\ea} \nonumber
\end{align}
corresponds to the situation where each of the four transitions contributes one unit to the phenomenological current counter, but the observer would not be able to tell which one of the transitions happened. Then, to grant consistency for this particular example the affinities along the cycles depicted above must all take the same value (otherwise, by a calorimetric experiment the observer {\it would} be able to tell the difference!).
 
Let us give an insight on the physical interpretation of transition rates and on their thermodynamically consistent parametrization assumed in A0. For detailed balanced systems subject to conservative forces, the most general form that the rate of hopping from site $j$ to site $i$ can take is
\begin{align}
w^{\mathrm{eq}}_{\ij} = v_{\ij} \, \exp - u_j
\end{align}
where $v_{\ij} = v_{\ji} > 0$ is symmetric. From a physical standpoint, in view of e.g. the Arrhenius law \cite{hanggi},  one can portray the configuration space of a system as a landscape with the sharpest minima at the configuration sites, separated by activation barriers. The configuration function $u_i$ and the symmetric term $v_{\ij}$ fully describe such an {\it internal landscape}. For systems that do not satisfy detailed balance an asymmetric term $a_{\ij} = - a_{\ji}$ appears and we can generally write \cite{rao}
\begin{align}
w_{\ij} = w^{\mathrm{eq}}_{\ij} \, \exp a_{\ij}/2.
\end{align}
The intuition is that the non-conservative term $a_{\ij}$ is a relic of the interaction of the system with the degrees of freedom of an external reservoir that influences the transition. For example, in the procedure of obtaining an open irreversible chemical network from a closed one by chemostatting chemical species described in Ref.\,\cite{polettiniCN}, the internal landscape is fully encoded in the reaction rates, while the terms $a_{\ij}$ correspond to the concentrations of the external chemostats. Importantly, thermodynamic affinities only depend on the latter: the transformation of one particular affinity $d \A_{\alpha}$, at fixed values of all other affinities, only involves: A0.i) a {\it local} transformation of the external ``relic'' terms along the network's edges that are peculiar to that particular mechanism;  A0.ii) a {\it global} transformation of the internal energy landscape.  We will detail this issue in Sec.\,\ref{parametrizations}.

We will call a transformation of the form
\begin{align}
w_{\ij} \to w'_{\ij} = w_{\ij} \,e^{-a_j} \label{eq:gaugetrans}
\end{align}
a {\it gauge} transformation. Under such a transformation, the log-ratio of the rates $\log w_{\ij}/w_{\ji}$ transforms like an inhomogeneous gauge connection, but the affinities are invariant. In a sense, they are the Wilson loops of the theory. This nomenclature is, in fact, more than an analogy. Gauge invariance of nonequilibrium thermodynamics is a concept put forward by one of the authors in Refs.\,\cite{polettinigauge,dice}. There, it is argued that it is a necessary property if one wants to make sense of thermodynamics as a science of information and ignorance \cite{bennaim}, as the corresponding continuous symmetry corresponds to a modification of prior probabilities. Therefore gauge invariance allows to deal with biases encoded in prior information, often perceived as a threat to the ``objectivity'' of the theory.

\subsection{Antistroph\^e: Marginal fluctuations and response}
\label{antistrophe}

We now focus on a subset of $|\mu|< |\alpha|$ currents $\{\tcur^t_\mu\}$ and consider their marginal p.d.f.
\begin{align}
\prob(\{\tcur_\mu\}) := \int \prod_{\alpha> |\mu|} d \tcur_\alpha \, \prob(\{\tcur_\alpha\})  .
\end{align}
The questions we address are: which of the above relations survive, what new results emerge, and under which (presumably stricter) conditions?

The central result in this paper (\Th{th:ift1}, \Th{th:ift2}) is that there exist {\it effective affinities} $\Q_\mu$ such that a {\it marginal IFR} holds 
\begin{align}
\left\langle  \exp {- \sum \tcur_\mu \Q_\mu} \right\rangle = 1, \label{eq:MIFR}
\end{align}
despite the fact that the full-fledged FR does not,
\begin{align}
\frac{\prob(\{\tcur_\mu\})}{\prob(\{-\tcur_\mu\})} \neq \exp {\sum \Q_\mu \tcur_\mu}, \label{eq:noFR}
\end{align}
and, provided there is at least one additional unobserved current, neither does the reciprocal IFR,
\begin{align}
\left\langle  \exp {- \tcur_1 \Q_1} \right\rangle \neq \left\langle  \exp {- \tcur_2 \Q_2} \right\rangle,
\end{align}

Here and below $\neq$ loosely means ``generally not,'' keeping into consideration that one can always fabricate systems whose marginal currents do obey the marginal FR (e.g. systems with statistically independent currents because of a special topology of the network). That the FR does not generally hold can already be deduced by the analysis of specific examples, see e.g. Ref.\,\cite{lacoste}. Again, our marginal IFR holds asymptotically, for any given initial ensemble, or at all times provided the trajectory's initial configuration is sampled from a special state $\vec{p}^{\,\mathrm{st}}$ that we will describe shortly.

However, a moment of reflection leads to the conclusion that, {\it per se}, the existence of values of the $\{\Q_\mu\}$ that make Eq.\,(\ref{eq:MIFR}) true should be no surprise. If we are allowed to tune such values at will, the average of the exponential can definitely range anywhere from $0$ to $+\infty$.  As we will discuss later in this Parode, for $|\mu|>1$ there actually is a continuum of candidate effective affinities fulfilling the marginal IFR. Thus, what is important is not that there exist such values, but that they can be given an operational interpretation\footnote{Interestingly, the same emphasis on this operational aspect is found in the already mentioned textbook by Baez: ``An understanding of concepts requires, however, much more than the ability to recite the associated words and their dictionary definitions. It is necessary to study, and preferably to experience, the {\it operations} that give meaning to the words.''}. We reserve the expression ``effective affinities'' and the notation $\{\Q_\mu\}$ to one particular choice of those values, identified by a constructive procedure that we will soon detail, and that most importantly grants that they are {\it marginally thermodynamic} in the sense that
\begin{align}
\frac{\partial} {\partial x_{\mu'}} \Q_\mu(\bs{x}) = \delta_{\mu,\mu'} = \frac{\partial} {\partial x_{\mu'}} \A_\mu(\bs{x}).\label{eq:bmu}
\end{align}
This is crucial if we want to produce a response theory. However, this latter fact requires to slightly reduce the scope of assumption A0:
\begin{itemize}
\item[B0] Thermodynamic parameters only affect the rates of the networks' edges that support the current of observational interest.
\end{itemize}
We will investigate at length the difference between A0 and B0 in Sec.\,\ref{parametrizations}. Let us already give a piece of good news, in the light of the physical parametrization of the transition rates discussed in the previous section. The only difference with respect to the parametrization of the ``real'' affinities of the ``complete'' theory is that modifications of the internal landscape might affect effective affinities. Therefore we can only afford A0.i) the same {\it local} transformation of the external antisymmetric terms along the network's edges that are peculiar to that particular mechanism. Instead, we need to replace A0.ii with B0.ii) a {\it local} transformation of the internal energy landscape. This is not a dramatic restriction. As a matter of fact, the workings of \Th{th:mardiss1} basically show that there is not much more to ``thermodynamic parametrization'' than there is in ``local parametrization,'' so that this whole discussion can be safely dismissed: the whole point of this discourse is to show that the parametrization does not really affect the theory, unless one plays devil's advocate by picking a very nonlocal and contrived parametrization. As far as we only modify reservoir properties (e.g. temperatures, chemical potentials), we are on safe grounds.

From the marginal IFR follows that the marginal EPR $\ssum \Q_\mu \cur_\mu$ is positive, while notice that in general the ``piece'' of EPR $\ssum \A_\mu  \cur_\mu$ might be not, due to the phenomenon of transduction by which some currents can be made to run against their conjugate thermodynamic forces by a conjure of the other currents and forces \cite{hill}. More interestingly, we prove in \Th{th:hierarchy} that the marginal EPR is always smaller than the ``complete'' EPR
\begin{align}
0 \leq \ssum \Q_\mu \cur_\mu \leq \ssum \A_\alpha \cur_\alpha,
\end{align}
This generalizes the results of Ref.\,\cite{gili}, which deals with the case $|\mu| =1$. These considerations open up the question in what sense $\ssum_\mu  \Q_\mu  \cur_\mu$ can actually be interpreted as EPR from a marginal point of view. As shown in Ref.\,\cite{polettiniobs}, and recapitulated in \S \ref{sec:marepr}, in the case of a single current supported on one edge, indeed this quantity represents the putative EPR evaluated by a local observer that can only access information about a specific transition of the system, and who formulates a minimal steady-state model of the hidden sector of the system. We lack the generalization of this latter argument to $|\mu| > 1$.

In fact, there exists entire {\it hierarchies} of marginal theories, according to whether one measures $|\mu|=1,2,\ldots, |\alpha|$ currents, up to a ``complete'' set. For example, the above case study admits $|\alpha|! = 24$ hierarchies, among which
\begin{equation}
\label{eq:24}
\begin{aligned}
|\mu| = |\alpha| & = 4  &   & \Scale[1.0]{\ba{c}\xymatrix{
\bullet \ar@{<<-}[r] \ar@{-}[d] & \ar@{-}[dl]\ar@{-}[d]  \bullet  & \bullet  \ar@{->>}[l]   \ar@{-}[d]   \\
\bullet \ar@{<<-}[r] &  \bullet  &  \bullet \ar@{-}[ul]  \ar@{->>}[l]  
}\ea} \nonumber \\
|\mu| & = 3  &   & \Scale[1.0]{\ba{c}\xymatrix{
\bullet \ar@{<<-}[r] \ar@{-}[d] & \ar@{-}[dl]\ar@{-}[d]  \bullet  & \bullet  \ar@{->>}[l]   \ar@{-}[d]   \\
\bullet \ar@{-}[r] &  \bullet  &  \bullet \ar@{-}[ul]  \ar@{->>}[l]  
}\ea} \nonumber \\
|\mu| & = 2  &   & \Scale[1.0]{\ba{c}\xymatrix{
\bullet \ar@{<<-}[r] \ar@{-}[d] & \ar@{-}[dl]\ar@{-}[d]  \bullet  & \bullet  \ar@{-}[l]   \ar@{-}[d]   \\
\bullet \ar@{-}[r] &  \bullet  &  \bullet \ar@{-}[ul]  \ar@{->>}[l]  
}\ea} 	\nonumber \\
|\mu| & = 1  &   & \Scale[1.0]{\ba{c}\xymatrix{
\bullet \ar@{<<-}[r] \ar@{-}[d] & \ar@{-}[dl]\ar@{-}[d]  \bullet  & \bullet  \ar@{-}[l]   \ar@{-}[d]   \\
\bullet \ar@{-}[r] &  \bullet  &  \bullet \ar@{-}[ul]  \ar@{-}[l]  
}\ea}. \nonumber
\end{aligned}
\end{equation}
Within any one such hierarchy, we will be able to show that the mean EPR estimated at each level is smaller than that estimated at the subsequent level,
\begin{align}
0 \leq \ldots \leq \sum_{\mu = 1}^{|\mu|} \Q_\mu^{1,\ldots,|\mu|} \cur_\mu \leq \ldots \leq \sum_{\alpha = 1}^{|\alpha|} \Q_\alpha^{1,\ldots,|\alpha|} \cur_\alpha,
\end{align}
where we now added a superscript as a further specification of the effective affinities, to highlight the fact that they are associated with the $|\mu|$-th theory in the hierarchy. That is because effective affinities associated to the same current, but referring to different levels in the hierarchy, are generally different. The ``real'' affinities $\A_\alpha = \Q_\alpha^{1,\ldots,|\alpha|}$ are the last in the hierarchy.

(We now go back to dropping the hierarchy specification superscript $1,\ldots,|\mu|$.) A system for which all marginal currents vanish, $\cur_\mu^\st = 0, \forall \mu$, is said to be at {\it stalling}, where it stalls. We will show (\Th{th:stalling1}, \Th{iffstalling}) that one achieves stalling if and only if all of the effective affinities vanish:
\begin{align}
\mathrm{stalling:}\quad \Q_\mu = 0 , \forall \mu \iff \phi_\mu = 0 , \forall \mu.
\end{align}
While the marginal currents vanish at stalling, all other currents need not vanish. Hence stalling steady states are generally far from equilibrium, and can be interpreted as states of ``local equilibrium'' with respect to our hypothetical marginal observer. Clearly, the variety of stalling values $\bs{x}^{\st}$ includes that of equilibrium values $\bs{x}^{\mathrm{eq}}$, and typically the latter is a set of zero measure in the former.

Let us now consider response to perturbations out of stalling. The IFR alone grants the validity of the S-FDR, but not of the RR:
\begin{subequations}\label{eq:gko}
\begin{align}
\lcur^\st_{\mu; \mu'} + \lcur^\st_{\mu'\!; \mu} & = \ccur_{\mu\mu'}^{\st}
\label{eq:symmgk}  \\
\lcur^\st_{\mu; \mu'} - \lcur^\st_{\mu'\!; \mu}  & \neq 0.
\label{eq:onsagerp}
\end{align}
\end{subequations}
This is a clear-cut experimental prediction of our theory: While the S-FDR relation is common to response out of equilibrium and out of stalling, the violation of the RR is a signature of stalling. Notice that  Eq.\,(\ref{eq:bmu}) is a guarantee that perturbations with respect to the effective affinities are the same as  perturbations with respect to the ``real'' affinities, so that to test response at stalling no specific experimental protocol has to be devised that is inherently different than that at equilibrium, provided assumption B0 is met. This is crucially important: We want the experimental apparatuses of the ``complete'' and the marginal theories to be the same, because in principle there is no a priori assurance that the system we are going to measure is actually complete.

Let us now look at other signatures of stalling. Differing from the ``real'' affinities, the effective affinities might still depend on the rest of the parameters $\rest := \{x_\kappa\}_{\kappa > |\mu|}$
\begin{align}
\Q_\mu(\bs{x}) = x_\mu - x_\mu^\st(\rest). 
\end{align}
These include the kinetic ones. In other words, the effective affinities might be sensible to modifications of the internal landscape even in the hidden sector of the system. This gives rise, in spite of Eq.\,(\ref{eq:activeres}), to the response formula
\begin{align}
\left( \sum_\mu \lcur_{\mu; \kappa} \frac{d x^\st_\mu}{dx_\kappa} + \frac{1}{2} \sum_{\mu,\mu'} \ccur_{\mu\mu'} \frac{d x^\st_\mu}{dx_\kappa}  \frac{d x^\st_{\mu'}}{dx_\kappa} \right)^\st = 0, \label{eq:pertkin}
\end{align}
equipped with orthogonality relation
\begin{align}
\sum_\mu \cur_\mu \frac{d x_\mu^{\st}}{dx_\kappa} = 0.
\end{align}
Hence, perturbations of kinetic parameters, even far from the observable configurations, might lead to a perturbation of the steady state out of stalling. Local equilibrium is more fragile than equilibrium, as intuitive.

Considering higher cumulants, at third order we find, in spite of the two equations in Eq.\,(\ref{eq:thirdordereq}), that
\begin{align}
 \lcur_{\mu; \mu\mu}^\st - \ccur_{\mu\mu; \mu}^\st  =\frac{1}{3} \ccur_{\mu\mu\mu}^{\st},
\end{align}
while in general
\begin{align}
\ccur_{\mu\mu\mu}^{\st} \neq  0. 
\end{align}
This is due to the fact that, at stalling, the marginal p.d.f. is not necessarily symmetric, $\prob^\st(\{\tcur_\mu\}) \neq \prob^\st(\{-\tcur_\mu\})$. Therefore, the skewness of the p.d.f. is a signature of a stalling steady state.

\begin{center}
\includegraphics[width=.2\columnwidth]{ornament2.pdf}
\end{center}

In the framework of Markov jump-processes on a network briefly described above, marginal currents and their conjugate effective affinities can either be
\begin{itemize}
\item[B1] Currents flowing along single edges, but with some cycles left out from the accounting. Effective affinities are uniquely identified by the theory.
\item[B2] Phenomenological currents. In this case, our construction only holds provided that effective affinities satisfy a condition of {\it marginal consistency} that with simple examples can be shown to be stricter than ``complete'' consistency. 
\end{itemize}
An example of setup B1 is the following network
\begin{align}
\mathrm{B1)} \quad \Scale[2]{\ba{c}\xymatrix{
\bullet \ar@{<<-}[r] \ar@{-}[d] & \ar@{-}[dl]\ar@{-}@[Gray][d]  {\color{Gray} \bullet}  & {\color{Gray} \bullet}  \ar@{-}@[Gray][l]   \ar@{-}@[Gray][d]   \\
\bullet \ar@{<<-}[r] & {\color{Gray} \bullet}  & {\color{Gray} \bullet}   \ar@{-}@[Gray][ul]  \ar@{-}@[Gray][l]  
}\ea} \nonumber
\end{align}
where arrowed transitions are observable, and grey transitions are hidden. The remaining two transitions are kept black because, at a steady state, the current flowing through them is known. Let us denote the observable transitions $i_\mu j_\mu$. The effective affinities are identified according to the following recipe: Remove all the observable transitions (we may assume for simplicity that the network remains connected, but this is not mandatory):
\begin{align}
\Scale[2]{\ba{c}\xymatrix{
\bullet \ar@{-}[d] & \ar@{-}[dl]\ar@{-}@[Gray][d]  {\color{Gray} \bullet}  & {\color{Gray} \bullet}  \ar@{-}@[Gray][l]   \ar@{-}@[Gray][d]   \\
\bullet & {\color{Gray} \bullet}  & {\color{Gray} \bullet}   \ar@{-}@[Gray][ul]  \ar@{-}@[Gray][l]  
}\ea}. \nonumber
\end{align}
On such a reduced configuration space, let us consider the dynamics described by the {\it hidden generator} $\W_{\mathrm{hid}}$ obtained by setting the rates of the observable transitions to zero, $w_{i_\mu j_\mu} = w_{j_\mu i_\mu} = 0$. Let the system relax to the {\it stalling} steady state $\vec{p}^{\,\st}$ of the hidden dynamics, $\W_{\mathrm{hid}} \vec{p}^{\,\st} = 0$. Then the effective affinities are given by
\begin{align}
\Q_\mu = \log \frac{w_{i_\mu j_\mu} p^{\st}_{j_\mu}}{w_{j_\mu i_\mu} p^{\st}_{i_\mu}}.
\end{align}
From an operational perspective, to find the effective affinities there is no need to know the actual rates, as one can just tune parameters to make the currents stall $\{\phi_\mu\} = 0$. In fact, notice that a local parametrization of the rates such as
\begin{align}
\frac{w_{i_\mu j_\mu}(\x)}{w_{j_\mu i_\mu}(\x)} = \exp x_\mu
\end{align}
yields
\begin{align}
\Q_\mu(\x) = x_\mu - x_\mu^{\st}
\end{align}
where $x_\mu^\st = \log (p^{\st}_{i_\mu}/p^{\st}_{j_\mu})$. Furthermore, we can show that there is no difference between ``tuning to stalling'' and ``removing'' as far as the stalling steady state is concerned (see \Th{th:tuning}). This latter expression is the fundamental link between the mathematical and the operational definitions of the effective affinities, thus the cornerstone of the physical interpretation of our theory. The effective affinities have a twofold characterization. On the one hand, they can be interpreted as the forces exerted on the observable edges at a quench, that is, by preparing the system in steady state $\vec{p}^{\,\st}$ and then suddenly switching on the transition rates. On the other, they can be obtained by tuning the controllable parameters to the stalling values that make currents stall. Furthermore, the effective affinities will be shown to be gauge invariant under transformation Eq.\,(\ref{eq:gaugetrans}), exactly like their ``complete'' counterparts, thus granting the compatibility of the theory with foundational requirements.

Notice that, like we briefly mentioned above, when considering an observer who adds more and more currents to his/her observational basket, different marginal theories are generated. Now the reason is clear: the stalling steady state obtained by removing $|\mu|$ edges is different from that obtained by removing a subset $|\mu'| \subset |\mu|$. As observed, the effective affinities conjugate to one particular current differ among themselves at different levels of such hierarchy, which implies that the stalling values $x_\mu^{|\mu|,\st}$ differ as well. From an operational point of view, this phenomenon has a simple interpretation: by virtue of Eq.\,(\ref{eq:pertkin}), once the first $|\mu'|$ currents stall, tuning the other $|\mu|-|\mu'|$ to stalling will also perturb the first, thus disrupting the stalling steady state achieved before. This creates space for an interesting question, whether there exists a smart iterative procedure to tune to stalling.

Like with affinities, we can give a graphical representation of effective affinities. For example, for level $|\mu|=3$ in the hierarchy illustrated above we have three effective affinities. Let's consider only the first, which reads:
\begin{align}
\Q^{1,2,4}_1 & = \log \frac{
\ba{c} \Scale[\scale]{\xymatrix{
\bullet \ar@{<-}[r] \ar@{->}[d] & \ar@{<-}[dl]  \bullet   & \ar@[Gray]@{-}[l] \ar@[Gray]@{-}[d]   \\
\bullet &  \ar@{->}[u] \ar@[Gray]@{-}[l] & \ar@[Gray]@{-}[ul] \ar@[Gray]@{-}[l]}}\ea + \ba{c} \Scale[\scale]{
\xymatrix{\bullet \ar@{<-}[r] \ar@{->}[d] & \ar@{<-}[dl]  \bullet   & \ar@[Gray]@{-}[l] \ar@[Gray]@{-}[d]   \\
\bullet &  \ar@[Gray]@{-}[u] \ar@{->}[l] & \ar@[Gray]@{-}[ul] \ar@[Gray]@{-}[l]}} \ea + \ba{c} \Scale[\scale]{\xymatrix{
\bullet \ar@{<-}[r] \ar@{->}[d] & \ar@[Gray]@{-}[dl]  \bullet   & \ar@[Gray]@{-}[l] \ar@[Gray]@{-}[d]   \\
\bullet &  \ar@{->}[u] \ar@{<-}[l] & \ar@[Gray]@{-}[ul] \ar@[Gray]@{-}[l]}}\ea} {\ba{c}
\Scale[\scale]{\xymatrix{
\bullet \ar@{->}[r] \ar@{<-}[d] & \ar@{->}[dl]  \bullet   & \ar@[Gray]@{-}[l] \ar@[Gray]@{-}[d]   \\
\bullet &  \ar@{->}[u] \ar@[Gray]@{-}[l] & \ar@[Gray]@{-}[ul] \ar@[Gray]@{-}[l]}}\ea + \ba{c} \Scale[\scale]{\xymatrix{ \bullet \ar@{->}[r] \ar@{<-}[d] & \ar@{->}[dl]  \bullet   & \ar@[Gray]@{-}[l] \ar@[Gray]@{-}[d]   \\
\bullet &  \ar@[Gray]@{-}[u] \ar@{->}[l] & \ar@[Gray]@{-}[ul] \ar@[Gray]@{-}[l]}} \ea + \ba{c} \Scale[\scale]{\xymatrix{
\bullet \ar@{->}[r] \ar@{<-}[d] & \ar@[Gray]@{-}[dl]  \bullet   & \ar@[Gray]@{-}[l] \ar@[Gray]@{-}[d]   \\
\bullet &  \ar@{<-}[u] \ar@{->}[l] & \ar@[Gray]@{-}[ul] \ar@[Gray]@{-}[l]}}\ea
}. \end{align}
In a way, like each ``real'' affinity in Eq.\,(\ref{eq:affgraph}) is defined along one fundamental cycle, cycles (more than one) still play a role in the definition of the effective affinity, though in a more involved way. The effective affinity includes all of the cycles that pass through the observable transition and that are not already ``taken care for'' by other observable transitions. This ``dressing'' of the affinity by resumming diagrams is somewhat reminiscent of the paradigm of the renormalization of particles's masses and charges in Quantum Field Theory. It's interesting to compare these effective affinities to the ``real'' affinities  of such cycles $\C\owns \mu , \niton \mu' \neq \mu$ that contain the observable edge under scrutiny but not all others. We can prove as a consequence of the log-sum inequality in information theory that
\begin{align}
\Q_\mu & \leq \frac{\sum_{\C} \phi_\C \A_\C} {\sum_\C \phi_\C},
\end{align}
where the $\phi_\C$ are the so-called Hill cycle currents \cite{hill,hill66}, which provide a fundamental cycle decomposition of the stochastic process.


If instead we are in framework B2, and for example the two observable currents considered above are associated to the same reservoir so that the observer measures the sum of their values
\begin{align}
\mathrm{B2)} \quad 
\Scale[2]{\ba{c}\xymatrix{
\bullet \ar@{<<-}^{\hspace{.2cm} \substack{\rotatebox[origin=c]{180}{$\curvearrowright$}\\  \vspace{-.4cm} }}
[r] \ar@{-}[d] & \ar@{-}[dl]\ar@{-}@[Gray][d]  {\color{Gray} \bullet}  & {\color{Gray} \bullet}  \ar@{-}@[Gray][l]   \ar@{-}@[Gray][d]   \\
\bullet \ar@{<<-}^{\hspace{.2cm} \substack{\rotatebox[origin=c]{180}{$\curvearrowright$} \\  \vspace{-.4cm} }}[r] & {\color{Gray} \bullet}  & {\color{Gray} \bullet}   \ar@{-}@[Gray][ul]  \ar@{-}@[Gray][l]  
}\ea}, \nonumber
\end{align}
then the theory stands on the major assumption of marginal consistency, which in this particular case requires the two effective affinities to be identical. Systems that fail to meet this condition exhibit a violation of the IFR and of the  S-FDR. At stalling, where the phenomenological current vanishes (e.g. the sum of the two currents in the current example), the condition of marginal consistency has an intuitive physical interpretation: we will show in \Th{th:mtctc} that it is met if all of the microscopic edge currents contributing to a phenomenological current also vanish. For example, the following configuration of currents makes the phenomenological current vanish, but it is internally lively, which would lead the observer to estimate a vanishing effective EPR where, instead, there is effective dissipation:
\begin{align}
\Scale[2]{\ba{c}\xymatrix{
\bullet \ar@_{->}@<-0.4mm>[r] \ar@^{-}@<0.0mm>[r]  \ar@^{->}@<0.4mm>[r] &  {\color{Gray} \bullet}  \ar@{--}[dl] 
  \ar@_{->}@[Gray]@<-0.2mm>[r]   \ar@^{->}@[Gray]@<0.2mm>[r]  
      \ar@_{->}@[Gray]@<-0.2mm>[d]   \ar@^{->}@[Gray]@<0.2mm>[d] &  {\color{Gray} \bullet}
    \ar@_{->}@[Gray]@<-0.2mm>[d]   \ar@^{->}@[Gray]@<0.2mm>[d] \\ 
\bullet  \ar@_{->}@<-0.4mm>[u] \ar@^{-}@<-0.0mm>[u] \ar@^{->}@<0.4mm>[u] 
 & {\color{Gray} \bullet}  \ar@_{->}@<-0.4mm>[l] \ar@^{-}@<0.0mm>[l] \ar@^{->}@<0.4mm>[l] & {\color{Gray} \bullet}  \ar@{->}@[Gray]@<0.0mm>[ul]   \ar@{->}@[Gray]@<0.0mm>[l]  
 }\ea} \nonumber
\end{align}
Here, every arrow depicts a ``quantum'' of current;  notice that Kirchhoff's current law is satisfied at each site of the network. For such a system, our theory will not work. \Th{th:reverseconsistency} makes the point that, if for all possible values of the microscopic currents there is no internal dissipation, then the theory is marginally consistent.

\begin{center}
\includegraphics[width=.2\columnwidth]{ornament2.pdf}
\end{center}

The main instrument we will employ is the SCGF of the marginal currents $\lambda(\{q_\mu\})$, which by the {\it contraction principle} in Large Deviation Theory can be obtained from that of the ``complete'' currents by setting $q_\alpha = 0$, for all unobserved currents $\alpha > |\mu|$. As briefly mentioned above, it is well known from Large Deviation Theory that the SCGF is the dominant Perron-Froebenius eigenvalue of the so-called {\it tilted operator} $\M(\{q_\mu\})$, obtained from the MJPG $\W$ by augmenting the off-diagonal entries corresponding to the transitions of interest with exponential factors that depend on the {\it counting variables} $q_\mu$. In general the tilted operator is not a MJPG. Nevertheless, the central result in our paper, stated in \Th{th:unitary} (for one single edge current) and \Th{th:unitary2} (for several edge currents), shows that, letting  $\Past$ be the diagonal positive-definite matrix $\Past = \mathrm{diag}\,\{p^\st_i\}_i$, then the operator
\begin{align}
\widetilde{\W} := \Past \, \M(\{\Q_\mu\})^T \, {\Past}^{-1} \label{eq:marginalgenerator}
\end{align}
is indeed a MJPG, which we call the {\it hidden time-reversal} (hidden TR) generator. Since $\widetilde{\W}$ and $ \M(\{\Q_\mu\})^T$ are similar, their common Perron eigenvalue vanishes and we obtain the {\it marginal IFR} for the SCGF
\begin{align}
\lambda(\{\Q_\mu\}) = 0. \label{eq:iftlap}
\end{align}
For $|\mu|>1$ there actually exists a continuum of values $q_\mu \equiv q^\ast_\mu$ for which $\lambda(\{q^\ast_\mu\}) = 0$. Consider for example the case $|\mu|=2$. In this case, unless the system displays critical behavior (nonequilibrium phase transitions \cite{companion}), the SCGF is a paraboloid-like curve. Its locus of zeroes is a closed convex curve that includes $(0,0)$, and the effective affinities $(\Q^{1,2}_1,\Q^{1,2}_2)$. Also, where it meets with the two axis, it also includes the two effective affinities $(0,\Q^{2}_2)$ and $(\Q^{1}_1,0)$ (some illustrative figures can be found on p.\,\pageref{fig:scgf}). All these values have a special physical interpretation, and in particular they are well-behaved as comes to thermodynamically consistent parametrizations of the rates. All other values $q^\ast_1,q^\ast_2$ are not representative of anything physical, to the best of our understanding.

A compelling question is what kind of process evolves by the hidden TR generator. We collect evidence that  hidden TR  ``tends to preserve'' the dynamics in the observable sector of the configuration space, while it ``tends to invert'' it in the hidden sector. We show that (\Th{th:delsingle}, \Th{th:unitary2}):
\begin{subequations}\label{eq:delcon}
\begin{align}
\W & =: \W_{\mathrm{mar}}  + \W_{\mathrm{hid}}  \label{eq:dede} \\ 
\widetilde{\W} & = \W_{\mathrm{mar}} + \overline{\W_{\mathrm{hid}}}
\end{align}
\end{subequations}
The first equation actually defines the {\it marginal generator} $\W_{\mathrm{mar}}$ on the marginal edge set as that which has vanishing entries for all edges that do not belong to the observable configuration space. In the second expression there appears the time reversal of the hidden generator $ \overline{\W_{\mathrm{hid}}} = \Past \W_{\mathrm{hid}}^T \Past^{-1}$. Notice that, differing from the time reversal of the full generator $\overline{\W}$, here inversion needs to be taken with respect to the stalling steady state (which in fact is the steady state of $\W_{\mathrm{hid}})$. Then, the hidden TR $\widetilde{W}$ only reverses the dynamics in the hidden configuration space. Furthermore, the hidden TR construction is involutive:
\begin{align}
\widetilde{\widetilde{\W}} = \W.
\end{align}

Marginal and hidden degrees of freedom are intertwined. In particular, as a simple consequence of Kirchhoff's current law, one cannot modify hidden currents without affecting the observable currents. So, for example, if this is a steady configuration of currents in the forward dynamics,
\begin{align}
\Scale[2]{\ba{c}\xymatrix{
\bullet \ar@_{->}@<-0.6mm>[r] \ar@^{-}@<0.2mm>[r] \ar@^{-}@<-0.2mm>[r] \ar@^{->}@<0.6mm>[r] &  {\color{Gray} \bullet}  \ar@{->}[dl] 
  \ar@_{->}@[Gray]@<-0.2mm>[r]   \ar@^{->}@[Gray]@<0.2mm>[r]  
      \ar@_{->}@[Gray]@<-0.2mm>[d]   \ar@^{->}@[Gray]@<0.2mm>[d] &  {\color{Gray} \bullet}
    \ar@_{->}@[Gray]@<-0.2mm>[d]   \ar@^{->}@[Gray]@<0.2mm>[d] \\ 
\bullet  \ar@_{->}@<-0.6mm>[u] \ar@^{-}@<0.2mm>[u] \ar@^{-}@<-0.2mm>[u] \ar@^{->}@<0.6mm>[u] 
 & {\color{Gray} \bullet}  \ar@_{->}@<-0.4mm>[l] \ar@^{-}@<0.0mm>[l] \ar@^{->}@<0.4mm>[l] & {\color{Gray} \bullet}  \ar@{->}@[Gray]@<0.0mm>[ul]   \ar@{->}@[Gray]@<0.0mm>[l]  
 }\ea} \nonumber
\end{align}
then the corresponding steady configuration according to the hidden TR dynamics might look something like this (we emphasize that these are just pictorial illustrations):
\begin{align}
\Scale[2]{\ba{c}\xymatrix{
\bullet \ar@_{->}@<-0.4mm>[r] \ar@^{-}@<0.0mm>[r]  \ar@^{->}@<0.4mm>[r] &  {\color{Gray} \bullet}  \ar@{->}[dl] 
  \ar@_{<-}@[Gray]@<-0.2mm>[r]   \ar@^{<-}@[Gray]@<0.2mm>[r]  
      \ar@_{->}@[Gray]@<-0.4mm>[d]   \ar@{-}@[Gray]@<0.0mm>[d]  \ar@^{->}@[Gray]@<0.4mm>[d] &  {\color{Gray} \bullet}
    \ar@_{<-}@[Gray]@<-0.2mm>[d]   \ar@^{<-}@[Gray]@<0.2mm>[d] \\ 
\bullet  \ar@_{->}@<-0.4mm>[u] \ar@^{-}@<0.0mm>[u]  \ar@^{->}@<0.4mm>[u] 
 & {\color{Gray} \bullet}  \ar@_{->}@<-0.2mm>[l] \ar@^{->}@<0.2mm>[l] & {\color{Gray} \bullet}  \ar@{<-}@[Gray]@<0.0mm>[ul]   \ar@{<-}@[Gray]@<0.0mm>[l]  
 }\ea} \nonumber
\end{align}
Notice that the observable cycle currents maintain the same direction, while the hidden currents ``tend to be reversed,'' though such inversion cannot be exact otherwise Kirchhoff's current law would be violated.

We can also consider the behavior of the other law of Kirchhoff, the loop (or cycle) law prescribing the values of the ``real'' affinities. We show (\Th{th:affaff}) that the hidden TR generator reverses all of the hidden ``real'' affinities, while it ``tries to preserve'' the marginal ones: 
\begin{subequations}
\begin{align}
\widetilde{\A}_\mu & = 2 \Q_\mu - \A_\mu, & & \mu \leq |\mu|,\\
\widetilde{\A}_\alpha & = - \A_\alpha,  & & \alpha > |\mu| .
\end{align}
\end{subequations}
Notice that at stalling all of the affinities are exactly reversed. In fact, at stalling the hidden TR generator coincides with the forward TR generator
\begin{align}
\widetilde{\W}(\bs{x}^\st) = \overline{\W}(\bs{x}^\st),
\end{align}
which is the analog of the detailed-balance condition Eq.\,(\ref{eq:dbop}). 

Associated to the marginal dynamics is a marginal path measure $\widetilde{\prob}$, in terms of which we can prove the {\it marginal FR}
\begin{align}
\frac{\prob(\{\tcur_\mu\})}{\widetilde{\prob}(\{-\tcur_\mu\})} = \exp {\sum \Q_\mu \tcur_\mu}, \label{eq:marfparode}
\end{align}
which can be equivalently stated as a marginal fluctuation symmetry as
\begin{align}
\lambda(\{q_\mu\}) = \widetilde{\lambda}(\{\Q_\mu - q_\mu\}).
\end{align}
Notice the crucial difference with respect to its ``complete'' counterpart Eq.\,(\ref{eq:noFR}): in this case we are comparing different probability distributions, which opens up the question whether the hidden TR dynamics can be operationally defined, just like effective affinities were. From Eq.\,(\ref{eq:marfparode}) we can restore the generalized RR at stalling
\begin{align}
\lcur^\st_{\mu; \mu'} = \widetilde{\lcur}^\st_{\mu'\!; \mu}. \label{eq:axial}
\end{align}
More in general, all of the higher-order response relations that characterize ``complete'' systems can be restored upon appropriate hidden TR. Surprisingly, we can even prove, only at long times, a inter-hierarchical FR
	\begin{align}
	\frac{\widetilde{\prob}^{|\mu|}\left(\{\tcur_\mu\}_{\mu = 1}^{|\mu|}\right)}{\widetilde{\prob}^{|\mu'|}\left(\{\tcur_\mu\}_{\mu = 1}^{|\mu|}\right)} \asymp \exp \left( \sum_{\mu=1}^{|\mu|} \Q^{1,\ldots,|\mu|}_\mu \tcur_\mu - \sum_{\mu=1}^{|\mu'|} \Q^{1,\ldots,|\mu'|}_\mu \tcur_\mu \right),
	\end{align}
where $|\mu'|>|\mu|$ and where now $\widetilde{\prob}^{|\mu|}$ is the hidden TR associated to the $|\mu|$-th marginal theory in the hierarchy.

Finally, we also stack one negative result to the pile. Recently an uncertainty relation connecting a current's error and total dissipation has been proven \cite{baratounc,pietzonkaunc,gingrichunc,lazarescu}. However, the bound is not strict and it is only significant when the current quantifies the full dissipation. For marginal currents, it makes for a natural conjecture to speculate that the effective affinity would enter the bound in place of the ``real'' ones. We show in Sec.\,\ref{sec:stasimon} that this is not the case.

\subsection{Epode: Discussion and perspectives}
\label{epode}


Let us draw some general conclusions. More technical perspectives will be discussed in Sec.\,\ref{exode}.

In this paper we present a rather general theory of fluctuation relations and response formulas for an observer that only measures and controls a marginal subset of currents. The context is that of the stochastic thermodynamic analysis of continuous-time, discrete configuration-space autonomous Markov ``jump'' processes. The theory makes some clear-cut experimental predictions, in particular the integral fluctuation relation with respect to the effective affinities, the violation of the reciprocal relations at stalling steady states, and the validity of the symmetrized fluctuation-dissipation relation. The theory is fairly complete as regards currents that account for single transitions in configuration space, and it also holds for phenomenological currents that are linear combinations of edge currents, provided the additional requirement of marginal consistency is met, which is analogous to local detailed balance, but stricter. Therefore, the most imminent open question left aside is what kind of systems satisfy marginal consistency, and if a system does not, how does the surplus of entropy production at the stalling states affect response.

Central objects in the theory are the effective affinities. While they are mathematically expressed in terms of the rates of the Markovian dynamics all over the configuration space, an operational procedure allows to evaluate them without full knowledge of the transition rates, provided the parameters that the observer controls are known to only affect the rates corresponding to the measurable degrees of freedom. If this is not the case, then one can turn the story the other way around, and use the predictions of our theory as a test of locality of the physical parameters.

The full fluctuation relation and the reciprocal relations can be reinstated upon the identification of a suitable hidden time-reversed Markovian dynamics. The question is open whether to obtain such dynamics one needs to be able to micro-engineer all rates, in which case the latter relations remain only formal, or else, as is the case for the effective affinities, whether there exists a phenomenological procedure to determine the dynamics. This would unlock a new set of experimental predictions of our theory. A test-bed for this possibility is that of a computational experiment: is it possible to program a Gillespie simulation of the hidden time-reversed dynamics without specifying all of the rates as an input, but rather by performing a smaller transformation of the known parameters with respect to the simulations of the forward dynamics? We are not yet in the position to give a definitive answer to this question.

An important consideration is that ours is not a kinetic theory, that is, it does not provide a procedure to coarse-grain the dynamics in the hidden sector of the configuration space in order to obtain an effective dynamics in the marginal configuration space. While the {\it gedanken}-observer described in Ref.\,\cite{polettiniobs} and Sec.\,\ref{sec:marepr} does cook up a marginal dynamics that explains his steady-state observations, in no way this dynamics is representative of the finite-time behavior, including such questions as the rate of convergence to the steady state, first exit times out of the hidden sector etc. Furthermore, our theory does not involve an exquisitely dynamical, but physically relevant limiting situation, that of time-scale separation between the marginal and the hidden degrees of freedom. The relationship between our theory and various other approaches, such as those described in Refs.\,\cite{bravi,bo}, is an interesting territory to explore.

The results that we presented are amenable to several generalizations. To lattice gas models, where response theory is enriched by all aspects regarding the spatial disposition of particles \cite{leitmann}. To Markov jump processes on infinite configuration spaces, in particular population dynamics, chemical reaction kinetics, and reaction-diffusion theory. To diffusion processes on continuous configuration spaces. To time-periodic processes, rather than stationary, and more generally to time-dependent perturbations, to systems with resetting \cite{pal}. To finite-time response to a sudden perturbation $x_\mu(t) = \theta(t) (x_\mu - x_\mu^\st)$ ($\theta$ being Heaviside's step-function), or to perturbations that are modulated in a finite-time interval. Periodicity calls for a study in Fourier space, where response relations incarnate into susceptibilities and spectral response functions, and where it is already known that far from equilibrium several of the equilibrium results are violated \cite{harada}.

Further questions on marginal and effective theories are genuinely thermodynamic. As a matter of fact, any question addressed in recent years in the field can be turned marginal: the study of efficiency and efficiency fluctuations, of the linear regime where (marginal) currents are approximately linear in the (effective) affinities, of transduction \cite{hill}, of variational principles such as the minimum and maximum entropy production principles \cite{polettiniminEP,polettiniziegler}. Recent models cope with strong system-environment interactions by envisaging the system as a subsystem of a larger system-environment complex, itself weakly interacting with its surrounding. Again, such system-environment complex could be analyzed in terms of our theory. Systems that have irreversible transitions, such as stochastic processes with resetting, always posed a challenge, because the thermodynamic force diverges along irreversible transitions;  one way out, among others \cite{yuto}, could be to dump the irreversible transitions into the hidden trash bin.

The observables that we consider, the currents, are antisymmetric under time reversal. A new central paradigm is that response out of equilibrium depends in a crucial way on the {\it activity} of the system, i.e. some measure of the gross amount of stuff flowing, in opposition to the current that measures the net amount of stuff delivered. Many recent results regarding currents have been generalized to flows and other symmetric quantities, e.g. the uncertainty relation briefly mentioned above \cite{garrahan}, fluctuation relations  \cite{falasco}, and response formulas  \cite{baiesi,baiesi2,prost}.

Violations of the reciprocal relations are often associated with broken time-reversal symmetry, e.g. the microscopic dynamics involves axial fields that are antisymmetric under time reversal, such as magnetic fields, Coriolis forces, the momentum variable in underdamped Brownian motion etc. It is well known that in these cases the Onsager symmetry can be restored upon inversion of the axial fields. With an eye on Eq.\,(\ref{eq:axial}), it is tempting to speculate that the marginal dynamics might be the discrete analog of the axial-field inversion. The analysis of proper time reversal of continuous noisy systems with even and odd  variables with respect to the FR has been broadly studied \cite{spinney}. If our speculation is fruitful, it would allow to include even and odd variables within the formalism of Markov jump-processes without additional requirements.

Local observers are reminiscent of the theory of relativity. Some authors have considered \cite{speck,gawedzki} fluctuation relations in moving frames where a ``local equilibrium'' can be attained by a privileged observer. It would be interesting to inspect whether such transformations could be framed within our theory of a marginal observer.

One interesting aspect that is completely missing is that of {\it duality}, whereby one swaps the role of the marginal and the hidden state spaces. Is there any relation between the theories so obtained? We notice in passing that by first performing the hidden time reversal, and then the dual hidden time reversal, one does {\it not} obtaine the ``complete'' time reversal of the forward dynamics. Thus, if a relation exists, it might be subtle.

Let us conclude with some more epistemological remarks. At all stages we insisted on its operational character. We also revived gauge invariance as a solution to the ``dilemma of the observer''. This is because we strongly believe that physics is not about properties of some absolute ``thing in itself,'' but it rather deals with relations and processes, and about how an idealized observer interprets his/her observations. Furthermore, a priori there is no reason to presume that a system is ``complete''. In our view, theories are always marginal to some extent -- in particular statistical physics is intrinsically a theory of incomplete information. For this reason we always comment the words ``complete'' and ``real''.

\section{Episode 1: Setup}

Here we provide a compendium of the  thermodynamic analysis of ``complete'' systems evolving by a Markovian jump-process dynamics on a graph, to set the notation, introduce the basic techniques, and provide numerous references to more in-depth studies. While the knowledgeable reader might safely skip this section, the newbie should not be discouraged either, as in the following sections we will attempt to construct our theory in a pedagogical and self-contained manner. The ornament on p.\,\pageref{victorian} divides the old material from the new one.

\subsection{Algebraic graph theory in a pistachio-shell} 

The system's configuration space is a finite oriented graph $\G = (\I,\E,\partial)$ with a number 
$|\I|$ of sites $i,j,\ldots \in \I$ connected by $|\E|$ oriented edges $\ij\in \E$, corresponding to the possible transitions between sites. We assume that the graph is connected, without loops nor multiple edges between two sites\footnote{This assumption excludes the possibility of resolving multiple transitions, which is crucial in stochastic thermodynamics, especially in the light of the assumption of local detailed balance \cite{ldbmassi} whereby different reservoirs enhance transitions. The generalization of all of our results is straightforward, but it makes the notation overly baroque, to the detriment of clarity. We discuss it in Sec.\,\ref{sec:multipledges}.}. We assign an orientation to the edges $\ij = i \gets j$ by prescribing an arbitrary order relation\footnote{This is just one way to introduce an arbitrary orientation of the edges. Not all orientations come from an order relation.} $i \prec j$.

The {\it incidence matrix} $\partial: \mathbb{R}^{|\E|} \to \mathbb{R}^{|\I|}$, prescribing which sites are boundaries of which edges, has entries $\partial_k^{\ij} = \delta^i_k - \delta^j_k$. Square matrices defined on the configuration space $\I$ of a system are denoted $\matrix{A}: \mathbb{R}^{|\I|} \to  \mathbb{R}^{|\I|}$, and they act on vectors $\vec{v} \in \mathbb{R}^{|\I|}$. All other vectors, including those living in the linear space generated by edges $\mathbb{R}^{|\E|}$, live in the linear space generated by edges, are denoted in bold $\bs{v}$. 

A spanning tree $\T \subseteq \E$ is a collection of $|\I|-1$ (unoriented) edges that connect all sites. In a rooted spanning tree $\T_{\!i}$ edges are oriented in such a way that there is a unique directed path leading from any site to $i$. An oriented cycle is a succession of oriented edges such that at every site there are as many incoming edges as outgoing ones. A cycle is {\it simple}, and it is denoted $\C$, when it has no crossings. A cycle can be algebraically identified as a right-null vector $\bs{c}$ of the incidence matrix, $\partial \bs{c}= 0$.

\subsection{Master equation dynamics}
\label{setup}

\subsubsection{Master equation}
\label{subsec:flux}
We assign to each edge time-independent positive transition rates $w_{\ij}$ and $w_{\ji}$ of jumping respectively from $j$ to $i$ and from $i$ to $j$, and we let $w_i := \sum_j w_{\ji}$ be the total escape rate out of site $i$. Let $\vec{p}(t) = (p_i(t))_i$ be the vector of probabilities of being at a site at a given time $t$, sometimes called {\it ensemble}. Given the initial ensemble $\vec{p}(0) = \vec{p}^{\,0}$, $\vec{p}(t)$ obeys the master equation Eq.\,(\ref{eq:ME}). Entries along columns of $\W$ add up to zero, $\W^T \vec{1} =0$, where $\cdot^T$ is matrix transposition and $\vec{1}$ is the vector with all entries equal to unity. The master equation can be cast as a continuity equation
\begin{align}
\frac{d}{dt}\vec{p}(t) + \partial \bs{\cur}(t) = 0
\end{align}
in terms of the vector of currents $\bs{\cur}(t) = (\cur_{\ij}(t))_{\ij \in \E}$ with entries  $\cur_{\ij}(t) = \psi_{\ij}(t) - \psi_{ji}(t)$, where
\begin{align}
\psi_{\ij}(t) := w_{\ij} p_j(t)
\end{align}
is sometimes called the mean flux from site $j$ to $i$.

\subsubsection{Steady ensemble} 
Assuming that the graph is connected and that rates are non-negative, then the system tends to a steady ensemble $\vec{p} =\lim_{t \to \infty} \vec{p}(t)$ that is the unique right-null vector of the generator, and that makes the steady currents $\bs{\cur} = \lim_{t\to \infty} \bs{\cur}(t)$ ``divergenceless'':
\begin{align}
0 = \W \vec{p} = - \partial \bs{\cur} . \label{eq:steady}
\end{align}
The right-hand side of this equation is Kirchoff's current law. It is well known that, up to a normalization factor, the steady ensemble can be found in terms of minors of the MJPG  \cite{schnak,gaveau}
\begin{align}
p_i \propto (-1)^{i+j} \det \W{(j|i)} \label{eq:matrix}
\end{align}
where $\W{(j|i)}$ is the matrix obtained by removing the $j$-th row and the $i$-th column; the above expression holds independently of $j$. A one-line proof of this fact is as follows: since the determinant of $\W$ vanishes (its null eigenvector being the steady state), expanding with the Laplace formula along the $j$-th column $0 = \det \W = \sum_{j\neq i} w_{\ij} (-1)^{i+j} \W{(j|i)} - (-1)^{i+j} w_{i}\W{(i|i)}$, and we conclude $\Box$.

The steady ensemble can be expressed in terms of rooted oriented spanning trees as
\begin{align}
p_i  =  \frac{\ST_{\!i}(\G) }{\ST(\G)} \label{eq:tree}
\end{align}
where
\begin{align}
\ST_i(\G) := \sum_{\T_{\!i} \subseteq \E} \prod_{\ij \in \T_{\!i}} w_{\ij},
\end{align}
is the so-called spanning-tree polynomial, where $\T_{\!i}$ ranges over oriented spanning trees with root in $i$, and $\ST(\G) = \sum_i \ST_{\!i}(\G)$ is the normalization. Since rates $w_{\ij}$ have dimensions of an inverse time, we have a liberty in the choice of the time unit. We choose to spend this liberty by setting, unless otherwise stated,
\begin{align}
\ST(\G)  \stackrel{!}{=} 1. 
\end{align}
The equivalence between Eqs.\,(\ref{eq:matrix}) and (\ref{eq:tree}) is an instance of the matrix-tree theorem in algebraic graph theory, an important paradigm that will play a major role in the physical interpretation of our results, in particular when we will consider portions of the configuration space, a case that is covered by the crucial all-minors matrix-tree theorem for weighted oriented graphs \cite{chaiken}.

A steady state is said to be an equilibrium if it satisfies the condition of detailed balance
\begin{align}
\psi_{\ij}^{\eq} = {w_{\ij} p^{\eq}_j} = {w_{\ji} p^{\eq}_i} = \psi_{\ji}^{\eq}. \label{eq:db}
\end{align}
Hence at equilibrium the steady currents vanish. Equilibrium steady states admit a particularly simple expression in terms of ratio of the rates:
\begin{align}
p^{\eq}_k = \left[ \sum_{l} \prod_{\ij \in \gamma_{k \gets l}} \frac{w_\ij}{w_{\ji}} \right]^{-1}, \label{eq:gaugess}
\end{align}
where $\gamma_{k\gets l}$ is an arbitrary connected path leading from state $l$ to $k$.

\subsubsection{Time reversal} The generator time-reversed dynamics is defined as follows. Given a forward generator $\W$, compute its steady state $\vec{p}$, construct the diagonal matrix $\Pp$ whose diagonal entries are the steady-state probabilities, $\Pp := \mathrm{diag}\,\{p_i\}_i$. Then the TR generator is
\begin{align}
\overline{\W} := \Pp\,\W^T\,\Pp^{-1} \label{eq:timereversed}
\end{align}
It can easily be shown that $\overline{\W}$ is indeed a \MJPG, with, among others \cite{polettiniconvex}, the following properties: same spectrum as $\W$, same exit frequencies out of configurations, same steady state, all inverted steady-state currents. A system satisfies detailed balance if and only if $\W = \overline{\W}$.

\subsection{Master equation thermodynamics}

We hereby consider mean currents. The stochastic counterpart is covered in the next subsection.

\subsubsection{Observational currents}

Currents of observational interest are linear combinations of the edge currents,
\begin{align}
\cur_\alpha = \sum_{\ij} \phys_\alpha^{\ij} \, \cur_{\ij},
\end{align}
where $\sum_{\ij}$ sums over edges, while $\sum_{i,j}$ sums over couple of sites. By the handshaking lemma in graph theory, $\sum_{i,j} = 2 \sum_{\ij}$ for any summand.

The antisymmetric weight factor $\phys_\alpha^{\ij} = - \phys_\alpha^{\ji} $ prescribes by what amount the $\alpha$-th current increases [decreases] when transition $i \gets j$ [$ j\gets i$] is performed. 
We will describe the conditions upon which such currents form a complete set later in this section. Observational currents might either be subsets of the edge currents, in which case there exists some particular edge $\ij_\alpha$ such that $\phys_\alpha^{\ij} = \delta_{\ij,\ij_\alpha}$, or otherwise they are {\it phenomenological}, which means that at least one such current is supported on more than one edge. The first will be the subject study of Secs.\,\ref{sec:singleedge} and \ref{sec:multiedge}, the second of Sec.\,\ref{sec:phenomenological}. While we use a unified notation for all observational currents, the treatment of phenomenological currents poses specific problems. Of course, the observer might stipulate that {\it all} edge currents are of observational interest, in which case $\alpha$ is a multi-index $\alpha = i'\!j'$, with $\phys_\alpha^{\ij} = \delta^i_{i'} \delta^j_{j'}$.

\subsubsection{Forces, entropy production rate and local detailed balance}

Thermodynamic reasoning ensues when one complements the dynamical information contained in currents with conjugate {\it forces} that quantify the cost of transitions. The steady-state force associated to a particular transition is given by
\begin{align}
\F_{\ij} := \log \frac{w_{\ij} p_j}{w_{\ji} p_i} = \log \frac{\psi_{\ij}}{\psi_{\ji}}. \label{eq:ssforce}
\end{align}
Clearly, a system has all vanishing steady forces if and only if it satisfies the condition of detailed balance Eq.\,(\ref{eq:db}).

The mean steady-state EPR is defined as
\begin{align}
\sigma := \ssum \cur_{\ij} \F_{\ij} = \ssum (\psi_{\ij} - \psi_{\ji}) \log \frac{\psi_{\ij}}{\psi_{\ji}} \geq 0. \label{eq:EPR}
\end{align}

When working with phenomenological currents, to achieve a thermodynamically consistent treatment the corresponding thermodynamic forces cannot be arbitrary, rather they need to enjoy certain symmetries, in such a way that ultimately the steady entropy production rate can be expressed in terms of the phenomenological currents and forces only. Thermodynamic consistency (or simply, consistency) is realized if the following condition of {\it local detailed balance} holds
\begin{align}
\F_{ij} =  \ssum \phys^\alpha_{\ij} \F_\alpha +  a_j - a_i  \label{eq:ldb}
\end{align}
where the $\F_\alpha$ are observational thermodynamic forces. For the sake of generality we included an arbitrary function of the configuration $a_i$ that takes into account the liberty offered by Kirchhoff's current law at steady states (see next section) and the steady state. Under this condition the steady EPR writes
\begin{align}
\sigma = \ssum \phi_\alpha \F_\alpha.
\end{align}
Notice that any dependence on $a_i$ is lost. Eq.\,(\ref{eq:ldb}) is not just a convenient physically meaningful parametrization of the rates, it actually imposes constraints on the space of possible rates, that can be interpreted as {\it symmetries} that the edge forces must satisfy. The number of these symmetries eventually increases if phenomenological currents are not linearly independent, i.e. if $\ssum \ell_\alpha \phys_{\ij}^\alpha = 0$ for some vector $(\ell_\alpha)_\alpha$. Symmetries and conservation laws arise from the interplay between the definition of the phenomenological currents, an information contained in $\phys^{\ij}_\alpha$, and the structure of the network, an information contained in $\partial$ \cite{bridging}. In this work we exclude the possibility of linearly dependent currents, though it would make for an interesting problem to investigate our results in those situations where a conserved current flows across the marginal/hidden configuration space.

\subsection{Cycle analysis} 
\label{sec:cycle}

Cycles are ubiquitous in thermodynamics. For example, cycles solve Kirchoff's current law, hence they are useful to describe steady states. While the focus is usually on the methods described in Schnakenberg's review \cite{schnak}, where the freedom in the choice of a cycle basis of $\mathrm{ker}\,\partial$ is broken to provide a compact expression for the EPR, there exists another cycle decomposition that is less compact but more general, and which will turn out to play an important role. In the following we will (somewhat improperly) refer respectively to Schnakenberg's and Hill's analysis.

\subsubsection{Schnakenberg analysis}

Consider an arbitrary spanning tree $\T$. There are $|\E| - |\I| +1$ edges $\ij_\alpha= i_\alpha \gets j_\alpha \in \E\setminus \T$, called {\it chords}, that do not belong to the spanning tree. Adding a chord to $\T$ generates a unique simple cycle $\C_{\alpha}$, that can be oriented along the direction of $\ij_\alpha$. To such cycle we associate a vector $\bs{\cyc}_\alpha$ with entries
\begin{align}
\cyc_\alpha^{\ij} =  \left\{\ba{ll}
+1, & \mathrm{if}\, \ij \in \C_\alpha \\
-1, & \mathrm{if}\, \ji \in \C_\alpha \\
0 & \mathrm{otherwise}
\ea \right.   .
\end{align}
The set of simple oriented cycles so generated forms a basis for the right null-space of the incidence matrix, $\partial \bs{\cyc}_\alpha = 0$. Hence, in view of Eq.\,(\ref{eq:steady}), cycles describe steady states. Transient states can be studied in terms of cocycles \cite{polettini2} (see below). In particular, we define {\it chord currents}, obtained by setting
\begin{align}
\phys^{\ij}_\alpha = \left\{\ba{ll}
+1, & \mathrm{if}\; \ij = \ij_\alpha \\
-1, & \mathrm{if}\; \ji = \ij_\alpha \\
0 & \mathrm{otherwise}
\ea \right. ,  
\end{align}
and their conjugate cycle affinities,  defined as
\begin{align}
\A_\alpha := \A(\C_\alpha) := \sum_{\ij \in \C_\alpha} \F_{\ij} =  \log \prod_{\ij \in \C_\alpha} \frac{w_{\ij}}{w_{\ji}}.
\end{align}
Notice that the steady state disappears from the final expression. Also, because $\sum_{\ij} \cyc_\alpha^{\ij} \,  \phys^{\ij}_{\alpha'} = \delta_{\alpha,\alpha'}$, consistently with the condition of local detailed balance Eq.\,(\ref{eq:ldb}) we obtain $\A_\alpha = \F_\alpha$ and thus the EPR only writes in terms of cycle observables $\sigma = \ssum \cur_\alpha \A_\alpha$. 

We conclude this section by providing an interesting expression for the cycle affinities. Consider the system obtained by removing all of the chords, {\it ossia} by setting their rates to zero. Because it has no cycles, it satisfies detailed balanced. Let $\vec{p}^{\,\eq}$ be its equilibrium steady state. Then, it is easy to show that
\begin{align}
\A_\alpha = \log \frac{w_{\ij_\alpha} p_{\!j_\alpha}^{\eq}}{w_{\ji_\alpha} p_{i_\alpha}^{\eq}}.
\label{eq:cycleaff}
\end{align}
It follows from the fact that, due to the property of detailed balance, by Eq.\,(\ref{eq:gaugess}) the equilibrium state obeys
\begin{align}
\frac{p_{k}^{\eq}}{p_l^{\,\eq}} = \prod_{\ij \in \gamma_{k \gets l}} \frac{w_{\ij}}{w_{\ji}}. 
\end{align}
where $\gamma_{k\gets l} \subseteq \T$ is now unique.

\subsubsection{Hill analysis} As a consequence of the spanning-tree expression for the steady state Eq.\,(\ref{eq:tree}), the steady-state current along edge $\ij$ can be written as a sum over all simple oriented cycles that contain edge $\ij$
\begin{align}
\cur_{\ij} = \sum_{\C \owns \ij} \left( \psi_{+\C} - \psi_{-\C}  \right), \label{eq:hill} 
\end{align}
where the {\it cycle fluxes} are given by
\begin{align}
\psi_{\pm\C} = \theta_\C \prod_{\ij \in \pm\C} w_{\ij} .
\end{align} 
The factor $\theta_\C = T_{\C}(\mathcal{G}/\C)$ is an rooted oriented spanning tree polynomial over the graph obtained by contracting cycle $\C$ to a unique vertex. Importantly, as observed in Ref.\,\cite{zia1}, it is symmetric under inversion of the cycle's orientation. Therefore the cycle affinity can be written as
\begin{align}
\A(\C) = \log \frac{\psi_{+\C}}{\psi_{-\C} } .
\end{align}
With a few passages one obtains for the entropy production rate
\begin{align}
\sigma =  \sum_{\C} (\psi_{+\C}- \psi_{-\C}) \log \frac{\psi_{+\C}}{\psi_{-\C} } 
\label{eq:hillepr}
\end{align}
where it is stipulated that each cycle is summed over only once, along one arbitrary choice of its orientation (otherwise a factor $1/2$ should be included).

We call the quantity
\begin{align}
\sigma_{\ij}^{\mathrm{Hill}} = \sum_{\C \owns \ij} (\psi_{+\C}- \psi_{-\C}) \log \frac{\psi_{+\C}}{\psi_{-\C} }   \label{eq:hillEPR}
\end{align}
the {\it local EPR} associated to Hill's cycle decomposition.

\subsubsection{Cocycles}
\label{par:cocycle}

By the rank-nullity theorem, the edge vector space $\matrix{R}^{|\E|}$ of a graph can be decomposed in a basis of cycles $\bs{c}$, which span the null vectors of the incidence matrix $\partial$, and of {\it cocycles} (also known in graph theory as {\it cuts} or {\it bonds}) $\bs{\cocyc}_{\alpha^\star}$, which span the image of $\partial$. By construction cocycles are orthogonal to cycles
\begin{align}
\sum_{\ij} c^{\ij}_{\alpha^\star} c^{\ij}_{\alpha} = 0. 
\end{align}
{\it Simple} cocycles are minimal sets of edges whose removal disconnects the graph into two subgraphs. The algebra of cycles and cocycles and their relationship to thermodynamics has been studied to great extent by one of the Authors in Ref.\,\cite{polettini2}. Cocycles play a role in characterizing transient states. In our work, they will play a minor role related to gauge invariance, see Sec.\,\ref{sec:gauge}.

\subsection{Stochastic thermodynamics}
\label{stocthermo}

\subsubsection{Trajectories and their measure}
\label{subsec:measure}
 We consider a single realization of a jump-process\footnote{For an exact and explicit mathematical characterization of the process as a solution of a stochastic equation in terms of Poisson process with intensity depending on the configuration the process visits, see \cite{andersonbook} in the specific case of the chemical master equation and \cite{kurtzbook} for general Markov jump processes}, described by the trajectory
\begin{align}
\traj := i_0 \stackrel{t_0}{\longrightarrow} i_1 \stackrel{t_1}{\longrightarrow} \ldots \stackrel{t_{N-1}}{\longrightarrow} i_N \stackrel{t_N}{\longrightarrow}, \label{eq:traj}
\end{align}
which performs $N$ jumps in the time arc $[0,t]$. It is described by the succession of sites visited\footnote{Notice that, had we resolved multiple edges, we should characterize a trajectory not by the sites visited but rather by the edges}  and by that of waiting times at sites, which add up to $\sum_{n=0}^N  t_n =t$. The trajectory p.d.f. is given by\footnote{More precisely, what follows is a family of p.d.f.'s labeled by $N$, the number of arguments which depends on total number of jumps $N$, itself a stochastic variable} 
\begin{align}
\prob[\traj] =  e^{-w_{i_N} t_N} \prod_{n=0}^{N-1}\left( w_{i_{n+1},i_n} \, e^{-w_{i_n} t_n}\right) \,p^0_{i_0}  \label{eq:density}
\end{align}
with respect to the trajectory integration measure
\begin{align}
\int \mathcal{D}\traj = \sum_N \sum_{i_0,\ldots,i_N}  \int_0^t  \prod_{n=0}^N dt_n \,\delta\Big( t - \sum_n t_n \Big).
\end{align}
The remarkable review Ref.\,\cite{weber} proposes two derivations of the measure over realizations of Markov jump processes, one heuristic, based on the intuitive mechanisms of the Gillespie algorithm, and one exact, based on the Laplace transform of the propagator (see also \cite{BEST} for analytical inversion formulas). The time-reversed trajectory is
\begin{align}
\invtraj := i_N \stackrel{t_N}{\longrightarrow} i_{N-1} \stackrel{t_{N-1}}{\longrightarrow} \ldots \stackrel{t_{1}}{\longrightarrow} i_0 \stackrel{t_0}{\longrightarrow}. \label{eq:trtraj}
\end{align}

The following fact, which is a fundamental (and often underestimated) passage, will play an important role in our discussion, because it is one of the few missing links that differentiates complete and marginal thermodynamics. Time reversal of the dynamics induces a time-reversed measure over trajectories $\overline{\prob}$, such that the time-reversed p.d.f. of a trajectory, conditioned on the initial configuration, coincides with the forward probability density of the time-reversed trajectory, conditioned on the final configuration:
\begin{align}
\overline{\prob}(\traj|i_0) = \prob(\invtraj|i_N). \label{eq:equipp}
\end{align}

\subsubsection{Time-integrated currents} We now introduce the stochastic observables of crucial interest, defined along a trajectory $\traj$. Letting $\tau_n = \sum_{n' < n} t_{n'}$ be the time elapsed before the $n$-th jump, the time-integrated {\it edge fluxes}
\begin{align}
\Psi^t_{\ij} := \Psi_{\ij}[\traj] & : = \int_0^t d\tau \sum_n \delta\left(\tau- \tau_n \right) \delta_{i_{n+1},i} \, \delta_{i_n,j}   \label{eq:integrand}
\end{align}
count the net number of times transition $j \to i$ occurs. The integrand in Eq.\,(\ref{eq:integrand}) is a spiking function of time \cite{andrieux}. Edge currents are defined as
\begin{align}
\tcur^t_{\ij} := \Psi^t_{\ij} - \Psi^t_{\ji}
\end{align}
Currents are antisymmetric under time reversal ${\tcur}[\invtraj] = - \tcur[\traj]$. The corresponding symmetrized quantities are sometimes called {\it activities}. Activities will not play a role in our theory.

\subsubsection{Currents' p.d.f. and scaled cumulants}
\label{subsec:scgf}

We will denote by $\prob^t(\{\tcur_\alpha\})$ the density function of the probability that the currents take values $\tcur^t_\alpha \in [\tcur_\alpha,\tcur_\alpha+d\tcur_\alpha]$ (we will drop the superscript $^t$ from now on). Formally,
\begin{align}
\prob(\{\tcur_\alpha\}) = \int \mathcal{D}\traj  \,P(\traj) \prod_\alpha \delta\Big(\tcur_\alpha[\traj] - \tcur_\alpha \Big).
\end{align}
It can be shown that cumulants of the time-integrated currents grow linearly in time. Let us then introduce the {\it scaled-cumulant generating function} (SCGF), 
\begin{align}
\lambda(\{q_\alpha\}) := \lim_{t \to \infty} \frac{1}{t}  \log \int e^{- \sum q_\alpha \tcur_\alpha } \prob(\{\tcur_\alpha\}) \prod_{\alpha'} d\tcur_{\alpha'}
\end{align}
and the scaled cumulants
\begin{align}
\ccur_{\beta_1,\ldots,\beta_{|\alpha|}} :=  (-1)^{\beta_1+\ldots + \beta_{|\alpha|}}  \left.\frac{\partial^{\beta_1}}{\partial q_1^{\beta_1} } \ldots \frac{\partial^{\beta_{|\alpha|}}}{\partial q_{|\alpha|}^{\beta_{|\alpha|}} } \lambda(\{q_\alpha\}) \right|_{\{q_\alpha\}=\{0\}}.
\end{align}
SCGFs can be calculated as the ``Perron-Froebenius'' eigenvalue (i.e. unique eigenvalue with largest real part) of the so-called {\it tilted operator}\footnote{We call it tilted {\it operator}, rather than {\it generator}, because it does not generate a Markov jump-process dynamics.} $\M(\{q_\alpha\})$. Defining the tilting matrix $\matrix{T}(\{q_\alpha\})$ with entries
\begin{align}
\matrix{T}(\{q_\alpha\})_{\ij} : = e^{\sum_\alpha q^\alpha \phys_\alpha^{\ij}}
\end{align}
the tilted operator is defined as
\begin{align}
\M(\{q_\alpha\}) : = \W \circ \matrix{T}(\{q_\alpha\}).
\end{align}
where $\circ$ is the entry-wise Hadamard product \label{hadamard}. For derivations, see \cite{andrieux} for cycle currents, \cite{companion} for one-dimensional lattice gases. For useful implicit function techniques for the calculation of the cumulants, we will make use of Refs.\,\cite{koza,altaner15a,altaner15b}. Importantly, the SCGF for the edge currents, and of any linear contraction of them, is a convex function. An accessible proof of this fact can be found in \cite{qians}).

Finally, the contraction principle in large deviation theory allows to derive the SCGF of a coarse-grained observable from that of more specialized observables. For definiteness, suppose we know the SCGF $\lambda_{\E}(\{q_{\ij}\})$ of {\it all} the edge currents (which, as a matter of fact, is usually inaccessible). Then the rate function for the observable currents reads
\begin{align}
\lambda(\{q_\alpha\}) = \lambda_{\E}\left(\{\ssum \phys^\alpha_{\ij} q_\alpha\}\right).
\end{align}

\subsubsection{Fluctuation relations for the currents}
\label{eq:gifr}

The forward and backward p.d.f. obey the FR
\begin{align}
\frac{P(\traj)}{P(\invtraj)} = \frac{p^0_{i_0}}{\overline{p}^0_{i_N}} \frac{p^{\mathrm{eq}}_{i_N}}{p^{\mathrm{eq}}_{i_0}} \exp {\ssum \A_\alpha \tcur^t_\alpha}  \label{eq:transientFR}
\end{align}
where we $(\overline{p}^{\,0}_i)_i$ is the initial sstate from which the backward trajectory is sampled, which in general can be different from that from which the forward trajectory is sampled. This relation holds for cycle currents, and for phenomenological currents provided thermodynamic consistency is respected. The appearance of the equilibrium state in the above equation is far from trivial, and it is related to a cycle/cocycle decomposition of the EPR \cite{cuetara,polettiniFT}, or equivalently to the expression Eq.\,(\ref{eq:cycleaff}) for the affinities.

The multivariate FR for the cycle currents Eq.\,(\ref{eq:fr}) is then obtained by taking the marginal for the currents, either asymptotically in time for any initial ensemble, where the boundary terms become subdominant with respect to the entropy production, or at all times on the assumption that the the trajectories are sampled with the equilibrium ensemble $\vec{p}^{\,0} = \vec{p}^{\,t} = \vec{p}^{\,\mathrm{eq}}$. The IFR Eq.\,(\ref{eq:ifr}) is easily obtained by integrating the above FR. 

\begin{center}
\includegraphics[width=.5\columnwidth]{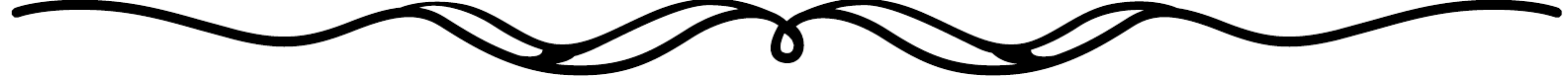}
\label{victorian}
\end{center}


In fact, a wider class of such IFRs can be generated as follows. We consider here subsets of currents, namely the first $|\mu|$ currents and of the last  $|\alpha|-|\mu|$. We have from the FR Eq.\,(\ref{eq:fr})
\begin{align}
\prob (\{ \tcur_\alpha\})   \exp {\sum_{\alpha \leq |\mu|} \tcur_\alpha \A_\alpha}  =   \prob (\{ - \tcur_\alpha\}) \exp {\sum_{\alpha > |\mu|} \tcur_\alpha \A_\alpha}.
\end{align}
Integrating over currents and employing antisymmetry, we obtain the {\it generalized IFR}
\begin{align}
\Big\langle \exp {\sum_{\mu \leq |\mu|} \tcur^t_\mu \A_\mu} \Big\rangle = \Big\langle \exp {\sum_{\alpha > |\mu|} \tcur^t_\alpha \A_\alpha} \Big\rangle.
\end{align}
In particular, for two currents, for $|\mu|=1,2$ we obtain
\begin{subequations}
\begin{align}
\Big\langle e^{\A_1 \tcur^t_1 + \A_2 \tcur^t_2}\Big\rangle & = 1 \label{eq:jarzyj}\\
\Big\langle e^{\A_1 \tcur^t_1}\Big\rangle & = \Big\langle e^{\A_2 \tcur^t_2} \Big\rangle  .
\end{align}
\end{subequations}
The latter we call the reciprocal IFR. As a minor side remark, let us now provide a simple remark highlighting that the nonequilibrium reciprocal relations are a fundamental complement to the IFR. Let us rewrite  Eq.\,(\ref{eq:jarzyj}) by conditioning over the second current:
\begin{align}
\Big\langle \Big\langle e^{-\A_1 \tcur^t_1} \Big| \tcur^t_2 \Big\rangle  e^{- \A_2 \tcur^t_2} \Big\rangle = 1.
\end{align}
Now suppose that the currents $\tcur^t_1$ and $\tcur^t_2$ are independent processes. Then
\begin{align}
\Big\langle e^{-\A_1 \tcur^t_1} \Big\rangle \Big\langle e^{- \A_2 \tcur^t_2} \Big\rangle = 1
\end{align}
Notice that this equation alone is not sufficient to grant that each individual current satisfies the IFR itself, unless the reciprocal IFR also holds true.

\subsubsection{Response relations}
\label{subsec:response}

We now consider the response of a system at equilibrium to perturbations of the thermodynamic forces. We assume a thermodynamic parametrization as described in assumption A0 on p.\,\pageref{A0}. While usually only linear response coefficients are considered, higher-order response relations were derived by Andrieux and Gaspard \cite{andrieux2,andrieux3}. The coefficients describing response at order $m = \ssum  m_\alpha$ of cumulants of order $n = \ssum n_\alpha$ are given by
\begin{align}
\lcur_{\{n_\alpha \};\{m_\alpha\}}  & :=  \frac{\partial^{m_1}}{\partial x_1^{m_1}} \dots \frac{\partial^{m_{|\alpha|}}}{\partial x_{|\alpha|}^{m_{|\alpha|}}}    \ccur^{\mathrm{eq}}_{n_1,\ldots,n_{|\alpha|}}  \\
& = (-1)^{n} \prod_\alpha \frac{\partial^{m_\alpha}}{\partial x^{m_\alpha}_\alpha} \frac{\partial^{n_\alpha}}{\partial q_\alpha^{n_\alpha}} \lambda(\{0\};\bs{x}).
\end{align} 
To produce relations between response coefficients, we take $m+n$ total derivatives of the fluctuation symmetry Eq.\,(\ref{eq:lebspo})
\begin{align}
\prod_\alpha \frac{d^{m_\alpha}}{d x^{m_\alpha}_\alpha} \frac{d^{n_\alpha}}{d q_\alpha^{n_\alpha}} \left[ \lambda(\{q_\alpha\};\bs{x}) - \lambda(\{x_\alpha - x_\alpha^{\mathrm{eq}} -q_{\alpha}\};\bs{x})  \right] = 0 
\end{align}
and evaluate at $q_\alpha = 0$, $\bs{x}=\bs{x}^{\mathrm{eq}}$. Notice that the $m$-th derivative with respect to $x$ of a function $f(x) = g(x-x',x)$ is
\begin{align}
\frac{d^{m}f}{dx^m} = \sum_{k =0}^{m} {m \choose k} \left. \frac{\partial^{k}}{\partial y_1^k}\frac{\partial^{m-k}}{\partial y_2^{m-k}} \, g(y_1,y_2) \right|_{\substack{y_1 = x-y \\ y_2 = x ~~~}}.
\end{align} 
We then obtain
\begin{align}
\lcur_{\{n_\alpha \};\{m_\alpha\}}^{\mathrm{eq.}} & = \sum_{\{k_\alpha\}}^{\{m_\alpha\}}  \left[\prod_\alpha  {m_\alpha \choose k_\alpha}  \frac{\partial^{n_\alpha+ k_\alpha}}{\partial q_\alpha^{n_\alpha+k_\alpha}}
 \frac{\partial^{m_\alpha - k_\alpha}}{\partial x^{m_\alpha- k_\alpha}_\alpha} \right] \lambda(\{0\},\bs{x}) \\ \nonumber
 & = 
 \sum_{\{k_\alpha\}}^{\{m_\alpha\}} (-1)^{\ssum (n_\alpha + k_\alpha)} \prod_\alpha {m_\alpha \choose k_\alpha}  \lcur^{\mathrm{eq}}_{\{n_\alpha + k_\alpha\};\{m_\alpha - k_\alpha\}}.
\end{align}
The same set of response coefficients obeys several different response relations for different values of $m_\alpha,n_\alpha$, at fixed order $m_\alpha + n_\alpha$, for all $\alpha$. Then the complete set of response relations at order $\{m_\alpha + n_\alpha\}, \forall \alpha$ is given by
\begin{multline}\label{eq:higher}
 \sum_{\{k_\alpha\} \neq \{0\}}^{\{m_\alpha\}} (-1)^{ k_\alpha } \prod_\alpha {m_\alpha \choose k_\alpha}   \lcur^{\mathrm{eq}}_{\{n_\alpha + k_\alpha\};\{m_\alpha - k_\alpha\}} \\ = \left\{\ba{ll} - 2 \lcur^{\mathrm{eq}}_{\{n_\alpha \};\{m_\alpha\}}, & \mathrm{if} \, \sum_\alpha n_\alpha~\mathrm{odd} \\ 
0, & \mathrm{if}~\sum_\alpha n_\alpha~\mathrm{even}
\ea \right. , \\
\mathrm{for}~ \{ n_\alpha = 1 \,\ldots, m_\alpha \}.
\end{multline}
In the above line we isolated the contribution that comes from setting all $k_\alpha = 0$. Notice that the term on the right-hand side is the lowest-order response coefficient. In general, not all of these response relations are independent.

Let us provide a specific example by compiling tables of coefficients for response relations of a system with two currents, at third order. The upper-left entry labeling the table is $(n_1+m_1,n_2+m_2)$, the other entries are self-explaining:
\begin{center}
\begin{tabular}{c|cccc}
$(3,0)$ & $\ccur_{111}$ & $\lcur_{11\,;\,1}$ & $\lcur_{1\,;\,11}$ \\
\hline 
$dq_1^3$ & 2 & 0 & 0 \\
$dq_1^2 dx_1$ & 0 & 2 & -2  \\
$dq_1 dx_1^2$ & 0 & 0 & 0 \\
$dx_1^3$ & 1 & 3 & -3
\end{tabular}
\end{center}
\begin{center}
\begin{tabular}{c|ccccccc}
$(2,1)$ & $\ccur_{112}$ & $\lcur_{11\,;\,2}$ & $\lcur_{12\,;\,1}$ & $\lcur_{1\,;\,12}$ & $\lcur_{2\,;\,11}$ \\
\hline 
$dq_1^2 dq_2$ & 2 & 0 & 0 & 0 & 0 \\
$dq_1^2 dx_2$ & 1 & 0 & 0 & 0 & 0 \\
$dq_1 dq_2 dx_1$ & -1 & 0 & 0 & 0 & 0 \\
$dq_1 dx_1 dx_2$ & 1 & 1 & 1 & 2 & 0 \\
$dq_2 dx_1^2$  & 1 & 1 & 1 & 0 & 2 \\
$dx_1^2 dx_2$ & -1 & -1 & - 2 & -2 & -1 \\
\end{tabular}
\end{center}
Tables for $(n_1+m_1,n_2+m_2) = (1,2), (0,3)$ can be obtained from the upper two upon switching $1\leftrightarrow 2$. Response relations are vectors in the image of the above matrices. To the best of our knowledge, the number of independent response relations at arbitrary order is not known. 

\section{Episode 2: Single edge current}

\label{sec:singleedge}

Focus is on one observational current supported on one edge. We first provide the central result of our construction, then characterize hidden time-reversal, describe properties of the effective affinity, analyze stalling steady states, and prove certain generalized FRs. Part of this material has been anticipated in Refs.\,\cite{polettiniobs,gili}.

\subsection{Main result}
\label{sec:central}

We concentrate on the time-extensive current $\tcur^t := \tcur^t_{12}$ along edge $\ot$, assuming it is not a bridge (an edge whose removal disconnects the network).   We construct the tilted operator
\begin{align}
\M(q) & :=
\left(\ba{cccccc} - w_{1} &  e^{-q} w_{12} & w_{13} & \ldots  \\
 e^{q} w_{21}&  - w_2 & w_{23}  \\
w_{31} & w_{32} & - w_3  \\
\vdots & & & \ddots 
 \\
 \ea \right) \label{eq:explicit1}
\end{align}
where 
$q$ is the tilting parameter. The Perron-Froebenius eigenvalue $\lambda(q)$ of $\M(q)$ is the SCGF of the current. Notice that $\M(0)= \W$ is the forward generator. The forthcoming propositions introduce and characterize the crucial objects of our study, the hidden time-reversal generator $ \widetilde{\W}$, and the effective affinity $\Q$. 

\begin{theorem}
\label{th:unitary}
There exists a unique nonvanishing value of the tilting parameter $q = \Q$, which we call the effective affinity, and a unique positive-definite diagonal matrix $\Past$, with unit trace $\tr \Past = 1$, such that the following matrix is a {\rm MJPG}
\begin{align}
\widetilde{\W} := \Past \, \M(\Q)^T \,{\Past}^{-1}.\label{eq:uniuni}
\end{align}
\end{theorem}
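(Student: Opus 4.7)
The plan is to recast the MJPG requirement on $\widetilde{\W}$ as a pair of constraints on the scalar $q$ and the diagonal of $\Past$, and to solve them using Perron--Frobenius theory together with the convexity of the SCGF $\lambda(q)$.

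A matrix is a MJPG iff its off-diagonal entries are non-negative and each of its columns sums to zero. Since $(\widetilde{\W})_{ij} = (p^{\st}_i/p^{\st}_j)\,\M(q)_{ji}$ and the off-diagonal entries of $\M(q)$ in Eq.\,\eqref{eq:explicit1} are $w_{ji}$ (times $e^{\pm q}$ on the tilted edge $\ot$), non-negativity is automatic whenever the diagonal of $\Past$ is strictly positive. The column-sum condition $\vec{1}^{\,T} \widetilde{\W} = 0$ rearranges to $\M(q)\,\vec{p}^{\,\st} = 0$, i.e., $\vec{p}^{\,\st}$ must be a strictly positive right null eigenvector of the tilted operator.

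Next I would invoke Perron--Frobenius. Shifting by a sufficiently large $c>0$ turns $\M(q)+c\,\Id$ into a non-negative matrix with the same off-diagonal sparsity pattern as $\W$; because tilting on a single edge preserves irreducibility, this shifted matrix has a simple dominant real eigenvalue $\lambda(q)+c$ whose eigenvector is the unique one (up to scale) with strictly positive entries. Consequently, a positive $\vec{p}^{\,\st}$ satisfying $\M(q)\,\vec{p}^{\,\st}=0$ exists iff $\lambda(q)=0$; in that case $\vec{p}^{\,\st}$ is determined up to scale, and the trace normalization $\tr\Past=1$ selects $\Past$ uniquely.

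It then remains to show that $\lambda(q)=0$ has, in addition to the trivial root $q=0$ (where $\M(0)=\W$ is itself a generator, with $\Past$ reducing to $\Pp$), exactly one other solution. The SCGF $\lambda(q)$ is real-analytic and convex. Because the tilted edge $\ot$ is assumed not to be a bridge, it lies on at least one oriented cycle of $\G$; bounding the Perron eigenvalue of $\M(q)+c\,\Id$ from below by the geometric mean of its entries along such a cycle yields $\lambda(q)\to+\infty$ as $|q|\to\infty$. A convex function that vanishes at $0$ and diverges at $\pm\infty$ has at most two zeros; whenever the mean steady current $\phi=-\lambda'(0)$ is nonzero (i.e. away from stalling) there are exactly two, and the nontrivial one is the effective affinity $\Q$. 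The main technical obstacle is this asymptotic lower bound: since the diagonal of $\M(q)$ is $q$-independent, all the required growth must be extracted from the tilted off-diagonals through the cyclic structure of $\G$, so the non-bridgeness hypothesis on $\ot$ is essential; the degenerate stalling case $\phi=0$ is where the ``second'' zero collapses onto $q=0$ and $\Q$ vanishes.
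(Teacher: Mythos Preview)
Your argument is correct and takes a genuinely different route from the paper's. The paper proves Propositions~\ref{th:unitary} and~\ref{th:characterization} jointly by a direct, constructive manipulation: writing out the column-sum conditions for $\widetilde{\W}$ row by row, it recognizes that they reorganize into $\W_{\mathrm{hid}}\,\vec{p}^{\,\st}$ equal to a vector supported only on sites $1$ and $2$, where $\W_{\mathrm{hid}}$ is the generator on the graph with edge $\ot$ deleted. Since $\vec{1}$ is a left null vector of $\W_{\mathrm{hid}}$, those two nonzero entries must cancel, which forces either the trivial root $q=0$ (giving $\vec{p}^{\,\st}=\vec{p}$ and $\widetilde{\W}=\overline{\W}$) or the explicit value $\Q=\log\big(w_{12}p_2^{\st}/w_{21}p_1^{\st}\big)$, with $\vec{p}^{\,\st}$ the steady state of $\W_{\mathrm{hid}}$.

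Your spectral approach instead reduces everything to locating the zeros of the SCGF via Perron--Frobenius plus convexity and the cycle lower bound (which does require, as you note, the non-bridge hypothesis). This is arguably tidier as a pure existence-and-uniqueness argument, and it makes Proposition~\ref{th:ift1} a tautology rather than a corollary. What it does \emph{not} deliver is the explicit identification of $\Q$ and $\Past$ with the hidden dynamics --- i.e.\ Proposition~\ref{th:characterization}, which the paper gets for free in the same computation and which carries the entire operational content (stalling, the marginal/hidden decomposition $\widetilde{\W}=\W_{\mathrm{mar}}+\overline{\W_{\mathrm{hid}}}$, response). From your route one would still need a separate calculation to identify the Perron right eigenvector of $\M(\Q)$ with the stalling state, so the two approaches are complementary: yours is cleaner for abstract existence, the paper's is what one actually needs downstream.
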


\begin{theorem}
\label{th:characterization}
The vector $\vec{p}^{\,\st}$ of positive diagonal entries of $\Past$ is the unique normalized steady-state ensemble of the system where edge $\ot$ is removed. The effective affinity is given by
\begin{align}
\Q = \log \frac{w_{12} p_2^\st}{w_{21} p_1^\st} \label{eq:effaff} .
\end{align}
\end{theorem}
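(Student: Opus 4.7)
The plan is to exploit the defining identity of $\widetilde{\W}$ from Proposition \ref{th:unitary}, namely $\widetilde{\W} = \Past\,\M(\Q)^T\,{\Past}^{-1}$, by translating the requirement that $\widetilde{\W}$ be a MJPG into a concrete eigen-equation for $\M(\Q)$. Since any MJPG has vanishing column sums, $\vec{1}^T \widetilde{\W} = 0$, equivalently $\widetilde{\W}^T \vec{1} = 0$. Substituting the explicit similarity gives $\Past^{-1} \M(\Q) \Past \vec{1} = 0$, and since $\Past$ is diagonal with $\Past \vec{1} = \vec{p}^{\,\st}$, this collapses to the statement that $\vec{p}^{\,\st}$ is a right null vector of $\M(\Q)$:
\begin{align}
\M(\Q)\,\vec{p}^{\,\st} = 0.
\end{align}

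Next I would read off this equation componentwise using the explicit form of $\M(\Q)$ in Eq.\,(\ref{eq:explicit1}). For any $i\neq 1,2$ the row of $\M(\Q)$ coincides with the corresponding row of $\W$, so we obtain the usual steady-state relation $-w_i p_i^{\st} + \sum_{j\neq i} w_{ij} p_j^{\st} = 0$. For $i=1$ the equation reads $-w_1 p_1^{\st} + e^{-\Q} w_{12} p_2^{\st} + \sum_{j\geq 3} w_{1j} p_j^{\st} = 0$, and for $i=2$ it reads $e^{\Q} w_{21} p_1^{\st} - w_2 p_2^{\st} + \sum_{j\geq 3} w_{2j} p_j^{\st} = 0$. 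On the other hand, the generator $\W_{\mathrm{hid}}$ obtained by setting $w_{12}=w_{21}=0$ has diagonal entries shifted by $+w_{21}$ at site $1$ and $+w_{12}$ at site $2$, so its steady-state equation at $i=1,2$ is, respectively, $-(w_1 - w_{21})p_1 + \sum_{j\geq 3} w_{1j} p_j = 0$ and $-(w_2 - w_{12})p_2 + \sum_{j\geq 3} w_{2j} p_j = 0$. Matching term by term forces $e^{-\Q} w_{12} p_2^{\st} = w_{21} p_1^{\st}$ (from $i=1$) and $e^{\Q} w_{21} p_1^{\st} = w_{12} p_2^{\st}$ (from $i=2$), both of which yield the claimed identity $\Q = \log (w_{12} p_2^{\st}/w_{21} p_1^{\st})$ and simultaneously show that $\vec{p}^{\,\st}$ satisfies $\W_{\mathrm{hid}}\vec{p}^{\,\st}=0$.

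Finally, I would close the argument by invoking uniqueness. Because edge $\ot$ is assumed not to be a bridge, removing it leaves a connected graph with non-negative rates, and the Perron--Frobenius theorem applied to $\W_{\mathrm{hid}}$ guarantees that its normalized steady state is unique; this matches the uniqueness of $\Past$ asserted in Proposition \ref{th:unitary}. Positivity of the entries of $\vec{p}^{\,\st}$ is automatic from the same theorem and is consistent with the assumed positive-definiteness of $\Past$. Plugging $\vec{p}^{\,\st}$ back into $\log (w_{12} p_2^{\st}/w_{21} p_1^{\st})$ then pins down the single value of $\Q$ compatible with Proposition \ref{th:unitary}.

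The one subtle step is the matching argument in the middle paragraph: one must check that the tilting factors $e^{\pm\Q}$ are precisely what is needed to reabsorb the ``missing'' contributions $w_{21} p_1^{\st}$ and $w_{12} p_2^{\st}$ that would otherwise spoil the equivalence between $\M(\Q)\vec{p}^{\,\st}=0$ and $\W_{\mathrm{hid}}\vec{p}^{\,\st}=0$. Once this bookkeeping is in place, the rest is a routine application of uniqueness of the Perron eigenvector.
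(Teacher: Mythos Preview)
Your overall route is exactly the paper's: reduce the MJPG condition on $\widetilde{\W}$ to $\M(\Q)\,\vec{p}^{\,\st}=0$, compare with $\W_{\mathrm{hid}}\,\vec{p}^{\,\st}=0$, and then invoke Perron--Frobenius uniqueness on the connected graph with edge $\ot$ removed. The reformulation $\vec{1}^T\widetilde{\W}=0 \Leftrightarrow \M(\Q)\,\vec{p}^{\,\st}=0$ is a clean way to state what the paper writes out componentwise.

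However, your ``matching term by term'' step is a genuine gap. From the single row $i=1$ of $\M(\Q)\,\vec{p}^{\,\st}=0$ you only obtain
\[
(\W_{\mathrm{hid}}\,\vec{p}^{\,\st})_1 \;=\; w_{21}p_1^{\st} - e^{-\Q}w_{12}p_2^{\st},
\]
and analogously at $i=2$; nothing yet forces either side to vanish separately. The two relations you extract are actually the \emph{same} identity, so no independent information is gained by looking at $i=1$ and $i=2$ together. What closes the argument---and this is the step the paper makes explicit---is to use that $\W_{\mathrm{hid}}$ is itself a MJPG, so $\vec{1}^T\W_{\mathrm{hid}}=0$, hence $\sum_i(\W_{\mathrm{hid}}\,\vec{p}^{\,\st})_i=0$. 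Summing the two nontrivial right-hand sides gives
\[
(1-e^{\Q})\bigl(w_{21}p_1^{\st} - e^{-\Q}w_{12}p_2^{\st}\bigr)=0,
\]
and since Proposition~\ref{th:unitary} stipulates $\Q\neq 0$, the second factor vanishes. This yields Eq.~(\ref{eq:effaff}) and simultaneously makes $(\W_{\mathrm{hid}}\,\vec{p}^{\,\st})_1=(\W_{\mathrm{hid}}\,\vec{p}^{\,\st})_2=0$, so $\vec{p}^{\,\st}$ is the hidden steady state. Once you insert this column-sum argument in place of ``matching term by term,'' your proof is complete and coincides with the paper's.
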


Before attacking the proof of these propositions, we point out that this result yields the IFR as a corollary.
\begin{theorem} \label{th:ift1}
The integral fluctuation relation holds
\bea
\lambda(\Q) = 0. \label{eq:marIFT}
\eea
\end{theorem}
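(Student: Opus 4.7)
The plan is to extract this as an essentially one-line corollary of Propositions \ref{th:unitary} and \ref{th:characterization}, relying on the fact that every Markov jump-process generator has $0$ as its Perron--Frobenius eigenvalue.

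First I would invoke Proposition \ref{th:unitary}: the matrix $\widetilde{\W} = \Past \, \M(\Q)^T \, \Past^{-1}$ is a MJPG. As such, its columns sum to zero (this is just $\vec{1}^{\,T}\widetilde{\W} = 0$, i.e.\ conservation of probability), so $0$ belongs to its spectrum with left eigenvector $\vec{1}$. Moreover, because $\widetilde{\W}$ is a bona fide generator on the same connected graph structure, standard Perron--Frobenius theory for Markov generators ensures that $0$ is in fact the eigenvalue of largest real part (all other eigenvalues have strictly non-positive real part, and $0$ is simple assuming the underlying graph is irreducible, as assumed in \S\ref{setup}).

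Next I would use the two elementary similarity/transposition invariances of the spectrum: conjugation by the invertible diagonal matrix $\Past$ preserves eigenvalues, so $\widetilde{\W}$ and $\M(\Q)^T$ share their spectrum; and transposition preserves eigenvalues, so $\M(\Q)^T$ and $\M(\Q)$ share theirs. Chaining these, the Perron--Frobenius eigenvalue of $\M(\Q)$ equals the Perron--Frobenius eigenvalue of $\widetilde{\W}$, which is $0$. By the definition of the SCGF as the dominant eigenvalue of the tilted operator recalled in \S\ref{subsec:scgf}, this reads $\lambda(\Q) = 0$, which is precisely the claim.

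The only subtle point that I would want to flag explicitly is the identification of $0$ as the \emph{dominant} eigenvalue of $\widetilde{\W}$ (rather than just some eigenvalue). For a generic matrix similar to a MJPG this could be negotiated, but here $\widetilde{\W}$ is itself a MJPG by construction in Proposition \ref{th:unitary}, so the dominance of $0$ is automatic. Everything else is a trivial invariance of the spectrum under similarity and transposition, and there are no analytic estimates, no combinatorial graph arguments, and no thermodynamic inequalities to verify at this stage.
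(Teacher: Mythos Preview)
Your proposal is correct and follows essentially the same approach as the paper's own proof: both argue that $\widetilde{\W}$ is a MJPG (hence has Perron eigenvalue $0$), that it is similar to $\M(\Q)^T$ and therefore shares the spectrum of $\M(\Q)$, and conclude $\lambda(\Q)=0$. You simply spell out the similarity/transposition invariances and the Perron--Frobenius dominance a bit more carefully than the paper does.
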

\begin{proof}
The tilted operator $\M(q)$ has a unique dominant Perron-Froebenius eigenvalue $\lambda(q)$ that is the SCGF of the current under scrutiny. Since by Eq.\,(\ref{eq:uniuni}) at $q = \Q$ operators $\M(\Q)$ and $\widetilde{\W}$ have the same spectrum and the Perron eigenvalue of $\widetilde{\W}$ is zero, we conclude. \end{proof}

\begin{proof}[Proof of \Th{th:unitary} and \Th{th:characterization}] We have
\begin{align}
\widetilde{\W} =
\left(\ba{cccccc} - w_{1} &  e^{\Q} w_{21} \frac{p^\st_1}{p^\st_2} & w_{31} \frac{p^\st_1}{p^\st_3}  & \hdots \\ 
e^{-\Q} w_{12} \frac{p^\st_2}{p^\st_1} &  - w_2  & w_{32}  \frac{p^\st_2}{p^\st_3}  \\
w_{13}  \frac{p^\st_3}{p^\st_1} & w_{23}\frac{p^\st_3}{p^\st_2} & - w_3  \\
\vdots & & & \ddots \\
 \ea \right).
\end{align}
Since off-diagonal entries are positive, and diagonal entries are negative, for $\widetilde{\W}$ to be a  MJPG we only need to impose that $\vec{1} := (1,1,\ldots,1)^T$ is a left null vector. We obtain the system of equations
\begin{subequations} 
\begin{align}
w_1 & = e^{-\Q}  w_{12} \frac{p^\st_2}{p^\st_1} + \sum_{j>2} w_{1j}  \frac{p^\st_j}{p^\st_1}  \\
w_2 & =  e^{\Q} w_{21} \frac{p^\st_1}{p^\st_2} + \sum_{j>2} w_{2j}\frac{p^\st_j}{p^\st_2} \\
w_i & = \sum_{j \neq i} w_{ij} \frac{p_j^\st}{p_i^\st},  \qquad \qquad \qquad i > 2.
\end{align}
\end{subequations}
Noticing that $w_1 = w_{21} + \sum_{i >2} w_{i1}$ (and similarly for $w_2$), we can rearrange the latter system of equations into
\begin{subequations} \label{eq:eee} 
\begin{align}
\sum_{j >2} \left( w_{j1} p^\st_1 - w_{1j} p^\st_j \right) & = w_{21}p^\st_1 - e^{-\Q}  w_{12} p^\st_2   \\
\sum_{j >2} \left( w_{j2} p^\st_2 - w_{2j} p^\st_j \right) & = w_{12}p^\st_2 - e^{\Q} w_{21} p^\st_1  \\
\sum_{j} \left( w_{\ji} p^\st_i - w_{\ij} p^\st_j \right) & = 0, \qquad \qquad \qquad i > 2.
\label{eq:system}
\end{align}
\end{subequations}
The crucial passage now is to recognize that the left-hand side of this system of equations can be written as $\W_{\mathrm{hid}} \vec{p}^{\,\st}$, where $\W_{\mathrm{hid}}$ is the generator for the MJPG of the network where edge $\ot$ is {\it deleted}
\begin{align}
\W_{\mathrm{hid}} := \left(\ba{ccccc} - w_1 + w_{21} & 0 & w_{13}& \hdots \\
0 & - w_2 + w_{12} & w_{23} \\
w_{31} & w_{32} & - w_3 \\
\vdots & & & \ddots
 \ea \right). \label{eq:deletion}
\end{align}
Since $\vec{1}$ must also be a left-null vector of $\W_{\mathrm{hid}}$, we necessarily have
\begin{align}
w_{21}p^\st_1 - e^{-\Q}  w_{12} p^\st_2 + w_{12}p^\st_2 - e^{\Q} w_{21} p^\st_1  = 0.
\end{align}
The only solutions are the trivial one $\Q = 0$, in which case $\vec{p}^{\,\st}$ solves the steady-state equation $\W\vec{p}^{\,\st} = 0$, and therefore  $\widetilde{\W}$ coincides with the time-reversed generator, and the nontrivial solution anticipated in Eq.\,(\ref{eq:effaff}). In this case Eqs.\,(\ref{eq:eee}) are steady-state equations in the network where the edge $\ot$ is removed, which by assumption is connected, hence there exists a unique positive normalized solution $\vec{p}^{\,\st}$, which we call the stalling steady state.
\end{proof}

\subsection{Dynamics: marginal and hidden}
\label{sec:margen}

We hereby investigate the relationship between generators $\W$ and $\widetilde{\W}$. Given the explicit expression of the effective affinity Eq.\,(\ref{eq:effaff}), the hidden TR generator reads
\begin{align}
\widetilde{\W} & =
\left(\ba{cccccc} - w_{1} & w_{12} & w_{31} \frac{p^\st_1}{p^\st_3}  & \hdots \\ 
w_{21} &  - w_2  & w_{32}  \frac{p^\st_2}{p^\st_3}  \\
w_{13}  \frac{p^\st_3}{p^\st_1} & w_{23}\frac{p^\st_3}{p^\st_2} & - w_3  \\
\vdots & & & \ddots \\
 \ea \right). \label{eq:explicit2}
\end{align}
Notice that the upper $2\times 2$ block is identical to that of the forward generator, while the rest of the matrix resembles a time-reversed generator, see Eq.\,(\ref{eq:timereversed}). Hence we are tempted to speculate that the hidden TR generator preserves the dynamics along edge $\ot$, while it reverses the dynamics in the hidden sector of the network. Let us make this intuition more precise.

We already introduced in Eq.\,(\ref{eq:deletion}) what we call the hidden MJPG $\W_{\mathrm{hid}}$ of the dynamics occurring on network where edge $\ot$ is removed. Let us further consider the {\it marginal}  MJPG that generates the two-configuration dynamics along the sole edge $\ot$:
\begin{align}
 \W_{\mathrm{mar}} := \W - \W_{\mathrm{hid}} =  \left(\ba{cccc} - w_{21} & w_{12} & 0 & \hdots \\
w_{21} & - w_{12} & 0 \\
0  & 0 & 0 \\
\vdots & & & \ddots
 \ea\right).
\end{align}
\begin{theorem}
\label{th:delsingle}
The forward and the hidden time-reversal generators can be expressed as
\begin{subequations}\label{eq:12}
\bea
\W & = \W_{\mathrm{mar}} + \W_{\mathrm{hid}} \\
\widetilde{\W} & = \W_{\mathrm{mar}} +  \overline{\W_{\mathrm{hid}}}.  
\ees
\end{theorem}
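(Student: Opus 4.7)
The first identity is essentially definitional: since $\W_{\mathrm{mar}}$ was introduced as $\W-\W_{\mathrm{hid}}$, we have $\W=\W_{\mathrm{mar}}+\W_{\mathrm{hid}}$ automatically.

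For the second identity, the plan is to split the tilted operator into a ``marginal tilted'' piece supported on the edge $\ot$ and an ``untilted hidden'' piece, and then see how the similarity transformation $\Past(\cdot)^T\Past^{-1}$ acts on each piece. Concretely, by inspecting Eq.\,(\ref{eq:explicit1}) and Eq.\,(\ref{eq:deletion}), one verifies at once the decomposition
\begin{align}
\M(\Q) \;=\; \M_{\mathrm{mar}}(\Q) + \W_{\mathrm{hid}},
\qquad
\M_{\mathrm{mar}}(\Q):=\begin{pmatrix} -w_{21} & e^{-\Q}w_{12} & 0 & \cdots\\ e^{\Q}w_{21} & -w_{12} & 0 &\\ 0 & 0 & 0 &\\ \vdots & & & \ddots\end{pmatrix},
\nonumber
\end{align}
where the diagonal entries of $\W_{\mathrm{hid}}$ absorb exactly the $-w_{21},-w_{12}$ pieces that are missing from $\M_{\mathrm{mar}}(\Q)$. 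Applying $\Past(\cdot)^T\Past^{-1}$ to both summands yields
\begin{align}
\widetilde{\W} \;=\; \Past\,\M(\Q)^T\,\Past^{-1} \;=\; \Past\,\M_{\mathrm{mar}}(\Q)^T\,\Past^{-1} \;+\; \Past\,\W_{\mathrm{hid}}^T\,\Past^{-1}.
\nonumber
\end{align}

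The second summand is already $\overline{\W_{\mathrm{hid}}}$ by the definition of the time-reversed generator in Eq.\,(\ref{eq:timereversed}), provided $\vec{p}^{\,\st}$ is the steady state of $\W_{\mathrm{hid}}$ — which is precisely the content of \Th{th:characterization}. It therefore remains to show that $\Past\,\M_{\mathrm{mar}}(\Q)^T\,\Past^{-1}=\W_{\mathrm{mar}}$. The diagonal entries $-w_{21},-w_{12}$ are invariant under conjugation by a diagonal matrix. For the off-diagonal entries, using the explicit form $e^{\Q}=w_{12}p_2^{\st}/(w_{21}p_1^{\st})$ given by Eq.\,(\ref{eq:effaff}),
\begin{align}
\bigl(\Past\,\M_{\mathrm{mar}}(\Q)^T\Past^{-1}\bigr)_{12} = \frac{p_1^{\st}}{p_2^{\st}}\,e^{\Q}w_{21} = w_{12},
\qquad
\bigl(\Past\,\M_{\mathrm{mar}}(\Q)^T\Past^{-1}\bigr)_{21} = \frac{p_2^{\st}}{p_1^{\st}}\,e^{-\Q}w_{12} = w_{21},
\nonumber
\end{align}
so the transformation is indeed trivial on the marginal piece, and $\Past\,\M_{\mathrm{mar}}(\Q)^T\,\Past^{-1}=\W_{\mathrm{mar}}$ as claimed.

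There is no real obstacle here — the whole work has already been done in proving \Th{th:unitary} and \Th{th:characterization}, which identify $\vec{p}^{\,\st}$ with the stationary distribution of $\W_{\mathrm{hid}}$ and pin down $\Q$ so that the tilting on edge $\ot$ is exactly undone by the diagonal conjugation. The only potentially subtle point is checking that the $2\times 2$ block defined by the edge $\ot$ truly decouples from the rest under both the decomposition $\M(\Q)=\M_{\mathrm{mar}}(\Q)+\W_{\mathrm{hid}}$ and the diagonal conjugation; this follows from the fact that $\M_{\mathrm{mar}}(\Q)$ has no entries outside the $2\times 2$ block, while diagonal similarity preserves the support of a matrix.
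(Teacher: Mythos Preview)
Your proof is correct and follows essentially the same approach as the paper: both rely on the identification of $\vec{p}^{\,\st}$ as the steady state of $\W_{\mathrm{hid}}$ (so that $\Past\,\W_{\mathrm{hid}}^T\,\Past^{-1}=\overline{\W_{\mathrm{hid}}}$ by definition) together with the explicit value of $\Q$ from Eq.\,(\ref{eq:effaff}) that makes the conjugation trivial on the $2\times 2$ marginal block. The paper's proof is just a one-line appeal to these facts, whereas you have spelled out the decomposition $\M(\Q)=\M_{\mathrm{mar}}(\Q)+\W_{\mathrm{hid}}$ and the entrywise computation explicitly.
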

\begin{proof}
This obviously follows from the fact that $\vec{p}^{\,\st}$ is the steady state of the hidden dynamics, and by the definition of time-reversed generator Eq.\,(\ref{eq:timereversed}).
\end{proof}
This makes our intuition more precise: the hidden TR generator inverts the dynamics in the hidden sector, but not in the marginal sector.

\subsection{Thermodynamics: the effective affinity}
\label{sec:effaff}

The effective affinity can be interpreted operationally as follows. Prepare the system at the stalling steady state where $\ot$ is removed and suddenly quench the system by connecting edge $\ot$. Then $\Q$ is the thermodynamic force at the instant of connecting. In this section we further characterize the effective affinity and put it in relation to other thermodynamic forces.

Let us first report the expressions for the effective affinity Eq.\,(\ref{eq:effaff}), for the steady-state thermodynamic force $\F := \F_{12}$ along edge $\ot$ Eq.\,(\ref{eq:ssforce}), and for the cycle affinity Eq.\,(\ref{eq:cycleaff}): 
\begin{align}
\F = \log \frac{w_{12} p_2}{w_{21} p_1}, \qquad \Q = \log \frac{w_{12} p^\st_2}{w_{21} p^\st_1}, \qquad \A = \log \frac{w_{12} p^{\mathrm{eq}}_2}{w_{21} p^{\mathrm{eq}}_1}.
\end{align}
This sequence displays the insurgence of a sort of hierarchy of marginal theories: from the most complete one, on the right-end of the spectrum, through the marginal, to the agnostic one on the far left. More on the hierarchy in Sec.\,\ref{sec:hierarchy}.

Notice that in general the steady state $\vec{p}$ of the forward dynamics, and that $\vec{\tilde{p}}$ of the hidden TR dynamics, differ among themselves. Nevertheless, the following result holds.

\begin{theorem}
\label{th:sameforce}
The steady-state edge affinity $\widetilde{\F}$ in the hidden time-reversal generator is identical to that of the forward dynamics,
\bea
\widetilde{\F} = \F.
\eea
\end{theorem}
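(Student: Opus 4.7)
The plan is to parlay the similarity $\widetilde{\W} = \Past\,\M(\Q)^T\,\Past^{-1}$ from \Th{th:unitary} into an explicit relation between the stationary ensembles of $\widetilde{\W}$ and $\W$. Because the marginal rates are unchanged in the hidden time-reversal (cf.\ Eq.\,(\ref{eq:explicit2})), i.e., $\tilde{w}_{12}=w_{12}$ and $\tilde{w}_{21}=w_{21}$, the claim $\widetilde{\F}=\F$ reduces to proving $\tilde{p}_2/\tilde{p}_1 = p_2/p_1$. Moreover, the similarity identifies $\vec{\tilde{p}} = \Past\,\vec{v}$, with $\vec{v}$ a left-null row vector of the tilted operator at the effective affinity: $\vec{v}^T \M(\Q)=0$. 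So the target reduces to showing $v_2/v_1 = (p_2/p_1)(p_1^\st/p_2^\st)$.

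The first step is to compare $\vec{v}^T\M(\Q)=0$ column by column with the action $\W^T\vec{v}$. Because $\M(\Q)$ differs from $\W$ only by the tilting factors $e^{\pm\Q}$ inserted into positions $(1,2)$ and $(2,1)$, columns with index $j>2$ yield $(\W^T\vec{v})_j=0$, while the marginal columns produce the nontrivial identities
\begin{align}
(\W^T\vec{v})_1 &= (1-e^{\Q})\,w_{21}\,v_2, \nonumber \\
(\W^T\vec{v})_2 &= (1-e^{-\Q})\,w_{12}\,v_1. \nonumber
\end{align}

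The decisive step is to contract these with the forward stationary ensemble. From $\W\vec{p}=0$ one gets $\vec{p}^{\,T}\W^T\vec{v}=(\W\vec{p})^T\vec{v}=0$, and only sites $1$ and $2$ survive, yielding the single scalar relation
\begin{align}
p_1\,(1-e^{\Q})\,w_{21}\,v_2 + p_2\,(1-e^{-\Q})\,w_{12}\,v_1 = 0. \nonumber
\end{align}
Dividing out the nonvanishing factor $(1-e^{\Q})$ (the case $\Q=0$ forces full detailed balance and hence $\widetilde{\W}=\W$ trivially) and substituting $e^{\Q}=w_{12}p_2^\st/(w_{21}p_1^\st)$ from \Th{th:characterization}, one obtains $v_2/v_1=(p_2\,p_1^\st)/(p_1\,p_2^\st)$, so $\tilde{p}_2/\tilde{p}_1 = p_2/p_1$, and therefore $\widetilde{\F}=\F$.

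The main obstacle is the careful bookkeeping of row-vs.-column objects when translating between the left-null condition $\vec{v}^T\M(\Q)=0$, the transposed action $\W^T\vec{v}$, and the orthogonality $(\W\vec{p})^T\vec{v}=0$. Once that is in place, the structural input is remarkably compact: $\M(\Q)$ and $\W$ disagree on only two off-diagonal entries, so the mismatch is localized at sites $1$ and $2$ --- precisely where the weighting by $\vec{p}$ collapses the orthogonality into a single usable scalar equation. Neither the topology of the hidden sector nor the explicit form of $\vec{p}^{\,\st}$ enters the argument, other than through the concise identity for $e^{\Q}$ provided by \Th{th:characterization}.
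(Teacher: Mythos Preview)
Your proof is correct and takes a genuinely different route from the paper's. The paper argues via the minor formula for steady states, Eq.\,(\ref{eq:matrix}): it observes that the principal submatrices $\widetilde{\W}(1|1)$ and $\widetilde{\W}(2|2)$ are similar (up to a transpose) to $\W(1|1)$ and $\W(2|2)$ via the diagonal matrix $\Past$ restricted to the surviving indices, hence share the same determinants, and therefore $p_1/p_2 = \det\W(1|1)/\det\W(2|2) = \det\widetilde{\W}(1|1)/\det\widetilde{\W}(2|2) = \tilde{p}_1/\tilde{p}_2$. Your argument instead identifies $\vec{\tilde p} = \Past\vec{v}$ with $\vec{v}$ a left null vector of $\M(\Q)$, exploits that $\M(\Q)-\W$ is supported on only two off-diagonal entries, and contracts the resulting relations against the forward steady state $\vec{p}$ to collapse everything to a single scalar equation. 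This bypasses the matrix-tree/minor machinery entirely and makes transparent that the locality of the tilt is the sole structural input; the paper's route is terser but leans on the determinantal representation of the stationary ensemble.

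One small correction: the degenerate case $\Q=0$ does \emph{not} force full detailed balance, nor $\widetilde{\W}=\W$. Rather, $\Q=0$ is the stalling condition (cf.\ \Th{th:stalling1}), in which case $\vec{p}^{\,\st}$ coincides with the forward steady state $\vec{p}$ and $\widetilde{\W}=\overline{\W}$. Since the time-reversed generator $\overline{\W}$ shares the steady state $\vec{p}$, the conclusion $\tilde{p}_2/\tilde{p}_1 = p_2/p_1$ still holds trivially; only your justification for that edge case needs adjusting.
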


\begin{proof}
This amounts to prove that
\bea
\frac{p_1}{p_2} = \frac{\widetilde{p}_1}{\widetilde{p}_2}. \label{eq:eqss}
\eea
As we mentioned in Sec.\,\ref{setup}, the steady-state ensemble can be calculated in terms of minors. In view of the explicit form of the generators Eqs.\,(\ref{eq:explicit1},\ref{eq:explicit2}), we consider the minors obtained by removing the first row and column and the second row and column, and notice that
\begin{subequations}\label{eq:delcon}
\begin{align}
\widetilde{\W}{(1|1)} & =  \Past{(1|1)} \, \W{(1|1)} \, {\Past}{(1|1)}^{-1} \\
\widetilde{\W}{(2|2)} & =  \Past{(2|2)} \, \W{(2|2)} \, {\Past}{(2|2)}^{-1}.
\end{align}
\end{subequations}
where we remind that matrix $\matrix{A}(i_1,\ldots,i_n|j_1,\ldots,j_m)$ is obtained by removing rows $i_1,\ldots,i_n$ and columns $j_1,\ldots,j_m$.
Since these are matrix similarities their determinants coincide.
\end{proof}
While the ratio of the populations at sites $1$ and $2$ stays the same upon hidden TR, the total density might change due to normalization, which is a nonlocal factor; for this reason the steady-state currents of the forward and hidden TR current along edge $\ot$ are not the same. However, the following result is an obvious corollary of the above proposition, corroborating the intuition expressed above that the hidden TR dynamics ``tends to preserve'' the dynamics in the marginal sector.

\begin{theorem}
\label{th:currentverse}
The steady-state mean currents  $\cur_{12}$ and $\widetilde{\cur}_{12}$ respectively of the forward and of the hidden TR dynamics have the same sign.
\end{theorem}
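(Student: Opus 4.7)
The plan is to deduce this corollary directly from \Th{th:sameforce}, which ensures equality of the edge force $\widetilde{\F} = \F$, and to translate that equality of log-ratios into the claimed sign equality for currents.

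First I would write each steady current in ``force form''. Using the definition of $\F$ in Eq.\,(\ref{eq:ssforce}) applied to edge $\ot$,
\begin{align}
\cur_{12} = w_{12} p_2 - w_{21} p_1 = w_{21} p_1 \left( e^{\F} - 1 \right), \nonumber
\end{align}
and by the same computation applied to the hidden TR steady ensemble $\widetilde{\vec{p}}$,
\begin{align}
\widetilde{\cur}_{12} = w_{21} \widetilde{p}_1 \left( e^{\widetilde{\F}} - 1 \right). \nonumber
\end{align}
The rates $w_{12}, w_{21}$ are unchanged by hidden TR by the very form of Eq.\,(\ref{eq:explicit2}) (the $2\times 2$ block in the marginal sector coincides with that of $\W$), so these two expressions involve the same positive transition rates.

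Next I would invoke \Th{th:sameforce} to conclude $\widetilde{\F} = \F$, which makes the sign of the bracketed factor $e^{\F}-1$ identical in both expressions. Since $p_1, \widetilde{p}_1 > 0$ (both steady ensembles of irreducible generators on a connected graph are strictly positive) and $w_{21} > 0$, the prefactors in each display are strictly positive. Hence $\cur_{12}$ and $\widetilde{\cur}_{12}$ have the same sign as $e^{\F}-1$, and in particular the same sign as each other. The degenerate case $\F=0$ simply gives a simultaneous vanishing of the two currents, which is consistent with the claim.

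There is no real obstacle here; the entire content is packaged into the preceding proposition. The only point to take care of is to make sure one reads off the correct marginal-sector rates in $\widetilde{\W}$: a glance at Eq.\,(\ref{eq:explicit2}) confirms that the factors $p^\st_1/p^\st_2$ and $p^\st_2/p^\st_1$ in the rows indexed by $1$ and $2$ combine with the exponentials $e^{\pm \Q}$ precisely to restore the original rates $w_{12}$ and $w_{21}$, so that the ``force form'' argument does not pick up spurious hidden-sector factors.
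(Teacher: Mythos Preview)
Your proof is correct and is precisely the intended corollary: the paper states this result as an ``obvious corollary of the above proposition'' (i.e.\ \Th{th:sameforce}) without spelling out the details, and you have filled them in exactly as one would expect, by writing each current as a positive prefactor times $e^{\F}-1$ and invoking $\widetilde{\F}=\F$. The only addition you make beyond the paper is the explicit verification from Eq.\,(\ref{eq:explicit2}) that the marginal rates $w_{12},w_{21}$ are preserved under hidden TR, which is a sensible sanity check.
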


We will now compare the ``real'' affinities of the forward and the marginal TR dynamics.

\begin{theorem}
\label{th:affaff} For any simple cycle $\C$, let $\A(\C)$ and $\widetilde{\A}(\C)$ denote the cycle affinities with respect to the forward and hidden TR dynamics, respectively. We have
\begin{subequations}
\begin{align}
\widetilde{\A}(\C) &= - \A(\C) , & & \mathrm{if}\,  \C \niton \ot  \\
\widetilde{\A}(\C) &= - \A(\C) + 2\Q , & & \mathrm{if}\, \C \owns \ot.
\end{align}
\end{subequations}
\end{theorem}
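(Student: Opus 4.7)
The plan is to prove both identities by direct computation of the log-ratios of rates in $\widetilde{\W}$, exploiting the fact that the $p^\st$-dependent factors that distinguish $\widetilde{\W}$ from $\W$ telescope around any closed path.

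First I would read off the hidden TR rates from Eq.\,(\ref{eq:explicit2})---equivalently from the decomposition $\widetilde{\W} = \W_{\mathrm{mar}} + \overline{\W_{\mathrm{hid}}}$ of \Th{th:delsingle}: the marginal-edge rates are preserved, $\widetilde{w}_{12} = w_{12}$ and $\widetilde{w}_{21} = w_{21}$, while for every non-marginal oriented edge one has $\widetilde{w}_{\ij} = w_{\ji}\, p^\st_i/p^\st_j$. Hence the log-ratio along the marginal edge (oriented $2\to 1$) is $\log(w_{12}/w_{21})$, unchanged, whereas along any non-marginal edge it equals $-\log(w_{\ij}/w_{\ji}) + 2\log(p^\st_i/p^\st_j)$.

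For $\C \niton \ot$ every edge of $\C$ is non-marginal, so summing around $\C$ yields $\widetilde{\A}(\C) = -\A(\C) + 2\sum_{\ij\in\C}\log(p^\st_i/p^\st_j)$; since $\C$ is closed, the telescoping sum vanishes and $\widetilde{\A}(\C) = -\A(\C)$. For $\C \owns \ot$, oriented so that it crosses the marginal edge in the $+1$-direction $2\to 1$, I split $\C$ into that edge plus a hidden path $\gamma$ going from $1$ back to $2$. The contribution of the hidden edges is $-\sum_{\ij\in\gamma}\log(w_{\ij}/w_{\ji}) + 2\log(p^\st_2/p^\st_1)$---the telescoping is now truncated at the missing marginal edge, leaving the residual $2[\log p^\st_2 - \log p^\st_1]$---so, together with the marginal contribution $\log(w_{12}/w_{21})$ and the decomposition $\A(\C) = \log(w_{12}/w_{21}) + \sum_{\ij\in\gamma}\log(w_{\ij}/w_{\ji})$, one obtains
\begin{align*}
\widetilde{\A}(\C) = -\A(\C) + 2\log\frac{w_{12}\,p^\st_2}{w_{21}\,p^\st_1} = -\A(\C) + 2\Q,
\end{align*}
the last equality by the characterization Eq.\,(\ref{eq:effaff}) of the effective affinity.

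The only real obstacle is bookkeeping of orientation conventions: traversing $\ot$ in the opposite sense flips both the marginal log-ratio and the residual telescoping term $2\log(p^\st_2/p^\st_1)$, so the $+2\Q$ on the right-hand side would become $-2\Q$ while $\A(\C)$ would simultaneously change sign. With a fixed convention for both edge and cycle orientations the statement is unambiguous, and this is presumably what the proposition has in mind.
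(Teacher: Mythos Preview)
Your proof is correct and follows essentially the same route as the paper's: both compute $\widetilde{\A}(\C)$ by inserting the explicit hidden-TR rates and exploiting the telescoping of the $p^\st$-factors around closed paths. The only cosmetic difference is that the paper, for the case $\C \owns \ot$, reinserts a fictitious $p^\st$-dressed factor for edge $\ot$ so as to write the whole expression as a product over all edges of $\C$ and then corrects for it, whereas you split $\C$ explicitly into the marginal edge plus the hidden path $\gamma$; the algebra and the conclusion are identical.
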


\begin{proof}
For all cycles that do not contain edge $\ot$, one has
\begin{align}
\widetilde{\A}(\C) & = \log \prod_{\ij \in \C}  \frac{\widetilde{w}_{\ij}}{\widetilde{w}_{\ji}} = \log \prod_{\ij \in \C}  \frac{w_{\ji}  p^\st_i/p^\st_j }{w_{\ij} p^\st_j/p^\st_i} .
\end{align}
All terms $p^\st_i$ cancel out one with each other along cycles, so the last term can be identified with $-\A(\C)$. As regards cycles containing edge $\ot$, one has
\begin{align}
\widetilde{\A}(\C) & =\log \prod_{\substack{\ij \in \C \\ \ij \neq 12}}    \frac{w_{\ji}  p^\st_i/p^\st_j }{w_{\ij} p^\st_j/p^\st_i}  + \log \frac{w_{12}}{w_{21}} \\
& = \log \prod_{\ij \in \C}  \frac{w_{\ji}  (p^\st_i)^2}{w_{\ij} (p^\st_j)^2} - \log \frac{w_{21}  (p^\st_1)^2}{w_{12} (p^\st_2)^2} + \log \frac{w_{12}}{w_{21}}
\end{align}
and we conclude.
\end{proof}

We will now employ the expression of the steady state in terms of spanning trees to compare edge and effective affinities. In Sec.\,\ref{sec:graphical} we provide a graphical representation by an example. Let us remind the definition of fluxes in the full network introduced in \S\,\ref{subsec:flux} and introduce fluxes in the network where $\ot$ is either deleted or contracted, that is, where it is shrunk to a unique site (here we set for simplicity the normalization $\tau(\G) \stackrel{!}{=} 1$).
\begin{subequations}
\begin{align}
\psi_{12} & = w_{12} \, \tau_2(\G) & & = w_{12} \det \W(2|1) \\
\psi_{\setminus 12} & := w_{12} \, \tau_2(\G\setminus \ot) & & = w_{12} \det\W_{\mathrm{hid}}(2|1)  \\ 
\psi_{/ 12} & := w_{12}w_{21} \, \tau_1(\G / \ot) & & = w_{12}w_{21} \det \W{(1,2|1,2)} .
\end{align}
\end{subequations}
Similar definitions are obtained by interchanging $1$ and $2$. Notice that $\psi_{/ 12} = \psi_{/ 21}$. We can then write:
\begin{align}
\Q = \log \frac{\psi_{\setminus 12}}{\psi_{\setminus 21}}.
\end{align}
We can also consider spanning trees in the graph $\G/\ot$ where edge $\ot$ is contracted (sites $1$ and $2$ are identified). The following deletion-contraction formula holds 
\begin{align}
\tau_1(\G) = \tau_1(\G\setminus \ot) + w_{12} \, \tau_2(\G/\ot). \label{eq:delconfor}
\end{align}
which in terms of determinants reads
\begin{align}
\det \W{(2|1)} = \det \W_{\mathrm{hid}}{(2|1)} + w_{12} \det \W{(1,2|1,2)}. \label{eq:delcondet}
\end{align}
Basically, this formula states that all spanning trees (on the left) either do not contain edge $\ot$ (first term on the right) or they do (second term on the rights). If they do, they need to be oriented towards the contracted vertices. Notice that because in the contracted graph $1$ is identified with $2$, clearly $\T_2(\G/\ot) = \T_1(\G/\ot)$. Then we find that
\begin{align}
0 \leq \psi_{/ 12} = \psi_{12} - \psi_{ \setminus 12} = \psi_{21} - \psi_{\setminus 21},
\end{align}
that is, the difference between the fluxes is independent of the direction of the edge. Before giving a graphical example, we are now in the position to prove the following result.

\begin{theorem}
\label{th:bigger}
The effective affinity has larger modulus than the edge affinity,
\begin{align}
|\Q| \geq |\F|.
\end{align}
\end{theorem}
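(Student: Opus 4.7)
My plan is to express both $\F$ and $\Q$ in terms of the flux quantities $\psi_{12}, \psi_{21}, \psi_{\setminus 12}, \psi_{\setminus 21}$ introduced just before the proposition, and then exploit the deletion-contraction identity \eqref{eq:delconfor} to compare them. From the definitions we immediately have
\begin{align}
\F &= \log\frac{\psi_{12}}{\psi_{21}},  &   \Q &= \log\frac{\psi_{\setminus 12}}{\psi_{\setminus 21}}, \nonumber
\end{align}
since in each ratio the common normalization $\tau(\G)$ or $\tau(\G\setminus\ot)$ cancels out.

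The key structural input is the deletion-contraction formula, which in the notation used here reads $\psi_{12}=\psi_{\setminus 12}+\psi_{/12}$ and $\psi_{21}=\psi_{\setminus 21}+\psi_{/21}$, together with the observation $\psi_{/12}=\psi_{/21}$. Setting $a:=\psi_{\setminus 12}$, $b:=\psi_{\setminus 21}$, and $c:=\psi_{/12}=\psi_{/21}\geq 0$, the inequality to be proved reduces to the elementary statement
\begin{align}
\left|\log\tfrac{a}{b}\right|\ \geq\ \left|\log\tfrac{a+c}{b+c}\right|, \qquad a,b>0,\ c\geq 0. \nonumber
\end{align}

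This last inequality is straightforward: by symmetry assume $a\geq b$, so both logarithms are non-negative; then $a(b+c)\geq b(a+c)$ is equivalent to $(a-b)c\geq 0$, which holds. Equivalently, viewing $f(c)=\log\tfrac{a+c}{b+c}$ as a function of $c$, one has $f'(c)=\tfrac{1}{a+c}-\tfrac{1}{b+c}$, with the sign of $b-a$, so $|f(c)|$ decreases monotonically from $|\log(a/b)|$ at $c=0$ to $0$ as $c\to\infty$. I would close by noting that equality $|\Q|=|\F|$ occurs iff $c=0$, i.e.\ iff the only cycles through edge $\ot$ are degenerate---this gives a nice geometric companion to the proposition.

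I do not foresee any real obstacle: the only nontrivial ingredient is the deletion-contraction formula \eqref{eq:delconfor}, which has already been established, and the positivity $c\geq 0$ is automatic because $c$ is a sum of products of non-negative rates weighted by a spanning-tree polynomial. The whole argument is essentially one line once $\F$ and $\Q$ are rewritten in the common flux notation; the slight care required is only to check the absolute-value statement handles both signs of $\F$ uniformly, which the WLOG reduction above takes care of.
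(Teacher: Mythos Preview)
Your proof is correct and follows essentially the same route as the paper: both use the deletion-contraction identity to write $\psi_{12}=\psi_{\setminus 12}+c$, $\psi_{21}=\psi_{\setminus 21}+c$ with $c=\psi_{/12}=\psi_{/21}\geq 0$, and then observe that adding the same nonnegative constant to numerator and denominator shrinks the log-ratio in modulus. The paper phrases the final step as $\F=\Q+\log\frac{1+c/\psi_{\setminus 12}}{1+c/\psi_{\setminus 21}}$ and checks the sign of the correction, while you abstract it to the elementary inequality $|\log(a/b)|\geq|\log((a+c)/(b+c))|$; these are the same argument.
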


\begin{proof}
Suppose $\psi_{12} > \psi_{12}$, whence both $\Q > 0$ and $\F > 0$. We have 
\begin{align}
\F & = \log \frac{\psi_{12}}{\psi_{21}} =  \log \frac{\psi_{\setminus 12} +\psi_{/ 12} }{\psi_{\setminus 21} + \psi_{/12}} \\
& = \Q +\log \frac{1+ \frac{\psi_{/12}}{\psi_{\setminus 12}}}{1+ \frac{\psi_{/12}}{\psi_{\setminus 21}}}.
\end{align}
Because $\psi_{\setminus 12} \geq \psi_{\setminus 21}$, then the latter term above is negative and thus we conclude.
\end{proof}

\subsection{Graphical representation}
\label{sec:graphical}

Let us consider the graph:
\begin{align}
\xymatrix{ 1  \ar@{-}[r] & 2 \ar@{-}[d] \\ 
4 \ar@{-}[ur] \ar@{-}[u] & 3 \ar@{-}[l]}
\end{align}
In the following, each directed edge means ``multiply by the corresponding rate''. The oriented spanning-tree polynomials with root in $i =1,2$ are given by
\bes
\tau_1(\G)
& = &\Scale[0.7]{
  \ba{c}\xymatrix{ \ar@{<-}[r] &  \ar@{<-}[d] \\  \ar[u] & } \ea
+ \ba{c}\xymatrix{ \ar@{<-}[r] & \ar@{<-}[d] \\ & \ar@{<-}[l]} \ea
+ \ba{c}\xymatrix{ \ar@{<-}[r] &  \\ \ar[u] & \ar[l]} \ea
+ \ba{c}\xymatrix{ \ar@{<-}[r]  & \\ \ar[ur]  & \ar[l]} \ea
+ \ba{c}\xymatrix{ \ar@{<-}[r]  & \ar@{<-}[d] \\ \ar[ur] & } \ea 
}
 \nonumber \\
& & \Scale[0.7]{
+ \ba{c}\xymatrix{ & \ar[d] \\ \ar[u] & \ar[l]} \ea 
+ \ba{c}\xymatrix{ & \ar@{<-}[d] \\ \ar@{<-}[ur] \ar[u] & } \ea
+ \ba{c}\xymatrix{   & \\ \ar@{<-}[ur] \ar[u] & \ar[l]} \ea
}, \\
\tau_2(\G)
& = & \Scale[0.7]{\ba{c}\xymatrix{ \ar@{->}[r] &  \ar@{<-}[d] \\  \ar[u] & } \ea
+ \ba{c}\xymatrix{ \ar@{->}[r] & \ar@{<-}[d] \\ & \ar@{<-}[l]} \ea
+ \ba{c}\xymatrix{ \ar@{->}[r] &  \\ \ar[u] & \ar[l]} \ea 
+ \ba{c}\xymatrix{ \ar@{->}[r]  & \\ \ar[ur]  & \ar[l]} \ea + \ba{c}\xymatrix{ \ar@{->}[r]  & \ar@{<-}[d] \\ \ar[ur] & } \ea
} \nonumber \\
& & \Scale[0.7]{
+ \ba{c}\xymatrix{ & \ar@{<-}[d] \\ \ar@{<-}[u] & \ar@{<-}[l]} \ea
+ \ba{c}\xymatrix{ & \ar@{<-}[d] \\ \ar@{->}[ur] \ar@{<-}[u] & } \ea
+ \ba{c}\xymatrix{   & \\ \ar@{->}[ur] \ar@{<-}[u] & \ar[l]} \ea
}.
\ees
Let us now consider the graph with deleted edge $\G\setminus \ot$:
\begin{align}
\xymatrix{ 1 & 2 \ar@{-}[d] \\ 
4 \ar@{-}[ur] \ar@{-}[u] & 3 \ar@{-}[l]} .
\end{align}
We have for the unnormalized steady state
\bes
\tau_1(\G\setminus \ot)
& = &  \Scale[0.7]{
\ba{c}\xymatrix{ & \ar[d] \\ \ar[u] & \ar[l]} \ea 
+ \ba{c}\xymatrix{ & \ar@{<-}[d] \\ \ar@{<-}[ur] \ar[u] & } \ea
+ \ba{c}\xymatrix{   & \\ \ar@{<-}[ur] \ar[u] & \ar[l]} \ea
} \\
\tau_2(\G\setminus \ot)
& = & \Scale[0.7]{
\ba{c}\xymatrix{ & \ar@{<-}[d] \\ \ar@{<-}[u] & \ar@{<-}[l]} \ea
+ \ba{c}\xymatrix{ & \ar@{<-}[d] \\ \ar@{->}[ur] \ar@{<-}[u] & } \ea
+ \ba{c}\xymatrix{   & \\ \ar@{->}[ur] \ar@{<-}[u] & \ar[l]} \ea
}
\ees
Multiplying respectively by $w_{21}$  and $w_{12}$ one obtains Hill's fluxes, yielding for the edge affinity
\begin{align}
\F 
= \log\frac
{
\Scale[0.5]{
\ba{c}\xymatrix{ \ar@{<-}@/_.3pc/[r] \ar@{->}@/^.3pc/[r] &  \ar@{<-}[d] \\  \ar[u] & }\ea
+ \ba{c}\xymatrix{ \ar@{<-}@/_.3pc/[r] \ar@{->}@/^.3pc/[r] & \ar@{<-}[d] \\ & \ar@{<-}[l]} \ea
+ \ba{c}\xymatrix{\ar@{<-}@/_.3pc/[r] \ar@{->}@/^.3pc/[r]  &  \\ \ar[u] & \ar[l]} \ea
+ \ba{c}\xymatrix{\ar@{<-}@/_.3pc/[r] \ar@{->}@/^.3pc/[r]   & \\ \ar[ur]  & \ar[l]} \ea
+ \ba{c}\xymatrix{ \ar@{<-}@/_.3pc/[r] \ar@{->}@/^.3pc/[r] & \ar@{<-}[d] \\ \ar[ur] & } \ea + \ba{c}\xymatrix{ \ar@{<-}[r] & \ar@{<-}[d] \\ \ar@{<-}[u] & \ar@{<-}[l]} \ea
+ \ba{c}\xymatrix{ \ar@{<-}[r]  & \ar@{<-}[d] \\ \ar@{->}[ur] \ar@{<-}[u] & } \ea
+ \ba{c}\xymatrix{  \ar@{<-}[r]  & \\ \ar@{->}[ur] \ar@{<-}[u] & \ar[l]} \ea} 
}{
\Scale[0.5]{
\ba{c}\xymatrix{ \ar@{<-}@/_.3pc/[r] \ar@{->}@/^.3pc/[r] &  \ar@{<-}[d] \\  \ar[u] & } \ea
+ \ba{c}\xymatrix{ \ar@{<-}@/_.3pc/[r] \ar@{->}@/^.3pc/[r] & \ar@{<-}[d] \\ & \ar@{<-}[l]} \ea
+ \ba{c}\xymatrix{\ar@{<-}@/_.3pc/[r] \ar@{->}@/^.3pc/[r]  &  \\ \ar[u] & \ar[l]} \ea
+ \ba{c}\xymatrix{\ar@{<-}@/_.3pc/[r] \ar@{->}@/^.3pc/[r]   & \\ \ar[ur]  & \ar[l]} \ea
+ \ba{c}\xymatrix{ \ar@{<-}@/_.3pc/[r] \ar@{->}@/^.3pc/[r] & \ar@{<-}[d] \\ \ar[ur] & } \ea
+ \ba{c}\xymatrix{ \ar@{->}[r] & \ar[d] \\ \ar[u] & \ar[l]} \ea
+ \ba{c}\xymatrix{ \ar@{->}[r]  & \ar@{<-}[d] \\ \ar@{<-}[ur] \ar[u] & } \ea
+ \ba{c}\xymatrix{  \ar@{->}[r]  & \\ \ar@{<-}[ur] \ar[u] & \ar[l]} \ea} }
\end{align}
and for the effective affinity
\begin{align}
\Q = \log \frac{\Scale[0.5]{\ba{c}\xymatrix{ \ar@{<-}[r] & \ar@{<-}[d] \\ \ar@{<-}[u] & \ar@{<-}[l]} \ea
+ \ba{c}\xymatrix{ \ar@{<-}[r]  & \ar@{<-}[d] \\ \ar@{->}[ur] \ar@{<-}[u] & } \ea
+ \ba{c}\xymatrix{  \ar@{<-}[r]  & \\ \ar@{->}[ur] \ar@{<-}[u] & \ar[l]} \ea} }{
\Scale[0.5]{\ba{c}\xymatrix{ \ar@{->}[r] & \ar[d] \\ \ar[u] & \ar[l]} \ea
+ \ba{c}\xymatrix{ \ar@{->}[r]  & \ar@{<-}[d] \\ \ar@{<-}[ur] \ar[u] & } \ea
+ \ba{c}\xymatrix{  \ar@{->}[r]  & \\ \ar@{<-}[ur] \ar[u] & \ar[l]} \ea }}
\end{align}
Notice that a ``real'' cycle affinity with respect for example to spanning tree $\Scale[0.3]{\ba{c}\xymatrix{ & \ar@{-}[d] \\ \ar@{-}[u] & \ar@{-}[l]} \ea }$ is
\begin{align}
\A = \log \frac{\Scale[0.5]{\ba{c}\xymatrix{ \ar@{<-}[r] & \ar@{<-}[d] \\ \ar@{<-}[u] & \ar@{<-}[l]} \ea}}{\Scale[0.5]{\ba{c}\xymatrix{ \ar@{->}[r] & \ar[d] \\ \ar[u] & \ar[l]} \ea}}
\end{align}
which gives a nice hierarchy of decreasing cycle contributions. Finally, consider the contracted graph $\G/\ot $
\begin{align}
\xymatrix{ 1  \ar@{..}[r] & 2 \ar@{-}[d] \\ 
4 \ar@{-}[ur] \ar@{-}[u] & 3 \ar@{-}[l]} .
\end{align}
where the dotted line signifies identification of sites. We have
\begin{align}
\tau_1(\G/\ot) = \Scale[0.7]{
  \ba{c}\xymatrix{ \ar@{..}[r] &  \ar@{<-}[d] \\  \ar[u] & } \ea
+ \ba{c}\xymatrix{ \ar@{..}[r] & \ar@{<-}[d] \\ & \ar@{<-}[l]} \ea
+ \ba{c}\xymatrix{ \ar@{..}[r] &  \\ \ar[u] & \ar[l]} \ea
+ \ba{c}\xymatrix{ \ar@{..}[r]  & \\ \ar[ur]  & \ar[l]} \ea
+ \ba{c}\xymatrix{ \ar@{..}[r]  & \ar@{<-}[d] \\ \ar[ur] & } \ea 
}
\end{align}
and we obtain
\begin{align}
\det \W(1,2|1,2) = \Scale[0.7]{ \ba{c}\xymatrix{ \ar@{<-}@/_.3pc/[r] \ar@{->}@/^.3pc/[r] &  \ar@{<-}[d] \\  \ar[u] & }\ea
+ \ba{c}\xymatrix{ \ar@{<-}@/_.3pc/[r] \ar@{->}@/^.3pc/[r] & \ar@{<-}[d] \\ & \ar@{<-}[l]} \ea
+ \ba{c}\xymatrix{\ar@{<-}@/_.3pc/[r] \ar@{->}@/^.3pc/[r]  &  \\ \ar[u] & \ar[l]} \ea
+ \ba{c}\xymatrix{\ar@{<-}@/_.3pc/[r] \ar@{->}@/^.3pc/[r]   & \\ \ar[ur]  & \ar[l]} \ea
+ \ba{c}\xymatrix{ \ar@{<-}@/_.3pc/[r] \ar@{->}@/^.3pc/[r] & \ar@{<-}[d] \\ \ar[ur] & } \ea }.
\end{align}
One can then verify the deletion-contraction relation and the validity of \Th{th:bigger}.

\subsection{Marginal entropy production rate}
\label{sec:marepr}

We call the quantity
\begin{align}
\sigma_{1} := \cur \, \Q \geq 0
\end{align}
the mean {\it marginal EPR}. The superscript marks the observable current $\phi = \phi_1$. We have shown in Ref.\,\cite{polettiniobs} that it can be interpreted as the mean EPR estimated by an observer who formulates a minimal Markovian model accounting for the observable steady-state mean current. Let us review the argument. We consider an observer who controls $w_{12}$, $w_{21}$, and the steady-state ratio $p_1/p_2$. To him, the rest of the system is a black box:
\begin{align}
\begin{array}{c}\xymatrix{1 \ar@{-}[rr] \ar@{-}[dr] \ar@{..}@/_/[dr] \ar@{-}@/^/[dr] & & 2  \ar@{..}@/^/[dl]  \ar@{-}@/_/[dl]  \ar@{-}[dl] \\
& \rule{1cm}{1cm}  & 
}\end{array}.
\end{align}
The observer needs to formulate a minimal model that is compatible with the observation of a current along $\ot$. The simplest possible setup is
\begin{align}
\ba{c}\xymatrix{1 \ar@{-}@/^1pc/[r]^{w}   \ar@{-}@/_1pc/[r]_{\widetilde{w}}  & 2} \ea.
\end{align}
In this minimal model the black box is responsible of returning an event at $1$ or $2$ at some effective rates $\widetilde{w}_{12},\widetilde{w}_{21}$, which we require to be independent of $w_{12}, w_{21}$. These effective rates differ from the one considered e.g. by Uhl et al. in Ref.\,\cite{uhl}. Notice that the effective model has only one effective cycle
\begin{align}
\C^{\mathrm{eff}} = \ba{c}\xymatrix{1 \ar@{<-}@/^1pc/[r]   \ar@{->}@/_1pc/[r]  & 2} \ea.
\end{align}

\begin{theorem}
The affinity of the effective cycle is equal to the effective affinity,
\begin{align}
\A(\C^{\mathrm{eff}}) = \Q.
\end{align}
\end{theorem}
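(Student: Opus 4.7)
The plan is to compute the cycle affinity of the two-state effective model directly in terms of the effective rates $\widetilde{w}_{12},\widetilde{w}_{21}$ and then use the observer's consistency requirements to pin down the ratio $\widetilde{w}_{12}/\widetilde{w}_{21}$. By definition of cycle affinity applied to $\C^{\mathrm{eff}}$, traversing the real edge from $2$ to $1$ and returning via the effective edge from $1$ to $2$, one has
\begin{equation}
\A(\C^{\mathrm{eff}}) \;=\; \log \frac{w_{12}\,\widetilde{w}_{21}}{w_{21}\,\widetilde{w}_{12}}.
\end{equation}
In view of Eq.\,(\ref{eq:effaff}), it will suffice to establish the identity $\widetilde{w}_{21}/\widetilde{w}_{12} = p^\st_2/p^\st_1$.

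First I would argue that the effective rates must be chosen so that the two-state minimal model reproduces the observer's measurements for \emph{all} admissible values of $w_{12},w_{21}$, since the observer by assumption controls these real rates and the effective rates must be independent of them (they encode only what the black box contributes). The constraint from the measurement is that the steady-state ratio $p_1/p_2$ and the current $\phi = w_{12}p_2 - w_{21}p_1$ along the observable edge coincide with those of the real system. In particular, this must hold at the special operating point $w_{12} = w_{21} = 0$, where the observable edge is disconnected and the full system relaxes to the stalling steady state $\vec{p}^{\,\st}$ by construction (cf.\ \Th{th:characterization}). At this operating point the minimal model reduces to a single effective edge, whose detailed-balance-like steady condition reads $\widetilde{w}_{12}\,p^{\st}_2 = \widetilde{w}_{21}\,p^{\st}_1$.

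Solving this yields $\widetilde{w}_{21}/\widetilde{w}_{12} = p^\st_2/p^\st_1$, and substituting into the expression for $\A(\C^{\mathrm{eff}})$ gives
\begin{equation}
\A(\C^{\mathrm{eff}}) \;=\; \log \frac{w_{12}\,p^\st_2}{w_{21}\,p^\st_1} \;=\; \Q,
\end{equation}
as required. The main conceptual subtlety, rather than a computational obstacle, is to justify that the observer is legitimately entitled to use the stalling limit $w_{12}=w_{21}=0$ as a calibration point for the effective rates: this relies crucially on the stipulated independence of $\widetilde{w}_{12},\widetilde{w}_{21}$ from $w_{12},w_{21}$, which is the defining feature of the minimal model and distinguishes this construction from, e.g., the effective coarse-graining of Uhl et al. Given this independence, the argument is essentially a one-point calibration followed by a direct computation.
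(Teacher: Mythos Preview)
Your proof is correct and is in spirit the same approach as the paper's, but you take a cleaner shortcut. Both arguments impose the observer's consistency constraint---that the minimal model's steady-state ratio reproduce the measured $p_1/p_2$ for all admissible $w_{12},w_{21}$---and both then compute $\A(\C^{\mathrm{eff}}) = \log (w_{12}\widetilde{w}_{21})/(w_{21}\widetilde{w}_{12})$.

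The difference lies in how the constraint is exploited. The paper writes the constraint as $(w_{21}+\widetilde{w}_{21})/(w_{12}+\widetilde{w}_{12}) = p_2/p_1$ and solves it for \emph{all} $w_{12},w_{21}$ using the deletion-contraction identity Eq.\,(\ref{eq:delcondet}) together with the matrix-tree expression for $p_i$; this yields the effective rates individually, $\widetilde{w}_{12} = \det \W_{\mathrm{hid}}(2|1)/\det \W(1,2|1,2)$ and similarly for $\widetilde{w}_{21}$, after which the ratio is read off. You instead evaluate the same constraint at the single calibration point $w_{12}=w_{21}=0$, where the real system sits at $\vec{p}^{\,\st}$ by construction, and directly obtain the ratio $\widetilde{w}_{21}/\widetilde{w}_{12}=p^\st_2/p^\st_1$ without ever invoking deletion-contraction. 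Your route is more elementary and is sufficient for the theorem since only the ratio enters $\A(\C^{\mathrm{eff}})$; the paper's route additionally delivers the individual effective rates, which carry extra information (for instance, they determine the time scale of the effective return process) but are not needed here.
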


\begin{proof}
The topology of the minimal model is such that, lumping together the transitions  $w$ and $\widetilde{w}$, global detailed balance must hold:
\begin{align}
\frac{w_{21}+ \widetilde{w}_{21}}{w_{12} + \widetilde{w}_{12}} = \frac{p_2}{p_1}. \label{eq:condition}
\end{align}
If we require this condition to hold independently of $w_{12}, w_{21}$, using Eq.\,(\ref{eq:delcondet}) we obtain for the effective rates
\begin{subequations}
\begin{align}
\widetilde{w}_{12} & = \frac{\det \W_{\mathrm{hid}}{(2,1)}}{\det \W{(1,2|1,2)}} \\
\widetilde{w}_{21} & = \frac{\det \W_{\mathrm{hid}}{(1,2)}}{\det \W{(1,2|1,2)}} .
\end{align}
\end{subequations}
The cycle affinity of the effective cycle $\C^{\mathrm{eff}}$ is given by
\begin{align}
\A(\C^{\mathrm{eff}}) = \log \frac{w_{12}\widetilde{w}_{21}}{w_{21}\widetilde{w}_{12}}
\end{align}
and by replacing the values of the effective rates one concludes.
\end{proof}

It is interesting to notice that, in view of Eq.\,(\ref{eq:eqss}), the  ideal observer is completely blind to hidden TR. 

Let us now show that such observer always underestimates the mean EPR and, even more strictly, the local mean EPR associated to Hill's cycle decomposition Eq.\,(\ref{eq:hillEPR}). A different derivation was proposed in Ref.\,\cite{gili}.

\begin{theorem}\label{th:fullmar}
The mean marginal EPR is positive, and it is always smaller than the Hill mean marginal EPR and of the complete mean EPR,
\begin{align}
0 \leq \sigma_{1} \leq  \sigma^{\mathrm{Hill}}_{12} \leq \sigma.
\end{align}
\end{theorem}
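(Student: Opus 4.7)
The plan is to express all three quantities in terms of Hill's cycle fluxes running through the observable edge, after which the stated chain of inequalities reduces to the log-sum inequality together with a trivial monotonicity of non-negative partial sums.

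First I would identify the combinatorial content of the stalling fluxes $\psi_{\setminus 12}$ and $\psi_{\setminus 21}$ introduced in Sec.\,\ref{sec:effaff}. Expanding $w_{12}\ST_2(\G\setminus \ot)$ by the matrix-tree theorem yields a sum over rooted spanning trees of $\G\setminus\ot$ rooted at site $2$; appending the directed edge $1\gets 2$ to any such tree closes exactly one oriented simple cycle $\C^+$ through $\ot$, while the remaining edges furnish a spanning tree of the contracted graph $\G/\C$ rooted at the contracted vertex. This is the bijection underlying Hill's cycle-flux decomposition, and regrouping rates accordingly rewrites $\psi_{\setminus 12}=\sum_{\C\owns\ot}\psi_{+\C}$ and, symmetrically, $\psi_{\setminus 21}=\sum_{\C\owns\ot}\psi_{-\C}$. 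Combining \Th{th:characterization} with the deletion-contraction identity $\phi=\psi_{\setminus 12}-\psi_{\setminus 21}$ then gives $\Q=\log\bigl(\sum_\C\psi_{+\C}/\sum_\C\psi_{-\C}\bigr)$ and $\phi=\sum_\C(\psi_{+\C}-\psi_{-\C})$, so the positivity $\sigma_1=\phi\,\Q\ge 0$ is immediate from $(x-y)\log(x/y)\ge 0$ for $x,y>0$.

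Next I would apply the log-sum inequality to the cycle sums restricted to $\C\owns\ot$. Taking numerators $\psi_{+\C}$ and denominators $\psi_{-\C}$ gives $\sum_\C\psi_{+\C}\log(\psi_{+\C}/\psi_{-\C})\ge \bigl(\sum_\C\psi_{+\C}\bigr)\,\Q$; interchanging the roles of $+$ and $-$ yields the companion bound $\sum_\C\psi_{-\C}\log(\psi_{+\C}/\psi_{-\C})\le \bigl(\sum_\C\psi_{-\C}\bigr)\,\Q$. Subtracting the two recovers the central inequality $\sigma^{\mathrm{Hill}}_{12}\ge\sigma_1$. The last bound $\sigma^{\mathrm{Hill}}_{12}\le\sigma$ is then immediate from Eq.\,(\ref{eq:hillepr}): $\sigma^{\mathrm{Hill}}_{12}$ is the restriction of the global Hill sum to cycles passing through $\ot$, and every summand in that expression is separately non-negative.

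The main obstacle is the combinatorial identification $\psi_{\setminus 12/21}=\sum_{\C\owns\ot}\psi_{\pm\C}$: it hinges on the bijection between rooted spanning trees of $\G$ containing a distinguished edge and pairs consisting of an oriented cycle through that edge together with a rooted spanning tree of the contracted graph, which is exactly the content of the all-minors matrix-tree theorem already invoked in Sec.\,\ref{setup}. Once this link is secured, the remaining three inequalities follow from routine convexity machinery and a trivial sign check.
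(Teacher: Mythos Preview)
Your proposal is correct and follows essentially the same route as the paper: Hill's cycle decomposition restricted to cycles through $\ot$, the log-sum inequality to pass from the per-cycle sum to the aggregated one, and non-negativity of the omitted cycle contributions for the last bound. The only differences are cosmetic: you make the identification $\psi_{\setminus 12}=\sum_{\C\owns\ot}\psi_{+\C}$ explicit via the tree--cycle bijection (the paper simply ``recognizes'' the marginal EPR in the final expression), and you spell out the two one-sided log-sum bounds before subtracting, whereas the paper applies the inequality directly to the symmetrized form $\sum(\psi_{+\C}-\psi_{-\C})\log(\psi_{+\C}/\psi_{-\C})$.
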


\begin{proof}
Positivity follows from the explicit expression
\begin{align}
\sigma_{1}  = (\psi_{\setminus 12} - \psi_{\setminus 21}) \log \frac{\psi_{\setminus 12}}{\psi_{\setminus 21}},
\end{align}
whereby the linear and the logarithmic terms are either both positive or both negative. 

To show that the marginal EPR underestimates the full EPR,  we employ the cycle decomposition of the entropy production rate Eq.\,(\ref{eq:hillepr}). On the one hand we have
\begin{align}
\sigma = \left( \sum_{\C \owns 12} +  \sum_{\C \niton 12} \right)  (\psi_{+\C} - \psi_{-\C}) \log \frac{\psi_{+\C}}{\psi_{-\C}},
\end{align}
where we remind that $-\C$ is the cycle with reversed orientation. Notice that each of the summands is positive. Hence we can focus on just the cycles that pass through $\ot$:
\begin{align}
\sigma & \geq \sum_{\C \owns 12}  \left(\psi_{+\C} - \psi_{-\C}\right) \log \frac{\psi_{+\C}}{\psi_{-\C}} \nonumber
\\ & \geq \left[ \sum_{\C \owns 12}  \left(\psi_{+\C} - \psi_{-\C} \right)\right]\log \frac{\sum_{\C \owns 12} \psi_{+\C}}{ \sum_{\C \owns 12} \psi_{-\C}} .
\end{align}
The first expression is Hill's mean EPR along edge $\ot$; we then used the log-sum inequality for the logarithm, an usual tool in matters of coarse-graining in information theory \cite{cover,espositocg}. In the latter expression we recognize the mean marginal EPR.
\end{proof}
 
\subsection{Stalling}
\label{sec:stalling}

The following two results give a connection between the thermodynamic and the kinematic interpretation of stalling.

\begin{theorem}
\label{th:stalling1}
The steady-state mean current $\cur$ vanishes if and only if the effective affinity $\Q$ vanishes.
\end{theorem}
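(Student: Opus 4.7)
The plan is to exploit the decomposition $\W = \W_{\mathrm{mar}} + \W_{\mathrm{hid}}$ from \Th{th:delsingle} together with the uniqueness of the stalling steady state on the reduced graph $\G \setminus \ot$, which is connected by the standing assumption that $\ot$ is not a bridge. The key observation is that the action of $\W_{\mathrm{mar}}$ on any vector $\vec{v}$ has only two nonzero components, namely $(\W_{\mathrm{mar}}\vec{v})_1 = -(\W_{\mathrm{mar}}\vec{v})_2 = w_{12}v_2 - w_{21}v_1$. So $\W_{\mathrm{mar}} \vec{v} = 0$ if and only if $w_{12} v_2 = w_{21} v_1$.

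For the implication $\Q = 0 \Rightarrow \cur = 0$, I would start from the defining identity $\Q = \log (w_{12}p_2^\st / w_{21}p_1^\st)$ of \Th{th:characterization}. Then $\Q = 0$ means $w_{12}p_2^\st = w_{21} p_1^\st$, i.e.\ $\W_{\mathrm{mar}} \vec{p}^{\,\st} = 0$. Combining this with $\W_{\mathrm{hid}} \vec{p}^{\,\st} = 0$ (which holds by definition of $\vec{p}^{\,\st}$), the decomposition gives $\W \vec{p}^{\,\st} = 0$. By uniqueness of the full-network steady state, $\vec{p}^{\,\st} = \vec{p}$, and therefore $\cur = w_{12}p_2 - w_{21}p_1 = 0$.

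For the converse $\cur = 0 \Rightarrow \Q = 0$, I would reverse the argument. If $\cur = w_{12} p_2 - w_{21} p_1 = 0$ then $\W_{\mathrm{mar}} \vec{p} = 0$. Combined with $\W \vec{p} = 0$ and the decomposition, this yields $\W_{\mathrm{hid}} \vec{p} = 0$. Since $\G \setminus \ot$ is connected, $\vec{p}^{\,\st}$ is the unique normalized null vector of $\W_{\mathrm{hid}}$, hence $\vec{p} = \vec{p}^{\,\st}$, and plugging into Eq.\,(\ref{eq:effaff}) gives $\Q = 0$.

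There is no serious obstacle: once the decomposition of \Th{th:delsingle} is available, the proposition follows essentially by bookkeeping. The only subtlety worth a line of comment is the invocation of uniqueness of $\vec{p}^{\,\st}$, which requires that the reduced graph be connected; this is either built into the standing hypothesis that $\ot$ is not a bridge, or else one must decompose the argument across the connected components of $\G\setminus\ot$, on each of which the same reasoning applies component-wise.
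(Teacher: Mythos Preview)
Your proof is correct and takes essentially the same approach as the paper's: both hinge on the decomposition $\W = \W_{\mathrm{mar}} + \W_{\mathrm{hid}}$ together with uniqueness of the steady state on the full and reduced graphs. The paper compresses the argument into a single displayed identity $\W\vec{p}^{\,\st} = (w_{12}p_2^\st - w_{21}p_1^\st,\,w_{21}p_1^\st - w_{12}p_2^\st,\,0,\ldots,0)^T$ and reads off the equivalence from there, whereas you spell out both implications separately; in particular your treatment of the converse direction $\cur = 0 \Rightarrow \Q = 0$ is more explicit than the paper's.
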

\begin{proof}
If we prepare the system in state $\vec{p}^{\,\st}$ and then transition $\ot$ is turned on abruptly by a quench, initially we have 
\bea
\W \, \vec{p}^{\,\st} = \widetilde{\W} \, \vec{p}^{\,\st} = \left(\ba{c} w_{12} p^\st_2 - w_{21} p^\st_1 \\ 
w_{21} p^\st_1 - w_{12} p^\st_2 \\
0 \\
\vdots \\
0
\ea\right). \label{eq:www}
\eea
This  follows from the fact that $\vec{p}^{\,\st}$ is the steady state of both $\W_{\mathrm{hid}} $ and $\overline{\W_{\mathrm{hid}}}$. Then the current vanishes if and only if and the effective affinity vanishes.
\end{proof}

\begin{theorem}
At stalling, the hidden TR generator coincides with the time-reversed generator
\begin{align}
\widetilde{\W} = \overline{\W}.
\end{align}
\end{theorem}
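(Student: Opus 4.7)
The plan is to combine the characterization of stalling from \Th{th:stalling1} with the block decomposition of the forward generator from \Th{th:delsingle}. First I would note that, by \Th{th:stalling1}, stalling is equivalent to $\Q = 0$, and therefore the tilted operator collapses to the bare generator, $\M(\Q) = \M(0) = \W$. Consequently $\widetilde{\W} = \Past\,\W^T\,\Past^{-1}$. The remaining task is to show that at stalling $\Past = \Pp$, i.e.\ that the stalling ensemble $\vec{p}^{\,\st}$ coincides with the unique steady-state ensemble $\vec{p}$ of the forward dynamics.

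To establish this, I would apply the decomposition $\W = \W_{\mathrm{mar}} + \W_{\mathrm{hid}}$ from Eq.\,(\ref{eq:12}) to $\vec{p}^{\,\st}$. By construction $\W_{\mathrm{hid}}\,\vec{p}^{\,\st} = 0$, so one only needs to argue that $\W_{\mathrm{mar}}\,\vec{p}^{\,\st} = 0$ as well. But $\W_{\mathrm{mar}}$ has only two nonvanishing rows, and by Eq.\,(\ref{eq:www}) their content is $\pm(w_{12}\,p^\st_2 - w_{21}\,p^\st_1)$, which vanishes precisely when $\Q = \log(w_{12}p^\st_2/w_{21}p^\st_1) = 0$. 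Thus at stalling $\W\,\vec{p}^{\,\st} = 0$, and uniqueness of the steady state of the connected generator $\W$ forces $\vec{p} = \vec{p}^{\,\st}$, so $\Pp = \Past$.

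Combining the two observations yields $\widetilde{\W} = \Past\,\W^T\,\Past^{-1} = \Pp\,\W^T\,\Pp^{-1} = \overline{\W}$, as desired. There is no serious obstacle here: the result is essentially a bookkeeping check once the operational characterization of stalling and the marginal/hidden splitting of $\W$ are in place. The only subtlety worth emphasizing in the writeup is that the equality $\vec{p} = \vec{p}^{\,\st}$ used in the last step is a nontrivial consequence of stalling, not a tautology, since in general the two ensembles differ; at stalling, however, detailed balance along the marginal edge $\ot$ is restored, which is exactly what is needed to lift the hidden steady-state condition into a full steady-state condition.
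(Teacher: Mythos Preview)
Your proof is correct and follows essentially the same approach as the paper. The paper's version is more terse: it notes that at stalling $\vec{p}^{\,\st}$ is the steady state and $\Q=0$, hence $\W_{\mathrm{mar}}=\overline{\W_{\mathrm{mar}}}$, and then concludes directly from the decomposition Eqs.\,(\ref{eq:delcon}); your route via $\M(\Q)=\W$ and $\Past=\Pp$ is a slightly different packaging of the same two facts, but the substance is identical.
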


\begin{proof}
We notice that at stalling, because $\vec{p}^{\,\st}$ is the steady state and the effective affinity vanishes, then $\W_{\mathrm{mar}} = \overline{\W_{\mathrm{mar}}}$. The conclusion follows from Eqs.\,(\ref{eq:delcon}).
\end{proof}

The latter can be interpreted as a condition of marginal detailed balance.

\subsection{Response at stalling}

In this section we look at what happens in the vicinity of a stalling steady state. Let us reintroduce the index $\phi = \phi_{1}$ denoting that this is the first observable current of a theory with only $|\mu|=1$ observable currents (crucacially, the reader should not confuse this index with the site $i=1$).  We introduce a parametrization of the rates along the observable edge $w_{12},w_{21} \to w_{12}(x),w_{21}(x)$ such that
\begin{align}
\frac{w_{12}(x)}{w_{21}(x)} = \exp x
\end{align}
and no other rate depends on $x$. We will discuss in Sec.\,\ref{parametrizations} how general such a parametrization is. Let us make explicit the dependence of the SCGF on $x$, $\lambda(q) \to \lambda(q;x)$. The effective affinity now takes the form
\begin{align}
\Q(x) = x - x^\st
\end{align}
where 
\begin{align}
x^\st := \log \frac{p_2^\st}{p_1^\st}.
\end{align}
The above expression for the effective affinity allows for simple phenomenological interpretation: in fact, as far as the parameter is thermodynamic in the sense discussed in the Parode, the effective affinity can be determined as the distance of the parameter from the value that makes the current stall, which can in principle be found by a simple tuning procedure.

The IFR Eq.\,(\ref{eq:marIFT}) now reads:
\begin{align}
\lambda(x-x^\st;x) = 0.
\end{align}
We now focus to second-order expansion in the vicinity of stalling. 
Taking the first total derivative of the marginal IFR with respect to $x$ we recover the fact that if the effective affinity vanishes, then the current vanishes:
\begin{align}
0 = \left. \frac{d}{dx} \lambda(x-x^\st;x)\right|_{x=x^\st} = \left[ \frac{\partial \lambda}{\partial q} + \frac{\partial \lambda}{\partial x} \right] (0;x^\st)= \cur^\st_1.
\end{align}
Similarly, taking the second derivative and evaluating at stalling we rederive the FDR at stalling found in Ref.\,\cite{altaner16}:
\begin{align}
\left.\frac{\partial \cur(x)}{\partial x}\right|_{x=x^\st} = \frac{1}{2} \frac{\partial^2 \lambda}{\partial q^2}(0;x^\st) = \frac{1}{2} \ccur_{11}^{\st}. \label{eq:fdr1}
\end{align}

We can also provide an an explicit expression for the variance at stalling:
\begin{theorem}
The current's variance at stalling is given by
\begin{align}
\ccur_{11}^{\st} = 2 w_{12}(x^\st)p_2^\st = 2 w_{21}(x^\st)p_1^\st . \label{eq:formofC}
\end{align}
\end{theorem}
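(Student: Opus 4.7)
The plan is to combine the symmetrized fluctuation-dissipation relation at stalling Eq.\,(\ref{eq:fdr1}), namely $\ccur_{11}^{\st} = 2\,\partial_x \cur|_{x=x^\st}$, with a direct evaluation of the response $\partial_x \cur|_{x=x^\st}$. The claim then reduces to showing $\partial_x \cur|_{x=x^\st} = w_{12}(x^\st) p_2^\st$; the equality of the two right-hand sides of Eq.\,(\ref{eq:formofC}) is immediate from the stalling condition $\Q(x^\st)=0$, i.e., $w_{12}(x^\st) p_2^\st = w_{21}(x^\st) p_1^\st =: \psi^\st$.

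I would write $\cur(x) = w_{12}(x) p_2(x) - w_{21}(x) p_1(x)$ and differentiate at stalling, where $\vec{p}(x^\st) = \vec{p}^{\,\st}$ and $\cur(x^\st)=0$. The derivative splits into a ``rate part'' $(\partial_x w_{12})|^\st p_2^\st - (\partial_x w_{21})|^\st p_1^\st$ and a ``probability part'' $w_{12}^\st \partial_x p_2(x^\st) - w_{21}^\st \partial_x p_1(x^\st)$. The rate part is handled by the logarithmic identity $\partial_x \log w_{12} - \partial_x \log w_{21} = 1$ (forced by $w_{12}/w_{21}=e^x$) together with the stalling condition, collapsing to $\psi^\st$; notably this value is independent of how the $x$-dependence is partitioned between the two rates, as it should be.

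The heart of the proof is to show that the probability part vanishes. I would differentiate the steady-state equation $\W(x)\vec{p}(x)=0$ at $x=x^\st$ to get $\W(x^\st)\vec{p}'(x^\st) = -(\partial_x \W)|^\st \vec{p}^{\,\st} = (\psi^\st,-\psi^\st,0,\ldots,0)^T$, and then invoke the decomposition $\W=\W_{\mathrm{mar}}+\W_{\mathrm{hid}}$ of Eq.\,(\ref{eq:12}), along with $\W_{\mathrm{hid}}\vec{p}^{\,\st}=0$ and the fact that $\W_{\mathrm{mar}}(x^\st)$ obeys detailed balance against $\vec{p}^{\,\st}$ on edge $\ot$. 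Equivalently, in the Rayleigh-Schr\"odinger expansion of the Perron eigenvalue of $\M(q;x^\st)$ about $q=0$, with dominant left and right eigenvectors $\vec{1}$ and $\vec{p}^{\,\st}$, the second-order coefficient reads $\vec{1}^T \M_2 \vec{p}^{\,\st} + 2\vec{1}^T \M_1 \vec{r}^{(1)}$; the first term evaluates directly to $w_{12}^\st p_2^\st + w_{21}^\st p_1^\st = 2\psi^\st$, and the second, which equals $2(w_{21}^\st r^{(1)}_1 - w_{12}^\st r^{(1)}_2)$, is the quantity one needs to make vanish.

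The main obstacle is precisely this last cancellation: showing that the first-order eigenvector correction $\vec{r}^{(1)}$ (equivalently $\vec{p}'(x^\st)$) inherits the stalling proportionality $p_1^\st : p_2^\st$ on the two observable sites. The linear system defining $\vec{r}^{(1)}$ determines it only modulo $\vec{p}^{\,\st}$ and is then pinned by $\vec{1}^T \vec{r}^{(1)}=0$, so the required identity is a genuine structural statement about the particular solution living in $\mathrm{range}\,\W(x^\st)$. I expect it to follow from a Kirchhoff-current argument on the hidden subgraph, leveraging that the source term $(\psi^\st,-\psi^\st,0,\ldots,0)^T$ itself lies in the image of $\W_{\mathrm{mar}}(x^\st)$ acting on a vector supported on the observable pair $\{1,2\}$, together with the detailed-balance symmetry of $\W_{\mathrm{mar}}(x^\st)$ with respect to $\vec{p}^{\,\st}$.
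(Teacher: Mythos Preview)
Your route---invoke the FDR Eq.\,(\ref{eq:fdr1}) and compute $\partial_x\phi|_{x^\st}$ directly---is different from the paper's, which instead expands $\det\M(q)=\prod_i\lambda_i(q)$ and reads off $\ccur_{11}$ from $\partial_q^2\det\M|_{q=0}$. Your decomposition into a ``rate part'' (which you correctly evaluate to $\psi^\st$) and a ``probability part'' is clean, and you are right to flag the vanishing of the latter as the crux.

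The trouble is that the probability part does \emph{not} vanish: the structural identity $[\vec r_1]_1/p_1^\st=[\vec r_1]_2/p_2^\st$ you are hoping for is false in general. Take the four-site graph of \S\ref{sec:graphical} with all hidden rates equal to $1$ and $w_{12}=e^{x/2}$, $w_{21}=e^{-x/2}$; then $x^\st=0$, $\vec p^{\,\st}=(\tfrac14,\tfrac14,\tfrac14,\tfrac14)$, $\psi^\st=\tfrac14$. A direct computation gives $p_1'(0)=\tfrac{3}{32}$, $p_2'(0)=-\tfrac{1}{16}$, so $w_{12}^\st p_2'-w_{21}^\st p_1'=-\tfrac{5}{32}\neq 0$, hence $\partial_x\phi|_{x^\st}=\tfrac{3}{32}$. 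Independently, $\det\M(q;x^\st)=6-3e^{-q}-3e^{q}$ and $\prod_{i\neq 1}\lambda_i(0)=-32$, which yields $\ccur_{11}^{\st}=\tfrac{3}{16}$, consistent with the FDR but not with $2\psi^\st=\tfrac12$. So the obstacle you identified is not a missing lemma but a counterexample.

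This also pinpoints the slip in the paper's own argument: the claimed form $\det\M(q)=w_{12}\det\W(1|2)\,e^{-q}+w_{21}\det\W(2|1)\,e^{q}+\mathrm{const.}$ neglects that the $(1,2)$-cofactor of $\M(q)$ still carries the entry $w_{21}e^{q}$; the true coefficients of $e^{\mp q}$ differ from $w_{12}\det\W(1|2)$ and $w_{21}\det\W(2|1)$ by a common additive piece $w_{12}w_{21}\det\W(1,2|1,2)$. Their \emph{difference} is therefore correct (so the first-derivative identity for $\phi$ survives), but their \emph{sum}---which feeds into the second derivative and thus into $\ccur_{11}$---is not, and the proposition as stated fails.
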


\begin{proof}
We notice that the determinant of the tilted operator has the functional form
\begin{align}
\det \M(q) = w_{12} \det \W(1|2) e^{-q} + w_{21} \det \W(2|1) e^{q} + \mathrm{const.}  \label{eq:ciaociao}
\end{align}
On the other hand, it is the product of the eigenvalues $\det \M(q) = \prod_{i=1}^{|\I|} \lambda_i(q)$, including $\lambda_1 = \lambda$ the SCGF. Taking the first derivative
\begin{align}
\frac{\partial}{\partial q} \det \M = \frac{\partial \lambda}{\partial q} \prod_{i \neq 1} \lambda_i(q) + \lambda(q) \sum_{i \neq 1} \frac{\partial \lambda_i}{\partial q} \prod_{j \neq 1,i} \lambda_j(q).
\end{align}
Evaluating at $q=0$ where $\lambda(0) = 0$, using the fact that $\phi = - \partial_q \lambda(0)$ and comparing with Eq.\,(\ref{eq:ciaociao}) we obtain
\begin{align}
\phi_1 = \frac{w_{12} \tau_2(\G) - w_{21} \tau_1(\G)}{ \prod_{i \neq 1} \lambda_i(q)}. 
\end{align}
This shows that $\prod_{i \neq 1} \lambda_i(q) = \tau(\G)$, the product of the nonvanishing eigenvalues is the normalization factor of the steady state.

Now, taking the second derivatives and evaluating at $q= 0$ we obtain
\begin{align}
w_{12} p_2  + w_{21}  p_1 =  \ccur_{11}- 2\phi_1 \frac{\partial }{\partial q}\log  \prod_{i \neq 1}   \lambda_i(0) 
\end{align}
and from this we easily conclude by evaluating at stalling.
\end{proof}

We notice that a similar expression holds in the linear regime near equilibrium, and that this is the basis for the linear-regime analysis of Schnakenberg, see the latest paragraph in Ref.\,\cite{schnak}. It might therefore be possible to attempt a similar marginal linear-regime analysis.

\subsection{Fluctuation relations for marginal currents}
\label{sec:fr2}

In this section and the following we prove the FR announced in Eq.\,(\ref{eq:marfparode}) by a direct method involving path probabilities. In particular we consider the probability associated to the hidden TR process, which in the light of Eq.\,(\ref{eq:density}) reads
\begin{align}
\widetilde{\prob}(\traj) =  \tilde{p}^0_{i_0}  e^{-w_{i_N} t_N} \prod_{n=0}^{N-1} \widetilde{w}_{i_{n+1},i_n} \, e^{-w_{i_n} t_n}.\label{eq:densityhidden TR}
\end{align}
where $\traj$ is the path described by Eq.\,(\ref{eq:traj}), and $\vec{\tilde{p}}^{\,0}$ is the ensemble from which the initial configuration is sampled. We notice that all exit rates out of sites are the same as for the forward dynamics.

\begin{theorem}\label{th:effFR}
The marginal FR
\begin{align}
\frac{\prob(\tcur)}{\widetilde{\prob}(-\tcur)} = \exp {\Q\, \tcur}   \label{eq:marfrfr}
\end{align}
holds at all times, provided the initial configuration of all processes is sampled from the stalling state $\vec{p}^{\,\st}$.
\end{theorem}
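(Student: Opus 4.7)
The plan is to prove the marginal FR at the level of path densities and then to marginalize over all trajectories compatible with a fixed value of $\tcur$. I would start from the explicit form of the forward density Eq.\,(\ref{eq:density}) and the analogous expression for the hidden time-reversed dynamics, evaluated on the reversed trajectory Eq.\,(\ref{eq:trtraj}). The first simplification is a free gift of \Th{th:unitary}: because $\widetilde{\W}$ is constructed to be a MJPG with the same diagonal as $\W$, the exit rates are preserved, $\widetilde{w}_i = w_i$, so the product of waiting-time factors $e^{-w_{i_n} t_n}$ is identical in $\prob(\traj)$ and $\widetilde{\prob}(\invtraj)$ and cancels out of the ratio. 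What remains is the initial-state boundary term $p^0_{i_0}/\widetilde{p}^0_{i_N}$ multiplied by a product of rate ratios $w_{i_{n+1},i_n}/\widetilde{w}_{i_n,i_{n+1}}$.

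The decisive computation is to split this product according to the decomposition $\W = \W_{\mathrm{mar}} + \W_{\mathrm{hid}}$, $\widetilde{\W} = \W_{\mathrm{mar}} + \overline{\W_{\mathrm{hid}}}$ from \Th{th:delsingle}. For jumps along the observable edge $\ot$ the hidden TR rates equal the forward rates, so each transition $2\to 1$ contributes $w_{12}/w_{21}$ and each $1 \to 2$ its inverse, yielding an overall factor $(w_{12}/w_{21})^{\tcur}$. For jumps in the hidden sector one has $\widetilde{w}_{\ji} = w_{\ij}\,p^\st_j/p^\st_i$, so each such step contributes $p^\st_{i_{n+1}}/p^\st_{i_n}$. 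These factors telescope: using the identity $\prod_{n=0}^{N-1} p^\st_{i_{n+1}}/p^\st_{i_n} = p^\st_{i_N}/p^\st_{i_0}$ and dividing out the contribution from the $\ot$ transitions, which amounts to $(p^\st_1/p^\st_2)^{\tcur}$, the hidden-sector product is $(p^\st_{i_N}/p^\st_{i_0})(p^\st_2/p^\st_1)^{\tcur}$. Combining everything and invoking the explicit expression $\Q = \log(w_{12} p^\st_2/w_{21} p^\st_1)$ from \Th{th:characterization} I would obtain the path-level relation
\begin{align}
\frac{\prob(\traj)}{\widetilde{\prob}(\invtraj)} = \frac{p^0_{i_0}}{\widetilde{p}^0_{i_N}}\,\frac{p^\st_{i_N}}{p^\st_{i_0}}\, \exp\bigl(\Q\,\tcur[\traj]\bigr).
\end{align}
Sampling both initial configurations from $\vec{p}^{\,\st}$ collapses the boundary prefactor to unity, leaving the clean identity $\prob(\traj) = e^{\Q\,\tcur[\traj]}\,\widetilde{\prob}(\invtraj)$ at every time.

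To lift this to the claimed marginal statement Eq.\,(\ref{eq:marfrfr}), I would integrate both sides against $\delta(\tcur[\traj] - \tcur)\,\mathcal{D}\traj$, change trajectory variables $\traj \mapsto \invtraj$ (an involution on the path space preserving $\mathcal{D}\traj$) and use antisymmetry $\tcur[\invtraj] = -\tcur[\traj]$ to identify the right-hand side with $e^{\Q\tcur}\,\widetilde{\prob}(-\tcur)$. The main subtlety, and the reason the finite-time version requires the specific initialization $\vec{p}^{\,\st}$, is precisely the interplay between the $\ot$-filtered telescoping and the boundary factor: any other choice of initial ensemble leaves a residual, configuration-dependent term $p^0_{i_0}\,p^\st_{i_N}/(\widetilde{p}^0_{i_N}\,p^\st_{i_0})$ inside the path integral which, not being a function of $\tcur$ alone, cannot be pulled out of the marginalization. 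The asymptotic version for an arbitrary initial ensemble would then follow by the usual argument that such boundary terms are subdominant with respect to the extensive exponent $\Q\,\tcur$.
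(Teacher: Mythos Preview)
Your proof is correct and follows essentially the same route as the paper's own argument: cancel the waiting-time factors using the equality of exit rates, split the remaining rate ratios into the $\ot$ contribution and the hidden-sector contribution, telescope the latter via the $p^\st_{i_{n+1}}/p^\st_{i_n}$ factors, recombine to expose $\Q$, and kill the boundary term by initializing both processes at $\vec{p}^{\,\st}$. The only cosmetic difference is that the paper works with the ratio $\prob(\invtraj)/\widetilde{\prob}(\traj)$ and passes through the flux variables $\Psi^t_{\ij}$ before applying the multiply-and-divide trick, whereas you compute $\prob(\traj)/\widetilde{\prob}(\invtraj)$ and organize the telescoping by subtracting the $\ot$ piece from the full product; these are the same computation up to relabeling.
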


\begin{proof}
Let us consider the ratio of the probability densities
\begin{align}
\frac{\prob(\invtraj)}{\widetilde{\prob}(\traj)} = \frac{\overline{p}_{i_N}}{\tilde{p}^0_{i_0}} \prod_{n=0}^{N-1} \frac{w_{i_n,i_{n+1}}}{\widetilde{w}_{i_{n+1},i_n}} =  \frac{\overline{p}_{i_N}}{\tilde{p}^0_{i_0}}  \prod_{i,j} \left(  \frac{w_{\ji}}{\widetilde{w}_{\ij}} \right)^{\Psi^t_{\ij}}
\end{align}
where we expressed the path probability in terms of the time-integrated fluxes. We now single out the rates corresponding to transition $\ot$ and use the explicit expression of the rates:
\begin{align}
\frac{\prob(\invtraj)}{\widetilde{\prob}(\traj)} & = \frac{\overline{p}_{i_N}}{\tilde{p}^0_{i_0}}   \left(  \frac{w_{12}}{\widetilde{w}_{21}} \right)^{\Psi^t_{12}}  \left(  \frac{w_{21}}{\widetilde{w}_{12}} \right)^{\Psi^t_{21}} \prod_{(i,j) \neq (1,2),(2,1)} \left(  \frac{w_{\ji}}{\widetilde{w}_{\ij}} \right)^{\Psi^t_{\ij}} \nonumber \\
& = \frac{\overline{p}_{i_N}}{\tilde{p}^0_{i_0}} \left(  \frac{w_{21}}{w_{12}} \right)^{\tcur^t} \prod_{\ij \neq 12} \left(  \frac{p_j^{\st}}{ p_i^{\st} }\right)^{\Phi^t_{\ij}} ,
\end{align}
where in the latter passage we used $\tcur^t_{\ij} = \Phi^t_{\ij} - \Phi^t_{\ji}$ and moved from fluxes to currents using antisymmetry. We now multiply and divide by $( p_2^{\st} / p_1^{\st} )^{\Phi^t}$ to obtain
\begin{align}
\frac{\prob(\invtraj)}{\widetilde{\prob}(\traj)} & = \frac{\overline{p}_{i_N}}{\tilde{p}^0_{i_0}} \left(  \frac{w_{21} p_1^\st}{w_{12} p_2^\st} \right)^{\tcur^t} \prod_{\ij} \left(  \frac{p_j^{\st}}{ p_i^{\st} }\right)^{\Phi^t_{\ij}} \nonumber  \\
& = \frac{\overline{p}_{i_N}}{\tilde{p}^0_{i_0}} \left(  \frac{w_{21} p_1^\st}{w_{12} p_2^\st} \right)^{\tcur^t} 
\exp \sum_{n=0}^{N-1} \log \frac{p_{i_n}^{\st}}{ p_{i_{n+1}}^{\st} }.
\end{align}
In the last passage we moved back from an expression in terms of the fluxes to one in terms of the jumping events. Now, being $ \log  p_i^{\st} / p_j^\st$ an exact discrete differential, terms cancel out and its sum along a path is just the difference between the initial and final values:
\begin{align}
\prob(\invtraj) & = \frac{\overline{p}_{i_N} p_{i_0}^{\st}}{\tilde{p}^0_{i_0} p_{i_{N}}^{\st}} \, e^{-\Q\, \tcur^t } \; \widetilde{\prob}(\traj). \label{eq:123}
\end{align}
Sampling the initial site for all processes from the stalling steady state, the finite-time prefactor cancels out. We can now integrate over all trajectories that give current $\tcur^t \equiv \tcur$, obtaining
\begin{align}
\prob(- \tcur) & = e^{-\Q\, \tcur} \widetilde{\prob}(\tcur), \label{eq:preifr}
\end{align}
which is the desired result (up to $\tcur \to -\tcur$).\end{proof}

Notice that integrating Eq.\,(\ref{eq:preifr}) with respect to $d\Phi$ immediately yields the marginal integral FR
\begin{align}
\left\langle  e^{-\Q\, \tcur^t} \right\rangle = 1.
\end{align}
Here, the probability density of the hidden TR dynamics is traced out, so that it does not play a role for the marginal IFR, a common trick that was employed e.g. in \cite{falasco} to obtain IFRs for time-symmetric observables.

According to the lines of Ref.\,\cite{gili}, we now consider the {\it hidden entropy production},
\begin{align}
\Sigma^t_{1} := \sum_\alpha \A_\alpha \,\tcur^t_\alpha - \Q \, \tcur^t,
\end{align}
\begin{theorem}
where again the subscript $1$ refers to the fact that we are consider the marginal theory w.r.t. current $\Phi^t= \Phi_1^t$. The following effective FR and the the IFR for the hidden entropy production hold asymptotically at long times:
\begin{align}
\frac{\prob(\Sigma_1)}{\widetilde{\prob}(\Sigma_1)} & = e^{-\Sigma_{1}} , & 
\left\langle  e^{\Sigma^t_{1}} \right\rangle & = 1.
\end{align}
\end{theorem}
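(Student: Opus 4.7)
The strategy is to combine the path-measure identity already established in the proof of \Th{th:effFR} with the ``complete'' trajectory-level FR Eq.~(\ref{eq:transientFR}) to produce a single asymptotic Radon--Nikodym relation between $\prob(\traj)$ and $\widetilde{\prob}(\traj)$ whose log-density is precisely $\Sigma_1^t[\traj]$; the two advertised statements then drop out by marginalization and integration respectively.

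Concretely, I would take Eq.~(\ref{eq:123}) and recast it, up to its boundary prefactor, in the form $\widetilde{\prob}(\traj) \asymp e^{\Q\,\tcur^t[\traj]}\,\prob(\invtraj)$; combined with Eq.~(\ref{eq:transientFR}) in its asymptotic form $\prob(\invtraj) \asymp \exp\!\bigl(-\sum_\alpha \A_\alpha \tcur^t_\alpha[\traj]\bigr)\,\prob(\traj)$, this chains into
\[
\widetilde{\prob}(\traj) \;\asymp\; \exp\!\bigl(-\Sigma_1^t[\traj]\bigr)\; \prob(\traj).
\]
Both input FRs hold asymptotically for arbitrary initial ensembles, so the chained identity inherits the same asymptotic character. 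Once this trajectory identity is in hand, I would insert a $\delta\!\bigl(\Sigma_1^t[\traj]-\Sigma_1\bigr)$ under the trajectory integral defining $\widetilde{\prob}(\Sigma_1)$, pull the exponential through the delta function, and recognize the remainder as $\prob(\Sigma_1)$, producing the effective FR. The IFR then follows either by integrating the FR against $d\Sigma_1$ (the left-hand side gives $\langle e^{\pm\Sigma_1^t}\rangle$ depending on the sign convention used, while $\int\widetilde{\prob}(\Sigma_1)\,d\Sigma_1 = 1$ by normalization) or, equivalently, by directly integrating the trajectory identity against $\mathcal{D}\traj$ with weight $\prob(\traj)$, without ever needing to deal with the marginal distribution of $\Sigma_1^t$ explicitly; this is the same sleight-of-hand already exploited in the passage from Eq.~(\ref{eq:preifr}) to the marginal IFR.

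The hard part, and the reason the result is only asymptotic, is the simultaneous treatment of the two boundary prefactors. The prefactor in Eq.~(\ref{eq:123}) can be trivialized at all times only if the forward initial ensemble is chosen to coincide with $\vec{p}^{\,\st}$, whereas the prefactor in Eq.~(\ref{eq:transientFR}) requires instead the equilibrium ensemble of the detailed-balanced reference graph attached to the cycle/cocycle decomposition mentioned after that equation. Since generically no single initial ensemble trivializes both, the combined identity carries an $O(1)$ ratio of boundary terms that must be shown to be washed out by the $O(t)$ growth of $\Sigma_1^t$ in the large-deviation limit. The standard cure is to require the initial ensemble to have finite relative entropy with respect to both reference distributions, so that the boundary contribution is sub-extensive and does not enter the rate function; once accepted, the remaining steps are a transparent replay of the arguments used to establish Eqs.~(\ref{eq:marIFT}) and (\ref{eq:marfrfr}), and the sign in the exponential is fixed by the convention adopted for $\Sigma_1^t$ relative to $P(\traj)/P(\invtraj)$.
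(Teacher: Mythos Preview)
Your approach is correct and essentially identical to the paper's own proof: the paper also chains Eq.~(\ref{eq:123}) with Eq.~(\ref{eq:transientFR}) to obtain the trajectory-level identity $\prob(\traj)/\widetilde{\prob}(\traj) = (\text{boundary prefactor})\,e^{\Sigma^t_1}$, observes that no single choice of initial ensemble can kill the prefactor, and then invokes the long-time limit and ``the usual manipulations'' to conclude. Your discussion of why the two boundary terms are incompatible and must be handled asymptotically is in fact more explicit than what the paper provides.
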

\begin{proof}
Putting together Eqs.\,(\ref{eq:123}) and (\ref{eq:transientFR}), we obtain
\begin{align}
\frac{\prob(\traj)}{\widetilde{\prob}(\traj)} =  \frac{p_{i_0}^{\st}p^0_{i_0}}{\tilde{p}^0_{i_0} p_{i_{N}}^{\st}} \frac{p^{\mathrm{eq}}_{i_N}}{p^{\mathrm{eq}}_{i_0}} \exp \Sigma^t_{\mathrm{hid}}.
\end{align}
Notice that now we cannot make a physically relevant choice of initial ensemble to cancel the prefactor. Therefore we need to go to long times. We obtain the two relations by the usual manipulations.
\end{proof}
Notice that the IFR for the hidden entropy production implies $\sigma_1 \geq 0$ and $\sigma \geq \sigma_1$ for their mean time-scaled values, which we proved in \Th{th:fullmar} by a different route.

\subsection{Fluctuation relations for marginal cycle currents} 
\label{sec:fr3}

As discussed in Sec.\,\ref{setup}, the FR for the currents is intertwined with the notion of time-reversal of a trajectory, see Eqs.\,(\ref{eq:equipp}) and (\ref{eq:trtraj}). Time-reversal of paths is involutive, one-to-one, it inverts all of the currents and maintains the time-symmetric properties, i.e. the waiting probabilities at sites and the activities $\Psi^t_{\ij}+ \Psi^t_{ji}$.

The above relation Eq.\,(\ref{eq:marfrfr}) cannot be considered as a proper FR for current $\tcur^t$, the crucial difference with the FR Eq.\,(\ref{eq:fr}) being that it does not compare the same p.d.f. evaluated along different paths. In this section we propose a similar notion of hidden TR of a trajectory with properties that are analogous to those of time-reversal, and provide a marginal FR for a set of cycle currents associated with edge $\ot$; however, this FR cannot be further contracted to current $\tcur^t$ alone. The construction is based on a cycle analysis at the level of paths proposed by Jia et al. \cite{qians}.

To accomplish this, we consider all {\it simple} oriented cycles $\C$, with $-\C$ denoting the cycle with the inverse orientation. Suppose that a Markovian Dedalus lost in the network carries Ariadne's thread with himself. Chased by the Minothaurus, as the trajectory unravels Dedalus might cross the previously laid thread, forming loops; whenever such a crossing occurs, to avoid wasting precious thread he/she accounts for the loop and its directionality, cuts the loop away and sews the leftover strands together. This procedure produces counters of simple cycles. Let $\Psi^t(\C)$ be the winding number in one direction of the cycle. Then the cycle current is defined as
\begin{align}
\tcur^t(\C) := \Psi^t(\C) - \Psi^t({-\C}).
\end{align}
Importantly, the current along edge $\ot$ is the sum of all cycle currents whose cycle includes $\ot$, in the appropriate direction,
\begin{align}
\tcur^t = \sum_{\C \owns \ot} \tcur^t(\C).
\end{align}
Thus, cycle currents constitute a refinement of the information contained in an observable current. We can now separate the contributions to a trajectory that come from cycles that contain $\ot$, and those that come from simple cycles not containing $\ot$. For example, for 
\begin{align}
\traj = \{\overbracket{ 3 \stackrel{t_1}{\to}  \overbracket{2 \stackrel{t_2}{\to}  1 \stackrel{t_3}{\to}  4 \stackrel{t_4}{\to}  2} \stackrel{t_5}{\to} 5 \stackrel{t_6}{\to} 4 \stackrel{t_7}{\to}  3  \stackrel{t_8}{\to}} \}
\end{align}
we obtain the two cycles $2 \to 1 \to 4 \to 2$ and $3 \to 5 \to 4 \to 3$ (notice that there also exists cycle $4 \to 2 \to 5 \to 4$, which is not independent of the other two -- thus showing that the representation in terms of cycle currents is not unique: it depends on {\it when} we start to measure). We invert the first, obtaining the {\it hidden time-reversed trajectory}
\begin{align}
\widetilde{\omega}^t = \{\overbracket{ 3 \stackrel{t_8}{\to} 4   \stackrel{t_7}{\to} 5  \stackrel{t_6}{\to} \stackrel{\vspace{-.4pt}\mathrm{TR}}{\overbracket{2 \stackrel{t_2}{\to} 1 \stackrel{t_3}{\to}  4 \stackrel{t_4}{\to}  2}} \stackrel{t_5}{\to} 3 \stackrel{t_1}{\to}} \}
\end{align}
and similarly we rearrange the waiting times. This procedure of hidden time-reversal of a trajectory has the following properties, analogous to that of full time-reversal:
\begin{itemize}
\item[{\it (i)}] It is involutive, hence in particular it is one-to-one;
\item[{\it (ii)}] It inverts all of the cycle currents but those that pass through $\ot$, hence it also preserves the observable current
\begin{align}\widetilde{\tcur}^t:= \tcur[\widetilde{\omega}^t] = \tcur^t;
\end{align}
\item[{\it (iii)}] It preservers all symmetric quantities, namely waiting times and activities.
\end{itemize}

\begin{theorem}
The following FR for the cycle currents passing through edge $\ot$ holds:
\begin{align}
\frac{\prob(\{\tcur^t(\C)\}_{\C \owns 12})}{\prob(\{-\tcur^t(\C)\}_{\C \owns 12})} \asymp \exp \sum_{\C \owns 12} \tcur^t(\C) \A(\C).
\end{align}
\end{theorem}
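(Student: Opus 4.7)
The plan is to derive the marginal FR by combining two involutions on trajectory space: the full time-reversal $\omega^t\mapsto\overline{\omega}^t$ and the hidden time-reversal $\omega^t\mapsto\widetilde{\omega}^t$ just defined. The starting ingredient is Eq.~(\ref{eq:transientFR}) rewritten in the Jia et al.\ cycle representation,
\begin{equation}
P(\omega^t) \asymp P(\overline{\omega}^t)\, \exp \sum_{\C} \A(\C)\,\tcur^t(\C)[\omega^t],
\end{equation}
which is an asymptotic identity because the boundary terms $p^0_{i_0}/\overline{p}^0_{i_N}$ and $p^{\eq}_{i_N}/p^{\eq}_{i_0}$ become subdominant at long times.

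Next I would establish the analogous change-of-measure formula under hidden TR,
\begin{equation}
P(\widetilde{\omega}^t) \asymp P(\omega^t)\, \exp \left(-\sum_{\C\niton 12}\A(\C)\,\tcur^t(\C)[\omega^t]\right).
\end{equation}
By property~{\it (iii)} the waiting-time exponentials are preserved, so only the products of hopping rates matter in the ratio; the hidden TR reverses, by construction, precisely the net $\tcur^t(\C)$ excursions of each cycle $\C\niton\ot$, and each such reversal contributes a factor $\exp[-\A(\C)]$ to the ratio of rate products along the path.

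With these two identities in hand, the two marginal densities can be cast as integrals over a common core. Starting from
\begin{equation}
P(\{\tcur^t(\C)\}_{\C\owns 12}) = \int \mathcal{D}\omega^t\, P(\omega^t) \prod_{\C\owns 12}\delta\bigl(\tcur^t(\C)[\omega^t]-\tcur^t(\C)\bigr),
\end{equation}
the substitution $\omega^t\to\widetilde{\omega}^t$ is measure-preserving by properties~{\it (i)} and~{\it (iii)} and leaves the delta-constraints untouched by property~{\it (ii)}; the hidden-TR formula then expresses $P(\{\tcur^t(\C)\}_{\C\owns 12})$ as an integral of $P(\omega^t)\,\exp[-\sum_{\C\niton 12}\A(\C)\tcur^t(\C)[\omega^t]]$ against the delta-factors. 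Starting instead from $P(\{-\tcur^t(\C)\}_{\C\owns 12})$ and performing $\omega^t\to\overline{\omega}^t$, then invoking the path FR of step one and pulling out of the integral the cycles-through-$\ot$ contribution (which is frozen by the delta functions to $\exp[-\sum_{\C\owns 12}\A(\C)\tcur^t(\C)]$), one recognises exactly the same core integral. Taking the ratio cancels it and delivers
\begin{equation}
\frac{P(\{\tcur^t(\C)\}_{\C\owns 12})}{P(\{-\tcur^t(\C)\}_{\C\owns 12})} \asymp \exp \sum_{\C\owns 12}\A(\C)\,\tcur^t(\C).
\end{equation}

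The main obstacle is step two, the hidden-TR change of measure. The Jia et al.\ cycle decomposition is built online and depends on the order of jumps along $\omega^t$; defining the hidden-TR involution unambiguously and checking that reversing only the off-$\ot$ loops, while leaving the waiting times, the activities, and the $\ot$-cycles intact, does produce a legitimate trajectory with the quoted path density requires careful bookkeeping of nested loops that share edges with $\ot$-cycles. A secondary subtlety is that, contrary to \Th{th:effFR}, no choice of initial ensemble can trivialise the boundary terms on both sides simultaneously, because hidden TR generically alters both the first and the last site of the trajectory; the result is therefore necessarily asymptotic, as signalled by $\asymp$.
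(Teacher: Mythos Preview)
Your argument is correct and the two key identities you state are right. However, the paper's route is shorter: rather than treating the two involutions $\overline{\,\cdot\,}$ and $\widetilde{\,\cdot\,}$ separately and matching ``core integrals'', it applies them in succession and works directly with the composite map $\omega^t\mapsto\overline{\widetilde{\omega}^t}$. That map inverts exactly the cycles through $\ot$ and preserves the rest, so (for cyclic trajectories, which kills the boundary terms) one gets in one line
\[
\frac{\prob(\omega^t)}{\prob(\overline{\widetilde{\omega}^t})}=\exp\sum_{\C\owns 12}\A(\C)\,\tcur^t(\C),
\]
and a single marginalisation finishes. Your step~2 is precisely the ``hidden-TR half'' of this composite ratio, and your two core integrals are equal because $\omega^t\mapsto\overline{\widetilde{\omega}^t}$ is a bijection between the constraint sets $\{\tcur^t(\C)=\pm\tcur(\C)\}_{\C\owns 12}$; so the matching-cores manoeuvre is really the composite substitution in disguise. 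What you gain from your formulation is that the role of each involution is made explicit; what the paper's formulation buys is that your ``main obstacle'' (the well-definedness and bookkeeping of step~2 on its own) never needs to be isolated --- one only needs that the composite map is an involution satisfying properties {\it (i)--(iii)}, which is the content the paper assumes. Your remark on boundary terms is apt: the paper bypasses it by restricting to cyclic trajectories rather than by invoking the long-time limit, but the conclusion is the same asymptotic statement.
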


\begin{proof}
Assuming that the trajectories are cyclic, $i_N = i_0$, with a few standard manipulations one arrives at
\begin{align}
\frac{\prob(\traj)}{\prob(\overline{\widetilde{\tcur}}^t)} = \exp \sum_{\C \owns 12} \tcur^t(\C) \A(\C).  
\end{align}
We can then marginalize for the cycle currents.
\end{proof}

This comes as close as possible to obtaining a marginal FR that is localized on edge $\ot$. Notice though that in general 
\begin{align}
\prob(\minvtraj) \neq \widetilde{\prob}(\traj).
\end{align}
Understanding whether there exists relationship between the hidden TR trajectory and the hidden TR dynamics is a major open question.

\subsection{The case of multiple edges}
\label{sec:multipledges}

For sake of notational ease we assumed so far that at most one transition per pair of sites was possible, i.e. that the network does not have multiple edges. In nonequilibrium thermodynamics it is necessary to consider the case where several different transitions are discernible, because due to the interaction with different reservoirs. In this section we show that the theory holds with few obvious modifications.

Let us consider the following network, which includes two transition mechanisms for $\ot$, labeled $\RN{1}$ and $\RN{2}$:
\begin{align}
\begin{array}{c}\xymatrix{1 \ar@{-}^{\RN{1}}@/^/[rr]  \ar@{-}_{\RN{2}}@/_/[rr] \ar@{-}[dr] \ar@{..}@/_/[dr] \ar@{-}@/^/[dr] & & 2  \ar@{..}@/^/[dl]  \ar@{-}@/_/[dl]  \ar@{-}[dl] \\
& \rule{1cm}{1cm}  & 
}\end{array}
\end{align}
The tilted operator corresponding to the measurement of the current along edge  $1 \stackrel{\RN{1}}{\gets} 2$ is given by
\begin{align}
\M^{\RN{1}}(q) & :=
\left(\ba{cccccc} - w_{1} &  e^{-q} w^{\RN{1}}_{12}  + w^{\RN{2}}_{12}& \ldots  \\
 e^{q} w^\RN{1}_{21} + w^\RN{2}_{21}&  - w_2 &   \\
\vdots & & \ddots 
 \\
 \ea \right).
\end{align}
Apart from this little difference, a direct inspection of the proofs of \Th{th:unitary} and  \Th{th:characterization} shows that all proceeds like above, with the only modification that the effective affinity now reads
\begin{align}
\Q = \log \frac{w_{12}^{\RN{1}} p_2^{\st}}{w_{21}^{\RN{1}} p_1^{\st}} 
\end{align}
where $\vec{p}^{\,\st}$ is the steady-state in the network where edge $1 \stackrel{\RN{1}}{\mbox{---}} 2$ is removed. This argument can be easily scaled up to the case of several observational currents that will be the object of study of the rest of this paper. Since the procedure is fairly straightforward, we will not produce it explicitly.

\section{Episode 3: Several edge currents}
\label{sec:multiedge} 

In this episode we take into consideration several currents, each supported on one edge, proving the analog of \Th{th:unitary} and \Th{th:effFR} and discussing some of their consequences. The most important concept is that there is a {\it hierarchy} of marginal theories and of powerful inter-hierarchical fluctuation relations.

\subsection{Main result}
\label{sec:main2}

We now consider the statistics of $|\mu|$  currents, each supported on a different edge $i_\mu j_\mu$. For the moment we assume that  the removal of all such edges does not disconnect the network.
We call $j_\mu$  the tail (or source) of edge $i_\mu j_\mu$ and $i_\mu$ its tip (or target). The set of currents is marginal in the sense explained in Sec.\,\ref{sec:cycle}: it does not include a full set of chords of the network. Let $\M(\{q_\mu\})$ be the tilted generator with entries
\begin{align}
\M(\{q_\mu\})_{i,j} = \left\{
\ba{ll}
- w_i, & \mathrm{if}~ i = j\\
w_{i_\mu j_\mu}e^{- q_\mu}, & \mathrm{if} \, \exists \mu\; \mathrm{s.t.}\;  i = i_\mu, j = j_\mu \\
w_{\!j_\mu i_\mu} e^{q_\mu}, & \mathrm{if} \, \exists \mu\; \mathrm{s.t.}\;  i = j_\mu, j = i_\mu\\
w_{i\!j}, & \mathrm{otherwise}
\ea
 \right. .
\end{align}
From now on we omit to explicitly state ``$\exists \mu$''. 

Let us consider the generator $\W_{\mathrm{hid}}$ of the Markovian dynamics defined by setting all rates $w_{i_\mu j_\mu} = w_{j_\mu i_\mu} = 0$, {\it ossia} by removing the set of edges corresponding to the observable transitions. It has entries
\begin{align}
[\W_{\mathrm{hid}}]_{i,j} = \left\{
\ba{ll}
- w_i + \sum_{k,\mu} w_{ki} (\delta_{k,j_\mu}  \delta_{i,i_\mu} + \delta_{i,j_\mu}  \delta_{k,i_\mu} ), & \mathrm{if}~ i = j,\\
0, & \mathrm{if}~ i_\mu j_\mu = i\!j, ji, \\
w_{i\!j}, & \mathrm{elsewhere}
\ea
 \right. .
\end{align}
Such dynamics admits a unique normalized steady state $\vec{p}^{\,\st}$. We define the effective affinities
\begin{align}
\Q_\mu := \log \frac{w_{i_\mu j_\mu} p^{\st}_{j_\mu}}{w_{\!j_\mu i_\mu} p^{\st}_{i_\mu}}. \label{eq:effaffmu}
\end{align}
Finally, let $\Past := \mathrm{diag}\, \{p^\st_i\}_i$.
\begin{theorem}
\label{th:unitary2}
The operator
\begin{align}
\widetilde{\W} := \Past \, \M(\{\Q_\mu\})^T \,\Past^{-1} \label{eq:unitary2}
\end{align}
is a {\rm MJPG}.
\end{theorem}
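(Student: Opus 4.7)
The plan is to verify directly the three defining properties of a MJPG for the matrix $\widetilde{\W}$ in Eq.\,(\ref{eq:unitary2}): non-negativity of off-diagonal entries, non-positivity of diagonal entries, and vanishing column sums. The first two follow immediately from the similarity structure: for $i \neq j$ one has $\widetilde{\W}_{ij} = (p_i^{\st}/p_j^{\st})\, \M(\{\Q_\mu\})_{ji} \geq 0$, inheriting positivity from the rates and from the strict positivity of the stalling ensemble $\vec{p}^{\,\st}$ of the connected hidden graph, while $\widetilde{\W}_{ii} = -w_i \leq 0$. The content of the proposition lies in the third property, which by the same similarity gives $\vec{1}^T \widetilde{\W} = (\vec{p}^{\,\st})^T \M(\{\Q_\mu\})^T \Past^{-1}$, and is therefore equivalent to the single identity $\M(\{\Q_\mu\})\,\vec{p}^{\,\st} = \vec{0}$.

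To establish this identity, I would compute $[\M(\{\Q_\mu\})\vec{p}^{\,\st}]_i$ row by row, splitting the off-diagonal sum into contributions from hidden edges and from the (possibly several) observable edges incident to $i$, either as tip ($i_\mu = i$) or as tail ($j_\mu = i$). For each observable edge, the definition Eq.\,(\ref{eq:effaffmu}) of the effective affinity supplies the twin identities
\begin{align}
w_{i_\mu j_\mu}\, e^{-\Q_\mu}\, p_{j_\mu}^{\st} = w_{j_\mu i_\mu}\, p_{i_\mu}^{\st}, \qquad w_{j_\mu i_\mu}\, e^{\Q_\mu}\, p_{i_\mu}^{\st} = w_{i_\mu j_\mu}\, p_{j_\mu}^{\st}, \nonumber
\end{align}
which convert each tilted observable entry into $p_i^{\st}$ multiplied by the rate of the \emph{reverse} transition at $i$. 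Absorbing the resulting terms into the diagonal contribution $-w_i p_i^{\st}$ reconstitutes precisely the reduced escape rate $-w_i^{\mathrm{hid}} p_i^{\st}$ of the hidden dynamics, leaving a residue equal to $[\W_{\mathrm{hid}} \vec{p}^{\,\st}]_i$. This vanishes by the defining property $\W_{\mathrm{hid}} \vec{p}^{\,\st} = \vec{0}$ of the stalling state.

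The argument is a straightforward generalization of the $|\mu|=1$ analysis in the proof of \Th{th:unitary}: the effective affinities $\{\Q_\mu\}$ are engineered exactly so that tilting the observable bonds to these values mimics their outright deletion, as far as the right null vector $\vec{p}^{\,\st}$ is concerned. The main obstacle is therefore purely combinatorial, namely keeping track of every observable edge incident to a given site and of its orientation so that each tilted contribution correctly recombines with the diagonal. Note that no uniqueness statement is attempted here, consistent with the observation in the Parode that for $|\mu|>1$ the set of $\{q^{\ast}_\mu\}$ solving $\lambda(\{q^{\ast}_\mu\})=0$ forms a continuum, of which $\{\Q_\mu\}$ is only one distinguished point.
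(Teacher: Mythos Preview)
Your argument is correct and follows essentially the same route as the paper's own proof. The only cosmetic difference is the viewpoint: the paper first writes out the entries of $\widetilde{\W}$ explicitly (using the definition of $\Q_\mu$ to simplify the observable-edge entries to $w_{j_\mu i_\mu}$ and $w_{i_\mu j_\mu}$), and then sums the columns of $\widetilde{\W}$ directly, whereas you reformulate the column-sum condition via the similarity as $\M(\{\Q_\mu\})\,\vec{p}^{\,\st}=0$ and work row by row on $\M$. Both computations collapse to the same identity $[\W_{\mathrm{hid}}\vec{p}^{\,\st}]_i=0$, and your closing remark on non-uniqueness for $|\mu|>1$ correctly anticipates the paper's comment following the proof.
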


\begin{proof}
We have
\begin{align}
\widetilde{\W}(\{\Q_\mu\})_{j,i} & = \left\{
\ba{ll}
- w_i, & \mathrm{if}~ i = j\\
w_{i_\mu j_\mu} \frac{p^\st_{\!j_\mu}}{p^\st_{i_\mu}} e^{-\Q_\mu}, & \mathrm{if}~ i = i_\mu, j = j_\mu \\
w_{\!j_\mu i_\mu}\frac{p^\st_{i_\mu}}{p^\st_{\!j_\mu}} e^{+\Q_\mu}, & \mathrm{if}~ i = j_\mu, j = i_\mu\\
w_{i\!j} \frac{p^\st_{\!j}}{p^\st_i} , & \mathrm{elsewhere}
\ea
 \right. \\
 & = \left\{
\ba{ll}
- w_i, & \mathrm{if}~ i = j\\
w_{j_\mu i_\mu}, & \mathrm{if}~ i = i_\mu, j = j_\mu \\
w_{\!i_\mu j_\mu}, & \mathrm{if}~ i = j_\mu, j = i_\mu\\
w_{i\!j} \frac{p^\st_{\!j}}{p^\st_i} , & \mathrm{elsewhere}
\ea . 
 \right. \label{eq:mmm}
 \end{align}
Off-diagonal entries are positive, diagonal entries are negative, therefore all we need to prove is that columns add up to zero. Summing over $j$, for each $i$, we obtain
\begin{multline}
\sum_j \widetilde{\W}(\{\Q_\mu\})_{j,i} = - w_i + \sum_{j \neq i} [\W_{\mathrm{hid}}]_{i\!j} \frac{p_{\!j}}{p_i}  \\ + \sum_j w_{ji} \sum_\mu (\delta_{i,i_\mu}\delta_{j,j_\mu}  + \delta_{i,j_\mu}\delta_{j,i_\mu}) 
\end{multline}
where we recognized in the last term the off-diagonal entries of the generator $\W^{\setminus}$, and in the second term in this expression precisely the off-diagonal correction that is needed to have
\begin{align}
\sum_j \widetilde{\W}(\{\Q_\mu\})_{j,i} = \sum_j [\W_{\mathrm{hid}}]_{i\!j} \frac{p_{\!j}}{p_i}  = 0
\end{align}
where we used the fact that $\vec{p}$ is the steady state of $\W_{\mathrm{hid}}$. 
\end{proof}
Notice that, according to Eq.\,(\ref{eq:mmm}), forward and hidden TR generators can be written respectively as in Eqs.\,(\ref{eq:12}), where $\W_{\mathrm{hid}}$ is the generator obtained by removing all edges $\{i_\mu j_\mu\}$ and $\W_{\mathrm{mar}}$ is the complementary generator of the dynamics on those edges only. 

Let us mention that by the construction of the so-called ``Doob trasform,'' that we will briefly discuss in the conclusions, Sec.\,\ref{exode}, there is no impediment in producing a proof of \Th{th:unitary2} for any values $q_\mu = q^\ast_\mu$ in the locus of zeroes of the SCGF, as mentioned in the Parode.

\subsection{Graphical representation}

Similarly to the analysis in Sec.\,\ref{sec:graphical}, effective affinities can be given a graphical interpretation. We will only give a simple example. Consider the complete graph on four nodes
\bea
\xymatrix{ 1  \ar@{-}[r] & 2 \ar@{-}[d] \\ 
3 \ar@{-}[ur] \ar@{-}[u] & 4  \ar@{-}[ul] \ar@{-}[l]}
\eea
and let us consider the marginal theory where the observer measures currents $1\mbox{--}2$ and $3\mbox{--}4$. After some work we obtain
\begin{subequations}
\begin{align}
\Q_{12} & = \log \frac{\Scale[0.5]{
\ba{c}\xymatrix{ \ar@{<-}[r]  & \ar@{<-}[d] \\ \ar@{->}[ur] \ar@{<-}[u] & } \ea
 + \ba{c}\xymatrix{ \ar@{<-}[r]  & \ar@{<-}[d] \\  \ar@{->}[u] & \ar@{<-}[ul]} \ea
 + \ba{c}\xymatrix{ \ar@{<-}[r]  &  \\ \ar@{<-}[u] \ar@{->}[ur] & \ar@{->}[ul] } \ea
 + \ba{c}\xymatrix{ \ar@{<-}[r]  & \ar@{<-}[d] \\  \ar@{->}[ur] & \ar@{<-}[ul]} \ea
}}{\Scale[0.5]{
\ba{c}\xymatrix{ \ar@{->}[r]  & \ar@{<-}[d] \\ \ar@{<-}[ur] \ar[u] & } \ea
+ \ba{c}\xymatrix{ \ar@{->}[r]  & \ar@{<-}[d] \\  \ar@{->}[u] & \ar@{->}[ul]} \ea
+ \ba{c}\xymatrix{ \ar@{->}[r]  &\\ \ar@{->}[u] \ar@{<-}[ur] &  \ar@{->}[ul]  } \ea
+ \ba{c}\xymatrix{ \ar@{->}[r]  & \ar@{->}[d] \\  \ar@{->}[ur] & \ar@{->}[ul]} \ea
 }} & & =: \log \frac{\psi_{\setminus 12}}{\psi_{\setminus 21}} \\
\Q_{34} & = \log \frac{\Scale[0.5]{
\ba{c}\xymatrix{  & \ar@{->}[d] \\  \ar@{<-}[r]  \ar@{->}[ur] \ar@{<-}[u] & } \ea
 + \ba{c}\xymatrix{  & \ar@{->}[d] \\  \ar@{<-}[r]  \ar@{->}[u] & \ar@{<-}[ul]} \ea
 + \ba{c}\xymatrix{   &  \\ \ar@{<-}[r]  \ar@{->}[u] \ar@{<-}[ur] & \ar@{<-}[ul] } \ea
 + \ba{c}\xymatrix{  & \ar@{->}[d] \\ \ar@{<-}[r]  \ar@{->}[ur] & \ar@{<-}[ul]} \ea
}}{\Scale[0.5]{
\ba{c}\xymatrix{  & \ar@{->}[d] \\ \ar@{->}[r]  \ar@{<-}[ur] \ar[u] & } \ea
+ \ba{c}\xymatrix{   & \ar@{->}[d] \\ \ar@{->}[r]  \ar@{<-}[u] & \ar@{->}[ul]} \ea
+ \ba{c}\xymatrix{  &\\ \ar@{->}[r]  \ar@{<-}[u] \ar@{<-}[ur] &  \ar@{->}[ul]  } \ea
+ \ba{c}\xymatrix{  & \ar@{<-}[d] \\ \ar@{->}[r]  \ar@{<-}[ur] & \ar@{<-}[ul]} \ea
 }} & & =: \log \frac{\psi_{\setminus 34}}{\psi_{\setminus 43}}.
\end{align}
\end{subequations}
The right-hand side defines the hidden fluxes. We recognize in these expressions those Hill cycle fluxes that are {\it not} in common between the two currents, which explicitly read (setting $\pi(\G) \stackrel{!}{=}1$) 
\begin{subequations}
\begin{align}
\cur_{12} = \psi_{\setminus 12} - \psi_{\setminus 21} &  + 
\Scale[0.5]{\ba{c}\xymatrix{ \ar@{<-}[r]  & \ar@{<-}[d] \\ \ar@{->}[r] \ar@{<-}[u] & } \ea} +
\Scale[0.5]{\ba{c}\xymatrix{ \ar@{<-}[r]  &  \\  \ar@{->}[ur] & \ar@{<-}[ul] \ar@{->}[l]} \ea} +\Scale[0.5]{\ba{c}\xymatrix{ \ar@{<-}[r]  &  \\ \ar@{->}[ur] \ar@{<-}[u] \ar@{<-}[r] & } \ea}
 + \Scale[0.5]{\ba{c}\xymatrix{ \ar@{<-}[r]  & \ar@{<-}[d] \\  \ar@{->}[r] & \ar@{<-}[ul]} \ea}
\nonumber \\
& 
-\Scale[0.5]{\ba{c}\xymatrix{ \ar@{->}[r]  & \ar@{->}[d] \\ \ar@{<-}[r] \ar@{->}[u] & } \ea} -
\Scale[0.5]{\ba{c}\xymatrix{ \ar@{->}[r]  &  \\  \ar@{<-}[ur] & \ar@{->}[ul] \ar@{<-}[l]} \ea}
-\Scale[0.5]{
\ba{c}\xymatrix{ \ar@{->}[r]  &  \\ \ar@{<-}[ur] \ar@{<-}[r] \ar[u] & } \ea}
-\Scale[0.5]{\ba{c}\xymatrix{ \ar@{->}[r]  & \ar@{<-}[d] \\  \ar@{->}[r] & \ar@{->}[ul]} \ea}
\\
\cur_{34} =  \psi_{\setminus 34} - \psi_{\setminus 43} & +
\Scale[0.5]{\ba{c}\xymatrix{ \ar@{->}[r]  & \ar@{->}[d] \\ \ar@{<-}[r] \ar@{->}[u] & } \ea} +
\Scale[0.5]{\ba{c}\xymatrix{ \ar@{<-}[r]  &  \\  \ar@{->}[ur] & \ar@{<-}[ul] \ar@{->}[l]} \ea} + \Scale[0.5]{\ba{c}\xymatrix{  \ar@{->}[r] & \ar@{->}[d] \\  \ar@{<-}[r]  \ar@{->}[ur]  & } \ea}
+ \Scale[0.5]{\ba{c}\xymatrix{  & \ar@{->}[l] \\  \ar@{<-}[r]  \ar@{->}[u] & \ar@{<-}[ul]} \ea}
 \nonumber \\
& - \Scale[0.5]{\ba{c}\xymatrix{ \ar@{<-}[r]  & \ar@{<-}[d] \\ \ar@{->}[r] \ar@{<-}[u] & } \ea}
- \Scale[0.5]{\ba{c}\xymatrix{ \ar@{->}[r]  &  \\  \ar@{<-}[ur] & \ar@{->}[ul] \ar@{<-}[l]} \ea} 
- \Scale[0.5]{\ba{c}\xymatrix{ \ar@{->}[r]   & \ar@{->}[d] \\ \ar@{->}[r]  \ar@{<-}[ur]  & } \ea}
- \Scale[0.5]{\ba{c}\xymatrix{   & \ar@{->}[l] \\ \ar@{->}[r]  \ar@{<-}[u] & \ar@{->}[ul]} \ea}.
\end{align}
\end{subequations}
Notice that from these expressions it is not so obvious (to the best of our understanding) that the mean marginal EPR $\cur_{12} \Q_{12} + \cur_{34} \Q_{34}$, should be non-negative, as we could prove directly in Sec.\,\ref{sec:marepr} for the single-edge case. This fact will rather follow from the IFR.

\subsection{Fluctuation relations}

We derive the FR at all times using the formalism of the tilted generator and of the generating function, rather than the direct derivation in terms of path p.d.f.'s given in the previous Episode.

\begin{theorem}
The marginal FR
\begin{align}
\frac{\prob(\{\tcur_\mu\})}{\widetilde{\prob}(-\{\tcur_\mu\})} = \exp { \ssum \Q_\mu\,\tcur_\mu}   \label{eq:marfr}
\end{align}
holds at all times, provided the initial configuration is sampled with probability $P(i^0 \equiv i) = p_i^{\,\st}$.
\end{theorem}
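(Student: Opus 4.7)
The plan is to leverage the tilted-operator formalism and the similarity of Proposition~\ref{th:unitary2} to produce a symmetry of the finite-time moment generating functions, which then translates to the marginal FR by Laplace inversion. First I would write the forward MGF starting from the stalling ensemble as
\begin{align}
Z(\{q_\mu\},t) := \big\langle e^{-\sum q_\mu \tcur^t_\mu}\big\rangle = \vec{1}^T e^{t\M(\{q_\mu\})}\vec{p}^{\,\st},
\end{align}
and similarly $\widetilde Z(\{r_\mu\},t) = \vec{1}^T e^{t\widetilde{\M}(\{r_\mu\})}\vec{p}^{\,\st}$, where $\widetilde{\M}(\{r_\mu\})$ is the tilted operator built from the rates $\widetilde w_{\ij}$ of $\widetilde{\W}$. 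Note that we may take both processes to start from the same $\vec{p}^{\,\st}$ since it is the common stalling steady state of $\W_{\mathrm{hid}}$ and $\overline{\W_{\mathrm{hid}}}$.

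The central step is the tilted generalisation of Proposition~\ref{th:unitary2}, namely
\begin{align}
\widetilde{\M}(\{\Q_\mu - q_\mu\}) = \Past\, \M(\{q_\mu\})^T\, \Past^{-1},
\end{align}
which I would verify entry-by-entry. For diagonal entries both sides equal $-w_i$. For a generic (hidden) edge $\ij$, the untilted rate $\widetilde w_{\ij} = w_{\ji} p^{\,\st}_j/p^{\,\st}_i$ is already the transposed generic entry of $\M$ dressed by the conjugation $\Past\cdot\Past^{-1}$. For the two entries supported on an observable edge $i_\mu j_\mu$, one uses the definition $e^{\Q_\mu} = w_{i_\mu j_\mu} p^{\,\st}_{j_\mu}/(w_{\!j_\mu i_\mu} p^{\,\st}_{i_\mu})$ of Eq.\,(\ref{eq:effaffmu}) to check that the tilt $-(\Q_\mu - q_\mu)$ on the left converts $\widetilde w_{i_\mu j_\mu} = w_{i_\mu j_\mu}$ into exactly $w_{\!j_\mu i_\mu}e^{q_\mu}\,p^{\,\st}_{i_\mu}/p^{\,\st}_{j_\mu}$, which is the required transposed entry of $\M(\{q_\mu\})$ up to the $\Past$ conjugation; the symmetric check with $i_\mu \leftrightarrow j_\mu$ gives the other entry.

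Given this similarity, and using the crucial identities $\Past^{-1}\vec{p}^{\,\st} = \vec{1}$ and $\vec{1}^T\Past = (\vec{p}^{\,\st})^T$ that rely on the chosen initial ensemble, I would compute
\begin{align}
\widetilde Z(\{\Q_\mu - q_\mu\},t) = \vec{1}^T\Past\, e^{t\M(\{q_\mu\})^T}\,\Past^{-1}\vec{p}^{\,\st} = (\vec{p}^{\,\st})^T e^{t\M(\{q_\mu\})^T}\vec{1} = Z(\{q_\mu\},t),
\end{align}
where the last equality is just transposition of a scalar. Finally, expressing $Z$ and $\widetilde Z$ as Laplace transforms of $\prob(\{\tcur_\mu\})$ and $\widetilde{\prob}(\{\tcur_\mu\})$ and performing the change of variables $\tcur_\mu \to -\tcur_\mu$ inside $\widetilde Z$, the identity $Z(\{q_\mu\},t) = \widetilde Z(\{\Q_\mu - q_\mu\},t)$ becomes
\begin{align}
\int e^{-\sum q_\mu \tcur_\mu}\prob(\{\tcur_\mu\})\prod d\tcur_\mu = \int e^{-\sum q_\mu \tcur_\mu}\, e^{\sum \Q_\mu \tcur_\mu}\, \widetilde{\prob}(-\{\tcur_\mu\})\prod d\tcur_\mu,
\end{align}
from which Eq.\,(\ref{eq:marfr}) follows by uniqueness of the Laplace transform. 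The main obstacle is securing the tilted similarity identity cleanly; once in hand, everything else is a short manipulation and the choice of initial state $\vec{p}^{\,\st}$ is precisely what upgrades an asymptotic statement to an all-times one.
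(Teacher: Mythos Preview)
Your proof is correct and follows essentially the same route as the paper's: both arguments establish the finite-time identity of the moment generating functions $Z(\{q_\mu\},t)=\widetilde Z(\{\Q_\mu-q_\mu\},t)$ via the similarity $\widetilde{\M}(\{\Q_\mu-q_\mu\})=\Past\,\M(\{q_\mu\})^T\,\Past^{-1}$, use the choice $\vec p^{\,0}=\vec p^{\,\st}$ together with $\Past\vec 1=\vec p^{\,\st}$ to kill the boundary factors, and then invoke the inverse Laplace transform. Your explicit entrywise verification of the tilted similarity is a welcome clarification of a step the paper leaves implicit here (it appears in a slightly different guise in the proof of \Th{th:genfr}).
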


\begin{proof}
Rather than considering the probability itself, we consider the moment generating function of the currents $\zeta^t(\{q_\mu\})$ at time $t$, and let $\vec{\zeta}^t(\{q_\mu\})= (\zeta^t_i(\{q_\mu\}))_i$ be the moment generating function conditioned to being at site $i$ at time $t$, defined as
\begin{align}
\zeta^t_i(\{q_\mu\}) := \int \prod d\tcur_\mu \,\prob(\tcur_\mu^t \equiv \tcur_\mu,i^t\equiv i) \,  \exp {-\sum \tcur_\mu q_\mu},
\end{align}
such that the moment generating function is recovered from the conditional one by $\zeta^t(\{q_\mu\}) = \vec{1}\cdot \vec{\zeta}^{\,t}(\{q_\mu\})$. We can also introduce a conditional moment generating function for the hidden TR process $\vec{\widetilde{\zeta}}$. These two conditinal functions evolve respectively by
\begin{subequations}
\begin{align} 
\frac{d}{dt} \vec{\zeta}^{\,t}(\{q_\mu\}) & = \M(\{q_\mu\})\,\vec{\zeta}^{\,t}(\{q_\mu\}), \\ 
\frac{d}{dt} \vec{\widetilde{\zeta}}^{\,t}(\{q_\mu\}) & = \widetilde{\M}(\{q_\mu\})\, \vec{\widetilde{\zeta}}^{\,t}(\{q_\mu\}).
\end{align}
\end{subequations}
Let $\zeta^0_i(\{q_\mu\}) = \widetilde{\zeta}_i^{\,0}(\{q_\mu\}) =  p_i^{0}$ be the probability of finding the system in site $i$ at time $t=0$. Defining $\U^t(\{q_\mu\}) := \exp t \M(\{q_\mu\})$ and $\widetilde{\U}^t(\{q_\mu\}) := \exp t \widetilde{\M}(\{q_\mu\})$ we have
\begin{align} 
\vec{\zeta}^{\,t}(\{q_\mu\}) & =  \U^t(\{q_\mu\})\, \vec{p}^{\,0}.
\end{align}
Similarly, for the hidden TR dynamics, evaluating at $q_\mu \to \Q_\mu -q_\mu$ we obtain
\begin{align}
\vec{\widetilde{\zeta}}^{\,t}(\{\Q_\mu - q_\mu\}) & = \widetilde{\U}^t(\{\Q_\mu - q_\mu\}) \,\vec{p}^{\,0} \\
& = \Past \left[\exp t\M(\{q_\mu\})^T \right] \, \Past^{-1}  \vec{p}^{\,0}.
\end{align}
We can now find the moment generating functions as
\begin{subequations}
\begin{align} 
\zeta^{\,t}(\{q_\mu\}) & = \vec{1}\cdot \U^t(\{q_\mu\}) \,\vec{p}^{\,0} \\
\widetilde{\zeta}^{\,t}(\{\Q_\mu - q_\mu\}) & = \vec{p}^{\,0} \cdot \Past^{-1} \U^t(\{q_\mu\}) \, \Past \vec{1} 
\end{align} 
\end{subequations}
from which it follows that the choice of $\vec{p}^{\,0} = \vec{p}^{\,\st}$ makes the latter two expressions identical at all times, thus yielding an all-time fluctuation symmetry. Eq.\,(\ref{eq:marfr}) follows by transforming back from the moment generating function to the currents' p.d.f. by the phantomatic inverse bilateral Laplace transform.
\end{proof}

As a straightforward corollary, we have the following asymptotic FRs.

\begin{theorem}\label{th:marls}\label{th:ift2}
	The following marginal symmetry of the SCGF and IFR hold
	\begin{align}
		\lambda(\{\Q_\mu - q_\mu\}) & = \widetilde{\lambda}(\{q_\mu\}), \label{eq:lss} \\ \lambda(\{\Q_\mu\}) & = 0.
	\end{align}
\end{theorem}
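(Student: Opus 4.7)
The plan is to deduce both identities as immediate consequences of the finite-time marginal FR established in the preceding theorem, together with the structural property of $\widetilde{\W}$ from Proposition \ref{th:unitary2}. The whole derivation is formal: it amounts to taking the bilateral Laplace transform of the FR and then extracting its asymptotic exponential growth rate.

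First I would multiply both sides of the finite-time identity
\begin{align}
\prob(\{\tcur_\mu\}) \,\exp\!\Big(-\!\ssum q_\mu \tcur_\mu\Big) = \widetilde{\prob}(\{-\tcur_\mu\}) \,\exp\!\Big(-\!\ssum (q_\mu - \Q_\mu) \tcur_\mu\Big)
\end{align}
and integrate over the $\tcur_\mu$. The left side is, by definition, the moment generating function $\zeta^t(\{q_\mu\})$ of the forward marginal currents; on the right, the change of variables $\tcur_\mu \to -\tcur_\mu$ (whose Jacobian is $\pm 1$ on each axis) turns the integral into $\widetilde{\zeta}^{\,t}(\{\Q_\mu - q_\mu\})$, the moment generating function of the hidden TR process at shifted argument. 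This yields the finite-time symmetry $\zeta^t(\{q_\mu\}) = \widetilde{\zeta}^{\,t}(\{\Q_\mu - q_\mu\})$, a relation that is in fact essentially the one already displayed operator-theoretically inside the proof of the preceding theorem.

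Next I take $\lim_{t\to\infty}t^{-1}\log$ of both sides. Under the standing assumptions of irreducibility and finiteness of the state space, this limit exists, is independent of the initial ensemble, and coincides with the Perron--Frobenius eigenvalue of the corresponding tilted operator. The outcome is exactly the announced SCGF symmetry $\lambda(\{\Q_\mu - q_\mu\}) = \widetilde{\lambda}(\{q_\mu\})$.

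For the IFR I simply evaluate this symmetry at $q_\mu = 0$: the right-hand side becomes $\widetilde{\lambda}(\{0\})$, which vanishes since the ``tilted'' operator of $\widetilde{\W}$ at vanishing counting field is $\widetilde{\W}$ itself---a bona fide MJPG by Proposition \ref{th:unitary2}---whose dominant eigenvalue is $0$. Equivalently, the similarity relation \eqref{eq:unitary2} implies that $\M(\{\Q_\mu\})$ is cospectral with $\widetilde{\W}$, so its Perron eigenvalue $\lambda(\{\Q_\mu\})$ vanishes directly. There is no real obstacle beyond the standard technicality of justifying the existence and initial-condition-independence of the long-time limit defining the SCGF, which is a routine spectral-gap argument in our finite-network setting.
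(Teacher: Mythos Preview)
Your proof is correct and follows essentially the same route as the paper's: pass from the finite-time moment-generating-function identity (established in the preceding theorem's proof) to the SCGF by taking $\lim_{t\to\infty} t^{-1}\log$, then evaluate to get the IFR. The paper's proof is simply more terse; your added remark that the IFR also follows directly from the similarity $\M(\{\Q_\mu\}) \sim \widetilde{\W}$ of Proposition~\ref{th:unitary2} mirrors the argument used earlier in the single-edge case (Proposition~\ref{th:ift1}).
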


\begin{proof}
	The first follows from
	\begin{align}
		\lambda(\{q_\mu\}) = \lim_{t \to \infty} \frac{1}{t} \zeta^t(\{q_\mu\}).
	\end{align}
	The second follows by evaluating at $q_\mu = \Q_\mu$.
\end{proof}

Just like the single-edge-current FR actually holds at the level of trajectories as in Eq.\,(\ref{eq:123}), we can also generalize the above \th{} to a ``complete'' set of currents as follows.

\begin{theorem}\label{th:genfr}
The SCGF $\lambda(\{q_\mu\}_\mu,\{q_\alpha\}_{\alpha \geq |\mu|})$ of a ``complete'' set of currents satisfies the FR
	\begin{align}
		\lambda(\{q_\mu\}_\mu,\{q_\alpha\}_{\alpha\neq\mu}) = \widetilde{\lambda}(\{\Q_\mu - q_\mu\}_\mu,\{-q_\alpha\}_{\alpha\neq\mu}).
	\end{align}
\end{theorem}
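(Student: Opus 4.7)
The approach is a direct extension of \Th{th:unitary2}: I would construct the tilted operator for the complete set of currents and show that the similarity transformation by $\Past$ turns it into the hidden-TR tilted operator evaluated at the shifted arguments. Concretely, I would first define the complete tilted operator
\begin{align}
\M(\{q_\mu\},\{q_\alpha\})_{ij} = w_{ij}\, \exp\Big(-\sum_{\kappa} \phys^\kappa_{ij}\, q_\kappa\Big), \quad i \neq j, \nonumber
\end{align}
with the usual diagonal exit-rate entries, where $\kappa$ runs over both marginal indices $\mu$ and hidden indices $\alpha > |\mu|$, and each $\phys^\kappa$ is supported on its own edge. The dominant eigenvalue of this operator is $\lambda(\{q_\mu\},\{q_\alpha\})$; the tilted operator $\widetilde{\M}$ associated to $\widetilde{\W}$ is constructed in the same way, treating the hidden-TR rates as off-diagonal entries.

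The key step is to verify entry-by-entry that
\begin{align}
\Past\, \M(\{q_\mu\},\{q_\alpha\})^T\, \Past^{-1} = \widetilde{\M}(\{\Q_\mu - q_\mu\},\{-q_\alpha\}). \nonumber
\end{align}
For marginal edges $(i_\mu, j_\mu)$, the left-hand side at position $(i_\mu, j_\mu)$ is $(p^\st_{i_\mu}/p^\st_{j_\mu})\, w_{j_\mu i_\mu}\, e^{q_\mu}$, which, using the definition of the effective affinity in Eq.\,(\ref{eq:effaffmu}), equals $w_{i_\mu j_\mu}\, e^{q_\mu - \Q_\mu}$; on the right-hand side, because $\widetilde{w}_{i_\mu j_\mu} = w_{i_\mu j_\mu}$ by Eq.\,(\ref{eq:mmm}), the corresponding entry at shifted argument $\Q_\mu - q_\mu$ reads $w_{i_\mu j_\mu}\, e^{-(\Q_\mu - q_\mu)}$, matching exactly. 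For hidden edges $(i_\alpha, j_\alpha)$, no affinity shift is needed: the similarity produces $(p^\st_{i_\alpha}/p^\st_{j_\alpha})\,w_{j_\alpha i_\alpha}\, e^{q_\alpha}$, which agrees with $\widetilde{\M}$ evaluated at $-q_\alpha$ because the hidden-TR rate on this edge already carries the $p^\st$-weights of the usual time-reversal prescription. Finally, for spanning-tree edges not carrying any observed current, the identity reduces to the standard time-reversal similarity $\Past \W^T \Past^{-1}$ restricted to that sector, which is a matter of unpacking the definition in Eq.\,(\ref{eq:mmm}).

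Having established the similarity, the conclusion is immediate: similar operators share the same spectrum, hence the same Perron-Frobenius dominant eigenvalue, yielding $\lambda(\{q_\mu\},\{q_\alpha\}) = \widetilde{\lambda}(\{\Q_\mu - q_\mu\},\{-q_\alpha\})$. I expect the only delicate part of the write-up to be notational bookkeeping across the three classes of edges (marginal-observed, hidden-observed, tree); no new ingredient is required beyond what already appears in the proofs of \Th{th:unitary2} and \Th{th:marls}. This also explains the asymmetric form of the shift: marginal currents inherit a genuine affinity-like displacement $\Q_\mu - q_\mu$ because the hidden-TR construction preserves the marginal sector, whereas hidden currents are simply reversed, reflecting the flip of hidden cycle affinities already encoded in \Th{th:affaff}.
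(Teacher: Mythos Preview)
Your proposal is correct and follows essentially the same route as the paper: both establish a matrix similarity between the forward tilted operator and the hidden-TR tilted operator at shifted arguments, and conclude from the equality of Perron--Frobenius eigenvalues. The only cosmetic difference is packaging: the paper organizes the verification via the Hadamard-product splitting $\W = \W_{\mathrm{mar}} + \W_{\mathrm{hid}}$ (writing $\M = \W_{\mathrm{hid}} \circ \matrix{T}(\{q_\alpha\}) + \W_{\mathrm{mar}} \circ \matrix{T}(\{q_\mu\})$ and manipulating each block), whereas you carry out the same check entry-by-entry across the three edge classes; the paper also writes the similarity in the equivalent direction $\M(\{\Q_\mu - q_\mu\},\{-q_\alpha\}) = \Past\,\widetilde{\M}(\{q_\mu\},\{q_\alpha\})^T\,\Past^{-1}$.
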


\begin{proof}
Let us construct the tilted generator of the forward and backward dynamics in terms of the Hadamard product (see p.\,\pageref{hadamard}) as (we drop the explicit range of the indices)
	\begin{align}
	& \widetilde{\M}(\{q_\mu\},\{q_\alpha\}) = \nonumber \\
	& =  \Past \W_{\mathrm{hid}}^T \Past^{-1} \circ \matrix{T} (\{q_\alpha\}) + \W_{\mathrm{mar}} \circ \matrix{T} (\{q_\mu\}) \nonumber \\
	& =  \Past \left[ \W_{\mathrm{hid}} \circ \matrix{T} (\{-q_\alpha\}) \right]^T \Past^{-1} + \W_{\mathrm{mar}} \circ \matrix{T} (\{q_\mu\})
	\end{align}
where we used the fact that $\matrix{T} (\{q_\alpha\})^T = \matrix{T} (\{-q_\alpha\})$. We have
	\begin{align}
& \M(\{\Q_\mu - q_\mu\},\{-q_\alpha\}) = \nonumber \\
& = \W_{\mathrm{hid}} \circ \matrix{T} (\{q_\alpha\}) + \W_{\mathrm{mar}} \circ \matrix{T} (\{q_\mu\}) \nonumber \\
& = \W_{\mathrm{hid}} \circ \matrix{T} (\{-q_\alpha\}) + \W_{\mathrm{mar}} \circ \matrix{T} (\{\Q_\mu\}) \circ \matrix{T} (\{-q_\mu\}) \nonumber  \\
& = \W_{\mathrm{hid}} \circ \matrix{T} (\{-q_\alpha\}) + \Past \W_{\mathrm{mar}}^{T} \Past^{-1} \circ \matrix{T} (\{-q_\mu\})  \nonumber \\
& = \W_{\mathrm{hid}} \circ \matrix{T} (\{-q_\alpha\}) + \Past \left[\W_{\mathrm{mar}} \circ \matrix{T} (\{q_\mu\}) \right]^{T} \Past^{-1}
	\end{align}
where in the first passage we used the entry-wise associative property of the Hadamard product and in the second the main proposition \Th{th:unitary2}. We then obtain
\begin{align}
\M(\{\Q_\mu - q_\mu\},\{-q_\alpha\}) = \Past \; \widetilde{\M}(\{q_\mu\},\{q_\alpha\})^T \Past^{-1}
\end{align}
and the conclusion follows from matrix similarity.
\end{proof}

\subsection{The hierarchy of marginal theories}
\label{sec:hierarchy}

Let us now suppose we follow a systematic procedure by which we first consider the marginal theory along edge $i_1 j_1$, then we expand our knowledge to $i_2 j_2$, and so on up to $i_{|\alpha|}j_{|\alpha|}$ when we cover a ``complete'' set of chord currents. Let $ \Q^{1,\ldots,|\mu|}_{\mu} $
be the effective affinity along edge $i_\mu j_\mu$ obtained from the $|\mu|$-th marginal theory in the hierarchy, where the superscript $1,\ldots,|\mu|$ denotes the set of marginal currents. Each such theory gives an estimate of the entropy production rate
\begin{align}
\sigma_{1,\ldots,|\mu|} := \sum_{\mu = 1}^{|\mu|} \Q^{1,\ldots,|\mu|}_{\mu} \cur_\mu.
\end{align}
The following \th{} is a generalization of the result presented in Ref.\,\cite{gili}.
\begin{theorem}\label{th:hierarchy}
The following hierarchy of inequalities is satisfied:
\begin{align}
0 \leq \sigma_{1} \leq \sigma:{1,2} \leq  \ldots \leq \sigma_{1,2,\ldots,|\alpha|}.
\end{align}	
\end{theorem}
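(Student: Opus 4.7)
The natural approach is induction on the hierarchy level, reducing the whole chain of inequalities to the base case $\sigma_{1} \geq 0$ and the adjacent-level comparison $\sigma_{1,\ldots,|\mu|+1} \geq \sigma_{1,\ldots,|\mu|}$. The base case is immediate: \Th{th:ift1} gives $\langle e^{-\Q_1 \tcur_1^t}\rangle = 1$, and Jensen's inequality applied to the convex exponential then yields $\Q_1 \langle \tcur_1^t\rangle \geq 0$, which after dividing by $t$ and letting $t \to \infty$ becomes $\sigma_{1} = \Q_1 \cur_1 \geq 0$. The same one-line Jensen argument applied to \Th{th:ift2} gives $\sigma_{1,\ldots,|\mu|} \geq 0$ at every level, but crucially this does \emph{not} by itself establish monotonicity across levels, since the mean of the stochastic marginal entropy productions at levels $|\mu|$ and $|\mu|+1$ cannot be compared directly from two independent IFRs.

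For the inductive step, my plan is to introduce a stochastic \emph{gap} entropy production between consecutive levels,
\begin{align}
\Delta \Sigma^t := \sum_{\mu=1}^{|\mu|+1} \Q^{1,\ldots,|\mu|+1}_{\mu}\, \tcur^t_\mu - \sum_{\mu=1}^{|\mu|} \Q^{1,\ldots,|\mu|}_{\mu}\, \tcur^t_\mu, \nonumber
\end{align}
whose expectation divided by $t$ tends exactly to the gap $\sigma_{1,\ldots,|\mu|+1} - \sigma_{1,\ldots,|\mu|}$. If one can establish the asymptotic IFR $\langle e^{-\Delta\Sigma^t}\rangle \asymp 1$, then Jensen's inequality immediately yields the desired monotonicity. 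To prove the gap IFR I would mirror the derivation of \Th{th:effFR}, comparing the hidden TR path measures $\widetilde{\prob}^{|\mu|}$ and $\widetilde{\prob}^{|\mu|+1}$ generated by $\widetilde{\W}^{(|\mu|)}$ and $\widetilde{\W}^{(|\mu|+1)}$. Using the explicit form of the hidden TR rates, Eq.\,(\ref{eq:mmm}), the logarithm of the ratio of path probabilities splits into (i) edge-by-edge contributions that reproduce $\Delta\Sigma^t$ after the familiar fluxes-to-currents manipulation, and (ii) an exact discrete gradient $\log(p^{\,\st,(|\mu|+1)}_j/p^{\,\st,(|\mu|)}_j) - \log(p^{\,\st,(|\mu|+1)}_i/p^{\,\st,(|\mu|)}_i)$ that telescopes along the trajectory to a boundary term of order $O(1)$. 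Marginalizing over trajectories and dividing by the total probability yields the gap IFR in the long-time limit, which is precisely the cross-hierarchical FR announced in the Parode.

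The hard part will be controlling the mismatch between the two stalling ensembles $\vec{p}^{\,\st,(|\mu|)}$ and $\vec{p}^{\,\st,(|\mu|+1)}$, which solve master equations on two different reduced graphs (one edge more deleted in the latter). Unlike the single-edge case of \Th{th:effFR}, where sampling the initial configuration from the stalling ensemble exactly cancelled the finite-time prefactor, here no single initial ensemble simultaneously annihilates both boundary terms, so the identity can only hold asymptotically (hence the $\asymp$). A subsidiary hypothesis I would keep throughout the induction, analogous to the one stated in Sec.\,\ref{sec:main2}, is that removing the edges $i_1 j_1, \ldots, i_{|\mu|+1} j_{|\mu|+1}$ keeps the hidden graph connected so that the stalling state is unique at every level; in the disconnected case the argument must be applied component-wise on each connected piece.
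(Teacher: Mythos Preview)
Your overall strategy coincides with the paper's: derive an inter-hierarchical fluctuation relation between the two hidden-TR measures, extract an IFR, and conclude by Jensen. The paper gets there more economically than your proposed direct path-level computation of $\widetilde{\prob}^{|\mu|}(\traj)/\widetilde{\prob}^{|\mu|+1}(\traj)$: it first writes, as a direct consequence of \Th{th:genfr}, the identity $\prob(\{\tcur_\alpha\})/\widetilde{\prob}^{|\mu|}(\{\tcur_\alpha\}) = \exp\big(\sum_\alpha \A_\alpha\tcur_\alpha - \sum_{\nu\leq|\mu|}\Q^{1,\ldots,|\mu|}_\nu\tcur_\nu\big)$ at each level, and then simply divides two such identities so that both the forward $\prob$ and the full entropy term $\sum_\alpha\A_\alpha\tcur_\alpha$ cancel. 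This sidesteps entirely the edge-by-edge bookkeeping of two different stalling ensembles that you correctly anticipate as ``the hard part''.

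There is, however, a genuine subtlety in the last step that your sketch glosses over (and that the paper's one-line ``follow the usual route'' does not spell out either). The IFR obtained from $\widetilde{\prob}^{|\mu|}/\widetilde{\prob}^{|\mu|+1} = e^{-\Delta\Sigma^t}$ is $\big\langle e^{-\Delta\Sigma^t}\big\rangle_{\widetilde{\prob}^{|\mu|+1}} = 1$: the expectation is with respect to the hidden-TR measure at level $|\mu|+1$, \emph{not} with respect to the forward $\prob$. Jensen then controls $\langle\Delta\Sigma^t\rangle_{\widetilde{\prob}^{|\mu|+1}}$, which involves the hidden-TR mean currents $\widetilde{\cur}^{\,|\mu|+1}_\nu$ rather than the forward $\cur_\nu$ that enter the definition of $\sigma_{1,\ldots,|\mu|}$. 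These mean currents coincide at the top of the hierarchy (removing all chords makes $\W_{\mathrm{hid}}$ detailed-balanced, whence $\widetilde{\W}^{1,\ldots,|\alpha|}=\W$), which is why $\sigma_{1,\ldots,|\mu|}\leq\sigma$ follows cleanly; but \Th{th:currentverse} only asserts equal \emph{sign}, not equal value, so for intermediate $|\mu|+1<|\alpha|$ the passage from the $\widetilde{\prob}^{|\mu|+1}$-inequality to the claimed $\prob$-inequality is not the automatic step your sentence ``Jensen's inequality immediately yields the desired monotonicity'' suggests. Your inductive argument should address this transfer explicitly.
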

\begin{proof}
Notice that the \Th{th:genfr} can be written in terms of the currents' p.d.f.'s as
\begin{align}
\frac{\prob\left(\{\tcur_\alpha\}_{\alpha = 1}^{|\alpha|}\right)}{\widetilde{\prob}\left(\{\tcur_\alpha\}_{\alpha = 1}^{|\alpha|}\right)} = \exp\left(\sum_{\alpha = 1}^{|\alpha|} \A_\alpha \tcur_\alpha - \sum_{\mu =1}^{|\mu|} \Q^{1,\ldots,|\mu|}_\mu \tcur_\mu\right).
\end{align}
Let $\widetilde{\prob}^{1,\ldots,|\mu|}\left(\{\tcur_\mu\}_{\mu = 1}^{|\mu'|}\right)$ denote the probability of the first $|\mu'|$ currents according to the hidden TR dynamics of the $|\mu|$-th theory, for $|\mu| > |\mu|'$ (notice that this is a new object: so far we only considered the probability of the first $|\mu|$ currents in the $|\mu|$-th theory, and did not specified the level of the hierarchy). Since the latter equality holds for all $|\mu|$, by dividing two such relations for $|\mu|$ and $\mu'$, we obtain
	\begin{align}
	\frac{\widetilde{\prob}^{|\mu|}\left(\{\tcur_\mu\}_{\mu = 1}^{|\mu|}\right)}{\widetilde{\prob}^{|\mu'|}\left(\{\tcur_\mu\}_{\mu = 1}^{|\mu|}\right)} = \exp \left( \sum_{\mu=1}^{|\mu|} \Q^{1,\ldots,|\mu|}_\mu \tcur_\mu - \sum_{\mu=1}^{|\mu'|} \Q^{1,\ldots,|\mu'|}_\mu \tcur_\mu \right)
	\end{align}
We can then follow the usual route, obtaining the corresponding IFRs and using the Jensen inequality.
\end{proof}

\subsection{Stalling and response at stalling}
\label{multistalling}

The following result characterizes the stalling steady state.
\begin{theorem}
\label{iffstalling}
The effective affinities vanish if and only if the mean marginal currents vanish.
\end{theorem}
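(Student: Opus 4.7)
The plan is to exploit the decomposition $\W = \W_{\mathrm{mar}} + \W_{\mathrm{hid}}$ from Eq.~(\ref{eq:12}) together with the explicit definition of the effective affinities in Eq.~(\ref{eq:effaffmu}) and the characterization of $\vec{p}^{\,\st}$ as the unique normalized steady state of $\W_{\mathrm{hid}}$ (under the connectedness assumption made in Sec.~\ref{sec:main2}). The argument is essentially the natural generalization of \Th{th:stalling1}, and the key observation is that the marginal-current-vanishing condition is equivalent to ``edgewise detailed balance'' on the marginal edges with respect to whichever candidate ensemble one evaluates it.

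For the ``if'' direction, I would assume $\Q_\mu=0$ for every $\mu$. Then Eq.~(\ref{eq:effaffmu}) gives $w_{i_\mu j_\mu} p^{\st}_{j_\mu} = w_{j_\mu i_\mu} p^{\st}_{i_\mu}$ for all $\mu$, which, by inspection of the entries of $\W_{\mathrm{mar}}$, means $\W_{\mathrm{mar}}\,\vec{p}^{\,\st}=0$. Combined with $\W_{\mathrm{hid}}\,\vec{p}^{\,\st}=0$ (true by the very definition of $\vec{p}^{\,\st}$), the decomposition yields $\W\,\vec{p}^{\,\st}=0$, so $\vec{p}^{\,\st}$ also happens to be the forward steady state. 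The mean marginal currents then read $\cur_\mu = w_{i_\mu j_\mu} p^{\st}_{j_\mu} - w_{j_\mu i_\mu} p^{\st}_{i_\mu} = 0$.

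For the ``only if'' direction I would start from the forward steady state $\vec{p}$, satisfying $\W\vec{p}=0$, and use $\cur_\mu=0$ for all $\mu$ to deduce $w_{i_\mu j_\mu} p_{j_\mu} = w_{j_\mu i_\mu} p_{i_\mu}$, i.e. $\W_{\mathrm{mar}}\vec{p}=0$. Using again the decomposition of $\W$, it follows that $\W_{\mathrm{hid}}\vec{p}=0$, so $\vec{p}$ is a normalized steady state of the hidden generator. By the assumed connectedness of the network with marginal edges removed, $\W_{\mathrm{hid}}$ has a unique normalized positive steady state, namely $\vec{p}^{\,\st}$, hence $\vec{p}=\vec{p}^{\,\st}$. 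Plugging into the definition of $\Q_\mu$ and using once more $w_{i_\mu j_\mu} p^{\st}_{j_\mu} = w_{j_\mu i_\mu} p^{\st}_{i_\mu}$, one reads off $\Q_\mu=0$.

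The only point that actually requires care is the uniqueness of $\vec{p}^{\,\st}$ invoked in the second direction: if removing the marginal edges disconnects the graph, $\W_{\mathrm{hid}}$ has a whole simplex of steady states, parametrized by the mass assigned to each connected component, and one must check that the particular $\vec{p}^{\,\st}$ appearing in \Th{th:unitary2} still coincides with $\vec{p}$. I would handle this by decomposing $\W_{\mathrm{hid}}$ into its irreducible blocks and arguing componentwise, using that $\W\vec{p}=0$ pins down the total mass in each component through the (now vanishing) net flow across the removed edges — so no extra content beyond what we have already used is needed.
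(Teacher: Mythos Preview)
Your proof is correct and follows essentially the same approach as the paper's: the paper's one-line argument (``evident given the explicit expression for the current $\phi_\mu = w_{i_\mu j_\mu} p_{j_\mu} - w_{j_\mu i_\mu} p_{i_\mu}$, and the fact that $\vec{p}^{\,\st}$ is the unique stalling state'') is precisely the terse version of what you spell out, namely that vanishing marginal currents force $\vec{p}$ to be a steady state of $\W_{\mathrm{hid}}$, whence $\vec{p}=\vec{p}^{\,\st}$ by uniqueness, and conversely. Your closing caveat about the disconnected case is also on point and is exactly the subtlety the paper defers to Sec.~\ref{disconnect}.
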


\begin{proof}
This is evident given the explicit expression for the current $\phi_\mu = w_{\!i_\mu j_\mu} p_{j_\mu} - w_{\!j_\mu i_\mu} p_{i_\mu}$, and the fact that $\vec{p}^{\,\st}$ is the unique stalling state.
\end{proof}

Notice that, with the hierarchy of theories in mind, the set of rates for which current $\phi_1$ vanishes includes the set of rates for which currents $\phi_1,\phi_2$ vanish, and so on, until one reaches equilibrium. In this respect, equilibrium is just the stalling state of a marginal theory that happens to be complete.

Let us now parametrize the rates $w_{\ij} \to w_{\ij}(\x)$ by a set of parameters $\x = (\{x_\mu\}_\mu,\rest)$ such that the first $|\mu|$ are {\it marginally thermodynamic} in the sense that
\begin{align}
\frac{w_{i_\mu j_\mu}(\x)}{w_{j_\mu i_\mu}(\x)} = \exp x_\mu \label{eq:mupar}
\end{align}
and that no other rate depends on $x_\mu$. The effective affinity now takes the form
\begin{align}
\Q_\mu(x) = x_\mu - x^\st_\mu(\rest) \label{eq:effaffeff}
\end{align}
where
\begin{align}
x^\st_\mu(\rest) := \log \frac{p_{j_\mu}^\st(\rest)}{p_{i_\mu}^\st(\rest)}.
\end{align}
Notice that the stalling value of the $\mu$-th parameter $x_\mu^\st(\rest)$ might depend on the other parameters. The following result is crucial for an operational definition of the effective affinities.
\begin{theorem}
\label{th:tuning}
At fixed dynamic parameters $\rest$, the stalling steady state obtained by removing the observable edges is the same as the stalling steady state obtained by tuning the thermodynamic parameters to the stalling values $x_\mu \stackrel{!}{=} x_\mu^\st(\rest)$.
\end{theorem}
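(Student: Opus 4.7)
The plan is to exploit the decomposition $\W(\x) = \W_{\mathrm{mar}}(\x) + \W_{\mathrm{hid}}$ from Eq.\,(\ref{eq:dede}) together with the characterization of stalling in \Th{iffstalling}. First I would note that under the marginally thermodynamic parametrization Eq.\,(\ref{eq:mupar}), by construction $\W_{\mathrm{hid}}$ does \emph{not} depend on any of the $\{x_\mu\}$, since it is obtained by zeroing precisely the rates along the observable edges. Consequently, its unique normalized steady state $\vec{p}^{\,\st}$, and hence the values $x_\mu^\st(\rest) = \log(p^\st_{j_\mu}/p^\st_{i_\mu})$, are functions of $\rest$ only. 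This guarantees that the tuning prescription $x_\mu \stackrel{!}{=} x_\mu^\st(\rest)$ is well-posed: the right-hand side does not itself depend on $\{x_\mu\}$, so there is a genuine point $\x^\st = (\{x_\mu^\st(\rest)\},\rest)$ in parameter space at which the full generator $\W(\x^\st)$ is defined.

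Next I would show that $\vec{p}^{\,\st}$ is a steady state of the full generator at the tuned point. By Eq.\,(\ref{eq:effaffeff}), setting $x_\mu = x_\mu^\st(\rest)$ gives $\Q_\mu = 0$ for every $\mu$, which by the explicit form Eq.\,(\ref{eq:effaffmu}) of the effective affinity is equivalent to the edge-wise detailed-balance-like conditions
\begin{align}
w_{i_\mu j_\mu}(\x^\st)\, p^\st_{j_\mu} \;=\; w_{j_\mu i_\mu}(\x^\st)\, p^\st_{i_\mu}, \qquad \forall\, \mu.
\end{align}
Since $\W_{\mathrm{mar}}(\x)\vec{p}$ has entries supported only on the tips and tails of the observable edges and these entries are precisely the instantaneous currents $\pm(w_{i_\mu j_\mu}p_{j_\mu} - w_{j_\mu i_\mu} p_{i_\mu})$, the above identities yield $\W_{\mathrm{mar}}(\x^\st)\,\vec{p}^{\,\st} = 0$. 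Combined with $\W_{\mathrm{hid}}\,\vec{p}^{\,\st} = 0$, the decomposition Eq.\,(\ref{eq:dede}) implies
\begin{align}
\W(\x^\st)\,\vec{p}^{\,\st} = \W_{\mathrm{mar}}(\x^\st)\,\vec{p}^{\,\st} + \W_{\mathrm{hid}}\,\vec{p}^{\,\st} = 0.
\end{align}

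Finally, under the standing connectivity assumption the full generator $\W(\x^\st)$ admits a unique normalized right null vector, so the steady state reached by the tuned full dynamics must coincide with $\vec{p}^{\,\st}$, which completes the argument. The only potentially delicate point is the logical circularity that one might worry about: the tuning value $x_\mu^\st(\rest)$ is defined in terms of $\vec{p}^{\,\st}$, and one might fear that tuning in turn modifies $\vec{p}^{\,\st}$. The subtlety is resolved precisely by the locality of the parametrization assumed in Eq.\,(\ref{eq:mupar}), which ensures that modifications of $\{x_\mu\}$ never feed back into $\W_{\mathrm{hid}}$ and hence leave $\vec{p}^{\,\st}$ untouched, so the fixed-point condition $\Q_\mu=0$ pins down a unique, consistent tuning point.
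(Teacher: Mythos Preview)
Your proof is correct and follows essentially the same route as the paper: decompose $\W(\x) = \W_{\mathrm{mar}}(\x) + \W_{\mathrm{hid}}(\rest)$, use $\W_{\mathrm{hid}}\vec{p}^{\,\st}=0$, and show that at $x_\mu = x_\mu^\st(\rest)$ the vanishing of the effective affinities kills $\W_{\mathrm{mar}}(\x^\st)\vec{p}^{\,\st}$ as well, then conclude by uniqueness of the steady state. Your additional remarks on the well-posedness of the tuning (the non-circularity argument) and the explicit appeal to uniqueness are helpful elaborations that the paper leaves implicit.
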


\begin{proof}
Let us evaluate the forward generator on the stalling steady state:
\begin{align}
\W(\x) \vec{p}^{\,\st}(\x) & = [\W_{\mathrm{hid}}(\rest) + \W_{\mathrm{mar}}(\x)] \vec{p}^{\,\st} \nonumber \\
& = \W_{\mathrm{mar}}(\x)) \vec{p}^{\,\st}\nonumber \\ & = (0, \ldots,  \phi_\mu(\x) , \ldots, -\phi_{\mu}(\x), \ldots , 0)^T 
\end{align}
where we used the definition of stalling $\W_{\mathrm{hid}} \vec{p}^{\,\st} = 0$ and the fact that the hidden generator does not depend on the thermodynamic parameters. In the resulting vector we only highlighted the entries corresponding to $i_\mu$ and $j_\mu$.
 Then, in the light of \Th{iffstalling} and of Eq.\,(\ref{eq:effaffeff}), this expression vanishes whenever the effective affinities vanish.
\end{proof}

We now exploit the marginal fluctuation  symmetry Eq.\,(\ref{eq:lss}), that in its parametric form reads
\begin{align}
\widetilde{\lambda}\left(\{q_\mu\};\bs{x}\right) = \lambda\left(\{x_\mu - x_\mu^{\mathrm{st}}(\rest) - q_\mu\};\bs{x}\right).
\end{align}
Taking the second mixed derivatives with respect to $q_\mu,q_\nu,x_\mu,x_\nu$ (omitting explicit dependencies) we obtain
\begin{subequations}
\begin{align}
\frac{\partial^2 \widetilde{\lambda}}{\partial q_\mu q_\nu} & = \frac{\partial^2 \lambda}{\partial q_\mu q_\nu},  \\
\frac{\partial^2 \widetilde{\lambda}}{\partial q_\mu \partial x_\nu} & =  -\frac{\partial^2 \lambda}{\partial q_\mu \partial q_\nu} - \frac{\partial^2 \lambda}{\partial q_\mu \partial x_\nu},  \\
\frac{\partial^2 \widetilde{\lambda}}{\partial x_\nu \partial x_\mu} & = \frac{\partial^2 \lambda}{\partial x_\nu \partial x_\mu} + \frac{\partial^2 \lambda}{\partial q_\mu \partial q_\nu} +   \frac{\partial^2 \lambda}{\partial x_\mu \partial q_\nu} + \frac{\partial^2 \lambda}{\partial q_\mu \partial x_\nu}.
\end{align}
\end{subequations}
Evaluating at $\{q_\mu = 0\}$ and at $\{x_\mu = x_\mu^\st\}$ we obtain the marginal FDRs
\begin{subequations}
\begin{align}
\widetilde{\ccur}_{\mu\nu} = \ccur_{\mu\nu} & = \lcur_{\nu\,;\,\mu}^{\st} + \lcur_{\mu\,;\,\nu}^{\st} \\
& = \widetilde{\lcur}_{\nu\,;\,\mu}^{\st} + \widetilde{\lcur}_{\mu\,;\,\nu}^{\st} \\
& =  \lcur_{\mu\,;\,\nu}^\st +  \widetilde{\lcur}_{\mu\,;\,\nu}^\st \\
& = \lcur_{\nu\,;\,\mu}^\st +  \widetilde{\lcur}_{\nu\,;\,\mu}^\st
\end{align}
\end{subequations}
and the marginal RRs
\begin{subequations}
\begin{align}
\lcur_{\mu\,;\,\nu}^\st  = \widetilde{\lcur}_{\nu\,;\,\mu}^\st.
\end{align}
\end{subequations}
We can also consider the response to a perturbation of the $\kappa$-th kinetic parameter. By taking the first total derivative with respect to $x_\kappa$ and evaluating at $\{q_\mu = 0\}$ we obtain the orthogonality relation
\begin{align}
\sum_\mu \cur_\mu \frac{d x_\mu^{\st}}{dx_\kappa} = 0.
\end{align}
The second mixed derivative with respect to $q_\mu,x_\kappa$ and to  $x_\kappa,x_{\kappa'}$ yield
\begin{align}
\lcur_{\mu\,;\,\kappa}  + \widetilde{\lcur}_{\mu\,;\,\kappa}  & = \sum_\nu \ccur_{\mu\nu} \frac{d x_\nu^{\st}}{dx_\kappa} \\
\sum_\mu \lcur_{\mu\,;\,\kappa}  \frac{d x_\mu^{\st}}{dx_{\kappa'}} + \sum_\mu \lcur_{\mu\,;\,\kappa'} \frac{d x_\mu^{\st}}{dx_{\kappa}} & = \sum_{\mu,\nu} \ccur_{\mu\nu}  \frac{d x_\nu^{\st}}{dx_{\kappa}}  \frac{d x_\mu^{\st}}{dx_{\kappa'}}.
\end{align}
As regards higher-order response relations, it is clear that the treatment given in \S\,\ref{subsec:response} can all be reproduced, with the only {\it caveat} that on the right-hand side of Eq.\,(\ref{eq:higher}) there should appear the response coefficients of the hidden TR dynamics. We will not do this explicitly.

\subsection{Inequalities between effective affinities}
\label{sec:continuum}

\begin{figure}
\centering
\includegraphics[width=.45\columnwidth]{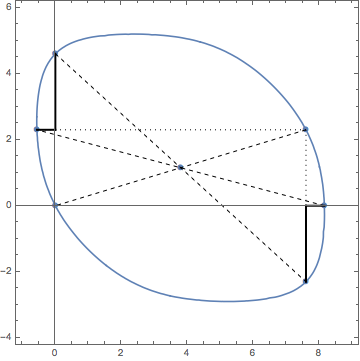}\hspace{.2cm}\includegraphics[width=.45\columnwidth]{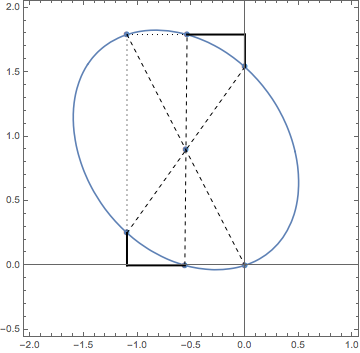}
\caption{ \label{fig:scgf}
Level curve $(q^\ast_1,q_\ast^2)$ where $\lambda(q^\ast_1,q^\ast_2) = 0$, for two models. Bullets along the curves, starting from the origin and going counterclockwise, are for the plot on the left $(0,0)$, $(\A_1,\A_2-\Q_2)$, $(\Q_1,0)$, $(\A_1,\A_2)$, $(0,\Q_2)$, $(\A_1-\Q_1,\A_2)$, and for the plot on the right $(0,0)$, $(0,\Q_2)$, $(\A_1-\Q_1,\A_2)$, $(\A_1,\A_2)$, $(\A_1,\A_2-\Q_2)$, $(\Q_1,0)$. 
At the center of the figures is the symmetry point $(\A_1/2,\A_2/2)$. The model on the left has rates $w_{21} = w_{14} = w_{32} = w_{14} = w_{43} = w_{42} = 1; w_{12} = w_{41} = w_{23} = 10; w_{34} = w_{24} = 20$. The model on the right has rates $w_{21} = w_{41} = w_{23} = w_{43} = 1; w_{14} = w_{32} = w_{24} = 10; w_{34} = 20; w_{42} = 30$.}
\end{figure}

In this section we briefly sketch some ideas on how to derive inequalities between effective affinities at different orders in the hierarchy. 

We first consider the case of a system with two ``real'' affinities $\A_1$ ($= \Q^{1,2}_1$) and $\A_2$ ($=\Q^{1,2}_2$) supported on the chords $i_1j_1$ and $i_2j_2$. Two marginal theories can be considered, one with effective affinity $\Q_1$ ($= \Q^{1}_1$) defined along edge $i_1j_1$ and one with $\Q_2$ ($=\Q_2^2$) defined along edge $i_2j_2$.

\begin{theorem}
Either both ``real'' affinities are smaller in modulus than the effective affinities, $|\A_1| \leq |\Q_1|$ and $|\A_2| \leq |\Q_2|$, or they are both larger, $|\A_1| \geq |\Q_1|$ or $|\A_2| \geq |\Q_2|$. 
\end{theorem}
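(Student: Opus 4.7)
The plan is to exploit the convex geometry of the joint SCGF $\lambda(q_1,q_2)$ of the two chord currents. First I would collect four known zeros of $\lambda$: the trivial $\lambda(0,0)=0$, the ``complete'' FR zero $\lambda(\A_1,\A_2)=0$ following from the fluctuation symmetry Eq.\,(\ref{eq:lebspo}), and the marginal zeros $\lambda(\Q_1,0)=0$ and $\lambda(0,\Q_2)=0$ obtained by applying the single-current IFR \Th{th:ift1} separately to each chord (equivalently, setting $q_\nu=0$ in the tilted operator, which by the contraction principle returns the SCGF of the surviving marginal theory). Here $\Q_1=\Q_1^{1}$ and $\Q_2=\Q_2^{2}$ are the level-$1$ effective affinities, and \Th{th:unitary} guarantees that the nontrivial marginal zero on each axis is unique.

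Next I would use the fluctuation symmetry $\lambda(q_1,q_2)=\lambda(\A_1-q_1,\A_2-q_2)$ evaluated at the off-axis point $(\A_1,0)$ to produce the key identity
\begin{align}
\lambda(\A_1,0) \;=\; \lambda(0,\A_2).
\end{align}
Because $\lambda$ is convex, its sublevel set $\{\lambda\le 0\}$ is convex; restricted to the axis $q_2=0$, the convex function $q_1\mapsto\lambda(q_1,0)$ is nonpositive precisely on the closed segment with endpoints $0$ and $\Q_1$, and the analogous statement holds on $q_1=0$. Hence $\lambda(\A_1,0)\le 0$ iff $\A_1$ lies between $0$ and $\Q_1$, i.e.\ $\A_1 \Q_1\ge 0$ together with $|\A_1|\le|\Q_1|$; and similarly for $(0,\A_2)$ with respect to $\Q_2$.

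Combining these facts, in the regime where $\A_\mu$ and $\Q_\mu$ share a sign on each chord---a situation that mirrors \Th{th:bigger} and can be justified by the Hill-flux decomposition of Sec.\,\ref{sec:cycle}---the two ``between'' conditions reduce to $|\A_\mu|\le|\Q_\mu|$ for $\mu=1,2$, and the key identity makes them equivalent. This immediately yields the stated dichotomy. The step I expect to be the main obstacle is the opposite-sign case, when transduction through the hidden sector can flip the sign of an effective affinity relative to its ``real'' counterpart: there ``between $0$ and $\Q_\mu$'' no longer translates directly into an inequality of moduli, and closing the argument will require either a dedicated sign-matching lemma that ties the sign of $\Q_\mu$ to that of $\A_\mu$ via the topology of the hidden subgraph, or a mild refinement of the statement to insert an explicit sign hypothesis $\A_\mu\Q_\mu\ge 0$.
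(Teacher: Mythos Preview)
Your approach is in the same spirit as the paper's: both arguments rest on the convexity of the zero level set of $\lambda(q_1,q_2)$ and on placing the points $(0,0)$, $(\A_1,\A_2)$, $(\Q_1,0)$, $(0,\Q_2)$ on that curve. The paper additionally uses the FR-mirror images $(\A_1-\Q_1,\A_2)$ and $(\A_1,\A_2-\Q_2)$ and argues geometrically from the convexity of the closed curve through these six points; your route is slightly different and more analytic, reducing the two-dimensional geometry to the one-dimensional observation that $q_1\mapsto\lambda(q_1,0)$ is convex with zeros at $0$ and $\Q_1$, combined with the identity $\lambda(\A_1,0)=\lambda(0,\A_2)$ obtained from the fluctuation symmetry. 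This is a clean reformulation that the paper does not make explicit, and in the same-sign regime it delivers the dichotomy in one line.

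Both arguments, however, stop at the same place. The paper's own proof is explicitly only ``an idea of the proof based on visual inspection'' and defers the remaining sign cases to the reader; likewise, your ``between $0$ and $\Q_\mu$'' condition is equivalent to $|\A_\mu|\le|\Q_\mu|$ \emph{together with} $\A_\mu\Q_\mu\ge 0$, so the opposite-sign case is genuinely left open in both treatments. Your invocation of \Th{th:bigger} does not close that gap: that proposition compares $\Q$ with the \emph{edge} force $\F$, not with the cycle affinity $\A$, and the Hill decomposition of Sec.\,\ref{sec:cycle} does not by itself pin the sign of $\Q_\mu$ to that of $\A_\mu$ once a second chord is present. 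So your self-diagnosis is accurate: the sign-matching step is the missing lemma, and it is missing from the paper's sketch as well.
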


\begin{proof}
We sketch an idea of the proof based on visual inspection of the two plots in Fig.\,\ref{fig:scgf}. They represent level curves where the SCGF vanishes, $\lambda(q^\ast_1,q^\ast_2) = 0$, for two specific models. Six points are highlighted along the curve: the origin $(0,0)$ the effective affinities $(\Q_1,0)$ and $(0,\Q_2)$, and their dual under the mirror symmetry with respect to the point $(\A_1/2,\A_2/2)$, which need to belong to the curve as a consequence of the fluctuation symmetry. Any level curve of a convex function is a convex set. Consider the two highlighted segments. Their coordinates are either $\Q_1-\A_1 > 0$ and $\Q_2-\A_2 >0$ for the left-hand curve, or $\A_2-\Q_2 > 0, \Q_1-\Q_1 >0$ for that on the right. Since for the right curve both $\A_1,\Q_1<0$, we are consistent with the claim of the proposition. A similar analysis of other special cases will convince the reader it is not possible to draw six such special points along a convex curve that do not satisfy the claim.
\end{proof}

\begin{figure}
\centering
\includegraphics[width=\columnwidth]{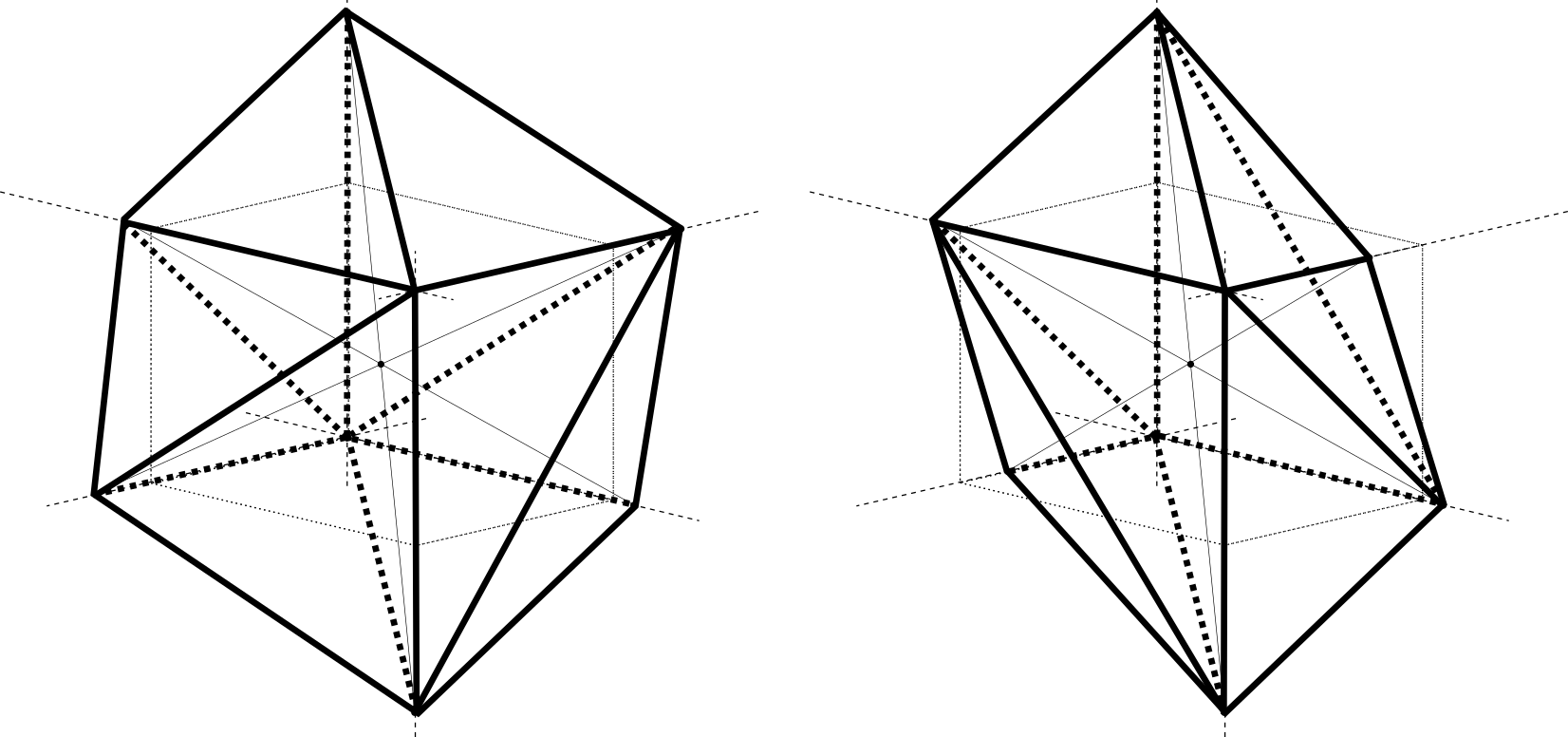}
\caption{ \label{fig:3d} In the space $(q_1,q_2,q_3)$, two polytopes having the effective affinities corresponding to level $|\alpha| = 1$ and $|\alpha| = 3$ marginal theories as their vertices. The image illustrates the conjectured fact that there always exist convex Delaunay triangulations of the vertices for any values of such effective affinities. However, the affinities corresponding to the $|\alpha| = 2$ will need to satisfy constraints.}
\end{figure}

Pushing this method beyond $|\alpha| = 2$ is far from trivial and it would make for an interesting mathematical inquiry. Let us briefly sketch some thoughts concerning the case $|\alpha| = 3$. In Fig.\,\ref{fig:3d} we plot the polytopes obtained by triangulations of the points $(0,0,0)$ --- the universal zero of the SCGF ---, the point $(\Q_1^{1,2,3},\Q_2^{1,2,3},\Q_3^{1,2,3}) =: \mathrm{A}$ --- the ``real'' affinities, the points $(\Q_1^1,0,0)$, $(0,\Q_2^2,0)$, $(0,0,\Q_3^3)$ --- the effective affinities of the single-edge marginal theories --- and their mirror images through point $1/2\mathrm{A}$ --- given by $(\Q_1^3- \Q_1^1,\Q_2^3,\Q_3^3)$,  $(\Q_1^3,\Q_2^3- \Q_2^1,\Q_3^3)$, $(\Q_1^{1,2,3},\Q_2^{1,2,3},\Q_3^{1,2,3}- \Q_3^1)$. In the plots, all (effective and ``real'') affinities are positive, without loss of generality. In the left-hand plot effective affinities are all larger than the ``real'' affinities, ($\Q^1_1 > \Q^{1,2,3}_1$, $\Q^2_2 > \Q_2^{1,2,3}$, $\Q_3^3 > \Q_3^{1,2,3}$), in the right-hand one one of the effective affinities is smaller, ($\Q^1_1 < \Q^{1,2,3}_1$, $\Q^2_2 > \Q^{1,2,3}_2$, $\Q^3_3 > \A^{1,2,3}_3$). In both cases one can find a convex envelope that includes all such points. Now consider that must belong to such envelope also the six further points corresponding to the second level in the hierarchy $|\mu|=2$, that is the three $(\Q^{1,2}_1,\Q^{1,2}_2,0)$, $(0,\Q^{2,3}_2,\Q_3^{2,3})$ and $(\Q^{1,3}_1,0,\Q_3^{1,3})$, and their FR-duals 
$(\Q_1^{1,2,3}-\Q^{1,2}_1,\Q_2^{1,2,3}-\Q^{1,2}_2,0)$, $(0,\Q_2^{1,2,3}-\Q^{2,3}_2,\Q_3^{1,2,3}-\Q_3^{2,3})$ and $(\Q_1^{1,2,3}-\Q^{1,3}_1,0,\Q_3^{1,2,3}-\Q_3^{1,3})$. It is clear that requiring that there exists a convex polytope that also includes such points as corners would pose strict inter-affinity relationships, whose determination however goes far astray from our scope.

\subsection{Gauge invariance}
\label{sec:gauge}

We have the following crucial result.

\begin{theorem}
\label{th:gauge}
Under the condition that the removal of the marginal edges does not disconnect the network, the effective affinities are invariant under a gauge transformation of the rates $w_{\ij} \to w'_{\ij} = w_{\ij} \,e^{-a_j}$.
\end{theorem}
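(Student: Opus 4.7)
The plan is to combine the matrix-tree formula for the stalling state with a direct calculation of how rooted spanning trees transform under the gauge. The key observation is that in a tree $\T_{\!i}$ rooted at $i$ every non-root site $k$ appears exactly once as the tail of a directed edge, so replacing each rate $w_{\ij}$ by $w_{\ij}\,e^{-a_j}$ multiplies the entire product along the tree by $e^{-\sum_{k\neq i} a_k}$, independently of which particular tree is chosen.

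First, invoke the connectedness hypothesis: since the hidden graph $\G\setminus\{\ij_\mu\}$ remains connected, the hidden generator $\W_{\mathrm{hid}}$ admits a unique normalized stalling ensemble $\vec{p}^{\,\st}$, and by Eq.\,(\ref{eq:tree}) applied to $\G_{\mathrm{hid}}$ we have $p^{\st}_i = \tau_i(\G_{\mathrm{hid}})/\tau(\G_{\mathrm{hid}})$. Under the gauge transformation $w_{\ij}\to w'_{\ij} = w_{\ij}\,e^{-a_j}$, by the observation above,
\begin{align}
\tau'_i(\G_{\mathrm{hid}}) = \sum_{\T_{\!i}\subseteq\E_{\mathrm{hid}}} \prod_{\ij\in\T_{\!i}} w_{\ij}\,e^{-a_j} = e^{\,a_i - A}\,\tau_i(\G_{\mathrm{hid}}),
\end{align}
where $A := \sum_k a_k$ is a global constant. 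Taking the ratio, the $e^{-A}$ cancels in normalization and one finds
\begin{align}
\frac{p'^{\,\st}_j}{p'^{\,\st}_i} = e^{\,a_j - a_i}\,\frac{p^{\,\st}_j}{p^{\,\st}_i}.
\end{align}

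Finally, insert this into the defining formula (\ref{eq:effaffmu}) for $\Q_\mu$:
\begin{align}
\Q'_\mu = \log \frac{w_{i_\mu j_\mu}\,e^{-a_{j_\mu}}\,p'^{\,\st}_{j_\mu}}{w_{j_\mu i_\mu}\,e^{-a_{i_\mu}}\,p'^{\,\st}_{i_\mu}} = \log\frac{w_{i_\mu j_\mu}\,p^{\,\st}_{j_\mu}}{w_{j_\mu i_\mu}\,p^{\,\st}_{i_\mu}} = \Q_\mu,
\end{align}
since the gauge factors $e^{-a_{j_\mu}}\,e^{a_{j_\mu}-a_{i_\mu}}$ in the numerator collapse to $e^{-a_{i_\mu}}$, matching the denominator. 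The only mild subtlety, which I would flag explicitly, is that the connectedness assumption is essential: without it the hidden steady state is not unique, the matrix-tree identity splits into independent contributions over the connected components, and the ratio $p^{\st}_{j_\mu}/p^{\st}_{i_\mu}$ is not defined unless $i_\mu$ and $j_\mu$ happen to sit in the same component. In that generalized setting the argument still goes through component by component, but the statement must be phrased with care.
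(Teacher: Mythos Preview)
Your proof is correct and follows essentially the same route as the paper's own argument: both invoke the matrix-tree representation of the stalling state on the hidden graph, use the fact that in a rooted spanning tree every non-root site is the tail of exactly one edge to show that $\tau_i$ picks up a root-dependent factor under the gauge, and then observe that this factor cancels in the ratio defining $\Q_\mu$. Your write-up is in fact more explicit than the paper's (which compresses the final cancellation into a single line), and your closing remark on the role of connectedness is a useful addition.
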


\begin{proof}
By the matrix-tree theorem, the effective affinities write in terms of rooted oriented spanning trees on the graph where the observable edges are removed.  Notice that for every rooted oriented spanning tree $\T_{\!k}$ with root at site $k$, the gauge transformation gives  
\begin{align}
\ST'(\T_{\!k}) = \prod_{\ij \in \T_{\!k}} w'_{\ij} = \prod_{i \neq k} e^{a_i} \prod_{\ij \in \T_{\!k}} w_{\ij} =  e^{-a_k + a} \, \ST(\T_{\!k})
\end{align}
where $a = \sum_i a_i$. This is due to the fact that in a rooted oriented spanning tree, each site but the root is the tail of exactly one directed edge. This factor drops from the expression of the steady probability because of normalization. Thus
\begin{align}
w'_{\ij} p^{\st'}_j = w'_{\ji} p^{\st'}_i
\end{align}
and therefore the effective affinities are invariant.
\end{proof}
Since currents are also gauge-invariant, it follows that the effective EPR is also invariant. Gauges affect the shape of the initial state that needs to be chosen for an all-time FR to hold. 

A different, but related, acceptation of gauge invariance of the SCGF is that put forward by Wachtel et al. \cite{altaner15a}. Let us review it here for it will have implications as regards the marginal theory, that we will analyze in Sec.\,\ref{disconnect}.
\begin{theorem}
\label{th:gaugtran}
The SCGF for all the currents $\lambda(\{q_{\ij}\}_{\ij \in \E})$ is invariant under the  \emph{gauge transformation} $q_{\ij} \to q_{\ij} + \ssum \cocyc_{\ij}^{\alpha^\star} Q_{\alpha^\star}$, where the vectors $\{\cocyc_{\ij}^{\alpha^\star}\}_{\ij^\star}$ span the cocycle space of the graph.
\end{theorem}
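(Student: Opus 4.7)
The plan is to show that the claimed gauge transformation of the counting fields corresponds to a similarity transformation of the tilted operator, which therefore preserves its spectrum, and in particular the dominant Perron--Frobenius eigenvalue that yields the SCGF.

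First I would recall from \S\ref{par:cocycle} that the cocycle space equals the image of $\partial^T$: every cocycle vector $\bs{\cocyc}^{\alpha^\star}$ can be written as $\cocyc^{\alpha^\star}_{\ij} = a^{\alpha^\star}_i - a^{\alpha^\star}_j$ for some vertex function $(a^{\alpha^\star}_i)_i$, in accordance with the convention fixing the orientation of the edges. By linearity, it suffices to prove invariance for a single cocycle $\cocyc_{\ij} = a_i - a_j$ with tilting shift $Q$; the general statement then follows by iteration over an arbitrary basis of $\mathrm{im}\,\partial^T$.

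Next I would compute directly how the entries of the tilted operator transform under $q_{\ij} \to q_{\ij} + Q(a_i - a_j)$. The off-diagonal entries become
\begin{align}
w_{\ij}\, e^{q_{\ij} + Q(a_i - a_j)} = e^{Qa_i}\,w_{\ij}\,e^{q_{\ij}}\,e^{-Qa_j},\nonumber
\end{align}
while the diagonal entries $-w_i$ are trivially unchanged because $e^{Qa_i}e^{-Qa_i}=1$. Introducing the positive-definite diagonal matrix $\matrix{A} := \mathrm{diag}(e^{Qa_i})_i$, this is exactly the similarity transformation
\begin{align}
\M(\{q_{\ij} + Q(a_i - a_j)\}) = \matrix{A}\,\M(\{q_{\ij}\})\,\matrix{A}^{-1}.\nonumber
\end{align}

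Since similar matrices share their entire spectrum, they share in particular the unique Perron--Frobenius eigenvalue with largest real part, which is the SCGF $\lambda(\{q_{\ij}\})$. Therefore $\lambda$ is invariant under the transformation. There is no real obstacle here beyond correctly identifying cocycles with vertex-potential differences; the invariance is essentially an algebraic manifestation of the fact that counting fields differing by a discrete gradient yield the same winding number along any closed trajectory, so that the cumulant statistics of the time-extensive currents cannot distinguish between them.
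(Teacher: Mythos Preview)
Your proof is correct and takes a genuinely different route from the paper's own argument. The paper works combinatorially: it expands the full characteristic polynomial $\Delta(s;\{q_{\ij}\}) = \det(\M(\{q_{\ij}\}) - s\matrix{I})$ as a sum over permutations, decomposes each permutation into cycles, and then observes that the extra exponential factor picked up by each cycle is $\exp\big(-\sum_{\alpha^\star} Q_{\alpha^\star}\sum_{\ij} \cocyc^{\alpha^\star}_{\ij} \cyc^{\ij}_\iota(\pi)\big) = 1$ by the orthogonality of cycles and cocycles. Hence the entire characteristic polynomial, and with it every eigenvalue, is invariant.

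Your approach instead identifies $\sum_{\alpha^\star} \cocyc^{\alpha^\star}_{\ij} Q_{\alpha^\star}$ with a discrete gradient $a_i - a_j$ (which is exactly the statement that the cocycle space is $\mathrm{im}\,\partial^T$), and then recognises the transformed tilted operator as $\matrix{A}\,\M\,\matrix{A}^{-1}$ for a diagonal $\matrix{A}$. This is more elementary and transparent: it bypasses the permutation expansion entirely and makes it immediate that the whole spectrum, not just the Perron root, is preserved. The paper's route has the complementary virtue of making the cycle/cocycle orthogonality appear explicitly as the algebraic mechanism, which ties nicely into the Schnakenberg picture emphasised elsewhere in the text. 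One minor remark: in the paper's sign convention the off-diagonal entry $(i,j)$ of $\M$ carries $e^{-q_{\ij}}$ rather than $e^{+q_{\ij}}$, so your similarity reads $\matrix{A}^{-1}\M\matrix{A}$ rather than $\matrix{A}\M\matrix{A}^{-1}$; this is cosmetic and does not affect the conclusion.
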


\begin{proof}
Let us omit to specify $\ij \in \E$. Using a trick employed in Ref.\,\cite{andrieux} to prove the FR, we consider the characteristic polynomial of tilted generator
\begin{align}
\Delta\left(s;\{q_{\ij}\}\right) := \det \Big(  \W(\{q_{\ij}\};\bs{x}) - s \matrix{I} \Big).  \label{eq:cp}
\end{align}
We employ the the well-known expansion of the determinant of a matrix $\matrix{A}$ \cite{skiena}
\begin{align}
\det\,\matrix{A} = \sum_{\pi} (-1)^{|\pi|}  \prod_{i \in \I} A_{i,\pi(i)} 
\end{align}
in terms of permutations $\pi: \I \to \I$ the graph's sites, where $|\pi|$ is the permutation's parity. Any permutation admits a cycle decomposition \cite{skiena} in terms of singlets $i' \in \I^{\mathrm{sing}}(\pi) \subseteq \I$ such that $\pi(i') = i'$, and a certain number $|\iota|$ of simple cycles, $\C_\iota(\pi)$, which do not cross each other and which cover all of the remaining sites in $\I \setminus \I^{\mathrm{sing}}(\pi)$. One then obtains
\begin{align}
\prod_{i \in \I} A_{i,\pi(i)} = \prod_{i' \in \I^{\mathrm{sing}}(\pi)} A_{i'i'} \cdot \prod_\iota \prod_{\ij} A_{\ij}^{\cyc^{\ij}_\iota(\pi) },
\end{align}
yielding
\begin{multline}
\Delta\left(s;\big\{q_{\ij} + \ssum \cocyc_{\ij}^{\alpha^\star} Q_{\alpha^\star}\big\}\right)  \\
=  \sum_{\pi} (-1)^{|\pi|}  \prod_{i' \in \I^{\mathrm{sing}}(\pi)}  (- s- w_{i'})  \cdot \prod_\iota \prod_{\ij} \left( w_{\ij} e^{- q_{\ij}} \right)^{\cyc^{\ij}_\iota(\pi)}  \\
\times \exp {- \sum_{\alpha^\star,\iota} Q_{\alpha^\star} \sum_{\ij} \cocyc^{\alpha^\star}_{\ij} \cyc^{\ij}_\iota(\pi)} .
\end{multline}
The latter term is unity because cycles are orthogonal to cocycles, and therefore the characteristic polynomial is gauge invariant:
\begin{align}
\Delta\left(s;\big\{q_{\ij} + \ssum \cyc_{\ij}^{\alpha^\star} Q_{\alpha^\star}\big\}\right) = \Delta\left(s;\big\{q_{\ij}\big\}\right)
\end{align}
In fact, it follows that all of the eigenvalues are also gauge invariant, including the SCGF.
\end{proof}

The relationship between gauge invariance of cycle affinities and gauge invariance of the SCGF is as follows. A gauge transformation of the rates as described above transforms the log-ratio of the rates as $\log w_{\ij}/w_{\ji} \to \log w_{\ij}/w_{\ji} + a_i - a_j$ (a ``gauge connection''). Any such transformation can be seen as a cocyclic transformation, as it can be shown that $a_i - a_j  = \sum_{\alpha^\star}  \cyc_{\ij}^{\alpha^\star} a_{\alpha^\star}$ for some $a_{\alpha^\star}$. Indeed, using the FR for all the currents and imposing that it only depends on the cycle affinities, one obtains the latter invariance.

\subsection{Marginally thermodynamic parametrizations}
\label{parametrizations}

The local parametrization Eq.\,(\ref{eq:mupar}) automatically complies with the requirement B0 mentioned in the Parode. In this paragraph we analyze in more detail which parametrizations are {\it thermodynamic} in the sense that they satisfy the requirement A0 for the ``complete'' theory and, within such parametrization schemes, which are {\it marginally thermodynamic}. More precisely, in the first case our goal is to find the most general parametrization of the rates $\bs{x} \to w_{i\!j}(\bs{x})$ such that
\begin{align}
\frac{\partial}{\partial x_{\kappa}} \A_\alpha(\bs{x})= \delta_{\alpha,\kappa}. 
\label{eq:dissi1}
\end{align}
This question will find a conclusive answer in \Th{th:thermodynamic}, while \Th{th:mardiss1} characterizes the marginal parametrizations that satisfy
\begin{align}
\frac{\partial}{\partial x_{\mu'}} \Q_\mu(\bs{x})= \delta_{\mu,\mu'} . \label{eq:dissi2}
\end{align}
Notice that while Eq.\,(\ref{eq:dissi1}) involves all of the parameters, the second is much looser since it only involves the first $|\mu|$ parameters, and in general
\begin{align}
\frac{\partial}{\partial x_{\kappa}} \Q_\mu(\bs{x}) \neq 0, \qquad   \kappa > |\mu|. \label{eq:dissi3}
\end{align}

Let us now pave the way to \Th{th:thermodynamic} and \Th{th:mardiss1}. It is an obvious fact that transition rates can be uniquely decomposed as
\begin{align}
w_{i\!j}(\bs{x}) = v_{i\!j}(\bs{x}) \, e^{- a_{i\!j}(\bs{x})}, \label{eq:syan}
\end{align}
with an antisymmetric term $a_{i\!j} = - a_{\!ji}$ and a symmetric positive term $v_{i\!j} = v_{\!ji} > 0$. Clearly, by definition ``real'' affinities are only sensible to transformations of the antisymmetric part. Furthermore, as explained in Sec.\,\ref{sec:gauge}, they are invariant under so-called {\it gauge transformations} $a_{i\!j} \to a'_{i\!j}  = a_{i\!j} + a_i  - a_j$ \cite{polettinigauge,altaner15a}, hence the general dependence that the symmetric part of the rates can take is 
\begin{align}
a_{i\!j}(\bs{x}) = a'_{i\!j}(\{x_\alpha\}) + a_j(\bs{x}) - a_i(\bs{x}). \label{eq:transgen}
\end{align}
in terms of some {\it gauge connection} $a'_{i\!j}$ that only depends on the first $|\alpha|$ parameters, and of a {\it pure gauge} $a_j$ on which, for the moment, we impose no restriction. Notice that the affinity can be written in terms of the gauge connection only:
\begin{align}
\A_\alpha(\{x_{\alpha'}\})  = \sum_{i\!j \in \C_\alpha} a'_{i\!j}(\{x_{\alpha'}\})  \label{eq:Agf}.
\end{align}
Since the effect of a variation of the thermodynamic parameters on the affinities is via the gauge connection, we now need to impose that variations of $x_\alpha$ only modify the $\alpha$-th affinity according to Eq.\,(\ref{eq:dissi1}). The basic intuition is that any perturbation of the gauge connection along edges that are shared among different cycles will modify the affinities of both cycles. Then let us then introduce the {\it peripheral} set as the set of edges that belong to cycle $\C_\alpha$ and to no other fundamental cycle $\mathscr{D}_\alpha = \C_\alpha \cap (\mathcal{E} \setminus \bigcup_{\alpha' \neq \alpha} \C_{\alpha'})$. In practice, these consist of the generating chord $e_\alpha$ and of all edges that are only separated from the chord by vertices of degree $2$  (sometimes called bridges \cite{altanerPRL}). Then, we can at best distribute the dissipative parameter $x_\alpha$ among the edges of the corresponding peripheral set, in such a way that
\begin{align}
a'_{i\!j}(\{x_\alpha\}) = \left\{ \ba{ll} a''_{i\!j} + \lambda^{i\!j}_\alpha x_\alpha, & i\!j \in  \mathscr{D}_\alpha  \\
a''_{i\!j}, & i\!j \notin \bigcup_\alpha  \mathscr{D}_\alpha \label{eq:periperi}
\ea\right.
\end{align}
where $a''_{i\!j}$ are independent of all parameters and $\lambda^{i\!j}_\alpha = - \lambda^{\!ji}_\alpha$ are real numbers such that, for all $\alpha$,
\begin{align}
\sum_{i\!j \in \mathscr{D}_\alpha} \lambda^{i\!j}_\alpha = 1.  \label{eq:prescribed2}
\end{align}
This provides the general structure of a thermodynamic parametrization. Notice that no constraint is imposed on the dependence of the symmetric part and of the pure gauge on the parameters. 

\begin{theorem}[Thermodynamic parametrization]
\label{th:thermodynamic} A parametrization of the transition rates is thermodynamic, in the sense that it satisfies Eq.\,(\ref{eq:dissi1}), if and only if there exist (possibly non-unique) symmetric terms, a gauge connection and a pure gauge such that
\begin{align}
w_{i\!j}(\bs{x}) =  \sqrt{v_{i\!j}(\bs{x})} \, \exp \frac{1}{2} \left[ {a_j(\bs{x}) - a_i(\bs{x})} + a'_{i\!j}(\{x_\alpha\}) \right].
\end{align}
\end{theorem}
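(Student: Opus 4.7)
The approach is to reduce the claim to a statement about the antisymmetric part of the rates and then apply the cycle/cocycle decomposition of edge-valued functions introduced in Sec.~\ref{par:cocycle}. First I would observe that any positive rate parametrization admits the tautological splitting $w_{i\!j}(\bs{x}) = \sqrt{v_{i\!j}(\bs{x})}\,\exp[\tfrac{1}{2} f_{i\!j}(\bs{x})]$ with $v_{i\!j}(\bs{x}) := w_{i\!j}(\bs{x}) w_{\!ji}(\bs{x})$ symmetric and $f_{i\!j}(\bs{x}) := \log[w_{i\!j}(\bs{x})/w_{\!ji}(\bs{x})]$ antisymmetric. Since affinities are computed only from $f$, the symmetric factor $v_{i\!j}$ is unconstrained by (\ref{eq:dissi1}); the content of the proposition is therefore that $f_{i\!j}$ must admit the decomposition $f_{i\!j}(\bs{x}) = a'_{i\!j}(\{x_\alpha\}) + a_j(\bs{x}) - a_i(\bs{x})$ with $a'$ a gauge connection of the peripheral form (\ref{eq:periperi})--(\ref{eq:prescribed2}) if and only if $\partial_{x_\kappa}\A_\alpha = \delta_{\alpha,\kappa}$.

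For the ``if'' direction, a direct computation gives $\A_\alpha = \sum_{i\!j \in \C_\alpha} f_{i\!j} = \sum_{i\!j \in \C_\alpha} a'_{i\!j}$, the coboundary part $a_j - a_i$ cancelling around the closed loop. Specializing to the peripheral ansatz and noting that $\mathscr{D}_{\alpha'}$ intersects $\C_\alpha$ only when $\alpha'=\alpha$ yields $\partial_{x_\kappa} \A_\alpha = \delta_{\alpha, \kappa} \sum_{i\!j \in \mathscr{D}_\alpha} \lambda^{i\!j}_\alpha = \delta_{\alpha,\kappa}$ by (\ref{eq:prescribed2}). For the ``only if'' direction I would apply the cycle/cocycle splitting to $\bs{f}(\bs{x})$ at fixed $\bs{x}$: it writes uniquely as a combination of fundamental cycle generators with coefficients equal to the cycle affinities $\A_\alpha(\bs{x})$ plus a coboundary $a_j(\bs{x}) - a_i(\bs{x})$. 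Eq.~(\ref{eq:dissi1}) integrates to $\A_\alpha(\bs{x}) = x_\alpha - x_\alpha^{\mathrm{eq}}$, and one then constructs $a'_{i\!j}$ by distributing each $x_\alpha$ over the peripheral edges $\mathscr{D}_\alpha$ with any weights $\lambda^{i\!j}_\alpha$ satisfying (\ref{eq:prescribed2}), which explains why the proposition grants non-uniqueness. All dependence of $\bs{f}$ on kinetic parameters is cohomologically trivial (it contributes no affinity, since no $\A_\alpha$ depends on kinetic parameters) and can therefore be absorbed entirely into the pure gauge $a_i(\bs{x})$.

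The main obstacle will be controlling the gauge freedom: one must argue that the cycle content of $\bs{f}$ can always be \emph{concentrated} on the peripheral subsets $\mathscr{D}_\alpha$ rather than being spread over the whole cycle $\C_\alpha$, and that any ``excess'' placement differs from the peripheral representative by a coboundary reabsorbable into $a_i(\bs{x})$. The natural way to dispatch this is to observe that the peripheral sets are in one-to-one correspondence with the chord generators of a fundamental cycle basis, so that the projection of $\bs{f}$ onto cycle space has a canonical peripheral representative, and all other representatives differ from it by a coboundary by the rank--nullity decomposition of $\mathbb{R}^{|\E|}$.
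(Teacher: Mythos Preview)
Your proposal is correct and follows essentially the same route as the paper: both argue sufficiency by the direct computation $\partial_{x_\kappa}\A_\alpha = \delta_{\alpha,\kappa}\sum_{i\!j\in\mathscr{D}_\alpha}\lambda^{i\!j}_\alpha = \delta_{\alpha,\kappa}$, and both derive necessity from the unique symmetric/antisymmetric splitting together with the observation that any non-peripheral dependence of the gauge connection on $x_\alpha$ would perturb more than one affinity. The only difference is one of explicitness: you cast the necessity argument in the language of the cycle/cocycle (rank--nullity) decomposition of $\mathbb{R}^{|\E|}$, whereas the paper simply states that the preceding construction is exhaustive because shared edges lie in multiple fundamental cycles; your formulation makes more transparent why the ``excess'' placement of $x_\alpha$ away from $\mathscr{D}_\alpha$ can always be absorbed into the pure gauge $a_i(\bs{x})$.
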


\begin{proof}
Sufficiency is trivial: from Eq.\,(\ref{eq:Agf}) we have
\begin{align}
\A_\alpha(\{x_{\alpha'}\}) = \sum_{i\!j \in \mathscr{D}_\alpha}
\lambda^{i\!j}_\alpha x_\alpha +  \sum_{i\!j \in \C_\alpha} a''_{i\!j} 
= x_\alpha +  \sum_{i\!j \in \C_\alpha} a''_{i\!j}.
\end{align}
Necessity is in inbuilt in the above construction: the symmetric/antisymmetric splitting is unique; perturbations of the antisymmetric part that do not affect the affinities are accounted for by the pure gauge; hence we can only focus on any explicit dependencies of the gauge connection on the parameters that will affect the affinities. In particular, any perturbation of the rates along an edge that is in common among several cycles will affect the affinities of both cycles, therefore violating thermodynamic parametrization.
\end{proof}

Notice that, as anticipated in Sec.\,\ref{strophe}, the symmetric term and the pure gauge, to which we referred as the {\it internal landscape}, can undergo global arbitrary transformations, while once the pure gauge is fixed, the parametrization of the gauge connection is local an quite strict.

We now want to inquire how much stricter is a marginally thermodynamic parametrization obeying Eq.\,(\ref{eq:dissi2}) with respect to a thermodynamic parametrization. We already know that effective affinities are gauge invariant. Yet they do depend on the kinetic parameters all over the network. Then, all we need is to make the dependency of the symmetric terms local as that of the gauge connection.

\begin{theorem}[Marginal thermodynamic parametrization]
\label{th:mardiss1}
A parametrization of the transition rates is marginally thermodynamic, in the sense that it satisfies Eq.\,(\ref{eq:dissi3}), if it is thermodynamic and, additionally, if symmetric terms only depend locally on the thermodynamic parameters, meaning that only those along $\ij \in \mathscr{D}_\mu$ depend on $x_\mu$.
\end{theorem}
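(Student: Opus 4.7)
The plan is to compute $\partial Q_\mu/\partial x_{\mu'}$ directly, expressing the stalling state through the matrix-tree theorem. Using $p^{\,\st}_k \propto \ST_k(\G_{\mathrm{hid}})$, where $\G_{\mathrm{hid}}$ is the graph obtained by deleting the marginal edges $\{i_\mu j_\mu\}_{\mu=1}^{|\mu|}$, the effective affinity Eq.~(\ref{eq:effaffmu}) splits as
\begin{align}
Q_\mu \;=\; \log\frac{w_{i_\mu j_\mu}}{w_{j_\mu i_\mu}} \,+\, \log\frac{\ST_{j_\mu}(\G_{\mathrm{hid}})}{\ST_{i_\mu}(\G_{\mathrm{hid}})}.
\end{align}
By \Th{th:thermodynamic} and the additional marginal-locality hypothesis, each rate $w_{ij}$ depends on a marginal thermodynamic parameter $x_{\mu'}$ only if $ij\in\mathscr{D}_{\mu'}$, via either its symmetric factor $v_{ij}$ or its gauge connection $a'_{ij}$.

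For $\mu'\neq\mu$, the chord $i_\mu j_\mu$ does not lie in $\mathscr{D}_{\mu'}$ (chords sit only in their own peripheral set), so the first term is manifestly $x_{\mu'}$-independent. I then claim two structural facts. \emph{First}, every bridge $e\in\mathscr{D}_{\mu'}\setminus\{i_{\mu'}j_{\mu'}\}$ is a cut edge of $\G_{\mathrm{hid}}$: since $e$ belongs only to the fundamental cycle $C_{\mu'}$, each cycle of $\G$ through $e$ must contain $C_{\mu'}$ in its fundamental decomposition and therefore the chord $i_{\mu'}j_{\mu'}$; removing the latter in $\G_{\mathrm{hid}}$ kills all cycles through $e$. \emph{Second}, $i_\mu$ and $j_\mu$ lie in the same component of $\G_{\mathrm{hid}}\setminus\{e\}$, because the non-chord arc of $C_\mu$ still connects them inside $\G_{\mathrm{hid}}$ and none of its edges can equal $e$ (otherwise $e\in C_\mu\cap C_{\mu'}$, contradicting $e\in\mathscr{D}_{\mu'}$). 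Together these force $e$ to appear with identical orientation in every rooted spanning tree of $\G_{\mathrm{hid}}$ rooted at $i_\mu$ or at $j_\mu$, so its rate factors out commonly from $\ST_{i_\mu}$ and $\ST_{j_\mu}$ and cancels in their ratio. Hence $\partial Q_\mu/\partial x_{\mu'}=0$ whenever $\mu'\neq\mu$.

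For $\mu'=\mu$ the cut-edge property persists, but now the non-chord arc of $C_\mu$ traverses each bridge $e\in\mathscr{D}_\mu\setminus\{i_\mu j_\mu\}$, so $i_\mu$ and $j_\mu$ sit on opposite sides of the cut and the forced orientations in the two rooted trees are opposite; the rate contribution to $\log(\ST_{j_\mu}/\ST_{i_\mu})$ is the log-ratio along the cycle-forward direction, which by Eq.~(\ref{eq:periperi}) equals $\lambda^e_\mu\,x_\mu$ up to $x_\mu$-independent constants. Summing over the bridges and adding the chord's contribution $\lambda^{i_\mu j_\mu}_\mu$ from the first term of $Q_\mu$ yields $\sum_{e\in\mathscr{D}_\mu}\lambda^e_\mu=1$ by Eq.~(\ref{eq:prescribed2}), so $\partial Q_\mu/\partial x_\mu=1$. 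The main obstacle I anticipate lies in the orientation bookkeeping of the $\mu=\mu'$ case when the chord endpoints have degree greater than two in $\G$: the bridges of $\mathscr{D}_\mu$ then attach to internal rather than leaf vertices of $\G_{\mathrm{hid}}$, and a clean disentanglement of ``forward'' and ``backward'' sides of each cut requires a careful deletion-contraction argument, for which the all-minors matrix-tree theorem of Chaiken \cite{chaiken} or the identity Eq.~(\ref{eq:delconfor}) should suffice.
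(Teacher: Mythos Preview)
Your argument is correct in its core mechanism and rests on the same structural observation as the paper's: the non-chord edges of each peripheral set $\mathscr{D}_{\mu'}$ become cut edges of $\G_{\mathrm{hid}}$, so their rates either cancel from the stalling-state ratio (when $i_\mu,j_\mu$ lie on the same side of the cut) or contribute a fixed log-ratio (when on opposite sides). The paper packages this slightly differently: instead of analyzing spanning-tree polynomials edge by edge, it uses detailed balance along each cut edge to ``push'' the stalling-state contribution from $i_\mu,j_\mu$ out to the outer endpoints $i'_\mu,j'_\mu$ of $\mathscr{D}_\mu$, arriving at the closed form
\[
\Q_\mu = \log\!\Bigl(\frac{p^{\st}_{i'_\mu}}{p^{\st}_{j'_\mu}}\prod_{\ij\in\mathscr{D}_\mu}\frac{w_{\ij}}{w_{\ji}}\Bigr),
\]
in which the prefactor involves only rates outside $\mathscr{D}_\mu$ and the product involves only antisymmetric parts. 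Your route is more granular and has the virtue of treating the case $\mu'\neq\mu$ explicitly, which the paper's proof actually leaves implicit.

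There is one genuine gap. You assert that each rate $w_{\ij}$ depends on $x_{\mu'}$ only if $\ij\in\mathscr{D}_{\mu'}$. This is not true as stated: the pure gauge $a_i(\bs{x})$ in the parametrization of \Th{th:thermodynamic} is entirely unconstrained and can make \emph{every} rate depend on \emph{every} $x_{\mu'}$. Your localization argument therefore does not apply to the raw rates. The fix is immediate: invoke \Th{th:gauge} (gauge invariance of the effective affinities) at the outset to set the pure gauge to zero without loss of generality; after that, the only $x_{\mu'}$-dependence is indeed through $v_{\ij}$ and $a'_{\ij}$ on $\mathscr{D}_{\mu'}$, and your spanning-tree argument goes through cleanly.

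Finally, the obstacle you anticipate is not real. If a chord endpoint, say $i_\mu$, has degree greater than two in $\G$, then by the very definition of the peripheral set there are no bridges of $\mathscr{D}_\mu$ on that side of the chord, so $i_\mu=i'_\mu$ and the orientation bookkeeping there is vacuous. Any bridges present lie on the degree-two side, where the dangling-path picture and your ``opposite sides of the cut'' argument apply without modification.
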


\begin{proof}
All boils down to showing that the effective affinity can be written as
\begin{align}
\Q_\mu(\x) = \log \left( \frac{p^\st_{i'_\mu}(\x)}{p^\st_{j'_\mu}(\x)} \prod_{\ij \in \mathscr{D}_\mu} \frac{w_{\ij}}{w_{\ji}}\right)
\end{align}
where $i'_\mu$ and $j'_\mu$ are respectively the source of the first edge in $\mathscr{D}_\mu$ and the target of the last edge in $\mathscr{D}_\mu$. Notice that these depend in a contrived way on all other rates not belonging to $\mathscr{D}_\mu$. Having singled out the rates in $\mathscr{D}_\mu$, we notice that symmetric terms still cancel among themselves, hence we conclude.
\end{proof}
This latter proposition is not ``if and only if'' only because of subtle special cases of no practical interest.

\subsection{Disconnecting the configuration space}
\label{disconnect}

So far we have assumed  that the removal of the edges that support observable currents does not disconnect the configuration space. Let us relax this assumption.

The removal of edges from the network may result into several disconnected subgraphs $\G_\kappa$. If that occurs, than each connected subgraph has its own unique normalized steady state $\vec{p}^{\,\st ,\kappa}$ and any convex combination of them
\begin{align}
\vec{p}^{\,\st}(\bs{\pi}) = \sum_\kappa \pi_\kappa \vec{p}^{\,\st,\kappa}, \qquad \sum_\kappa \pi_\kappa = 1 \label{eq:multiss}
\end{align}
is a steady state for the stalling dynamics. Here 
$\pi_\kappa > 0$ is the probability of finding the system in subgraph $\G_\kappa$ at the moment of the preparation of the initial ensemble. As regards the case of several currents each supported on one edge analyzed in Sec.\,\ref{sec:multiedge}, the effective affinities introduced in Eq.\,(\ref{eq:effaffmu}) are still well-defined, and an inspection of the proof of \Th{th:unitary2} and of \Th{th:marls} reveals that all that matters to go through is that $\vec{p}^{\,\st}(\bs{\pi})$ is {\it some} steady state. The only main difference is that in all expressions there will be an explicit dependence on the parameters $\bs{\pi}$, in particular via $\Past(\bs{\pi})$ and the effective affinities $\Q_{\mu}(\bs{\pi})$. Consider the function $i \to \kappa(i)$ mapping a site $i$ to the connected component it belongs to. We then have
\begin{align}
\Q_\mu(\bs{\pi}) = \log \frac{w_{i_\mu j_\mu}  p_{j_\mu}^{\,\st,\kappa(j_\mu)}}{w_{j_\mu i_\mu} p_{i_\mu}^{\,\st,\kappa(i_\mu)} } + \log \frac{\pi_{\kappa(j_\mu)}}{\pi_{\kappa(i_\mu)} } =: \Q'_\mu + \log \frac{\pi_{\kappa(j_\mu)}}{\pi_{\kappa(i_\mu)} } 
\end{align} 
where the right-hand side defines a preferred ``ground'' value of the effective affinity $\Q'_\mu$. 
The finite-time FR holds unmodified, provided one chooses the right effective affinity and the right initial ensemble (as we will analyze in a future publication, and has already been observed in the case of the full FR by Rao \cite{rao}, in the preparation there is a difference between removing edges, and tuning parameters to values where the currents stall). 

As regards the asymptotic FR, notice that in the derivation of the fluctuation  symmetry the matrix $\Past(\bs{\pi})$ does not enter the game as it only contributes a similarity transformation. Hence the full dependence of the relation on the $\bs{\pi}$ is through the effective affinities:
\begin{align}
\widetilde{\lambda}(\{q_\mu\}) = \lambda(\{\Q_\mu(\bs{\pi}) - q_\mu\}).
\end{align} 
But since the left-hand side of this expression does not depend on $\bs{\pi}$ altogether, we obtain that the SCGF has a symmetry
\begin{align}
\lambda(\{q_\mu\}) = \lambda\left(\left\{q_\mu + \log \pi_{\kappa(j_\mu)}/ \pi_{\kappa(i_\mu)} \right\}\right).
\end{align}
This also implies that certain linear combinations of mean currents always vanish at the steady state. Consider for example the case where two subgraphs are separated by two observable edges:
\begin{align}
\ba{c}\xymatrix{\G_{1} \ar@{<-}@/^1pc/[r]^{\RN{1}}   \ar@{<-}@/_1pc/[r]_{\RN{2}}  & \G_{2}} \ea.
\end{align}
The symmetry of the rate function reads
\begin{align}
\lambda(\{q_{\RN{1}},q_{\RN{2}}\}) = \lambda(\{q_{\RN{1}}+ \alpha,q_{\RN{2}}+\alpha\}).
\end{align}
where $\alpha = \log \pi_{2}/ \pi_{1}$. Taking the derivative with respect to $\alpha$ and evaluating at $q_{\RN{1}} = q_{\RN{2}} = \alpha = 0$ we obtain $\cur_{\RN{1}}+\cur_{\RN{2}} = 0$.

Notice that \Th{iffstalling} then needs to be reformulated, as it is not sufficient that the effective affinity vanishes for some values of the $\bs{\pi}$ to have a vanishing stalling current. This is a consequence of the marginal gauge invariance of the SCGF, that we now analyze. 

The SCGF of all the marginal currents is obtained by setting $q_{\ij} = 0$ for all $\ij \in\E \setminus \E_{\mathrm{mar}}$. With reference to \Th{th:gaugtran}, clearly any gauge transformation that involves cocycles that have edges in the hidden sector would break this condition, thus it is not allowed. But, since by definition the removal of a (simple) cocycle disconnects the graph in two subgraphs (we can always take the basis cocycles to be simple in the sense described in \S\,\ref{par:cocycle}), then if the marginal configuration space contains a cocycle, then the hidden configuration space gets disconnected. In fact, in the above example the two edges form a simple cocycle. We thus see that marginal gauge invariance coincides with the freedom of choice of the stalling steady state in the definition of the effective affinities. While, according to the tenets of Schnakenberg's theory, when dealing with single-edge currents one can always take observable currents along a subset of chords, and thus arrive at uniquely defined and invariant affinities, in the case of currents defined over multiple edges it might often be the case that gauge invariance plays a role. This depends on the specific context.

\section{Episode 4: Phenomenological currents}
\label{sec:phenomenological}

Finally, we consider a set of marginal phenomenological currents. The theory exposed in the previous sections holds unchanged provided an additional strict condition, that we call {\it marginal consistency}, is verified. We characterize marginal consistency in physical terms in terms of hidden entropy production.

\subsection{Marginal phenomenological currents}

Phenomenological currents are linear combinations of edge currents
\begin{align}
\tcur^t_\mu = \sum_{\ij} \phys^{\ij}_\mu \, \tcur^t_{\ij}, \label{eq:phenocur}
\end{align}
where $\phys^{\ij}_\mu  = - \phys^\ji _\mu$. 
Let $\E_\mu = \{\ij \in \E | \phys^{\ij}_\mu \neq 0 \}$ be the set of edges that support the $\mu$-th marginal current, and $\E_{\mathrm{mar}} := \bigcup_\mu \E_\mu$ be the {\it marginal edge set} of all edges that support a marginal phenomenological current. Furthermore let $\I_{\mathrm{mar}}$ be set of sites of the graph that are boundaries of some edge in $\E_{\mathrm{mar}}$. For sake of notational simplicity, we assume that $\E \setminus \E_{\mathrm{mar}}$ is connected; the general case can be built after the considerations in Sec.\,\ref{disconnect}. 

We assume that the phenomenological currents defined above are independent, in the sense that there is no vector $\bs{\ell} = (\ell_\mu)_\mu$ such that
\begin{align}
\sum_\mu \ell_\mu \tcur^t_\mu \asymp 0
\end{align}
in the long time limit (at steady state). This amounts to ask that the matrix $(\sum_{\ij} \phys^{\ij}_\mu \cyc_{\ij}^\alpha)_{\alpha,\mu}$ has full rank. The theory of FRs for a complete set of phenomenological currents in the presence of conservation laws has been analyzed in Ref.\,\cite{bridging} at the steady state, and in Ref.\,\cite{rao} at finite times. The generalization of our marginal theory to systems with conserved quantities, either within or across the marginal/hidden sectors of the configuration space, is an interesting direction for future research.

\subsection{Marginal consistency}

Let $\Lambda(\{q_\mu\})$ be the SCGF of the marginal phenomenological currents with respect to the forward dynamics. It is the dominant eigenvalue of the tilted operator $\M(\{q_\mu\}$ obtained by replacing the off-diagonal entries $w_{\ij}$ by $w_{\ij}\exp - \ssum \phys^{\ij}_\mu q_\mu$ and by keeping the diagonal entries the same. Similarly we can introduce the SCGF $\widetilde{\Lambda}(\{q_\mu\})$ of the currents with respect to the hidden TR dynamics.

Also, let $\lambda(\{q_{\ij}\}_{\ij \in \E_{\mathrm{mar}}}),\widetilde{\lambda}(\{q_{\ij}\}_{\ij \in \E_{\mathrm{mar}}})$ be the SCGFs of all the edge currents in the marginal sector of the configuration space. We hereby exploit the contraction principle in the theory of large deviations, which states that we can obtain the SCGF of a coarser observable by replacing $q_{\ij} \to \ssum \phys^{\ij}_\mu q_{\mu}$,
\begin{subequations}
\begin{align}
\Lambda(\{q_\mu\}) & = \lambda\left(\Big\{\ssum \phys^{\ij}_\mu q_{\mu}
\Big\}_{\ij \in \E_{\mathrm{mar}} }\right), \\
\widetilde{\Lambda}(\{q_\mu\}) & = \widetilde{\lambda}\left(\Big\{\ssum \phys^{\ij}_\mu q_{\mu}
\Big\}_{\ij \in \E_{\mathrm{mar}} }\right).
\end{align}
\end{subequations}
By the theory exposed in Sec.\,\ref{sec:multiedge}, we know that the marginal fluctuation  symmetry holds 
\begin{align}
\lambda\left(\{q_{\ij}\}_{\ij \in \E_{\mathrm{mar}} }\right) = \widetilde{\lambda}\left(\{\Q_{\ij} - q_{\ij}\}_{\ij \in \E_{\mathrm{mar}} }\right),
\end{align}
where the effective affinities are given by the usual expression $\Q_{\ij} = \log w_{\ij}p_j^\st / w_\ji p_i^\st$, the stalling steady state being defined as the unique steady state in the network where all edges in $\E_{\mathrm{mar}}$ are removed. 

We now inquire under which conditions this relation extends to the SCGFs of the marginal phenomenological currents.

\begin{theorem}\label{th:mtc}
A sufficient condition for the marginal phenomenological fluctuation  symmetry to hold is the condition of \emph{marginal consistency}, i.e. that there exist phenomenological effective affinities $\Q_\mu$ such that
\begin{align}
\Q_{\ij} = \ssum \phys^{\ij}_\mu \, \Q_{\mu}, \qquad \forall \ij \in \E_{\mathrm{mar}} . \label{eq:mtc}
\end{align}
\end{theorem}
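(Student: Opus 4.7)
The plan is to chain together three facts: the contraction principle, the edge-current marginal fluctuation symmetry, and the marginal consistency hypothesis. No new machinery is needed; this is essentially a one-line substitution dressed up as a \th{}.

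First, I would apply the contraction principle (given just above the statement) to rewrite
\begin{align}
\Lambda(\{q_\mu\}) = \lambda\!\left(\Big\{\ssum \phys^{\ij}_\mu q_\mu\Big\}_{\ij \in \E_{\mathrm{mar}}}\right).\nonumber
\end{align}
Next, I would invoke the edge-current marginal fluctuation symmetry (\Th{th:marls}, specialized to edges in $\E_{\mathrm{mar}}$) to rewrite the right-hand side as $\widetilde{\lambda}$ evaluated at $\Q_{\ij} - \ssum \phys^{\ij}_\mu q_\mu$. The marginal consistency hypothesis $\Q_{\ij} = \ssum \phys^{\ij}_\mu \Q_\mu$ then lets me factor the argument as $\ssum \phys^{\ij}_\mu (\Q_\mu - q_\mu)$, after which one more application of the contraction principle (now in reverse, for $\widetilde{\Lambda}$) yields
\begin{align}
\Lambda(\{q_\mu\}) = \widetilde{\Lambda}(\{\Q_\mu - q_\mu\}),\nonumber
\end{align}
which is the marginal phenomenological fluctuation symmetry. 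Setting $q_\mu = \Q_\mu$ and using $\widetilde{\Lambda}(0) = 0$ further recovers the phenomenological IFR $\Lambda(\{\Q_\mu\}) = 0$ as a corollary.

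The computational content is trivial, so the intellectual hard part is not the derivation but understanding the nature of the hypothesis. Unlike ordinary thermodynamic consistency (Eq.\,\eqref{eq:ldb}), which only demands a cycle-level identity of the form $\sum_{\ij \in \C} \F_{\ij} = \ssum \phys^\alpha_\C \F_\alpha$, marginal consistency demands an \emph{edge-by-edge} identity on the quantities $\Q_{\ij} = \log(w_{\ij} p^\st_j / w_{\ji} p^\st_i)$, which are themselves highly nonlocal functions of the rates through the stalling state $\vec{p}^{\,\st}$. Given that the matrix $(\phys^{\ij}_\mu)$ has rank $|\mu|$ by the independence assumption, if a compatible $\{\Q_\mu\}$ exists it is unique, so the hypothesis is really a family of overdetermined constraints on the $\Q_{\ij}$'s. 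Hence, although the proof itself is mechanical, the statement of \Th{th:mtc} is best understood as a definition: ``marginal consistency'' is precisely the condition under which the edge-level FR passes cleanly through the contraction to the phenomenological level.

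Finally, I would note that the converse is not being claimed. If $(\phys^{\ij}_\mu)$ fails to be injective on the cokernel direction selected by the $\Q_{\ij}$'s, one could in principle have the phenomenological symmetry hold without the edge-wise identity; a careful statement of necessity would require analyzing when the contracted $\widetilde{\lambda}$ is constant along the affine directions orthogonal to the image of $\phys$. This is an interesting refinement but lies beyond the scope of the present \th{}, which only asserts sufficiency.
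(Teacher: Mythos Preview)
Your proof is correct and follows exactly the same route as the paper: contraction principle, then the edge-level marginal fluctuation symmetry, then the marginal consistency hypothesis to factor the argument, then contraction again for $\widetilde{\Lambda}$. Your additional commentary on the nature of the hypothesis and on non-necessity also aligns with the paper's own remarks immediately following the proposition.
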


\begin{proof} Straightforward given the definitions and the contraction principle:
\begin{align}
\Lambda(\{q_\mu\}) =  \lambda\left(\Big\{\ssum \phys^{\ij}_\mu q_{\mu}
\Big\}_{\ij \in \E_{\mathrm{mar}} }\right) & = \widetilde{\lambda}\left(\{
\ssum \phys^{\ij}_\mu \, (\Q_{\mu} - q_{\mu})
\}_{\ij \in \E_{\mathrm{mar}} }\right) \nonumber \\
& = \widetilde{\Lambda}(\{\Q_{\mu} - q_\mu\}).
\end{align}
\end{proof}

The condition is not strictly necessary, for the reason noted in Sec.\,\ref{disconnect} that the effective affinities might not be unique when the removal of the observable edges disconnects the network, which is usually the case for phenomenological currents.

Marginal consistency poses a strong constraint on the shape of the stalling steady state. In fact, for all sites $i \in \I_{\mathrm{mar}}$ in the marginal edge set, the stalling steady state must satisfy
\begin{align}
\frac{p^\st_j}{p^\st_i} = \frac{w_\ji }{w_\ij} \exp {\ssum \phys^{\ij}_\mu \, \Q_{\mu}}. \label{eq:pjpi}
\end{align}
This balance will not be granted for any thermodynamically consistent model, as we will show by a simple example in Sec.\,\ref{consistency}. At the highest level of the hierarchy of marginal theories, where the effective affinities are the ``real'' affinities of the complete theory, the above condition boils down to prescribing certain values for the rate ratio $w_\ij/w_\ji $, up to a pure gauge, see Eq.\,(\ref{eq:EPR}). This is due to the peculiar property of equilibrium steady states of writing just in terms of $w_\ij/w_\ji $, see Eq.\,(\ref{eq:gaugess}). Instead, whilst gauge invariant, the effective affinities do not depend directly on the rate ratio. As a consequence, marginal consistency requires a fine tuning of the rates all over the network. This entails an unprecedented relationship between symmetric and antisymmetric contributions to the rates, i.e. an interplay between the physics that shapes the internal landscape of the system, and the thermodynamic parameters.
\begin{theorem}
If the condition of marginal consistency holds, then 
\begin{align}
\widetilde{\W} := \Past \, \M(\{\Q_\mu\})^T \,{\Past}^{-1}.\label{eq:unitary}
\end{align}
is a \MJPG.
\end{theorem}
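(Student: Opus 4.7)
The plan is to reduce this statement directly to Proposition \ref{th:unitary2}, i.e. the single-edge-currents version of the result, by exploiting the contraction principle that relates phenomenological tilted operators to edge tilted operators. The key observation is that, by construction, the tilted operator for phenomenological currents can be obtained from the tilted operator for edge currents via the entry-wise replacement $q_{\ij}\mapsto \sum_\mu \phys^{\ij}_\mu q_\mu$ for $\ij\in\E_{\mathrm{mar}}$ (and $q_{\ij}=0$ otherwise), since the tilting factor in the Hadamard product $\matrix{T}(\{q_\mu\})_{\ij}=\exp\sum_\mu q_\mu \phys^{\ij}_\mu$ coincides with $\matrix{T}(\{q_{\ij}\})_{\ij}$ under that substitution. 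Evaluating at $q_\mu=\Q_\mu$ and invoking the marginal consistency condition Eq.\,(\ref{eq:mtc}), we get $\sum_\mu \phys^{\ij}_\mu \Q_\mu=\Q_{\ij}$ for every $\ij\in\E_{\mathrm{mar}}$, so the two tilted operators actually coincide:
\begin{align}
\M(\{\Q_\mu\}) = \M(\{\Q_{\ij}\}_{\ij\in\E_{\mathrm{mar}}}). \nonumber
\end{align}

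Given this identification, the operator $\Past\,\M(\{\Q_\mu\})^T\,{\Past}^{-1}$ in question is \emph{identically} the hidden time-reversal operator constructed in Proposition \ref{th:unitary2} for the set of edges $\E_{\mathrm{mar}}$, with $\Past=\mathrm{diag}\{p^\st_i\}_i$ the stalling state of the dynamics where all edges in $\E_{\mathrm{mar}}$ are removed. That proposition, which the reader may invoke wholesale, already shows that the resulting matrix has non-negative off-diagonal entries, non-positive diagonal entries, and columns that sum to zero, which are the defining properties of a MJPG. Hence the conclusion follows immediately.

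The main obstacle (which is why the statement is phrased with ``if'' rather than ``if and only if'') is the fact that in the phenomenological setting $\E\setminus\E_{\mathrm{mar}}$ is typically disconnected, so the stalling dynamics has a convex family of steady states parametrized by the weights $\bs{\pi}$ in Eq.\,(\ref{eq:multiss}), and the effective edge affinities $\Q_{\ij}$ are correspondingly gauge-ambiguous. One has to check that marginal consistency Eq.\,(\ref{eq:mtc}) is consistent with \emph{some} choice of $\bs{\pi}$: for each pair of hidden connected components joined by edges of $\E_{\mathrm{mar}}$, the condition prescribes the ratios $p^\st_j/p^\st_i$ for $\ij\in\E_{\mathrm{mar}}$ through Eq.\,(\ref{eq:pjpi}), which fixes the $\bs{\pi}$ up to an overall normalization (or shows that no such $\bs{\pi}$ exists, in which case marginal consistency is violated to begin with). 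Under the assumption of the proposition this fine-tuning is automatic, and the reduction to Proposition \ref{th:unitary2} completes the argument. One should also briefly note (as the paper does immediately after the statement) that the condition is sufficient but not necessary, precisely because of the residual gauge freedom when $\E\setminus\E_{\mathrm{mar}}$ disconnects.
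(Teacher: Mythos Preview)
Your reduction is correct and in fact more economical than the paper's own argument. The paper proceeds by a direct entry-wise computation: it uses Eq.\,(\ref{eq:pjpi}) to show that the conjugated transposed operator has off-diagonal entries $\widetilde{w}_{\ij}=w_{\ij}$ for $\ij\in\E_{\mathrm{mar}}$ and $\widetilde{w}_{\ij}=w_{\ji}p^\st_i/p^\st_j$ for $\ij\notin\E_{\mathrm{mar}}$, and then verifies column-stochasticity using that $\vec{p}^{\,\st}$ is the stalling state --- essentially repeating the mechanics of the proof of Proposition~\ref{th:unitary2}. You instead observe that marginal consistency makes the phenomenological tilted operator at $\{\Q_\mu\}$ \emph{coincide} with the edge tilted operator at $\{\Q_{\ij}\}_{\ij\in\E_{\mathrm{mar}}}$, so Proposition~\ref{th:unitary2} applies verbatim. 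This is a clean reduction and avoids redoing the computation.

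One small remark: your second paragraph on the disconnection issue is more than is strictly required here, since the paper explicitly assumes $\E\setminus\E_{\mathrm{mar}}$ connected (so $\vec{p}^{\,\st}$ is unique and Proposition~\ref{th:unitary2} applies without the $\bs{\pi}$-gauge ambiguity). Your discussion is not wrong, but it addresses a generality the statement does not actually claim. The ``sufficient but not necessary'' caveat you mention is attached in the paper to Proposition~\ref{th:mtc} (the fluctuation symmetry), not to the present proposition.
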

\begin{proof}
We have, using Eq.\,(\ref{eq:pjpi})
\begin{subequations}
\begin{align}
\widetilde{w}_{\ij} & = \frac{p^\st_i}{p^\st_j} w_\ji  \exp {\ssum \phys^{\ij}_\mu \, \Q_{\mu}} = w_{\ij}, & & \forall \ij \in \E_{\mathrm{mar}} \\
\widetilde{w}_{\ij} & = \frac{p^\st_i}{p^\st_j} w_\ji , & & \forall \ij \notin \E_{\mathrm{mar}}.
\end{align}
\end{subequations}
The proof proceeds as usual by summing over the columns of $\widetilde{\W}$ and using the fact that $\vec{p}^{\,\st}$ is the stalling state.
\end{proof}

From this result it follows that, if the condition of marginal consistency holds, basically everything that has been discussed in Sec.\,\ref{sec:multiedge} holds without modifications.

\subsection{Internal stalling}

The above condition of marginal consistency so far remains formal. Notice that a state of {\it internal stalling}, where all the microscopic currents supporting the marginal phenomenological currents vanish, is also a state of {\it phenomenological stalling}:
\begin{align}
\cur_{\ij} = 0, ~\forall \ij \in \mathcal{E}_{\mu} \quad\Rightarrow \quad  \cur_\mu = 0.
\end{align}
The inverse implication is not generally true.

\begin{theorem}
\label{th:mtctc}
If marginal consistency holds, then at a phenomenological  stalling steady-state, internal currents stall.
\end{theorem}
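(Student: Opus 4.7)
The plan is to lift phenomenological stalling to edge-level stalling via marginal consistency, and then use the decomposition $\W = \W_{\mathrm{mar}} + \W_{\mathrm{hid}}$ to identify the actual steady state with $\vec p^{\,\st}$.

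First I would establish that at phenomenological stalling the effective phenomenological affinities all vanish, $\Q_\mu = 0$. Under marginal consistency the SCGF $\Lambda$ of the marginal phenomenological currents satisfies both $\Lambda(\{0\}) = 0$ (normalization) and the marginal IFR $\Lambda(\{\Q_\mu\}) = 0$ by \Th{th:mtc}. Since $\nabla \Lambda(\{0\}) = -\vec\phi$, phenomenological stalling makes the origin a critical point of the convex function $\Lambda$, hence a global minimum. Then $\{\Q_\mu\}$ is another point achieving the minimum value $0$, and strict convexity of $\Lambda$ --- which follows from the independence assumption on the $\{\tcur_\mu^t\}$, equivalently positive-definiteness of the covariance matrix $\ccur_{\mu\mu'}$, away from dynamical phase transitions --- forces $\Q_\mu = 0$ for all $\mu$.

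Second I would invoke marginal consistency Eq.\,(\ref{eq:mtc}) directly to push this vanishing down to edges: $\Q_{\ij} = \sum_\mu \phys^{\ij}_\mu \, \Q_\mu = 0$ for every $\ij \in \E_{\mathrm{mar}}$. Recalling the definition $\Q_{\ij} = \log(w_{\ij} p^{\st}_j / w_{\ji} p^{\st}_i)$, where $\vec p^{\,\st}$ is the unique steady state of the hidden generator $\W_{\mathrm{hid}}$ obtained by removing all marginal edges, this yields $w_{\ij} p^{\st}_j = w_{\ji} p^{\st}_i$ along every marginal edge.

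Third I would argue that $\vec p^{\,\st}$ is in fact the actual steady state $\vec p$ of the forward dynamics. Using the decomposition $\W = \W_{\mathrm{mar}} + \W_{\mathrm{hid}}$ of Eq.\,(\ref{eq:12}), $\W_{\mathrm{hid}} \vec p^{\,\st} = 0$ by construction, while
\begin{align}
[\W_{\mathrm{mar}} \vec p^{\,\st}]_i = \sum_{j:\,\ij \in \E_{\mathrm{mar}}} \bigl( w_{\ij} p^{\st}_j - w_{\ji} p^{\st}_i \bigr) = 0
\end{align}
term by term by the previous step. Hence $\W \vec p^{\,\st} = 0$, and uniqueness of the steady state on the connected graph identifies $\vec p = \vec p^{\,\st}$. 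The internal edge currents at the actual steady state are therefore $\phi_{\ij} = w_{\ij} p^{\st}_j - w_{\ji} p^{\st}_i = 0$ for every $\ij \in \E_{\mathrm{mar}}$, as claimed.

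The delicate point is the first step: ruling out non-trivial solutions of the IFR $\Lambda(\{\Q_\mu\}) = 0$ when the origin is already a critical point. Strict convexity is the natural hypothesis, but in the vicinity of dynamical phase transitions (where $\Lambda$ can degenerate) or in cases where removing the marginal edges disconnects the network so that $\vec p^{\,\st}$ is not unique (cf.\ Sec.\,\ref{disconnect}), this argument would require refinement.
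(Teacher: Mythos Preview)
Your argument is correct and follows the same logical skeleton as the paper's proof: phenomenological stalling $\Rightarrow \Q_\mu = 0$, then marginal consistency Eq.\,(\ref{eq:mtc}) gives $\Q_{\ij} = 0$, then $\cur_{\ij} = 0$. Your version is considerably more explicit on both ends --- you justify the first implication via strict convexity of $\Lambda$ and the IFR (where the paper tersely invokes the fluctuation symmetry, in fact stating only the reverse direction), and you spell out the last implication by showing $\vec p = \vec p^{\,\st}$ directly, whereas the paper implicitly relies on \Th{iffstalling}.
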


\begin{proof}
By the fluctuation  symmetry, if $\Q_\mu = 0$ then $\cur_\mu =0$. From Eq.\,(\ref{eq:mtc}) it follows that if $\Q_\mu = 0$ then $\Q_{\ij} = 0$, and therefore $\cur_{\ij}=0$.
\end{proof}

While the reverse implication does not hold, for the simple reason that stalling states are a subclass of all possible marginally consistent states, the following statement gives a physical characterization of marginal consistency.

\begin{theorem}
\label{th:reverseconsistency}
If at a phenomenological stalling steady state the marginal stochastic entropy production vanishes, then the effective affinities satisfy marginal consistency.
\end{theorem}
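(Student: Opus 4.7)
The plan is to unpack the hypothesis: ``marginal stochastic entropy production'' at a phenomenological stalling state is most naturally interpreted as the contribution to the microscopic entropy production coming from the marginal edges, which at the mean level reads
\begin{align}
\sigma^{\mathrm{mar}} = \sum_{\ij \in \E_{\mathrm{mar}}} \left(\psi_{\ij} - \psi_{\ji}\right) \log \frac{\psi_{\ij}}{\psi_{\ji}},
\end{align}
where $\psi_{\ij} = w_{\ij} p^{\st}_j$ is the marginal flux evaluated at the phenomenological stalling steady state $\vec{p}^{\,\st}$. This is a sum of non-negative summands (the standard log-sum form of the local dissipation), so $\sigma^{\mathrm{mar}} = 0$ forces every term to vanish individually.

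First I would remark that each summand is non-negative because $(x-y)\log(x/y) \geq 0$ whenever $x,y>0$, with equality iff $x=y$. Therefore $\sigma^{\mathrm{mar}} = 0$ is equivalent to microscopic detailed balance on every marginal edge at the stalling state, namely $\psi_{\ij} = \psi_{\ji}$, i.e. $w_{\ij} p^{\st}_j = w_{\ji} p^{\st}_i$ for all $\ij \in \E_{\mathrm{mar}}$. Equivalently, the effective edge affinity
\begin{align}
\Q_{\ij} = \log \frac{w_{\ij} p^{\st}_j}{w_{\ji} p^{\st}_i}
\end{align}
vanishes on every marginal edge, which is the microscopic analog of internal stalling dual to \Th{th:mtctc}.

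Next I would close the argument by observing that the marginal consistency requirement (\ref{eq:mtc}), namely the existence of phenomenological effective affinities $\{\Q_\mu\}$ such that $\Q_{\ij} = \sum_\mu \phys^{\ij}_\mu \Q_\mu$ for all $\ij \in \E_{\mathrm{mar}}$, is a linear system in the unknowns $\Q_\mu$. Since its left-hand side vanishes identically, the trivial choice $\Q_\mu = 0$ for all $\mu$ is a solution, so marginal consistency is satisfied (albeit degenerately, as expected at stalling where all effective affinities of a marginally consistent theory must vanish by \Th{iffstalling}).

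The main obstacle is really conceptual rather than technical: one must be sure that the hypothesis is interpreted as the vanishing of the microscopic (edge-resolved) dissipation on the marginal sector, rather than as the mean marginal EPR $\sum_\mu \Q_\mu \phi_\mu$, which is automatically zero at phenomenological stalling and would make the statement vacuous. Once this interpretation is fixed, the proof is essentially a one-line consequence of the strict positivity of $(x-y)\log(x/y)$ on $x \neq y$. A subtler point worth noting is that the result only establishes marginal consistency at the specific rate configuration of the stalling state considered; as emphasized around Eq.\,(\ref{eq:pjpi}), full marginal consistency as a property of a parametrized family of rates is a much stronger requirement that this proposition does not attempt to reach.
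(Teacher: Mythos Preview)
Your argument is internally sound but rests on a different reading of the hypothesis than the paper's, and as a result you prove only the degenerate special case $\Q_{\ij}=0$ rather than the general consistency condition $\Q_{\ij}=\sum_\mu \phys^{\ij}_\mu\,\Q_\mu$.

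The paper's intended hypothesis (made explicit in the Parode and in its own proof) is that the \emph{stochastic} quantity $\sum_{\ij\in\E_{\mathrm{mar}}}\Q_{\ij}\,\tcur^t_{\ij}$ vanishes for \emph{all} values of the marginal edge currents compatible with phenomenological stalling, i.e.\ for every $\bs{\cur}$ with $\sum_{\ij}\phys^{\ij}_\mu\,\cur_{\ij}=0$ for all $\mu$. That is a universal quantifier over the kernel of the matrix $\phys=(\phys^{\ij}_\mu)$. The proof is then pure linear duality: $\langle\bs{\Q},\bs{\cur}\rangle=0$ for all $\bs{\cur}\in\ker\phys$ forces $\bs{\Q}\in(\ker\phys)^\perp=\mathrm{im}\,\phys^T$, which is exactly Eq.~(\ref{eq:mtc}) with possibly nonzero $\Q_\mu$.

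Your reading instead evaluates a single mean, $\sigma^{\mathrm{mar}}=\sum_{\ij}(\psi_{\ij}-\psi_{\ji})\log(\psi_{\ij}/\psi_{\ji})$ with $\psi_{\ij}=w_{\ij}p^{\st}_j$, and uses term-wise nonnegativity. Two issues follow. First, the ``at phenomenological stalling'' qualifier becomes spurious: your $\sigma^{\mathrm{mar}}$ depends only on the rates and on the hidden-dynamics state $\vec{p}^{\,\st}$, not on whether the full system sits at a phenomenological stalling point. Second, your conclusion $\Q_{\ij}=0$ is strictly the degenerate instance of marginal consistency; the proposition is meant to characterize the nontrivial case where the $\Q_\mu$ need not vanish. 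The missing idea is the universal quantifier over current configurations and the orthogonality argument it enables.
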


\begin{proof}
We are requiring that
\begin{align}
\sum_{\ij \in \E_{\mathrm{mar}}} \Q_{\ij} \tcur^t_{\ij} = 0 \label{eq:vanvan}
\end{align}
vanishes for all values of of $\tcur^t_{\ij} \equiv t\cur_{\ij}$ satisfying the steady state condition $\partial \bs{\cur} = 0$ (or, equivalently, up to  $O(1)$ transient contributions). Let us impose phenomenological stalling. We have
\begin{align}
\sum_{\ij \in \E_{\mu}} \phys^{\ij}_\mu \phi_{\ij} = 0.
\end{align}
Therefore $\bs{\phi}$ is in the kernel of matrix $\phys = (\phys^{\ij}_\mu)_{\ij  \in \E_{\mu}, \mu}$. Therefore it can be an arbitrary vector in the orthogonal complement of the image of $\phys$. Since Eq.\,(\ref{eq:vanvan}) must vanish for any such vector, then $\Q_{\ij}$ must live in the image of $\phys$, that is, satisfy Eq.\,(\ref{eq:mtc}). \end{proof}

\subsection{Response}

Let us endow ourselves with a marginally thermodynamic parametrization $x_{\ij}$ of the edges $\ij \in\E_{\mathrm{mar}}$. We can consider the phenomenological currents' covariance at phenomenological stalling
\begin{align}
\ccur_{\mu\mu'}^{\mathrm{phen.\,st.}} 
& = \sum_{\ij,\ij'} \phys^{\ij}_\mu  \phys^{\ij'}_{\mu'} \ccur_{\ij,\ij'}^{\mathrm{phen.\,st.}}.
\end{align}
Notice that we would not be able to proceed further if phenomenological stalling did not imply internal stalling, because we do not generally have a FDR for nonvanishing currents. Under the assumption of marginal consistency we obtain
\begin{align}
\cur_{\mu\mu'}^\st  & = \sum_{\ij,\ij'} \phys^{\ij}_\mu  \phys^{\ij'}_{\mu'} \left(  \frac{\partial}{\partial x_{\ij}} \cur_{\ij'}^\st  + \frac{\partial}{\partial x_{\ij'}} \cur_{\ij}^\st  \right) \nonumber \\
& = \nabla_\mu \cur_{\mu'}^\st  +  \nabla_{\mu'}\cur_\mu^\st 
 \end{align}
where
\begin{align}
\nabla_\mu := \sum_{\ij} \phys^{\ij}_\mu  \frac{\partial}{\partial x_{\ij}}
\end{align}
is the directional (Lie) derivative along vector $\{\phys^{\ij}_\mu\}_{\ij}$. This means that there is a preferred set of directions where the variation of the parameters should be performed, hence that we need to sub-parametrize rates in terms of parameters $x_\mu$ in a way that $ \nabla_\mu = \partial/\partial x_\mu$ becomes a proper derivative.

\subsection{Complete vs. marginal consistency}
\label{consistency}

In this section we show by a simple example that marginal consistency is stricter ``complete'' consistency. While this example is not representative of a physical apparatus, the gap between marginal and complete thermodynamic consistency appears to us the be the most interesting open question left aside in this paper.

We take into consideration a simple model whose configuration space is depicted by the following graph with rates labeled by a reservoir index taking values in $\rho = \RN{1},\RN{2},\RN{3},\RN{4}$:
\begin{align}
\xymatrix{1  \ar@{-}@/_/[rr]_{\RN{2}}  \ar@{-}@/^/[rr]^{\RN{1}} 
 & & 2  \\ \\ 
3 \ar@{-}@/_/[uu]_{\RN{4}}   \ar@{-}@/^/[uu]^{\RN{3}}  \ar@{-}@/^/[rr]^{\RN{2}}   \ar@{-}@/_/[rr]_{\RN{1}} 
& & 4 \ar@{-}@/_/[uu]_{\RN{3}} \ar@{-}@/^/[uu]^{\RN{4}}
 }
\end{align}
where we assume that the rates satisfy
\begin{subequations}\label{eq:ldbex}
\begin{align}
w^{\RN{1}}_{12}/w^{\RN{1}}_{21} & = w^{\RN{1}}_{34}/w^{\RN{1}}_{43} \\
w^{\RN{2}}_{12}/w^{\RN{2}}_{21} & = w^{\RN{2}}_{34}/w^{\RN{2}}_{43} \label{eq:ldbex2} \\ 
w^{\RN{3}}_{13}/w^{\RN{3}}_{31} & = w^{\RN{3}}_{24}/w^{\RN{3}}_{42} \\ 
w^{\RN{4}}_{13}/w^{\RN{3}}_{31} & = w^{\RN{3}}_{24}/w^{\RN{4}}_{42}
\end{align}
\end{subequations}
The physical rationale is that rates of type $\rho$ are due to the interaction with a bath at inverse temperature $\beta^{\,\rho}$ and satisfy the condition of local detailed balance \cite{ldbmassi,ldbmaes}
\begin{align}
\frac{w_{\ij}}{w_\ji } = \exp {- \beta^{\,\rho} (\epsilon_i - \epsilon_j)}.
\end{align}
In the above model it is implied that the energy gaps $\epsilon_1 - \epsilon_2 = \epsilon_3 - \epsilon_4$ are the same. The physical currents are given by (dropping the dependency on $t$)
\begin{subequations}
\begin{align}
\tcur^{\,\rho} & = \tcur^{\,\rho}_{12} + \tcur^{\,\rho}_{34}, & & \rho = \RN{1},\RN{2}\\
\tcur^{\,\rho} & = \tcur^{\,\rho}_{13} + \tcur^{\,\rho}_{24}, & & \rho = \RN{3},\RN{4}.
\end{align}
\end{subequations}
The above parametrization Eq.\,(\ref{eq:ldbex}) grants that the entropy production along a trajectory is a functional of the phenomenological currents only. The situation is different for the effective description of an observer who, for example, only measures $\tcur^{\,\RN{1}}$. The effective affinities along the two edges supporting this current are given by
\begin{align}
\Q^{\RN{1}}_{12} & = \log \frac{w^{\RN{1}}_{12} p^{\st}_2}{w^{\RN{1}}_{21} p^{\st}_1}, & 
\Q^{\RN{1}}_{34} & = \log \frac{w^{\RN{1}}_{34} p^{\st}_4}{w^{\RN{1}}_{43} p^{\st}_3} 
\end{align}
where $\vec{p}^{\,\st}$ is the stalling steady state on the network
\begin{align}
\ba{c}\xymatrix{1  \ar@{-}@/_/[rr]_{\RN{2}} 
 & & 2  \\ \\ 
3 \ar@{-}@/_/[uu]_{\RN{4}}   \ar@{-}@/^/[uu]^{\RN{3}}  \ar@{-}@/^/[rr]^{\RN{2}} & & 4 \ar@{-}@/_/[uu]_{\RN{3}} \ar@{-}@/^/[uu]^{\RN{4}}
 }\ea.
\end{align}
Marginal consistency requires that the two effective affinities coincide, $\Q^{\RN{1}}_{12} = \Q^{\RN{1}}_{34}$, which is only the case if
\bea
\frac{p^{\st}_1}{p^{\st}_2} = \frac{p^{\st}_3}{p^{\st}_4}.
\eea
This condition is {\it not} implied by local detailed balance and it requires a fine tuning of rates all over the network. This has broad consequences on the overall flows measured in the system. In the above example, notice that in view of Eq.\,(\ref{eq:ldbex2}), marginal thermodynamic consistency implies that the two steady forces along edges $\RN{2}$ should also coincide:
\bea
\log \frac{w_{21}^{\RN{2}} p^{\st}_1}{w_{12}^{\RN{2}} p^{\st}_2} = \log \frac{w_{43}^{\RN{2}} p^{\st}_3}{w_{34}^{\RN{2}} p^{\st}_4}.
\eea
As a consequence, at the stalling steady state both transitions of type $\RN{2}$ are in same direction. This is obviously impossible, therefore the steady currents $\phi^{\RN{2}}_{12}, \phi^{\RN{2}}_{34}$ must vanish as well at stalling. The only current at a phenomenological stalling steady state is along cycles formed by edges of type $\RN{3}$ and $\RN{4}$.

\section{Stasimon: A negative result} 
\label{sec:stasimon}

Marginal currents play a role in the formulation of so-called {\it uncertainty relations} between a current's variance and the full EPR, recently formulated \cite{baratounc} and proven under quite general conditions \cite{gingrichunc,pietzonkaunc}. Remarkably, the results hold for marginal currents of any kind, though it has been argued that the bound is only strict when the current is the entropy production itself \cite{lazarescu}. Therefore it is interesting to inquire whether stricter bounds in terms of marginal measures of the entropy production, rather than the full entropy production, might hold \cite{lazarescu}.

The uncertainty relation states that for any current
\begin{align}
\frac{\langle \cur^2 \rangle}{\langle \cur \rangle^2} \geq \frac{2}{\sum \A_\alpha \langle \cur_\alpha \rangle}.
\end{align}
The bound is significant when the current is the entropy production, $\phi = \sum \A_\alpha \cur_\alpha$. Otherwise the bound performs poorly, and one might want to lower the the measure of EPR so to make the right-hand side of this inequality as high as possible.
Since $\sum \F_\alpha \langle \cur_\alpha \rangle \geq  \Q  \langle \phi \rangle$, one tempting hypothesis is that the effective EPR might also satisfy the bound:
\begin{align}
\frac{\langle \cur^2 \rangle}{\langle \cur \rangle} \stackrel{?}{\geq} \frac{2}{\Q}. \label{eq:bound}
\end{align}

The uncertainty relation derives from the following quadratic bound on the rate function \cite{pietzonkaunc}
\begin{align}
\lambda(\{q_\alpha\}) \geq \ssum q_\alpha \langle \cur_\alpha \rangle \left( \frac{\ssum q_{\alpha'} \langle \cur_{\alpha'} \rangle}{\ssum \A_{\alpha''} \langle \cur_{\alpha''} \rangle} -1 \right).
\end{align}
Choosing, for definiteness, the first current as our observable $\phi = \phi_1$, and setting $q_\alpha = \delta_{\alpha,1} q$, one obtains
\begin{align}
\lambda(q) \geq  q \langle \cur \rangle \left( \frac{q \langle\cur\rangle}{\sigma} -1 \right).
\end{align}
The bound is found by taking second derivatives and evaluating at $q=0$. In our marginal theory, the quadratic function $q \langle \cur \rangle (q/\Q - 1)$ that would yield the relation Eq.\,(\ref{eq:bound}) in a similar way does not bound $\lambda(q)$, rather it approximates it as a quadratic taking the same vanishing values at $q = 0,\Q$. In fact, not only the bound on the SCGF does not hold, but by randomly inspecting the behavior of several example systems we were able to find several cases where there is a (mild) violation of the uncertainty relation. One such case is given by the generator 
\begin{align}
\W = \left(
\begin{array}{cccc}
 -15 & 2 & 6 & 1 \\
 3 & -12 & 10 & 0 \\
 7 & 10 & -21 & 5 \\
 5 & 0 & 5 & -6 \\
\end{array}
\right).
\end{align}
Thus we rule out a reasonable hypothesis.

\section{Exode: Conclusions}
\label{exode}

Generic considerations on future perspectives were presented in Sec.\,\ref{epode}. In this section we provide a more prosaic list of partial results and of technical issues related to the derivations of the results that are either unresolved and/or that might be of some interest from a mathematical point of view.

\subsubsection{Relationship to Doob's transform}
\label{sec:doob}

The tilted operator $\M(\{q_\mu\})$ is generally not similar to a MJPG. This has implications, for example, for the efficient computation of the SCGF, which cannot be performed by a direct application of the Gillespie algorithm but requires more advanced techniques \cite{giardina,nemoto}. Nevertheless, for all values of the tilting fields $\{q_\mu\}$ one can build a MJPG $\W(\{q_\mu\})$ out of the tilted operator $\M(\{q_\mu\})$, by performing the so-called Doob transform \cite{verley}
\begin{align}
\W(\{q_\mu\}) = \R(\{q_\mu\}) \M(\{q_\mu\})^T \R^{-1}(\{q_\mu\}) - \lambda(\{q_\mu\}) \matrix{I},
\end{align}
where $\R(\{q_\mu\})$ is the diagonal matrix whose entries are the components of the right Perron-Froebenius eigenvector of $\M(\{q_\mu\})$. In a way, the typical currents according to this new dynamics reproduce the statistics of rare currents of the original dynamics, somehow realizing Onsager's regression hypothesis. Furthermore, it can be shown that the diagonal matrix $\R(\{q_\mu\})$ [resp. $\matrix{L}(\{q_\mu\})$] whose entries are the components of the right [resp. left] eigenvector of $ \M(\{q_\mu\})$ can be interpreted as the probability of a fluctuation conditioned on being at some configuration at the final [resp. initial] time \cite{companion}.

Our theory implies that the hidden TR generator $\widetilde{\W}$ is the Doob transform evaluated at $\lambda(\{\Q_\mu\}) = 0$, where the Doob's transform and the forward generator are related by a similarity transformation. Furthermore, this implies that
\begin{align}
\Past = \R(\{\Q_\mu\}),
\end{align}
which yields a different interpretation of the right eigenvector for that particular value of the tilting parameters.

A complementary perspective on tilting techniques regards Markov processes that are {\it conditioned} upon observing a certain event, as thoroughly discussed in Ref.\,\cite{chetrite}. While such conditional processes are not Markovian on their own, there exists the possibility of constructing Markovian generators that best reproduce their behavior. Then the question is open what particular rare events are typical of hidden time-reversal generators, and what kind of conditioning on Markov processes they represent.

Knowledge of the effective affinity might be implemented in algorithms for the reconstruction of the SCGF based on cloning and/or on iterative procedures, e.g. to pinpoint certain specific values of the SCGF.

\subsubsection{Direct derivations of the FDR and activity}

Let us consider the derivation of nonequilibrium FDRs carried over e.g. in Ref.\,\cite{baiesi} and compare it to our own approach at stalling steady states. We focus on the single-edge scenario detailed in Sec.\,\ref{sec:singleedge}. The trajectory p.d.f. Eq.\,(\ref{eq:density}) can be written in terms of two terms, one that is time-symmetric and one that is time-antisymmetric:
\begin{align}
\prob[\traj](x) = \exp \frac{1}{2} \left( \Theta[\traj](x) + \Sigma[\traj](x) \right) p^0_{i_0}.
\end{align}
where $\Theta$ is a suitably defined time-symmetric term that contains both the Poisson-type waiting-time distribution, and the activities (symmetrized fluxes). We made explicit the dependency on parameter $x$. We have
\begin{align}
\frac{\partial}{\partial x} \Sigma[\traj](x) & = \tcur_{12}[\traj].
\end{align}
Let's look at the response of the average steady-state current:
\begin{align}
\frac{\partial}{\partial x}\cur_{12}(x) & = \lim_{t \to \infty} \frac{1}{t} \int \mathcal{D}\traj \, \tcur_{12}[\traj] \; \frac{\partial}{\partial x} \prob[\traj](x) \\
& =  \frac{1}{2} \langle \cur_{12}^2 \rangle + \lim_{t \to \infty} \frac{1}{2t} \left\langle \tcur_{12} \frac{\partial}{\partial x} \Theta(x) \right\rangle.
\end{align}
This shows that in general to the response of a current contributes the self-correlation and the correlation of the current with a time-symmetric observable, that contributes in important ways out of equilibrium.

	Let us see how from this we can derive the FDR at equilibrium, that is, assuming there exists a value $x = x^{\mathrm{eq}}$ such that $P[\traj](x^{\mathrm{eq}}) = P[\invtraj](x^{\mathrm{eq}})$ (up to boundary terms). Key to the result is that the time symmetry is not affected by the derivative, therefore
\begin{align}
\frac{\partial  \Theta}{\partial x}[\traj] (x^{\mathrm{eq}})= \frac{\partial  \Theta}{\partial x}[\invtraj](x^{\mathrm{eq}}).
\end{align}
Then:
\begin{align}
\left\langle \tcur_{12} \frac{\partial \Theta}{\partial x} \right\rangle^{\mathrm{eq}} & = \int \mathcal{D}\traj \, \prob[\traj](x^{\mathrm{eq}})  \, \tcur_{12}[\traj] \frac{\partial  \Theta}{\partial x}[\traj] (x^{\mathrm{eq}}) \nonumber \\
& = \int \mathcal{D}\traj \, \prob[\invtraj](x^{\mathrm{eq}})  \, \tcur_{12}[\invtraj]  \frac{\partial  \Theta}{\partial x}[\invtraj] (x^{\mathrm{eq}})\nonumber \\
& = - \left\langle \tcur_{12} \frac{\partial \Theta}{\partial  x} \right\rangle^{\mathrm{eq}}
\end{align}
and therefore it must vanish. 

It would be desirable to have a similar direct proof that the active response vanishes at stalling steady-states. We could use, instead of the time-reversed trajectory, the hidden time-reverse trajectory introduced in Sec.\,\ref{sec:fr3}. However, if we go through the same passages as above the proof halts as we do not have a trajectory such that $\prob[\minvtraj]= \widetilde{\prob}[\traj]$ (not even at stalling). We believe that finding a proper way to deal with such problem would disclose a whole new set of techniques that could be used to analyze marginal systems.

Furthermore, it would be interesting to consider the large deviation functions of the joint activities and currents, which in the case of a ``complete'' theory is known analytically \cite{bertini}.



\subsubsection{Gauge invariance}

In Quantum Field Theory, Wilson loops of gauge connections are the fundamental gauge-invariant quantities. Importantly, they satisfy the reconstruction property \cite{giles,pullin}: given the Wilson loops, one can reconstruct the gauge connection up to gauge transformations. The analogue of Wilson loops in nonequilibrium thermodynamics are the ``real'' affinities, as argued in Refs.\,\cite{polettinigauge,dice}. When considering a system exchanging energy with several of heat reservoirs at different temperatures (possibly a continuum of  them), ``real'' affinities take the well-known form $\oint \delta Q / k_B T$.

The marginal theory though deals with new objects, the effective affinities, that are not defined along a single loop, are not defined only in terms of the gauge connection alone, and somehow have a ``renormalized'' character, they are ``dressed''. Yet they are gauge invariant and they constitute important observables. So, it would be interesting to ponder how this construction might go back to gauge theories, in particular as comes to non-Abelian gauge theories, to the Mandelstam identities, and to the reconstruction property.

\subsubsection{The deletion-contraction paradigm}

Some passages in our theory offer a connection to the paradigm of deletion-contraction in algebraic graph theory \cite{sokal}, that has applications to as remote areas as knot polynomials \cite{kauffman} and Feynman diagrams \cite{nakanishi,aluffi}. Deletion-contraction formulas apply in particular to spanning-tree polynomials \cite{bollobas}. We notice in passing that most results in this area regard symmetric Laplacians, while MJPGs are not necessarily symmetric. Then, the deletion-contraction formula we employed in Eq.\,(\ref{eq:delconfor}) rooted oriented spanning tree only holds if the root is one of the two contracted vertices. It is very easy to generate counterexamples for other roots. This opens up the question whether it is possible to prove more general deletion-contraction formulas for weighted oriented graphs.

\section*{Acknowledgements}

We are thankful to Bernhard Altaner for introducing the first version of the problem that inspired this work. We enjoyed discussions with Artur Wachtel and all other group members. MP also thanks the bartenders and attendees of Tramways for moral support and company, and Alexandra Elbakyan for furnishing a fundamental tool for scientific documentation. The research was supported by the National Research Fund of Luxembourg (project FNR/A11/02) and by the European Research Council, project NanoThermo (ERC-2015-CoG Agreement No. 681456).

\end{document}